\title{Optimal tiling of the Euclidean space using symmetric bodies}
\author{Mark Braverman
\thanks{Department of Computer Science, Princeton University.
Research supported in part by the NSF Alan T. Waterman Award, Grant No. 1933331, a Packard Fellowship in Science and Engineering, and the Simons Collaboration on Algorithms and Geometry. }
\and
Dor Minzer
\thanks{Department of Mathematics, Massachusetts Institute of Technology.
Part of this work was done while the author was a member in the Institute for Advanced Study, Princeton,
supported by NSF grant DMS-1638352 and Rothschild Fellowship.}
}
\date{\vspace{-5ex}}
\newtheorem{thm}{Theorem}[section]
\newtheorem{lemma}[thm]{Lemma}
\newtheorem{claim}[thm]{Claim}
\newtheorem{proposition}[thm]{Proposition}
\newtheorem{definition}[thm]{Definition}
\newtheorem{fact}[thm]{Fact}
\newtheorem{problem}{Problem}
\newcommand\E{\mathop{\mathbb{E}}}
\newcommand\card[1]{\left| {#1} \right|}
\newcommand\sett[2]{\left\{ \left. #1 \;\right\vert #2 \right\}}
\newcommand\set[1]{{\left\{ #1 \right\}}}
\newcommand\Prob[2]{{\Pr_{#1}\left[ {#2} \right]}}
\newcommand\cProb[3]{{\Pr_{#1}\left[ \left. #3 \;\right\vert #2 \right]}}
\newcommand\Expect[2]{{\mathop{\mathbb{E}}_{#1}\left[ {#2} \right]}}
\newcommand\cExpect[3]{{\mathbb{E}_{#1}\left[ \left. #3 \;\right\vert #2 \right]}}
\newcommand\cbExpect[3]{{\mathbb{E}_{#1}\big[ \big. #3 \;\big\vert #2 \big]}}
\newcommand\norm[1]{\| #1 \|}
\newcommand\power[1]{\set{0,1}^{#1}}
\newcommand\half{{1\over2}}
\newcommand\defeq{\stackrel{def}{=}}
\newcommand\eps{\varepsilon}
\renewcommand\geq{\geqslant}
\renewcommand\ge{\geqslant}
\renewcommand\leq{\leqslant}
\renewcommand\le{\leqslant}
\newcommand{\rom}[1]{\uppercase\expandafter{\romannumeral #1\relax}}
\def\moverlay{\mathpalette\mov@rlay}
\def\mov@rlay#1#2{\leavevmode\vtop{%
   \baselineskip\z@skip \lineskiplimit-\maxdimen
   \ialign{\hfil$\m@th#1##$\hfil\cr#2\crcr}}}
\newcommand{\charfusion}[3][\mathord]{
    #1{\ifx#1\mathop\vphantom{#2}\fi
        \mathpalette\mov@rlay{#2\cr#3}
      }
    \ifx#1\mathop\expandafter\displaylimits\fi}
\newcommand{\ZZ}{\mathbb{Z}}
\newcommand{\RR}{\mathbb{R}}
\newcommand{\ga}{\gamma}
\newcommand{\la}{\lambda}
\newcommand{\al}{\alpha}
\newcommand{\si}{\sigma}
\newcommand{\De}{\Delta}
\newcommand{\ve}{\varepsilon}
\def\gg{\gtrsim}
\def\ll{\lesssim}
\def\ee{\asymp}
\begin{document}
\maketitle
\begin{abstract}
What is the least surface area of a symmetric body $B$ whose $\mathbb{Z}^n$ translations tile $\mathbb{R}^n$? Since any such body must have volume $1$, the isoperimetric inequality implies that its surface area must be at least $\Omega(\sqrt{n})$. Remarkably, Kindler et al.\ showed that for general bodies $B$ this is tight, i.e.\ that there is a tiling body of $\mathbb{R}^n$ whose surface area is $O(\sqrt{n})$.

In theoretical computer science, the tiling problem is intimately to the study of parallel repetition theorems (which are an important component in PCPs),
and more specifically in the question of whether a ``strong version'' of the parallel repetition theorem holds.
Raz showed, using the odd cycle game, that strong parallel repetition fails in general, and subsequently these ideas were used
in order to construct non-trivial tilings of $\mathbb{R}^n$.

In this paper, motivated by the study of a symmetric parallel repetition, we consider the symmetric variant of the tiling problem in $\mathbb{R}^n$. We show that any symmetric
body that tiles $\mathbb{R}^n$ must have surface area at least $\Omega(n/\sqrt{\log n})$, and that this bound is tight, i.e.\ that there is a symmetric tiling body of
$\mathbb{R}^n$ with surface area $O(n/\sqrt{\log n})$. We also give matching bounds for the value of the symmetric parallel repetition of Raz's odd cycle game.

Our result suggests that while strong parallel repetition fails in general, there may be important special cases where it still applies.
\end{abstract}

\section{Introduction}

A body $D\subseteq\mathbb{R}^n$ is said to be tiling the Euclidean space $\mathbb{R}^n$, if its translations by $\mathbb{Z}^n$
cover the entire space and have disjoint interiors. The foam problem asks for the least surface area a tiling body
$D$ can have. The problem had been considered by mathematicians already in the 19th century \cite{Thomson87},
and it also appears in chemistry, physics and engineering \cite{chemistry}. More recently, the problem had received
significant attention in the theoretical computer science community due to its strong relation with the parallel repetition problem~\cite{FKO,KROW,AK}.

The simplest example for a body that tiles the Euclidean space is the solid cube, $D = [0,1]^n$, which has surface
area $2n$. At first glance, one may expect the solid cube to be the best example there is, or more modestly
that any tiling body would need to have surface area $\Omega(n)$. The main results of \cite{KROW,AK}
show that this initial intuition is completely false, and that there are far more efficient tiling bodies whose surface area is $O(\sqrt{n})$.
This is surprising, since spheres --- which are the minimizers of surface area among all bodies with a given, fixed volume
(in this case volume $1$), have $\Theta(\sqrt{n})$ surface area and seem to be very far from forming a tiling of $\mathbb{R}^n$.
As we will shortly discuss, the existence of such surprising tiling body is intimately related to the existence of another
surprising object -- namely non-trivial strategies for $2$-prover-$1$-round games, repeated in parallel. The main goal of this
paper is to understand the symmetric variant of the foam problem, which is closely related to the symmetric variant of
parallel repetition.

\subsection{$2$-Prover-$1$-Round Games and Parallel Repetition}
\begin{definition}\label{def:game}
A $2$-Prover-$1$-Round Game $G=(L\cup R,E,\Phi,\Sigma_L,\Sigma_R)$ consists of a bipartite graph $(L\cup R,E)$,
alphabets $\Sigma_L,\Sigma_R$, and a constraint $\Phi(u,v)$ for every edge $(u,v) \in E$. The goal is to find assignments
$A_L: L \to \Sigma_L$, $A_R: R \to \Sigma_R$ that satify the maximum fraction of the constraints. A constraint $\Phi(u,v)$
is satisfied if $ (A_L(u), A_R(v)) \in \Phi(u,v)$, where by abuse of notation, $\Phi(u,v) \subseteq \Sigma_L\times \Sigma_R$
denotes the subset of label pairs that are deemed satisfactory.

The value of a game, denoted by ${\sf val}(G)$, is the maximum fraction of constraints that can be satisfied in
$G$ by any pair of assignments $A_L,A_R$.
\end{definition}
Equivalently, a $2$-Prover-$1$-Round Game can be viewed as a ``game" between two provers and a verifier.
The verifier picks a constraint $(u,v)$ at random, asks the ``question" $u$ to the left prover, the ``question" $v$ to the right prover,
receives ``answers" $A_L(u), A_R(v)$ respectively from the provers; the verifier accepts if and only if $(A_L(u), A_R(v)) \in \Phi(u,v)$. It is
easy to see that in this language, ${\sf val}(G)$ represents the maximum probability a verifier will accept, where the maximum
is taken over all of the strategies of the provers.

$2$-Prover-$1$-Round games play an important role in the study of PCPs and Hardness of approximation, and in fact an equivalent
statement of the seminal PCP Theorem~\cite{FGLSS,AS,ALMSS} can be stated in that language. It will be convenient for us to use the notation of gap problems:
for $0<s<c\leq 1$, denote by ${\sf Gap2Prover1Round}(c,s)$ the promise problem in which the input is a $2$-Prover-$1$-Round game
$G$ promised to either satisfy ${\sf val}(G)\geq c$ or ${\sf val}(G)\leq s$, and the goal is to distinguish between these two cases.
The parameters $c$ and $s$ are referred to as the completeness and soundness parameters of the problem, respectively.

\begin{thm}[PCP Theorem, \cite{FGLSS,AS,ALMSS}]
There exists $k\in\mathbb{N}$, $s<1$ such that ${\sf Gap2Prover1Round}(1,s)$ is NP-hard on instances with alphabet size at most $k$.
\end{thm}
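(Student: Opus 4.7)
The plan is to follow Dinur's combinatorial gap-amplification route to the PCP theorem and then convert the resulting PCP into a 2-Prover-1-Round gap problem. I start from a known NP-hard problem, say 3SAT, and formulate it as a 2-Prover-1-Round game with alphabet $\{0,1\}$: the right vertices are variables, the left vertices are clauses, edges encode incidences, and each constraint enforces consistency with a satisfying clause assignment. This trivially yields instances of completeness $1$ and soundness at most $1 - 1/m$, where $m$ is the instance size; the entire challenge is to amplify the $1/m$ gap into a universal constant while keeping the alphabet uniformly bounded.

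The core is an iteration of $O(\log m)$ amplification rounds, each of which boosts the gap by a constant factor. One round proceeds in three substeps. \emph{(i) Preprocessing:} replace the underlying constraint graph by a constant-degree spectral expander via expander replacement, preserving the gap up to a constant. \emph{(ii) Powering:} pass to the $t$-th graph power for a suitable constant $t$; new constraints demand that a prover's label of each vertex (an assignment on its $t$-neighborhood) be consistent with those of its $t$-neighbors. Expander mixing together with a careful ``plurality assignment'' analysis shows the gap grows multiplicatively by roughly $\sqrt{t}$, while the alphabet swells from $|\Sigma|$ to $|\Sigma|^{d^{t}}$, a larger but still fixed constant. \emph{(iii) Composition / alphabet reduction:} compose each vertex-test with an inner assignment tester (a PCP of Proximity) to shrink the alphabet back to a uniform constant, losing only a constant factor in the gap. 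After $O(\log m)$ rounds one arrives at a game with constant alphabet, completeness $1$, and soundness at most a universal $s<1$.

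The main obstacle is the composition step, since the inner assignment tester is itself a miniature PCP that one must construct from scratch. Classical constructions rely on algebraic tools: linearity testing and a quadratic test on the Hadamard encoding (analyzed via Fourier coefficients in the BLR style) for the short-proof variant, or low-degree testing on Reed--Muller codes combined with the sum-check protocol for the polynomial-proof variant. Proving soundness of these inner testers is where the genuinely algebraic content of the PCP theorem is concentrated, and is distinctly harder than the combinatorial amplification scheme. Fortunately, one needs only \emph{some} constant-query assignment tester with polynomial proof length; the amplification machinery then lifts it to the full theorem. The final translation to the 2-Prover-1-Round formulation stated here is routine: if the PCP verifier reads $q$ bits, the left prover supplies answers to all $q$ queried positions and the right prover supplies one randomly chosen among them, with a consistency-plus-acceptance constraint; this preserves the gap while keeping the alphabet of size at most $2^{q}$ on the left and $2$ on the right, which can be symmetrized to a single alphabet of size $k = 2^{q}$.
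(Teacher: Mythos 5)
The paper does not prove this theorem; it is stated as cited background from \cite{FGLSS,AS,ALMSS}, so there is no internal proof to compare against. Your outline is an accurate roadmap of a legitimate proof, but note two things. First, you follow Dinur's combinatorial gap-amplification route, which is a genuinely different path from the algebraic proofs in the references the paper actually cites ([AS, ALMSS] go through low-degree testing, sum-check, and recursive proof composition from the start); Dinur's approach buys a conceptually simpler global structure ($O(\log m)$ constant-factor amplification rounds) at the cost of still needing a small algebraic inner object, exactly as you observe. Second, as written this is an outline rather than a proof: the two genuinely hard steps --- the soundness analysis of graph powering (the ``plurality assignment'' argument, where the $\sqrt{t}$ amplification factor requires real work and in Dinur's original analysis one only gets amplification up to a ceiling of a small constant gap per round) and the construction and soundness of a constant-query assignment tester --- are named and deferred rather than carried out. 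That is acceptable for a citation-level statement but should be flagged as such.

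One small inaccuracy in your final reduction: converting a $q$-query PCP verifier into a 2-Prover-1-Round game via the query/consistency construction does not preserve the gap; it degrades the soundness gap by a factor of roughly $q$ (a cheating strategy fails only when the randomly chosen single query exposes the inconsistency). Since $q$ is a constant and the theorem only asserts the existence of \emph{some} $s<1$, this is harmless here, but the phrase ``preserves the gap'' is not literally correct and would matter in any quantitative version of the statement.
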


The PCP Theorem, as stated above, can be used to establish some hardness of approximation results. However it turns out that to get
strong hardness results, one must prove a variant of the theorem with small soundness, i.e.\ with $s$ close to $0$. One way to do
that is by amplifying hardness using \emph{parallel repetition}.

The $t$-fold repetition of a game $G$, denoted by $G^{\otimes t}$, is the game in which the verifier picks
$t$ independently chosen challenges, $(u_1,v_1),\ldots,(u_t,v_t)$ and sends them to the provers in a single
bunch, i.e.\ $\vec{u} = (u_1,\ldots,u_t)$ to one prover and $\vec{v} = (v_1,\ldots,v_t)$ to the second one. The provers are supposed
to give an answer to each one of their questions, say $A_L(\vec{u}) = (a_1,\ldots,a_t)$ and $A_R(\vec{v}) = (b_1,\ldots,b_t)$,
and the verifier accepts with only if $(a_i,b_i)\in\Phi(u_i,v_i)$ for all $i=1,\ldots,t$. What is the value of the $t$-fold
repeated game, as a function of ${\sf val}(G)$ and $t$?

The idea of parallel repetition was first introduced in~\cite{FRS}, wherein it was originally suggested that ${\sf val}(G^{\otimes t}) \approx {\sf val}(G)^t$.
Alas, in a later version of that paper it was shown to be false, leaving the question wide open. Raz~\cite{Razrep} was the first to prove that the value of
the repeated game decreases exponentially with $t$, and with many subsequent works improving the result \cite{Holenstein,Rao,DS,BG}.
The most relevant version for our purposes is the result of Rao~\cite{Rao}, which makes the following statement. First, we say a game $G$
is a projection game, if all of the constraints $\Phi(u,v)$ can be described by a projection map, i.e.\ there is a mapping $\pi_{u,v}\colon\Sigma_L\to\Sigma_R$
such that $\Phi(u,v) = \sett{(a,b)}{b = \phi_{u,v}(a)}$.

\begin{thm}\label{thm:Rao}
If $G$ is a projection game, and ${\sf val}(G) = 1-\eps$, then ${\sf val}(G^{\oplus t}) \leq (1-\eps^2)^{\Omega(t)}$.
\end{thm}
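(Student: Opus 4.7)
The plan is to prove Rao's theorem by the standard information-theoretic contradiction method, carefully exploiting the projection structure in order to obtain the $\eps^2$ exponent instead of the $\eps^3$ exponent that the same outline gives for general games. Assume toward contradiction that there is a strategy $(A_L, A_R)$ for $G^{\otimes t}$ whose winning probability $\delta$ exceeds $(1-c\eps^2)^{t}$ for a sufficiently small absolute constant $c>0$. The goal is to extract a strategy for a single copy of $G$ whose winning probability is strictly greater than ${\sf val}(G) = 1-\eps$, producing the desired contradiction.

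The first step is the entropic setup. Let $(\vec U, \vec V)$ be questions drawn from the product distribution $P$ of $G^{\otimes t}$, let $W$ denote the event that the strategy wins all $t$ rounds, and let $Q$ be the conditional distribution of the full transcript $(\vec U, \vec V, A_L(\vec U), A_R(\vec V))$ given $W$. Then $D_{KL}(Q\,\|\,P) = \log(1/\delta) \leq c\eps^2 t$. By the KL chain rule and an averaging argument, I would then find a random subset $T \subseteq [t]$ and a coordinate $i \notin T$ such that, conditioned on winning in rounds $T$ and on the $T$-coordinates of the transcript, the conditional distribution of $(U_i, V_i)$ is on average within KL divergence $O(\eps^2)$ of its unconditional product distribution; Pinsker's inequality then yields total variation distance $O(\eps)$ on a typical coordinate.

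The second step is to convert this into a single-copy strategy by correlated sampling. The two provers share randomness determining the set $T$, the coordinate $i$, and additional auxiliary variables needed by the sampling lemma. Given their respective single-copy questions $u$ and $v$, each prover privately samples the remaining coordinates of the transcript so as to approximate the conditional distribution of the first step with $(u,v)$ planted in coordinate $i$. They then run $A_L$ and $A_R$ on their synthetic transcripts and return the $i$-th coordinate answers. The projection hypothesis is what permits a sharper correlated-sampling step: since the constraint $b_j = \phi_{u_j,v_j}(a_j)$ makes the right prover's correct answer a deterministic function of the left prover's, the provers can avoid a second independent sampling stage that introduces an extra $\sqrt{\log(1/\delta)/t}$ factor in Holenstein's general-game argument, and this is precisely the saving that yields $\eps^2$ rather than $\eps^3$.

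The main obstacle is the correlated sampling step: designing a joint procedure that, using only shared randomness plus the private questions $u$ and $v$, produces samples whose joint law is within total variation $O(\eps)$ of the target conditional distribution from the first step. Once this is in place, the winning probability of the extracted single-copy strategy is at least $1 - O(\eps)$ minus the sampling error, which for $c$ chosen small enough exceeds ${\sf val}(G) = 1-\eps$, contradicting optimality and yielding the bound ${\sf val}(G^{\otimes t}) \leq (1-\eps^2)^{\Omega(t)}$.
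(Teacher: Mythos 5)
First, a point of order: the paper does not prove Theorem~\ref{thm:Rao}. It is quoted as a known result and attributed to Rao; it serves as background for the discussion of strong parallel repetition. So there is no proof in the paper to compare against, and your proposal has to stand on its own.

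On its own terms, it is an outline of the Raz--Holenstein information-theoretic framework rather than a proof, and the gaps are precisely where the content of the theorem lives. You explicitly defer the correlated-sampling/embedding lemma (``the main obstacle is\dots''), which is the entire technical heart of all proofs in this family; a proof cannot conclude by naming its main obstacle. More substantively, your account of why the projection hypothesis upgrades Holenstein's $(1-\eps^3)^{\Omega(t/\log|\Sigma|)}$ to $(1-\eps^2)^{\Omega(t)}$ is not the right mechanism. The $\eps^3$ and the alphabet dependence do not come from ``a second independent sampling stage''; they come from the iteration: one greedily conditions on winning a growing set $S$ of coordinates, and the dependency-breaking variable must carry the \emph{answers} on $S$ so that the provers can simulate consistently, costing $|S|\log|\Sigma|$ bits of the divergence budget $O(\eps^2 t)$. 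This caps the number of rounds at $O(\eps^2 t/\log|\Sigma|)$, and since each round only shrinks the winning probability by a factor $(1-\eps)$, one gets $(1-\eps)^{\eps^2 t/\log|\Sigma|}$. Rao's contribution is to show that for projection games the answers need not be carried in the dependency-breaking variable --- the projection structure is exploited in lower-bounding the agreement probability of the extracted strategy --- which permits $\Omega(\eps t)$ rounds and hence $(1-\eps)^{\Omega(\eps t)} = (1-\eps^2)^{\Omega(t)}$. Note that your step ``conditioned on winning in rounds $T$ and on the $T$-coordinates of the transcript'' is exactly the answer-conditioning that must be avoided, i.e.\ you are describing the general-game argument, not the projection-game one.

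Finally, the contradiction does not assemble as written. After conditioning on winning the rounds in $T$, the extracted single-copy strategy wins with probability roughly $\Pr[W_i \mid W_T]$ minus the embedding error; your claim that this is ``at least $1-O(\eps)$'' is unsupported, since $\Pr[W_i\mid W_T]$ being large is the very thing in question. The correct structure is a dichotomy: either $\E_i\bigl[\Pr[W_i\mid W_T]\bigr] \leq 1-\eps/2$, in which case one conditions on one more coordinate and iterates (and the final bound is the product of these factors over the iterations), or it exceeds $1-\eps/2$, in which case the extraction contradicts ${\sf val}(G)=1-\eps$. Your sketch asserts the second branch unconditionally and omits the iteration entirely, so even granting the sampling lemma, no bound on $\delta$ follows.
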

Rao's result seems nearly optimal, in the sense that a-priori, the best bound one can hope for is that ${\sf val}(G^{\oplus t}) \leq (1-\eps)^{\Omega(t)}$.
Quantitatively speaking, one may think that for all intents and purposes, Rao's bound is just as good as the best one can hope for. However,
as it turns out, there is at least one prominent problem where this quadratic gap is what makes the difference, which we describe next.

\paragraph{The Unique Games Conjecture and the Max-Cut Conjecture.} The Unique Games problem is a specific type of projection $2$-Prover-$1$-Round Game,
in which the projection maps $\phi_{u,v}$ are also bijections. The Unique Games Conjecture of Khot~\cite{Khot} (abbreviated UGC henceforth)
asserts that a strong PCP theorem holds for Unique-Games, and
more specifically that for any $\eps,\delta>0$, the problem ${\sf GapUG}(1-\eps,\delta)$ is NP-hard, when the alphabet sizes depend only on $\eps,\delta$.
This conjecture is now of central importance in complexity theory, and it is known to imply many, often tight inapproximability results
(see~\cite{KhotICM-survey,TrevisanUGCSurvey} for more details).
A prominent example is the result of~\cite{KKMO}, stating that assuming UGC, the Goemans-Williamson algorithm~\cite{GW} for Max-Cut is optimal.
In particular, for small enough $\eps>0$, if UGC is true, then ${\sf GapMaxCut}(1-\eps,1-\frac{2}{\pi}\sqrt{\eps}+o(1))$ is NP-hard. Does the converse hold? I.e.,
does the assumption that ${\sf GapMaxCut}(1-\eps,1-\frac{2}{\pi}\sqrt{\eps}+o(1))$ is NP-hard imply UGC? If so, that would be a promising avenue of attack on the
Unique-Games Conjecture.

Noting that Max-Cut is a Unique-Game and that Parallel repetition preserves uniqueness, one may hope that a reduction from
${\sf GapMaxCut}(1-\eps,1-\frac{2}{\pi}\sqrt{\eps}+o(1))$ to ${\sf GapUniqueGames}(1-\eps',\delta)$
would simply follow by appealing to a parallel repetition theorem, such as Rao's result~\cite{Rao}.
Alas, the quadratic loss there exactly matches the quadratic gap we have in Max-Cut, thereby nullifying it completely. This possibility
was discussed in~\cite{SS}, who among other things proposed that perhaps a stronger version of Theorem~\ref{thm:Rao} should hold for Unique-Games, in which
the $\eps^2$ is replaced with $\eps$. This conjecture was referred to as the Strong Parallel Repetition Conjecture, and unfortunately it turns out to be false.

\paragraph{A Strong parallel repetition theorem?}
The problem of understanding
parallel repetition over a very simple game, called the odd cycle game and denoted below by $C_n$, was shown to be closely related to the foam problem~\cite{FKO}.
In this game, we have a graph $G$ which is an odd cycle of length $n$, and the provers try to convince the verifier that $G$ is
a bipartite graph (while it is clearly not). To test the provers, the verifier picks a vertex $u$ from the cycle uniformly at random,
and then picks $v$ as $v=u$ with probability $1/2$, and otherwise $v$ is one of the neighbours of $u$ with equal probability. The verifier sends $u$ as a question
to one prover, and $v$ as a question to the other prover, and expects to receive a bit from each one $b_1, b_2$. The verifier checks
that $b_1 = b_2$ in case $u=v$, or that $b_1\neq b_2$ in case $u\neq v$.

Note that clearly, ${\sf val}(C_n) = 1-\Theta(1/n)$, and so the Strong Parallel Repetition Conjecture would predict that
the value of the $t$-fold repeated game is $1-\Theta(t/n)$ so long as $t\leq n$. Alas, this turns out to be false. First, in~\cite{FKO}, it was shown
that non-trivial solutions to the foam problem imply non-trivial strategies for the $t$-fold repeated game, and in particular the existence of a tiling
body with surface area $o(n)$ would refute the Strong Parallel Repetition Conjecture. Subsequently, Raz~\cite{Raz} showed that
the value of the $t$-fold repeated odd-cycle game is in fact at least $1-O(\sqrt{t}/n)$ so long as $t\leq n^2$, and that
Theorem~\ref{thm:Rao} is optimal (i.e., the quadratic gap is necessary, even for Unique-Games, and more specifically for Max-Cut).
Subsequent works were able to use these insights to solve the foam problem for the integer lattice~\cite{KROW,AK} and lead to better understanding of parallel repetition
and its variants~\cite{BHHRRS,BRRRS}. From the point of view of UGC, these results were very discouraging since they eliminate one of the main available venues
(perhaps the main one) for the proof of UGC.

Partly due to this issue, the best partial results towards UGC had to take an entirely different approach~\cite{KMS1,DKKMS1,DKKMS2,KMS2,BKS}, and currently
can only prove that ${\sf GapUG}(1/2,\delta)$ is NP-hard for every $\delta>0$.

\subsection{A symmetric variant of Parallel Repetition}
One may try to revive the plan for showing the equivalence of UGC and the hardness of Max-Cut by considering variants of parallel repetition. Ideally,
for that approach to work, one should come up with a variant of parallel repetition, in which (a) the value decreases exponentially with
the number of repetitions, and (b) the operation preserves uniqueness. One operation that had been considered in the literature, for example, is called fortification~\cite{Moshkovitz,BSVV}. Using this operation, the value of the game indeed decreases exponentially, however this operation does not preserve
uniqueness and therefore is not useful for showing the equivalence of UGC and the Max-Cut Conjecture.

More relevant to us is the symmetric variant of parallel repetition that had been previously suggested as a replacement for parallel repetition.
In this variant, given a basic game $G$,
the verifier chooses the challenges $(u_1,v_1),\ldots,(u_t,v_t)$, and sends the questions to the provers as unordered tuples, i.e.\ $U = \set{u_1,\ldots,u_t}$ and
$V = \set{v_1,\ldots,v_t}$. The verifier expects to receive a label for each element in $U$ and each element in $V$, and checks that they satisfy each one
of the constraints $(u_i,v_i)$.
We denote this game by $G^{\otimes_{{\sf sym}} t}$, and note that it
clearly preserves uniqueness; also, we note that the arguments used to refute the strong Parallel Repetition Conjecture do not immediately apply to it.
While a naive application of this variant can still be shown to fail in general,\footnote{This can be seen by considering a graph which is the disjoint union of many odd cycles (instead of a single odd cycle), say $M$, so that one would get a canonical ordering on most subsets of $t$ vertices from this graph, so long as $t=o(\sqrt{M})$.}
there is still a hope that it can be used in a more clever way and establish the equivalence of UGC and the Max-Cut Conjecture.
Our work is partly motivated by seeking such possibilities.

We are thus led to investigate the effect on symmetric repetition on the odd cycle game, and more specifically the symmetric variant of the foam problem
which again is very much related.

\subsection{Our results}
In this paper, our main object of study mainly are tilings of $\mathbb{R}^n$ using a \emph{symmetric body}.
\begin{definition}
  A set $D\subseteq\mathbb{R}^n$ is called symmetric if
  for any $\pi\in S_n$ and $x\in\mathbb{R}^n$, it holds that
  $x\in D$ if and only if $\pi(x)\in D$.
\end{definition}

The main question we consider, is what is the least surface area a symmetric tiling body can have.
Again, one has the trivial example of the solid cube $D = [0,1]^n$, but inspired by the non-symmetric
variant of the problem, one may expect there to be better examples. We first show that while this is
possible, the savings are much milder, and can be at most a multiplicative factor of $\sqrt{\log n}$.
\begin{thm}\label{thm:lb}
	Any symmetric tiling body $D$ of volume $1$
	with piecewise smooth surface has surface area at least $\Omega\left(\frac{n}{\sqrt{\log n}}\right)$.
\end{thm}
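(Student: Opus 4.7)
The plan is to reduce the theorem to an upper bound on the value of the symmetric parallel repetition of Raz's odd cycle game, along the lines of the tiling/parallel-repetition dictionary of Feige, Kindler and O'Donnell. From a symmetric tiling body $D\subseteq\mathbb{R}^n$ of volume $1$ and surface area $S$, associate the tile map $f\colon\mathbb{T}^n\to\mathbb{Z}^n$ defined by $x+f(x)\in D$. Because $D$ is $S_n$-invariant, $f$ is \emph{equivariant}: $f(\pi x)=\pi f(x)$ for every $\pi\in S_n$, and $S$ equals the $(n-1)$-Hausdorff measure of the discontinuity set of $f$ in a fundamental domain.

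First I would use $f$ to build a \emph{symmetric} strategy for the $n$-fold symmetric repetition of the odd cycle game $C_m$ (for a parameter $m$ to be chosen). Given an unordered $n$-tuple $\{u_1,\ldots,u_n\}\subseteq\mathbb{Z}_m$, lift it to an arbitrary ordered tuple $u\in\mathbb{Z}_m^n$, evaluate $z=f(u/m)$, and assign the bit $z_i\bmod 2$ to the vertex $u_i$. Equivariance of $f$ makes the answer independent of the chosen lift, so the strategy is well-defined on the unordered tuples that the symmetric game sends to the provers. A standard perimeter-counting argument (in which the piecewise-smooth hypothesis is used to justify the surface-area estimates) then shows that this strategy wins with probability at least $1-O(S/m)$, since the only failures come from coordinate-aligned segments of length $1/m$ crossing $\partial D$.

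Next I would combine the previous step with the companion upper bound on the game value, ${\sf val}(C_m^{\otimes_{{\sf sym}} n}) \le 1 - \Omega\!\left(\frac{n}{m\sqrt{\log n}}\right)$ for $m\gtrsim n/\sqrt{\log n}$, proven separately in the paper's analysis of symmetric parallel repetition. This improves the $1-\Omega(\sqrt{n}/m)$ bound coming from Raz's theorem by a factor of $\sqrt{n/\log n}$, and that is precisely the improvement needed to upgrade the isoperimetric $\Omega(\sqrt n)$ lower bound to $\Omega(n/\sqrt{\log n})$. Intuitively, the symmetric game deprives the provers of a canonical ordering of their questions and thereby rules out the ``offset'' strategies that are tight for standard parallel repetition of the odd cycle game. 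Combining the two estimates at any admissible $m$ (say $m\asymp n$) gives $S \gtrsim m\cdot n/(m\sqrt{\log n}) = n/\sqrt{\log n}$.

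The main obstacle is the upper bound on the symmetric game value. Classical parallel-repetition techniques yield only the weaker non-symmetric bound $1-\Omega(\sqrt{n}/m)$, which gives the isoperimetric $\Omega(\sqrt n)$ and nothing more, so one has to exploit the unordered structure of the symmetric game in an essential way. The equivariance of $f$ is what forces the induced strategy to itself be symmetric, and this restriction --- rather than any analytic smoothness --- is the structural feature that lets us beat Raz's bound and reach the $n/\sqrt{\log n}$ rate.
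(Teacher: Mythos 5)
Your proposal does not contain a proof; it contains a reduction of Theorem~\ref{thm:lb} to Theorem~\ref{thm:val_sym_cycle_ub}, and Theorem~\ref{thm:val_sym_cycle_ub} is precisely where all of the mathematical content lives. In the paper the logical order is the reverse of yours: the game-value upper bound is obtained in Section~\ref{sec:val_sym_cycle} as a \emph{discrete adaptation} of the proof of Theorem~\ref{thm:lb} (a strategy $A$ is converted into a symmetric tiling of the lattice $(C_n+\mathbb{Z})^t$, and the same potential-function argument is rerun there). The actual proof of Theorem~\ref{thm:lb} introduces the energy $\Psi(a)=\sum_{i<j}e^{-Z\cdot d(a_i,a_j)}$ with $Z\ee n/\log n$, shows via Claims~\ref{cl:1}--\ref{cl:4} that along a random needle of coordinate scale $\sigma\ee\sqrt{\log n}/n$ the energy strictly increases with probability $0.96$ for ``good'' points (the variance bound through Claims~\ref{cl:6},~\ref{cl:7} is the delicate step), and then uses the total-variation comparison of ${\bf a}$ and ${\bf a}-{\bf u}$ (Claim~\ref{cl:tv5}) to force $\Psi({\bf a}+{\bf u})>\Psi({\bf a}-{\bf u})$ with probability $>1/2$, a contradiction unless the escape probability is $\Omega(1)$. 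None of these ideas appears in your writeup; your closing admission that ``the main obstacle is the upper bound on the symmetric game value'' concedes that the theorem has been restated rather than proved. The heuristic that symmetry ``deprives the provers of a canonical ordering'' is not a substitute for the energy argument.

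There is also a concrete gap in the one step you do carry out, namely the claim that the grid-evaluation strategy wins with probability $1-O(S/m)$. The Cauchy--Crofton bound of Lemma~\ref{lem:santalo} controls boundary crossings for a \emph{uniformly random} starting point, not for starting points confined to the grid $\frac{1}{m}\mathbb{Z}^n$: one can decorate a tiling with bubbles of total surface area $o(1)$ placed exactly at the $m^n$ grid points and thereby scramble the induced strategy without increasing $S$ appreciably. The standard fix (averaging over a random translate of the tiling, as in~\cite{FKO}) is unavailable here because a generic translate destroys the $S_n$-symmetry that your equivariance argument relies on; only diagonal shifts are admissible, and they do not equidistribute the grid. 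This is why the paper's own foam-to-strategy direction (Section~\ref{sec:val_sym_cycle}, Lemmas~\ref{lemma:most_are_decisive} and~\ref{lem:neighbour_of_decisive}) replaces pointwise evaluation by a majority vote over ``decisive boxes'' and invokes a noise-sensitivity bound for the specific constructed body rather than its surface area alone. Both gaps would need to be filled for your route to go through, and filling the first one amounts to writing the proof the paper actually gives.
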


Besides the quantitative result itself, we believe the argument used in the proof of Theorem~\ref{thm:lb}
carries with it a lot of intuition regarding the additional challenge that the symmetric variants of the foam problem
and the parallel repetition posses, and we hope that this intuition will help us to develop better understanding
of symmetric parallel repetition in general. We remark that our proof actually shows a lower bound on the ``noise sensitivity''
parameter of the body, which is known to be smaller than the surface area of the body.

We complement Theorem~\ref{thm:lb} with a randomized construction showing that $O(\sqrt{\log n})$ savings
are indeed possible.
\begin{thm}\label{thm:ub}
	There exists a symmetric tiling body $D$ of volume $1$
	with piecewise smooth surface that has surface area $O\left(\frac{n}{\sqrt{\log n}}\right)$.
\end{thm}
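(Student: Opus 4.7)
The plan is to construct the symmetric tiling body in two stages: first a product construction that reaches the target surface area without symmetry, then a symmetrization step. I expect the symmetrization to be the main obstacle.

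For the product stage, invoke the construction of Kindler et al.\ and Alon--Klartag: for every $k$ there is $B_k \subseteq \R^k$ of volume $1$, piecewise smooth, and surface area $O(\sqrt{k})$ whose $\ZZ^k$-translates tile $\R^k$. Fix $k = \lfloor \log n \rfloor$ and (for simplicity) assume $k \mid n$. Then the product $\hat{B} := B_k^{\otimes (n/k)} \subseteq \R^n$ has volume $1$, tiles $\R^n$, and has surface area $(n/k) \cdot O(\sqrt{k}) = O(n/\sqrt{\log n})$. However, $\hat{B}$ is only invariant under the block-preserving subgroup $H \cong S_k \wr S_{n/k}$ of $S_n$, not under all of $S_n$.

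For the symmetrization, the idea is to ``rotate'' $\hat{B}$ by different elements of $S_n$ at different $\ZZ^n$-cubes, in an $S_n$-equivariant way. Concretely, I would partition $\R^n$ into $\ZZ^n$-periodic, $S_n$-equivariant regions $R_\sigma$ indexed by cosets $\sigma \in S_n/H$ (for instance, using the sort order of the multiset of integer or fractional parts of $x$, which is automatically equivariant), and set
\[
D := \bigcup_{\sigma \in S_n/H} \bigl( R_\sigma \cap \sigma(\hat{B}) \bigr).
\]
Equivariance of the labeling together with $H$-invariance of $\hat{B}$ makes $D$ invariant under $S_n$.

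The hard part is verifying that $D$ actually tiles and bounding the extra surface area it acquires. Tiling requires the various rotated copies $\sigma(\hat{B})$ to mesh consistently across chamber walls (the loci where the label $\sigma$ changes), which may fail for rigid choices of the $R_\sigma$ and may instead need a randomized variant: sample an $S_n$-symmetric random labeling $\pi : \ZZ^n \to S_n$ and argue by a first-moment or exchange argument that a valid symmetric tiling exists with positive probability. Bounding the surface area requires that the contribution of the chamber walls is $o(n/\sqrt{\log n})$; this is plausible because the walls sit in a codimension-$1$ stratum of coordinate-collision loci, but if necessary one can smooth each wall by a transition region of width $\eps \sim 1/\sqrt{\log n}$ that interpolates between adjacent rotations, and the resulting transition cost of order $(\text{wall area})/\eps$ should remain within the $O(n/\sqrt{\log n})$ budget.
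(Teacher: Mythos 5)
Your proposal has a genuine gap: the symmetrization step, which you yourself flag as ``the hard part,'' is the entire content of the theorem, and the mechanism you sketch for it does not work. Two concrete problems. First, the tiling property. The set $D=\bigcup_{\sigma}\bigl(R_\sigma\cap\sigma(\hat B)\bigr)$ tiles only if, on every chamber wall, the two adjacent rotated tilings assign points to lattice cells consistently; for independent rotations of $\hat B$ this fails on a set of positive measure, producing overlaps and gaps, and no amount of re-randomizing the labels fixes a structural failure of this kind (a first-moment argument would need the ``bad'' event to be rare for a single wall, which it is not). Second, even granting a consistent gluing, the surface-area accounting fails. If the chambers are defined by the sort order of the fractional parts, the walls are the loci $\{x_i\}=\{x_j\}$, whose total $(n-1)$-dimensional area per fundamental domain is $\Theta(n^2)$ (there are $\binom{n}{2}$ pairs, each contributing area $\Theta(1)$ per unit cube). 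Wherever $\sigma(\hat B)$ and $\sigma'(\hat B)$ disagree across a wall, that portion of the wall becomes part of $\partial D$, and for generic rotations the disagreement is a constant fraction; this gives $\Omega(n^2)$, worse even than the cube. The closing ``smooth the wall over width $\eps$'' remark does not help: a tiling body cannot be locally mollified while remaining a tiling body (any volume you remove must reappear in a translate), and in any case $(\text{wall area})/\eps$ only grows as $\eps\to 0$. Note also that the Kindler et al.\ and Alon--Klartag bodies $B_k$ are not themselves $S_k$-symmetric, so $\hat B$ is invariant only under the block permutations $S_{n/k}$, making the coset structure you need even larger.

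For comparison, the paper takes a completely different route that avoids gluing altogether. It builds a \emph{symmetric rounding scheme} $R\colon\R^n\to\ZZ^n$ directly: given $x$, one looks only at the multiset of fractional parts $\{x_1\},\ldots,\{x_n\}$ on the torus, selects a ``cut point'' $z$ that is at distance $\Omega(\log n/n)$ from all of them, and rounds $x_i$ down or up according to whether $\{x_i\}<z$ or $\{x_i\}>z$. Symmetry is automatic since $z$ depends only on the multiset. The crux is making the choice of $z$ \emph{stable}: a deterministic maximizer of the distance to the $x_i$'s is too noise-sensitive, so the paper samples $z$ from a distribution built from a polynomially-tapered scoring function and uses Holenstein's correlated sampling so that $x$ and a Gaussian perturbation $x+\De$ select the same $z$ except with probability $O(\|p(x)-p(x+\De)\|_1)$. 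This yields noise sensitivity $O(\eps\, n/\sqrt{\log n})$ at noise $\eps$ per coordinate, which a separate measure-theoretic argument converts into the stated surface-area bound. If you want to salvage your plan, you would essentially have to reinvent a device of this kind to make the choice of local structure a stable, symmetric function of $x$ itself rather than of which chamber $x$ lies in.
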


Our results also imply tight bounds for the value of the $t$-fold symmetric repetition of the odd cycle game, which we discuss next.

\subsection{Significance of our results for symmetric parallel repetition.}
Using our techniques, one may give sharp estimates to the value of the
$t$-fold symmetric repetition of the odd cycle game, as follows.
\begin{thm}\label{thm:val_sym_cycle_ub}
  There is $c>0$, such that for an odd $n$, if $t \leq c n\sqrt{\log n}$ then
  ${\sf val}(C_n^{\otimes_{{\sf sym}} t})\leq 1 - c\frac{t}{n\sqrt{\log t}}$.
\end{thm}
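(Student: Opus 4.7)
The plan is to reduce Theorem \ref{thm:val_sym_cycle_ub} to (the noise-sensitivity form of) Theorem \ref{thm:lb} by adapting to the symmetric setting the foam-tiling reduction of \cite{FKO}. Given a pair of prover strategies for $C_n^{\otimes_{{\sf sym}} t}$ of value $1-\delta$, I would extract from them an $S_t$-invariant body $D\subseteq\mathbb{R}^t$ tiling $\mathbb{R}^t$ (after appropriate rescaling) whose surface area, or more precisely whose noise sensitivity at scale $1/n$, is controlled by $O(n\delta)$. Applied to $D$ in dimension $t$, Theorem \ref{thm:lb} then forces $n\delta \gtrsim t/\sqrt{\log t}$, which rearranges to the bound $\delta \gtrsim t/(n\sqrt{\log t})$; the hypothesis $t \leq c n \sqrt{\log n}$ exists solely to ensure that $1 - c\, t/(n\sqrt{\log t})$ is a meaningful value in $[0,1]$.

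The construction of $D$ proceeds in three steps. First, any pair of symmetric strategies $(A_L,A_R)$ may be viewed as $S_t$-equivariant maps $[n]^t\to\{0,1\}^t$, because the verifier forwards unordered tuples and the prover's answer to each element of the multiset therefore depends only on that element and the multiset; by averaging over the swap of sides (which affects the value only mildly) one may further assume $A_L = A_R =: A$. Second, following \cite{FKO}, I lift $A$ to the continuous torus $(\mathbb{R}/n\mathbb{Z})^t$ via a random shift, producing an equivariant partition $\{P_a\}_{a\in\{0,1\}^t}$ satisfying $\pi P_a = P_{\pi a}$. Third, extract from this partition a canonical $S_t$-symmetric body $D$: the cells $P_{0^t}$ and $P_{1^t}$ are themselves $S_t$-invariant, and after rescaling the torus to unit size their union (or an appropriate averaged version of the full partition) tiles $\mathbb{R}^t$ by $\mathbb{Z}^t$-translates.

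The main obstacle lies in the third step, which quantitatively links the prover's defect $\delta$ to the noise sensitivity of $D$. A losing transcript corresponds to two torus points that differ in a single coordinate by $\pm 1$ and are labeled differently by $A$, and by the $S_t$-symmetry of the verifier's distribution this event has probability equal to a coordinate-symmetric surface-area functional of the partition. The technical subtlety is that the cells $P_a$ for $a\notin\{0^t,1^t\}$ are not individually $S_t$-invariant, so their boundary contribution cannot be charged directly to $D$; the cleanest way around this is probably to upgrade Theorem \ref{thm:lb} from a single body to $S_t$-equivariant colorings of the torus (a strengthening hinted at by the authors' remark that their argument is really a noise-sensitivity bound), after which the desired inequality follows immediately.
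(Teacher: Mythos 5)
Your high-level plan — turn a good symmetric strategy into a symmetric tiling and then run a noise-sensitivity lower bound against it — is indeed the paper's plan, but the paper does \emph{not} (and cannot) invoke Theorem~\ref{thm:lb} as a black box, and the place where your reduction breaks is exactly where the paper has to do genuinely new work. Two concrete problems. First, the object you build in step three is wrong: $P_{0^t}\cup P_{1^t}$ does not tile by $\mathbb{Z}^t$-translates. The correct conversion (Section~\ref{sec:val_sym_cycle}) is to define a rounding map $R(x)=A(x)+nx\pmod 2$ on the discrete lattice $L=(C_n+\mathbb{Z})^t$ and take the single cell $D=R^{-1}(0^t)$; the verifier's acceptance is then exactly the event that $x$ and $x+u$ land in the same cell (Claim~\ref{claim:success_then_same_cell}). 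This part is repairable. Second, and this is the real gap: even after extending $D$ to a union of boxes of side $1/n$ so that Theorem~\ref{thm:lb} formally applies, the surface-area conclusion is vacuous (a union of $n^t$ tiny boxes has enormous surface area no matter how good the strategy is), and the noise-sensitivity form of the argument needs more than control of the endpoints. The engine of Section~\ref{sec:lb} is Claim~\ref{cl:1}, which requires the \emph{entire continuous segment} $a+[0,1]u$ to lie in $D$ in order to invoke Lemma~\ref{lem:diff} at the intermediate point where two coordinates would cross an integer difference. The game only controls lattice points, not the segment between them, so this continuity step fails outright for the discretized body, and the failure probability of the strategy does not obviously bound the probability that the segment exits and re-enters a cell.

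The paper's resolution of this is the substantive new content of its proof of Theorem~\ref{thm:val_sym_cycle_ub}: it reruns the energy argument directly on the discrete lattice with Bernoulli steps, replaces the continuity argument by an arithmetic one (Claim~\ref{cl:1'}) in which the only possible crossing times are $\lambda\in\{\tfrac12,\tfrac1k,\tfrac1{2k}\}$ because the step count $k\approx n\sqrt{\log t}/t$ is chosen \emph{prime}, and rules those out using membership of $x\pm u$ and $x\pm ku$ in $D$ together with the ``disjoint Bernoulli'' decomposition and the notion of decent pairs (Claims~\ref{claim:two_small_steps} and~\ref{claim:decent}); the passage from one game step to $k$ steps is the ballistic chaining of Claim~\ref{claim:trivial_delta_eta}. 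Your proposal defers all of this to ``upgrade Theorem~\ref{thm:lb} to $S_t$-equivariant colorings,'' which restates the difficulty rather than resolving it. So the proposal is not a proof: it is missing the discrete analogue of Claim~\ref{cl:1}, which is the key new ingredient.
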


\begin{thm}\label{thm:val_sym_cycle_lb}
  For all $n, t\in\mathbb{N}$ it holds that
  ${\sf val}(C_n^{\otimes_{{\sf sym}} t})\geq 1 - O\left(\frac{t}{n\sqrt{\log t}}\right)$.
\end{thm}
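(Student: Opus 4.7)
The plan is to translate the symmetric tiling body of Theorem~\ref{thm:ub} into a prover strategy for $C_n^{\otimes_{\sf sym} t}$, following the body-to-strategy reduction of~\cite{FKO,KROW,AK} and verifying that the $S_t$-symmetry of the body makes it compatible with the symmetric variant of the game. Let $D\subseteq\mathbb{R}^t$ be a symmetric tiling body of volume $1$ with surface area $s=O(t/\sqrt{\log t})$ supplied by Theorem~\ref{thm:ub}. The two provers share a uniform random offset $\vec{z}\in\mathbb{R}^t/\mathbb{Z}^t$. Upon receiving an unordered multiset $U=\{u_1,\ldots,u_t\}\subseteq\mathbb{Z}/n\mathbb{Z}$, a prover fixes an arbitrary ordering of the elements of $U$, forms the point $\vec{p}=\vec{u}/n+\vec{z}\bmod 1$, computes the unique $\vec{k}(\vec{p})\in\mathbb{Z}^t$ satisfying $\vec{p}\in D+\vec{k}(\vec{p})$, and outputs the parity $k_i(\vec{p})\bmod 2$ for the element at position $i$.

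The new ingredient relative to the non-symmetric construction is the verification that this labeling is well-defined as a function on the multiset $U$, which is exactly where the $S_t$-symmetry of $D$ is used. Because $\pi(D)=D$ for every $\pi\in S_t$, the tile map is equivariant, $\vec{k}(\pi(\vec{p}))=\pi(\vec{k}(\vec{p}))$, so the label assigned to each element of $U$ is independent of the choice of ordering. Moreover, if $u_i=u_j$ then swapping positions $i$ and $j$ fixes $\vec{p}$ but would send the containing tile $\vec{k}$ to the distinct tile with coordinates $i,j$ transposed; uniqueness of the tile forces $k_i=k_j$, so the parity labeling descends to a well-defined function on $U$.

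Given well-definedness, the success probability is analyzed as in~\cite{FKO,KROW,AK}. Fix the ordering the verifier used and let $\vec{k}^L,\vec{k}^R$ be the tiles of $\vec{u}/n+\vec{z}$ and $\vec{v}/n+\vec{z}$; the constraint at position $i$ is violated iff $k_i^L-k_i^R$ has parity different from $\mathbf{1}[u_i\neq v_i]$. Parameterizing the straight-line motion from $\vec{u}/n+\vec{z}$ to $\vec{v}/n+\vec{z}$ and integrating over $\vec{z}$, a standard face-counting argument bounds the probability that any of the $t$ constraints is violated by $O(s/n)$. Substituting $s=O(t/\sqrt{\log t})$ gives ${\sf val}(C_n^{\otimes_{\sf sym} t})\geq 1-O(t/(n\sqrt{\log t}))$.

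The principal obstacle is the last step: the per-coordinate face-crossing contributions must aggregate to $O(s/n)$ rather than to a weaker $O(s\sqrt{t}/n)$ that a naive Cauchy-Schwarz over axis-aligned normals would suggest. Equivalently, what controls the failure probability is the noise sensitivity of $D$ in axis-aligned directions, and for the randomized construction underlying Theorem~\ref{thm:ub} this parameter must behave like $s/t$ per coordinate. Verifying this is what the KOW reduction already accomplishes for tiling bodies, and for the symmetric construction here it can be checked along the same lines as the surface-area estimate in Theorem~\ref{thm:ub}; the matching upper bound in Theorem~\ref{thm:val_sym_cycle_ub} shows that the resulting scaling $t/(n\sqrt{\log t})$ is tight.
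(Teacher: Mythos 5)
Your high-level plan (convert the body of Theorem~\ref{thm:ub} into a prover strategy by rounding) is the right one, but your reduction has a genuine gap exactly at the point where the symmetric game differs from the ordered one. Your provers share a uniform offset $\vec{z}\in\mathbb{R}^t/\mathbb{Z}^t$ and round $\vec{u}/n+\vec{z}$ to its tile. The equivariance you invoke is $\vec{k}(\pi(\vec{p}))=\pi(\vec{k}(\vec{p}))$, but reordering the multiset replaces $\vec{p}=\vec{u}/n+\vec{z}$ by $\pi(\vec{u})/n+\vec{z}=\pi(\vec{p})-\pi(\vec{z})+\vec{z}$, \emph{not} by $\pi(\vec{p})$: the offset is not permuted along with the coordinates. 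Hence the labeling does depend on the chosen ordering (the same issue invalidates your $k_i=k_j$ argument for repeated elements), and after fixing the shared randomness you do not obtain a legal strategy for $C_n^{\otimes_{{\sf sym}} t}$, i.e.\ a function $A$ with $A(\pi(x))=\pi(A(x))$. Taking $\vec{z}$ to be a constant vector restores symmetry but destroys the uniformity of $\vec{p}$ over $[0,1)^t$ that your noise-sensitivity analysis requires. This is precisely the obstruction the paper's proof is built around: instead of a random offset, each prover rounds its challenge $x'$ to the lattice point $z$ whose cell $D+z$ occupies a majority of the axis-aligned box $B_{\vec{a}}$, $\vec{a}=nx'$, containing it. That rule is offset-free and manifestly equivariant, and the price is two new lemmas: an isoperimetric argument (Fact~\ref{fact:iso}, Lemma~\ref{lemma:most_are_decisive}) showing that all but $O(An^{t-1})$ boxes are ``decisive'', and Lemma~\ref{lem:neighbour_of_decisive} showing that a Bernoulli step typically preserves the majority tile.

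Two points you leave as assertions also need care. First, the answer must be $z+nx'\pmod 2$ rather than the bare tile parity (otherwise two challenges in the same cell with $u_i\neq v_i$ receive equal answers and the edge constraint fails); your stated violation condition implicitly assumes the corrected rule. Second, the perturbation relevant to the game is the discrete step $(\vec{v}-\vec{u})/n\in\{0,\pm 1/n\}^t$, whose magnitude $1/n$ need not lie in the ``small enough $\eps$'' regime for which the noise-sensitivity estimate behind Theorem~\ref{thm:ub} is proved; the paper reruns the Section~\ref{sec:ub} argument for Bernoulli steps of size $\eps/n$ (Lemma~\ref{lem:use_for_strat}) and then decomposes the $1/n$ step into $k=1/\eps$ sub-steps, union-bounding to get the $O(A/n)$ failure probability. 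Your closing remark correctly identifies per-coordinate sensitivity as the controlling quantity, but the symmetry failure above is the step that cannot be patched within your framework.
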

We remark that a similar connection between the standard foam problem and the value of the $t$-fold repeated game
is well known. More precisely, in~\cite{FKO} the authors show that
(1) tilings of the Euclidean space with small surface area can be used to derive good strategies for $C_n^{\otimes t}$,
and (2) the Euclidean isoperimetric inequality (which gives a lower bound of $\Theta(\sqrt{n})$ on the surface area of a tiling body) can be used
to prove upper bounds on the value of $C_n^{\otimes t}$. We remark that while (1) above is derived in a black-box way, the converse direction, i.e. (2),
is done in a white-box way. That is, the authors in~\cite{FKO} do not actually use the Euclidean isoperimetric inequality, but rather convert one of its
proofs into an upper bound of the value of the $t$-fold repeated odd cycle game.

In contrast to~\cite{FKO}, our proof of Theorems~\ref{thm:val_sym_cycle_ub},~\ref{thm:val_sym_cycle_lb} follow more direct adaptations of the proofs of
Theorems~\ref{thm:lb},~\ref{thm:ub}. This is partly because our arguments  work from scratch and are therefore more flexible.
We outline these adaptations in Section~\ref{sec:val_sym_cycle}.

We believe that Theorem~\ref{thm:val_sym_cycle_ub} gives some new life to the
possible equivalence between the Max-Cut Conjecture and UGC. For example, this would follow if such rate of amplification
would hold for all graphs if we allow for a ``mild'' preprocessing phase first (i.e., preprocessing
that doesn't change the value of the instance by much). For this reason, we believe it would be interesting to investigate
other graph topologies on which symmetric parallel repetition performs well,
and hope that the techniques developed herein will be useful.

On the flip side, Theorem~\ref{thm:val_sym_cycle_lb} asserts that even symmetric parallel repetition on
the odd cycle game admits non-trivial strategies.
Thus, we cannot hope to use it in order to establish the equivalence of weaker forms of the Max-Cut Conjecture and UGC.
Here, by weaker forms of the Max-Cut Conjecture, we mean the conjecture that
${\sf GapMaxCut}[1-\eps,1-\delta(\eps)]$ is NP-hard for small enough $\eps$,
and $\delta(\eps)$ is a nearly linear function of $\eps$, e.g.\ $\delta(\eps) = 100\eps$
or $\delta(\eps) = \eps\sqrt{\log(1/\eps)}$. Given that the best known NP-hardness results
for Max-Cut in this regime are only known for $\delta = (1+\Omega(1))\eps$, this means that there is still
a significant road ahead to establish even the weakest version of the Max-Cut Conjecture that may be useful for UGC.

\subsection{Techniques}
In this section, we explain some of the intuition and idea that go into the proof of Theorems~\ref{thm:lb} and~\ref{thm:ub}, focusing mostly on the former.

Let $D$ be a symmetric tiling body. To prove that the surface area of $D$ is at least $A$, it is enough to prove that $D$ is sensitive to noise rate
$1/A$. I.e., that if we take a point $x\in_R D$, and walk along a random direction $u$ of (expected) length $1/A$, then with constant probability we escape
$A$ at some point on the line $\ell_{x,u}(t) = x+t\cdot u$.

We begin by describing an argument showing a worse bound than the one proved in Theorem~\ref{thm:lb}, which is nevertheless helpful in conveying some of the intuition.
To prove that a random line $\ell_{x,u}(t)$ crosses $D$ with noticeable probability, we argue that for appropriate length of $u$, with constant probability
the line $\ell_{x,u}$ will contain a point in which there are two coordinates differing by a non-zero integer, say $y$ with the coordinates being $i,j$.
Note that this is enough, since then if we assumed that $y\in D$, then the point $y'$ in which the value of coordinates $i,j$ is switched also lies in $D$ (by symmetry),
and then the difference of $y$ and $y'$  is a non-zero lattice vector, so they must be in different cells of the tiling. Therefore we conclude that $y\not\in D$.

With this plan in mind, let $x = (x_1,\ldots,x_n)\in_R D$, and consider the coordinates of $x$ modulo $1$, i.e.\ $B = \set{x_1\pmod{1},\ldots,x_n\pmod{1}}$, as points
in the torus $\mathbb{T}$.
First, it can be shown without much difficulty that they are jointly distributed as uniform random points on $\mathbb{T}$, hence standard probabilistic tools
tell us that any interval of length $100\log n/n$ on the circle contains at least two points from $B$. Now, regardless of how the body $D$ looks like,
there would be two coordinates, say $i$ and $j$, that almost differ by a non-zero integer, yet appear very close when projected on the circle,
i.e.\ in distance at most $100\log n/n$. In this case, with constant probability the coordinates $i,j$ get even closer along a random line $\ell_{x,u}(t) = x+t\cdot u$,
and provided the length of $u$ is long enough to cover the distance between $x_i,x_j$ on the circle (i.e.\ each coordinate of magnitude $\Theta(\log n/n)$),
the line $\ell_{x,u}(t)$ would contain a point as desired.

The above argument can indeed be formalized to yield a lower bound of $\Omega\left(\frac{n}{\log n}\right)$ on the surface area of $D$, but it carries more intuition
than just the bound itself. In a sense, this argument says that if we project $x$ onto the torus, we should be wary of coordinates whose projections are too close,
and make sure that it would only occur if the coordinates themselves are close (as opposed to almost differing by a non-zero integer). Analyzing the event that
two coordinates meet on the circle while being different is easily seen however to not yield a better bound than $\Omega(n/\log n)$, hence to prove Theorem~\ref{thm:lb}
we must look at a different event. That being said, the argument does tell us that we should look at pairwise distances between coordinates of
$x$ when projected on the circle, and in particular on pairs that ``relatively close'' and the way they move along a line in a random direction.

It turns out that it is enough to come up with some parameter that behaves differently on the endpoints of the line,
assuming the line does not escape $D$. This is because that if the escape probability from $D$ is small, then the distributions of
$x$ and $x+u$ are close in statistical distance, and in particular any parameter should behave roughly the same on $x$ and on $x+u$.
Indeed, our proof utilizes an energy function (inspired by the previous argument) that considers the pairwise distances between coordinates of $x$;
the contribution from a pair of coordinates that are in distance $d$ in the circle is proportional to $e^{-Z\cdot d}$, where $Z\sim \frac{n}{\sqrt{\log n}}$.
We show that with high probability, the energy increases along a random line $\ell_{x,u}(t)$ provided it does not escape $D$,
while on the other hand, if the escape probability is small, then $x$ and $x+u$ are close in statistical distance and hence
$\Prob{x,u}{{\sf Energy}(x+u) > {\sf Energy}(x)}\approx \half$. This implies that the escape probability must be constant.

We remark that the above high-level intuition also plays a role in the proof of Theorem~\ref{thm:lb}. I.e., when constructing a symmetric tiling body $D$,
all we really need to care about are the pairwise distances between coordinates, and that we must make sure that somewhat far coordinates will project to far points
on the torus. Indeed, given a point $x\in\mathbb{R}^n$, in order to decide which integer lattice point $y\in\mathbb{Z}^n$ we round $x$ to, we only look at this pairwise distances
of $x$ on the torus. We try to find a point $z$ on the torus that is far from all the coordinates of $x$, and do the rounding according to it.
One naive attempt would be to take $z$ that is furthest from
all coordinates of $x$, however this point turns out to be very noise sensitive and therefore yield a body with large surface area. Instead, we consider a probability
distribution that only puts significant weight on $z$'s that are somewhat far from all $x_i$'s, yet is not too concentrated around the maximizers. Coming up and analyzing
a construction along these lines turns out to require considerable technical effort, and we defer a more elaborate discussion to Section~\ref{sec:ub}

\paragraph{Organization of the paper.} In Section~\ref{sec:prelim}, we set up basic notations and preliminaries.
Section~\ref{sec:lb} is devoted to the proof of Theorem~\ref{sec:lb}, and Section~\ref{sec:ub} is devoted for the proof of Theorem~\ref{sec:ub}.
In Section~\ref{sec:val_sym_cycle} we prove Theorems~\ref{thm:val_sym_cycle_ub},~\ref{thm:val_sym_cycle_lb},
and in Section~\ref{sec:open} we state some open problems.

\section{Preliminaries}\label{sec:prelim}
\paragraph{Notations.}
We write $X\ll Y$ or $X = O(Y)$ to say that
there exists an absolute constant $C >0$ such that $X\leq C\cdot Y$,
and similarly write $X\gg Y$ or $X = \Omega(Y)$ to say that
there exists an absolute constant $c >0$ such that $X\geq c\cdot Y$.
We write $X\ee Y$ or $X = \Theta(Y)$ to say that $Y\ll X\ll Y$.

We denote random variables by boldface letters such as ${\bf x}$ and $\bm{\De}$.
We denote by $\mathcal{N}(\mu,\sigma^2)$ the distribution of a standard
Gaussian random variable with mean $\mu$ and variance $\sigma^2$, and by
$\mathcal{N}(\vec{\mu}, \Sigma)$ the distribution of a multi-dimensional
Gaussian random variable with means $\vec{\mu}$ and covariance matrix $\Sigma$.
\subsection{Needles}

\begin{definition}
  Let $\delta>0$, and let $a\in\mathbb{R}^n$. A random $\delta$-needle is a line $\ell_{a,{\bf u}} = \sett{a+t\cdot {\bf u}}{t\in[0,1]}$
  where the direction vector ${\bf u}$ is a chosen as a standard Gaussian $\mathcal{N}(0,\delta I_n)$.
\end{definition}

Given a tiling body $D$, a random $\delta$-needle from $D$ is a random $\delta$-needle $\ell_{{\bf a}, {\bf u}}$
where ${\bf a}\in D$ is chosen uniformly. Random needles are a useful tool to measure the surface area of a $D$, as shown
in the following two lemmas. First, given a tiling body $D$ and a needle $\ell_{a, u}$, we may think of the needle
as ``wrapping around'' around $D$,  i.e.\ its points are taken modulo $D$. We denote this ``wrapped around'' line by
$\tilde{\ell}_{a, u}$. We will use the following formula from \cite{Santalo}; the case $n=2$ is formula (8.10) therein,
and the extension to general $n$ is discussed in page 274.
\begin{lemma}\label{lem:santalo}
  There is a constant $C_n = \Theta(1)$, such that the following holds.
  Let $S$ be a piecewise smooth surface in a tiling body $D$ of volume $1$, and let $\delta>0$.
  Then
  \[
   \Expect{{\bf a}\in D, {\bf u}\sim\mathcal{N}(0,\delta I_n)}{\card{\tilde{\ell}_{{\bf a}, {\bf u}}\cap S}} = C_n \cdot\sqrt{\delta} \cdot {\sf area}(S).
  \]
\end{lemma}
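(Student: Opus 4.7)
The plan is to unwrap the wrapped needle using the tiling, apply Fubini, and reduce to a classical Cauchy--Crofton computation for each fixed direction. Because $D$ tiles $\mathbb{R}^n$ under $\mathbb{Z}^n$-translations, for every $a\in D$ and $u\in\mathbb{R}^n$ the points of $\tilde{\ell}_{a,u}\cap S$ are in natural bijection with the points of $\ell_{a,u}\cap \tilde{S}$, where $\tilde{S} := S+\mathbb{Z}^n$ is the periodic lift of $S$. Writing $g_\delta$ for the density of $\mathcal{N}(0,\delta I_n)$ and using ${\sf vol}(D)=1$, this gives
\[
\Expect{{\bf a}\in D,\,{\bf u}}{\card{\tilde{\ell}_{{\bf a},{\bf u}}\cap S}} \;=\; \int_{\mathbb{R}^n}\int_D \card{\ell_{a,u}\cap \tilde{S}}\,da\,g_\delta(u)\,du.
\]
For each fixed $u$ the inner integral expands as $\sum_{v\in\mathbb{Z}^n}\int_D \card{\ell_{a-v,u}\cap S}\,da = \sum_v \int_{D-v}\card{\ell_{a,u}\cap S}\,da = \int_{\mathbb{R}^n}\card{\ell_{a,u}\cap S}\,da$, since $\{D-v\}_{v\in\mathbb{Z}^n}$ tiles $\mathbb{R}^n$ up to measure zero. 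This is the only step that uses the tiling hypothesis.

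Next, for a fixed $u$, I would decompose $a=a_\perp+s\hat{u}$ with $\hat{u}=u/|u|$ and $a_\perp\in u^\perp$. The needle $\ell_{a,u}$ then sweeps a segment of length $|u|$ along the full line $L_{a_\perp}$ through $a_\perp$ in direction $\hat{u}$. Integrating in $s\in\mathbb{R}$ first covers each point of $L_{a_\perp}$ with multiplicity $|u|$, so
\[
\int_{\mathbb{R}^n}\card{\ell_{a,u}\cap S}\,da \;=\; |u|\int_{u^\perp}\card{L_{a_\perp}\cap S}\,da_\perp.
\]
The remaining integral is the classical projection identity from integral geometry: applying the area formula to the orthogonal projection $\mathbb{R}^n\to u^\perp$ yields $\int_{u^\perp}\card{L_{a_\perp}\cap S}\,da_\perp = \int_S \card{\hat{u}\cdot n_S(x)}\,d\sigma(x)$, where $n_S(x)$ is the unit normal to $S$ at $x$, defined almost everywhere because $S$ is piecewise smooth.

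Putting these together and swapping the order of integration,
\[
\Expect{{\bf a},{\bf u}}{\card{\tilde{\ell}_{{\bf a},{\bf u}}\cap S}} \;=\; \int_S \Expect{{\bf u}\sim\mathcal{N}(0,\delta I_n)}{\card{{\bf u}\cdot n_S(x)}}\,d\sigma(x).
\]
Since $n_S(x)$ is a unit vector, ${\bf u}\cdot n_S(x)\sim\mathcal{N}(0,\delta)$, so the inner expectation equals $\sqrt{2\delta/\pi}$ independently of $x$, proving the lemma with $C_n=\sqrt{2/\pi}=\Theta(1)$.

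The one step that demands a little care is the projection identity for general piecewise smooth $S$; this is standard but requires discarding the singular locus of $S$, which has $(n-1)$-dimensional measure zero and therefore contributes nothing to either side. Everything else is Fubini combined with an elementary one-dimensional Gaussian moment calculation, so I do not expect a genuine obstacle.
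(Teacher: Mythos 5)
Your proof is correct. It is worth noting, though, that the paper does not actually prove this lemma at all: it is imported wholesale from Santal\'o's book on integral geometry (formula (8.10) for $n=2$ and the discussion on p.~274 for general $n$), so your argument is a genuinely different -- and more self-contained -- route. The structure is sound: unwrapping $\card{\tilde{\ell}_{a,u}\cap S}=\card{\ell_{a,u}\cap(S+\mathbb{Z}^n)}$ and summing over lattice translates of $D$ is exactly where the tiling hypothesis and ${\sf vol}(D)=1$ enter; the decomposition $a=a_\perp+s\hat{u}$ with Fubini in $s$ correctly produces the factor $|u|$; and the projection identity $\int_{u^\perp}\card{L_{a_\perp}\cap S}\,da_\perp=\int_S|\hat{u}\cdot n_S(x)|\,d\sigma(x)$ is the standard area formula, valid for piecewise smooth $S$ after discarding the measure-zero singular locus. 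All interchanges are justified by Tonelli since every integrand is non-negative. Combining $|u|\,|\hat{u}\cdot n_S(x)|=|u\cdot n_S(x)|$ with $u\cdot n_S(x)\sim\mathcal{N}(0,\delta)$ gives $\E|u\cdot n_S(x)|=\sqrt{2\delta/\pi}$, so your derivation has the added benefit of identifying the constant explicitly as $C_n=\sqrt{2/\pi}$ (in particular independent of $n$), which the paper only asserts to be $\Theta(1)$. The only thing I would ask you to make explicit is that the curve $t\mapsto (a+tu)\bmod D$ meets $S$ precisely at those $t$ with $a+tu\in S+\mathbb{Z}^n$, counted with multiplicity, which is the content of your very first bijection; once that is stated the rest is exactly as you say.
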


\begin{lemma}\label{lem:area_bd}
  Let $D$ be a tiling body of volume $1$, and let $\delta>0$. Then
  \[
    \Prob{{\bf a}\in D, {\bf u}\sim\mathcal{N}(0,\delta I_n)}{\ell_{{\bf a},{\bf u}}\cap \partial D\neq \emptyset} \leq \Theta(\sqrt{\delta}) {\sf area}(\partial D).
  \]
\end{lemma}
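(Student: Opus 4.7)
The plan is to derive this as a direct corollary of Santaló's formula (Lemma~\ref{lem:santalo}) applied to the surface $S = \partial D$, combined with Markov's inequality. The key step I would verify first is the pointwise inequality
\[
\mathbf{1}\!\left[\ell_{a,u}\cap \partial D \neq \emptyset\right] \;\leq\; \card{\tilde{\ell}_{a, u}\cap \partial D}
\]
valid for every $a \in D$ and every $u \in \mathbb{R}^n$. This holds because $a\in D$ guarantees that the straight needle starts inside $D$; if $\ell_{a,u}$ meets $\partial D$ at some time $t^*\in[0,1]$, then the crossing point $y = a + t^* u$ already lies on $\partial D\subseteq D$, hence equals its own representative under the ``modulo tiling'' reduction, and therefore contributes a point to $\tilde{\ell}_{a,u}\cap \partial D$.

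Taking expectations over ${\bf a}\in D$ and ${\bf u}\sim \mathcal{N}(0,\delta I_n)$ and then invoking Lemma~\ref{lem:santalo} with $S = \partial D$ (which is piecewise smooth by the hypothesis on $D$) yields
\[
\Prob{{\bf a},{\bf u}}{\ell_{{\bf a},{\bf u}}\cap \partial D \neq \emptyset} \;\leq\; \Expect{{\bf a},{\bf u}}{\card{\tilde{\ell}_{{\bf a}, {\bf u}}\cap \partial D}} \;=\; C_n\sqrt{\delta}\cdot {\sf area}(\partial D),
\]
which gives the desired bound since $C_n = \Theta(1)$. There is no substantive obstacle in this argument: the lemma is essentially a repackaging of Santaló's formula through Markov's inequality, and the only care required lies in correctly interpreting the definition of the wrapped needle $\tilde{\ell}_{a,u}$ in order to justify the pointwise inequality above---specifically, that an exit of the straight needle from $D$ must contribute at least one intersection with $\partial D$ to the wrapped needle inside $D$, which is immediate from the fact that points of $\partial D$ are fixed by the reduction modulo the tiling.
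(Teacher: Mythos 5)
Your proposal is correct and follows exactly the paper's argument: observe that whenever the straight needle meets $\partial D$ the wrapped needle has at least one intersection point with $S=\partial D$, then bound the probability by the expectation and apply Lemma~\ref{lem:santalo}. The extra justification you give for the pointwise inequality is a reasonable elaboration of what the paper states in one line.
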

\begin{proof}
  Set $S = \partial D$, and note that whenever $\ell_{a,\delta u}\cap \partial D\neq \emptyset$, we have that $\card{\tilde{\ell}_{a,\delta u}\cap S}\geq 1$.
  Hence by the previous lemma we get that
  \[
  \Prob{{\bf a}\in D, {\bf u}\sim\mathcal{N}(0,\delta I_n)}{\ell_{{\bf a},{\bf u}}\cap \partial D\neq \emptyset}
  \leq
  \Expect{{\bf a}\in D, {\bf u}\sim\mathcal{N}(0,\delta I_n)}{\card{\tilde{\ell}_{{\bf a},{\bf u}}\cap \partial D}}
  \leq  \Theta(\sqrt{\delta}) \cdot {\sf area}(\partial D).
  \qedhere
  \]
\end{proof}
We will use the above lemma to prove lower bounds on the surface area of a tiling body, by finding $\delta$ such that
the probability on the left hand side of Lemma~\ref{lem:area_bd} is at least $\Omega(1)$; this would imply that
${\sf area}(\partial D)\geq \Omega(1/\sqrt{\delta})$.

\subsection{Basic useful properties of tiling bodies}
\begin{lemma}\label{lem:diff}
  Let $D\subseteq \mathbb{R}^n$ be a symmetric body, such that for all $z\in\mathbb{Z}^n\setminus\set{0}$
  we have $D\cap (D+z)  = \emptyset$, and let $x\in D$. Then for every $1\leq i,j\leq n$,
  if $x_i - x_j \in \mathbb{Z}$, then $x_i = x_j$.
\end{lemma}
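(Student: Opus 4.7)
The plan is to exploit the symmetry of $D$ by swapping the two coordinates $i$ and $j$ to produce another point of $D$, and then observe that the difference of the two points is a nonzero integer lattice vector whenever $x_i \neq x_j$, which would violate the disjointness hypothesis. Concretely, I would first fix $x \in D$ with $x_i - x_j \in \mathbb{Z}$ and let $\pi \in S_n$ be the transposition exchanging $i$ and $j$. By the symmetry assumption on $D$, the point $y := \pi(x)$ also lies in $D$.

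Next I would compute the difference $z := x - y$. All coordinates of $z$ outside positions $i, j$ vanish, while $z_i = x_i - x_j$ and $z_j = x_j - x_i = -z_i$. Since $x_i - x_j$ is assumed to be an integer, $z \in \mathbb{Z}^n$. Moreover, from $x = y + z$ and $x, y \in D$ we immediately obtain $x \in D \cap (D + z)$, so this intersection is nonempty.

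Finally, the hypothesis $D \cap (D+z) = \emptyset$ for every $z \in \mathbb{Z}^n \setminus \{0\}$ forces $z = 0$, i.e.\ $x_i = x_j$, which is exactly the conclusion. There is essentially no obstacle here beyond organizing these three observations cleanly; the argument is a direct two-line consequence of coordinate symmetry combined with the fact that the interiors of the integer translates of $D$ are disjoint.
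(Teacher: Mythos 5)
Your argument is correct and is essentially identical to the paper's own proof: both swap coordinates $i$ and $j$ via the symmetry of $D$ and observe that $x$ and its image differ by the integer vector $(x_i-x_j)(e_i-e_j)$, which must therefore be zero by the disjointness hypothesis. No gaps.
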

\begin{proof}
  Assume towards contradiction $x_i - x_j$ is a non-zero integer $k$, and let $S_{i,j}\in S_n$ be the permutation that
  maps $i$ to $j$, $j$ to $i$ and has any $r\neq i,j$ as a fixed point. Since $D$ is symmetric, we have
  that $S_{i,j}(x)\in D$. Also, we have
  \[
    x - S_{i,j}(x) = (x_i - x_j) (e_i - e_j) = k(e_i-e_j),
  \]
  where $e_i$ is the $i$th element in the standard basis. In other words, we get that $x = S_{i,j}(x) + z$ for non-zero $z\in \mathbb{Z}^n$,
  and therefore $x\in D + z$. This contradict the fact that $D$ and $D+z$ are disjoint.
\end{proof}

\begin{lemma}\label{lem:uniform}
  Let $D$ be a volume $1$ tiling body, and choose $a=(a_1,\ldots,a_n)\in D$ uniformly
  at random. Then the random variable $(a_1\hspace{-1ex}\pmod{1},\ldots,a_n\hspace{-1ex}\pmod{1})$ is uniform over $[0,1)^n$.
\end{lemma}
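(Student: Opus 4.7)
The plan is to show that the coordinatewise reduction map $\pi\colon \mathbb{R}^n \to [0,1)^n$, defined by $\pi(x)_i = x_i \bmod 1$, restricts to a measure-preserving bijection (modulo a null set) from $D$ onto $[0,1)^n$. Since $D$ and $[0,1)^n$ both have Lebesgue volume $1$ and $\pi$ is locally an integer translation (hence volume-preserving), this immediately gives the lemma: the pushforward of the uniform probability measure on $D$ under $\pi$ is the uniform probability measure on $[0,1)^n$.

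To set this up, I would partition $D$ into the pieces $D_z := D \cap (z + [0,1)^n)$ indexed by $z \in \mathbb{Z}^n$. These are disjoint for distinct $z$ and exhaust $D$. The restriction $\pi|_{D_z}$ is simply translation by $-z$, so it is a measure-preserving bijection from $D_z$ onto $D_z - z \subseteq [0,1)^n$. Consequently, for any measurable $A \subseteq [0,1)^n$,
\[
\mathrm{vol}\bigl(\{a \in D : \pi(a) \in A\}\bigr) \;=\; \sum_{z \in \mathbb{Z}^n} \mathrm{vol}\bigl((D_z - z) \cap A\bigr),
\]
and the proof reduces to showing that the translated pieces $\{D_z - z\}_{z \in \mathbb{Z}^n}$ partition $[0,1)^n$ up to a null set.

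This is the only step that needs care, and it is precisely where the tiling hypothesis is used. Almost-disjointness comes from the ``disjoint interiors'' part of tiling: if a point $y$ lay in the interiors of both $D_z - z$ and $D_{z'} - z'$ for some $z \neq z'$, then $y+z$ and $y+z'$ would be two interior points of $D$ differing by the nonzero integer vector $z - z'$, contradicting the fact that the $\mathbb{Z}^n$-translates of $D$ have disjoint interiors. Coverage comes from $D + \mathbb{Z}^n = \mathbb{R}^n$: any $y \in [0,1)^n$ can be written as $y = d - z$ with $d \in D$ and $z \in \mathbb{Z}^n$, and then $d \in z + [0,1)^n$ places $d$ in $D_z$, so $y \in D_z - z$. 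With the partition established, the displayed sum collapses to $\mathrm{vol}(A)$, so $\Pr[\pi(\mathbf{a}) \in A] = \mathrm{vol}(A)$, proving the claimed uniformity.
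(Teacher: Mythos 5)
Your proof is correct and is essentially the paper's argument run in the opposite direction: the paper exhibits the inverse map $x \mapsto x \pmod{D}$ as a measure-preserving bijection from $[0,1)^n$ to $D$, whereas you verify directly that mod-$1$ reduction pushes the uniform measure on $D$ forward to Lebesgue measure via the decomposition $D=\bigcup_z D_z$ and the fact that the translates $D_z - z$ tile $[0,1)^n$ up to null sets. Both proofs rest on the same underlying bijection coming from the tiling property; your write-up is somewhat more explicit about the measure-theoretic bookkeeping (the only small point worth noting is that disjointness of interiors alone gives almost-disjointness of the pieces $D_z-z$ only if $\partial D$ is null, but this is anyway forced by the volume count $\sum_z \mathrm{vol}(D_z-z)=\mathrm{vol}(D)=1=\mathrm{vol}([0,1)^n)$ together with the covering property).
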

\begin{proof}
  Sample ${\bf x}\in[0,1)^n$, and take ${\bf a} = {\bf x} \pmod{D}$. Note that the distribution of
  ${\bf a}$ is uniform over $D$. Indeed, for that we note that the map $x\to x\pmod{D}$ is bijection from
  $[0,1)^n$ to $D$:
  otherwise, there were $x\neq x'$ in $[0,1)^n$ that are equal mod $D$, and therefore differ by non-zero lattice point
  (which is clearly impossible). Now as the distribution of ${\bf a}\pmod{1}$ is just ${\bf x}$, the claim follows.
  \end{proof}

\section{The lower bound: proof of Theorem~\ref{thm:lb}}\label{sec:lb}

In this section, we prove the lower bound on the surface area of a
symmetric tiling body $D$.
Throughout, we will have two parameters: $\sigma$, which is magnitude of each coordinates in the needle we consider
(which will be of order $\frac{\sqrt{\log n}}{n}$), and an auxiliary parameter $Z$ (which will be of order $\frac{n}{\log n}$).
Let $D$ be a symmetric tiling body containing $0$.
We denote by ${\bf a}$ a random point in $D$,
and by ${\bf u}$ a Gaussian vector $\mathcal{N}(0,\sigma^2 I_n)$. We will prove that
$\Prob{{\bf a}, {\bf u}}{\ell_{{\bf a},{\bf u}}\not\subseteq D} = \Omega(1)$,
which by Lemma~\ref{lem:area_bd} implies that ${\sf area}(\partial D)\geq \Omega(1/\sigma)$. As
$\sigma = \Theta(\sqrt{\log n}/n)$, this would establish Theorem \ref{thm:lb}.

\paragraph{Notations.}
For $x,y\in\mathbb{R}$, define
\[
d(x,y):= \min_{z\in \ZZ, z\neq 0} |(x+z)-y|\in [0,1].
\]
To gain some intuition for the definition of $d(x,y)$, suppose $x$ and $y$ are two entries of a point $a\in D$.
Clearly, if $d(x,y)$ is small, then $x,y$ nearly differ by an integer $z\neq 0$, and this says that the point $a$ is
somewhat close to the boundary of $D$ (in the sense that Lemma~\ref{lem:diff} could kick in if we move along a direction
that decreases this distance).

Our argument will indeed inspect $d(a_i,a_j)$ for all distinct $i,j\in[n]$ and the way they change along a random direction.
A key measure that we will keep track of is the energy of a point $a\in D$, defined by
\begin{equation*}
\Psi(a):= \sum_{i<j} e^{-Z \cdot d(a_i,a_j)}.
\end{equation*}
We show that for ${\bf a}\in_R D$ and ${\bf u}\sim \mathcal{N}(0,\sigma^2 I_n)$, if $\ell_{{\bf a}, {\bf u}}\subseteq D$ with
probability close to $1$, then the energy of ${\bf a}$ increases along the line $\ell_{{\bf a},{\bf u}}$ with high probability, and
in particular that $\Psi({\bf a}+{\bf u}) > \Psi({\bf a})$. We then argue that with high probability, this should be the case for
the point ${\bf a}$ as well as for ${\bf a} - {\bf u}$, hence $\Psi({\bf a}+{\bf u}) > \Psi({\bf a}-{\bf u})$ with high probability.
This event however can happen with probability at most $0.5$ by symmetry, hence completing the proof.

\subsection{Analyzing the energy along a random line}
By definition of $d(x,y)$, we either have $d(x,y) = (x+z-y)$ or $d(x,y) = -(x+z-y)$ for some $z\in\mathbb{Z}\setminus\set{0}$,
and this sign determines whether $x,y$ need to move in different directions or the same direction for $d(x,y)$ to get smaller.
To capture this, we denote
\[
\ga(x,y):=\left\{ \begin{array}{ll}
+1 & \text{ if $d(x,y)=x+z-y$ for some $z\in\ZZ$, $z\neq 0$},\\
-1 & \text{ otherwise}.	
\end{array} \right.
\]
Next, we discuss the {\em energy} of a configuration, which is the key concept used in the proof.
Let $Z$ be a parameter to be chosen later (of the order $n/\log n$).
As stated earlier, our goal is to analyze the behaviour of $\Psi(a)$ along a random $\sigma^2$-needle from $a$
in direction $u$. Towards this end, note that we expect (at least if $u_i, u_j$ are small) that $d(a_i+u_i, a_j+u_j) = d(a_i,a_j) + \ga(a_i,a_j) (u_j-u_j)$,
hence expect $\Psi(a+u)$ to be close to
\begin{equation*}
\Psi(a,u):=  \sum_{i<j} e^{-Z \cdot (d(a_i,a_j)+\ga(a_i,a_j)\cdot (u_i-u_j))}.
\end{equation*}
Indeed, this is the content of the following claim.

\begin{claim}
	\label{cl:1}
	Suppose $|u_i|\leq 1/20$ for all $i$, and $a+[0,1]\cdot u \subset D$, then
	$$
	|\Psi(a+u)-\Psi(a,u)|\leq n^2\cdot e^{-Z/4}.
	$$
\end{claim}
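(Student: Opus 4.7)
The plan is to bound $|\Psi(a+u)-\Psi(a,u)|$ pair by pair and show that each of the $\binom{n}{2}<n^{2}$ ordered pairs contributes at most $e^{-Z/4}$ to the difference. Fix $i<j$, set $d=d(a_i,a_j)$, $\gamma=\gamma(a_i,a_j)$, $\alpha:=d+\gamma(u_i-u_j)$, and $\beta:=d(a_i+u_i,a_j+u_j)$, and let $z^{\ast}\in\mathbb{Z}\setminus\{0\}$ be the integer attaining the minimum in the definition of $d(a_i,a_j)$, so that $a_i+z^{\ast}-a_j=\gamma d$.

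The crux of the argument is to first establish $\alpha\geq 0$, which is where the hypothesis that the entire segment $a+[0,1]\cdot u$ lies in $D$ enters. Consider the affine function $f(t):=(a_i+tu_i)+z^{\ast}-(a_j+tu_j)=\gamma d+t(u_i-u_j)$, which satisfies $f(0)=\gamma d$ and $f(1)=\gamma\alpha$. Note that $d>0$, since $d=0$ would mean $a$ itself violates Lemma~\ref{lem:diff}. If $\alpha<0$, then $\gamma d$ and $\gamma\alpha$ have opposite signs, so there is $t^{\ast}\in(0,1)$ with $f(t^{\ast})=0$ by the intermediate value theorem. The point $x:=a+t^{\ast}u\in D$ then satisfies $x_i+z^{\ast}=x_j$ with $z^{\ast}\neq 0$, contradicting Lemma~\ref{lem:diff}. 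Hence $\alpha\geq 0$.

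Next, I would relate $\beta$ to $\alpha$. Since $\alpha\geq 0$, the choice $z=z^{\ast}$ in the minimum defining $\beta$ gives $|a_i+u_i+z^{\ast}-a_j-u_j|=|\gamma\alpha|=\alpha$, so $\beta\leq\alpha$. For any other $z'\in\mathbb{Z}\setminus\{0,z^{\ast}\}$, a short analysis of the two nearest integers to $y:=a_j-a_i$ shows $|y-z'|\geq 1-d_{0}\geq 1/2$, where $d_{0}\in[0,1/2]$ is the distance from $y$ to its nearest integer (the two smallest values of $|y-z|$ over $z\in\mathbb{Z}$ are $d_{0}$ and $1-d_{0}$, and excluding two of them leaves a minimum at least $1-d_{0}$). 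Combined with $|u_i-u_j|\leq 1/10$, this yields $|a_i+u_i+z'-a_j-u_j|\geq 1/2-1/10=2/5$ for every such $z'$. It follows that $\beta=\alpha$ whenever $\alpha\leq 2/5$, and $\beta\geq 2/5$ otherwise.

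To conclude, in the first case the corresponding terms in $\Psi(a+u)$ and $\Psi(a,u)$ coincide exactly and contribute $0$ to the difference. In the second case, both $\alpha,\beta>1/4$, so $|e^{-Z\alpha}-e^{-Z\beta}|\leq\max(e^{-Z\alpha},e^{-Z\beta})\leq e^{-Z/4}$. Summing over at most $n^{2}$ pairs yields the desired bound. The main (and really only) subtlety is Step 1: the assumption that the whole segment lies in $D$ is used precisely to rule out $\alpha<0$, via an intermediate-value argument feeding into Lemma~\ref{lem:diff}; the remainder is a straightforward case analysis driven by the closest-integer structure of $d(\cdot,\cdot)$ and the smallness of the perturbations $|u_i|\leq 1/20$.
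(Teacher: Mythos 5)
Your proof is correct and follows essentially the same route as the paper's: a pair-by-pair comparison, an intermediate-value argument combined with Lemma~\ref{lem:diff} to show the shifted distance $\alpha$ stays nonnegative, and then a two-case analysis (small $\alpha$: the terms coincide; large $\alpha$: both terms are at most $e^{-Z/4}$). Your treatment is in fact slightly more careful than the paper's in justifying that the minimizing integer does not change when $\alpha$ is small, but the argument is the same in substance.
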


\begin{proof}
   We consider the contribution of each pair $(i,j)$ to $\Psi(a+u)$ and $\Psi(a,u)$ separately.
   Without loss of generality we may only consider pairs $i,j$ that $\ga(a_i,a_j)=1$,
   and thus $d(a_i,a_j)= a_i - a_j + z$ for some $z\in \ZZ$, $z\neq 0$.
   Let
   \[
   d= a_i - a_j + z + (u_i-u_j) = (a_i+u_i)- (a_j+u_j)+z.
   \]
   First, we argue that $d\geq 0$. Otherwise, since $a_i-a_j + z \geq 0$ it follows by continuity that there is
   $\lambda\in[0,1]$ such that $a_i - a_j + z + \lambda(u_i-u_j) = 0$, and hence the point $a+\lambda u$ has entries
   that differ by an integer $z\neq 0$, and this contradicts Lemma \ref{lem:diff} (as $a+\lambda u \in D$).
   We now consider two cases:
   \begin{itemize}
     \item {\bf Case 1}: $d \in [0,0.5]$. In this case, we have $d(a_i+u_i, a_j+u_j) = d$, and thus the contribution of the pair $(i,j)$ to both sums is the same ($e^{-Z\cdot d}$).
     \item {\bf Case 2}: $d > 0.5$. Since $\card{u_i-u_j}\leq 0.1$, it follows that $d(a_i,a_j) = d -(u_i-u_j) > 0.4$, which implies $d(a_i+u_i,a_j+u_j) > 0.3$. Therefore, the
     contribution to $\Psi(a,u)$ from $i,j$ is at most $e^{-0.4\cdot Z}$ and to $\Psi(a+u)$ is at most $e^{-0.3 \cdot Z}$, and in particular $(i,j)$ contributes (in absolute value)
     at most $e^{-Z/4}$ to the difference between the sums.
   \end{itemize}
   Taking a sum over all pairs $(i,j)$ concludes the proof.
\end{proof}

\subsection{Analyzing the expectation and variance of $\Psi(a,{\bf u})$}
Next,
we consider $\Psi(a,{\bf u})$ as a random variable over the choice of ${\bf u}$
and compute its expectation and variance. In both computations we will use
the well-known fact that $\E[e^{-Z\cdot N(0, c^2)}]=e^{Z^2 c^2/2}$ for all $c> 0$.

\begin{claim}
\label{cl:2}
For every $a\in\mathbb{R}^n$ we have $\Expect{{\bf u}\sim \mathcal{N}(0,\sigma^2 I_n)}{\Psi(a,{\bf u})} =\Psi(a)\cdot e^{(Z \cdot \si)^2}$.
\end{claim}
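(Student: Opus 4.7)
The plan is to expand the definition of $\Psi(a,{\bf u})$, use linearity of expectation to reduce the problem to computing one-dimensional Gaussian moment generating functions, and then apply the stated identity $\E[e^{-Z\cdot \mathcal{N}(0,c^2)}]=e^{Z^2 c^2/2}$.

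Concretely, starting from
\[
\Psi(a,{\bf u}) = \sum_{i<j} e^{-Z \cdot d(a_i,a_j)} \cdot e^{-Z \cdot \gamma(a_i,a_j)\cdot({\bf u}_i-{\bf u}_j)},
\]
the factor $e^{-Z \cdot d(a_i,a_j)}$ is deterministic, so I would pull it outside the expectation and focus on $\E_{{\bf u}}[e^{-Z\gamma(a_i,a_j)({\bf u}_i-{\bf u}_j)}]$. Since ${\bf u}_i, {\bf u}_j$ are independent $\mathcal{N}(0,\sigma^2)$ coordinates, the difference ${\bf u}_i-{\bf u}_j$ is distributed as $\mathcal{N}(0,2\sigma^2)$, and multiplying by $\gamma(a_i,a_j)\in\{\pm 1\}$ preserves this distribution (Gaussians are symmetric about $0$).

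Applying the identity with $c^2 = 2\sigma^2$ gives
\[
\Expect{{\bf u}}{e^{-Z\gamma(a_i,a_j)({\bf u}_i-{\bf u}_j)}} = e^{Z^2\cdot 2\sigma^2/2} = e^{(Z\sigma)^2},
\]
which is the same for every pair $(i,j)$. Summing over pairs,
\[
\Expect{{\bf u}}{\Psi(a,{\bf u})} = e^{(Z\sigma)^2}\sum_{i<j} e^{-Z\cdot d(a_i,a_j)} = \Psi(a)\cdot e^{(Z\sigma)^2},
\]
as claimed. There is no real obstacle here: the only mild point worth flagging is that we must recall ${\sf Var}({\bf u}_i-{\bf u}_j)=2\sigma^2$ (not $\sigma^2$) and that the sign $\gamma(a_i,a_j)$ plays no role once we pass to the MGF of a centered Gaussian. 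This will matter again in the variance computation of Claim~\ref{cl:2}'s companion, where correlations between pairs sharing an index will need slightly more care, but for the expectation linearity handles everything.
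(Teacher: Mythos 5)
Your proof is correct and follows exactly the paper's own argument: linearity of expectation, the observation that ${\bf u}_i-{\bf u}_j\sim\mathcal{N}(0,2\sigma^2)$ is symmetric so $\gamma(a_i,a_j)$ is irrelevant, and the Gaussian MGF identity with $c^2=2\sigma^2$. Nothing to add.
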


\begin{proof}
    By linearity of expectation we have that
	\[
	\Expect{{\bf u}\sim \mathcal{N}(0,\sigma^2I_n)}{\Psi(a,{\bf u})}
= \sum_{i<j} e^{-Z \cdot d(a_i,a_j)}\cdot \Expect{{\bf u}\sim \mathcal{N}(0,\sigma^2I_n)}{e^{-Z\cdot \ga(a_i,a_j)\cdot ({\bf u}_i-{\bf u}_j)}}.
	\]
    Note that the above expectation does not depend on $i,j$: for every $i,j$ the distribution of
    ${\bf u}_i - {\bf u}_j$ is $N(0,\si^2)-N(0,\si^2)\sim N(0,2\si^2)$, so it is symmetric around $0$ and thus
    the sign $\ga(a_i,a_j)$ does not affect the expectation. Hence we have
    \[
    \Expect{{\bf u}\sim \mathcal{N}(0,\sigma^2I_n)}{\Psi(a,{\bf u})}   = \Psi(a) \cdot \E [e^{Z\cdot N(0,2\sigma^2)}]
	=\Psi(a)\cdot e^{Z^2\si^2}.\qedhere
    \]
\end{proof}

Next, we turn our attention into upper bounding the variance of $\Psi(a,{\bf u})$, and
for that we first define the notion of {\em good} points $a\in D$ and prove two preliminary
claims.
We say a point $a$ is {\em good} if any interval of length $(10\log n)/n$
on the torus contains at least $\log n$ and at most $100 \log n$ coordinates from $a\hspace{-1ex}\pmod{1}$.
Note by Lemma \ref{lem:uniform}, if ${\bf a}$ is chosen randomly from $D$ then ${\bf a}\pmod{1}$ is uniform
over $[0,1)^n$ and by Chernoff bound is easily shown to be good with probability $>0.999$.

We first show that good points have high energy.
\begin{claim}\label{cl:6}
There exists $c_2>0$, such that for
$Z = 0.1 \frac{\log n}{n}$, if
$a$ is good then $\Psi(a)> c_2 \log^2 n$.
\end{claim}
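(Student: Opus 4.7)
The plan is to exhibit $\Omega(\log^2 n)$ index pairs $(i,j)$ with $d(a_i,a_j)=O(\log n/n)$; in the parameter regime matching the rest of the section (where $Z\asymp n/\log n$), each such pair contributes $e^{-Z d(a_i,a_j)}=\Omega(1)$ to $\Psi(a)$, so summing yields $\Psi(a)\ge c_2\log^2 n$. If one reads $Z=0.1\log n/n$ in the statement literally, then $e^{-Zd}\ge e^{-Z}=1-o(1)$ for every pair and the claim follows trivially from $\binom{n}{2}$ without using the good property, so I would present the argument in the intended non-trivial regime.

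The argument splits on the integer-cell distribution of $a$. Let $\ell=10\log n/n$ and $n_k:=|\{i:\lfloor a_i\rfloor=k\}|$, and let $m^*$ be the cell of maximal size. In the \emph{dominant} case $n_{m^*}\ge n-\log n/2$, the good property guarantees that the torus intervals $[0,\ell]$ and $[1-2\ell,1-\ell]$ each contain at least $\log n$ of the fractional parts $\bar a_i$, of which, by dominance, at least $\log n/2$ lie in cell $m^*$ in each interval. For any pair $(i,j)$ between these two subsets we have $k_i=k_j=m^*$ and $|\bar a_j-\bar a_i|\in[1-3\ell,1-\ell]$, giving $d(a_i,a_j)=1-|\bar a_i-\bar a_j|\le 3\ell=O(\log n/n)$; this produces $\Omega(\log^2 n)$ good pairs directly.

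In the complementary \emph{non-dominant} case $r:=n-n_{m^*}\ge\log n/2$, the plan is to find many pairs whose fractional parts are close on the torus but whose integer parts differ. Writing $T_k(\bar x):=|\{i:\bar a_i\in[\bar x,\bar x+\ell],\,k_i=k\}|$ and $R(\bar x):=\sum_{k\ne m^*}T_k(\bar x)$, goodness forces $T_{m^*}(\bar x)+R(\bar x)\in[\log n,100\log n]$ pointwise while $\int R(\bar x)\,d\bar x = r\ell$. Averaging the product $T_{m^*}(\bar x)\cdot R(\bar x)$ over $\bar x$ and dividing by $\ell$ lower-bounds the number of direct-close diff-cell pairs; each such pair satisfies $d(a_i,a_j)=|\bar a_i-\bar a_j|\le\ell$ because the two coordinates lie in distinct cells, so all of them are good. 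The main obstacle is the worst case where the supports of $T_{m^*}$ and $R$ on the torus are nearly disjoint: here the good property (forcing $T\ge\log n$ everywhere) ensures that near any transition between the two supports there is a boundary region of width $\Omega(\ell)$ on which both $T_{m^*}$ and $R$ are positive, and a careful accounting of the integral contribution over these boundary regions yields $\Omega(\log^2 n)$ good pairs. Combining the two cases produces $\Omega(\log^2 n)$ pairs with $d=O(\log n/n)$, each contributing $\Omega(1)$ to $\Psi(a)$, and the bound $\Psi(a)\ge c_2\log^2 n$ follows.
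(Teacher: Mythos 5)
You correctly flag the typo in the statement (the intended value is $Z=\frac{n}{10\log n}$, as set in Claim~\ref{cl:3}), and your dominant case $n_{m^*}\ge n-\tfrac{1}{2}\log n$ is sound; it is a special instance of what the paper's Case~2 achieves with its anchor/winding argument. The genuine gap is in the non-dominant case: there you only count pairs with one coordinate in the globally largest cell $m^*$ and one outside it, and this counting provably cannot reach $\Omega(\log^2 n)$ when $n_{m^*}$ is small. Concretely, take a good $a$ whose coordinates lie in $n$ pairwise distinct integer cells, with fractional parts spread uniformly. Then $n_{m^*}=1$ and $r=n-1\ge\tfrac{1}{2}\log n$, so you are in the non-dominant case; but $T_{m^*}(\bar x)\le 1$ is supported on a set of measure $\ell$ while $R\le 100\log n$ pointwise, so $\frac{1}{\ell}\int T_{m^*}(\bar x)R(\bar x)\,d\bar x=O(\log n)$ — equivalently, the single $m^*$-point has only $O(\log n)$ neighbours at torus distance $O(\ell)$. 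The claim is still true for this $a$ (any $\ell$-window contains at least $\log n$ points, all in distinct cells, so all $\binom{\log n}{2}$ pairs inside one window are cross-cell and each contributes $\Omega(1)$ to $\Psi$), but these are precisely the pairs your scheme discards. So the case split must be refined, e.g.\ by a third case that counts cross-cell pairs among \emph{all} cells when no single cell is locally abundant. Separately, even when $n_{m^*}$ is large, the ``careful accounting over boundary regions'' is asserted rather than carried out, and it is the crux: one needs something like tracking the window-wise ratio $T_{m^*}/(T_{m^*}+R)$, noting it moves by $O(1/\log n)$ per jump, and extracting via an intermediate-value step a window where both counts are $\Omega(\log n)$.

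For contrast, the paper sidesteps both issues by making the dichotomy local rather than global: it partitions the torus into length-$\ell$ intervals and asks whether each interval is ``unanimous,'' i.e.\ whether a majority of its points share a common integer lift. If some interval is not unanimous, every point in it has $\Omega(\log n)$ partners at distance $\le\ell$ whose integer part differs (this absorbs your all-distinct-cells configuration); if all intervals are unanimous, the anchors give a lift of the circle to $\mathbb{R}$ which must jump by a nonzero integer between some adjacent pair of intervals, recovering your wrap-around pairs with no assumption on $n_{m^*}$.
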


\begin{proof}
 Partition the torus $[0,1)$ into $m = n/(10 \log n)$ disjoint intervals of length $1/m = (10 \log n)/n$ each.
 We say that $I_i$ is {\em unanimous}, if there is $b_i\in \RR$ (called anchor) such that
 (1) $b_i\hspace{-1ex}\pmod{1}$ is the middle of $I_i$, and (2) for the majority of points $a_j\in I_i$, $|a_j-b_i|< 1/m$.

 We consider two cases:

\noindent {\bf Case 1: There is an interval $I_i$ that is not unanimous.} Note that there are at least $\log n$ coordinates $j$ of $a$ such that $a_j\in I_i$.
Let $j^{\star}$ be such coordinate, and write $a_{j^{\star}} = z_{j^{\star}} + \{a_{j^{\star}}\}$ where $z_{j^{\star}}\in\mathbb{Z}$ and
$\{a_{j^{\star}}\}$ is the fractional part of $a_{j^{\star}}$. Consider
$b = z_{j^{\star}} + m_i$ where $m_i$ is the middle of $I_i$. Then since $I_i$ is not unanimous, $b$ is not an anchor of it and so there are at least $\half \log n$
coordinates of $a$, say $(a_k)_{k\in K_{i,j^{\star}}}$ that mod $1$ are in $I_i$, and $\card{a_k - b}\geq 1/m$. Writing $a_k = z_k + \{a_k\}$, we observe that $z_k\neq z_{j^{\star}}$,
since otherwise $\card{a_k - b}= \card{\{a_k\} - m_i} \leq 1/(2m)$. Hence the difference $a_k-a_{j^{\star}}$
is $10\log n/n$ close to an integer $z_k-z_{j^{\star}}\neq 0$, and so $d(a_k,a_{j^{\star}})\leq 10\log n/n$, and the contribution of $\Psi(a)$ is
at least $e^{-1}$. Summing we get
\[
\Psi(a)\geq
\half
\sum\limits_{j^{\star}: a_{j^{\star}} \in I_i}
\sum\limits_{k\in K_{i,j^{\star}}}{e^{-Z d(a_k,a_{j^{\star}})}}
\geq
\half
\sum\limits_{j: a_j \in I_i}
e^{-1} \card{K_{i,j^{\star}}}
\geq \frac{1}{4e} \log^2 n.
\]

\noindent {\bf Case 2: All intervals are unanimous.} Let $b_i$ be an anchor of $I_i$. Note that since the fractional part of two adjacent anchors,
i.e.\ of $b_i,b_{i+1}$, are $1/m$ apart, we have that either $\card{b_i - b_{i+1}}\leq 1/m$ or $\card{b_i-b_{i+1}}\geq 1-1/m$. We claim there exists $i$ for
which the latter condition holds. To see this, assume that for all $i=1,\ldots,m-1$ we have that the first condition holds. Then we have
$b_i = z + i \frac{10\log n}{n}$ for some $z\in \mathbb{Z}$ for all $i=1,\ldots,m$, and hence $\card{b_m - b_1}\geq 1-1/m$ (and the condition holds for $i=m$).

Thus, we fix $i$ such that $\card{b_i-b_{i+1}}\geq 1-1/m$, and thus $b_i - b_{i+1} = z + \alpha$ for $z\neq 0$ and $\card{\alpha}\leq 1/m$.
Let $K_i$ be the coordinates $j$ of $a$ such that $\card{a_j - b_i}\leq 1/m$ for $j\in K_i$
and similarly define $K_{i+1}$.
We have that $a_r - a_j = z + \alpha + (a_r - b_{i+1}) + (a_j - b_i)$, hence $a_r - a_j = z + \beta$ for $\card{\beta}\leq 3/m$ for all $r\in K_{i+1}$, $j\i K_i$.
Thus $d(a_r,a_j)\leq 3/m$, and we get
\[
\Psi(a)\geq \card{K_i}\card{K_{i+1}} e^{-Z\cdot 3/m}\geq \frac{1}{4} e^{-3} \log^2 n\qedhere
\]
\end{proof}

Let $C_i = \sum_{j\neq i} e^{-Z\cdot d(a_i,a_j)}$
be the contribution of $a_i$ to $\Psi(a)$. Note that $\Psi(a)= \frac{1}{2}\sum_i C_i$.

\begin{claim}\label{cl:7}
There exists $c_3>0$, such that if $a$ is good, then for all $i$ we have
$C_i < c_3 \Psi(a)/\log n$.
\end{claim}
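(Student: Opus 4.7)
The plan is to bound $C_i$ by $O(\log n)$ using the goodness of $a$ to control how densely the $a_j$'s can cluster around $a_i$ on the torus, and then combine with the lower bound $\Psi(a)\geq c_2\log^2 n$ from Claim~\ref{cl:6}. I will rely throughout on the scaling $Z\asymp n/\log n$ (which is the intended value in Claim~\ref{cl:6}).

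First, I would record the following consequence of the definition of $d$: if $d(a_i,a_j)<1/2$, then writing $a_i-a_j=m+\beta$ with $m\in\ZZ$ and $|\beta|\leq 1/2$ forces $m\neq 0$ (otherwise $d(a_i,a_j)=1-|\beta|\geq 1/2$) and $d(a_i,a_j)=|\beta|$, so the torus projections of $a_i$ and $a_j$ lie within torus distance $d(a_i,a_j)$. Setting
\[
N(t) := |\{j\neq i : d(a_i,a_j)\leq t\}|,
\]
goodness of $a$ then yields a two-regime bound: for $t\leq 5\log n/n$ the torus-ball of radius $t$ around $a_i\bmod 1$ fits inside a single interval of length $10\log n/n$, so $N(t)\leq 100\log n$; and for $5\log n/n<t\leq 1/2$, covering the ball by at most $O(tn/\log n)$ intervals of length $10\log n/n$ each containing $\leq 100\log n$ coordinates gives $N(t)\ll tn$.

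I would then split $C_i$ at $d=1/2$. The contribution from $j$'s with $d(a_i,a_j)\geq 1/2$ is at most $n\cdot e^{-Z/2}=n\cdot e^{-\Omega(n/\log n)}$, which is negligible. The remaining contribution equals, after integration by parts,
\[
Z\int_0^{1/2} N(t)\, e^{-Zt}\,dt
\]
up to an additive $O(n e^{-Z/2})$. Splitting this integral at $t_0 = 5\log n/n$, the first piece is at most $100\log n\cdot(1-e^{-Zt_0})\leq 100\log n$, and in the second piece the substitution $u=Zt$ transforms it to $O(n/Z)\int_{Zt_0}^{Z/2} u\, e^{-u}\,du = O(n/Z) = O(\log n)$. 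Combining, $C_i\ll \log n$, and with $\Psi(a)\geq c_2\log^2 n$ from Claim~\ref{cl:6} we obtain $C_i\leq c_3\Psi(a)/\log n$ for a suitable absolute constant $c_3$.

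I do not anticipate a serious obstacle; the only point worth flagging is that the choice $Z\asymp n/\log n$ is precisely what balances the ``plateau'' regime $t\leq 5\log n/n$, where $N(t)\leq 100\log n$, against the linear-growth regime $t>5\log n/n$, where $N(t)\ll tn$, so that neither piece of the split integral dominates the other.
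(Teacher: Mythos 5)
Your proof is correct and follows essentially the same route as the paper: use goodness to bound the number of coordinates within each distance scale of $a_i$ on the torus, sum (or integrate) these counts against $e^{-Zt}$ to get $C_i \ll \log n$, and then invoke Claim~\ref{cl:6} to convert $\log n$ into $\Psi(a)/\log n$. The paper's version is slightly more direct — it buckets the coordinates into annuli of width $10\log n/n$, bounds each bucket by $200\log n$ points, and sums the resulting geometric series $\sum_k e^{-k}$ — whereas your layer-cake/integration-by-parts computation reaches the same $O(\log n)$ bound with a bit more machinery.
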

\begin{proof}
	Note that $d(a_i,a_j)\ge | \{a_i\} - \{a_j\}|$. Since any interval of length $10\log n/n$ on the torus contains at most $100\log n$ points of $a$,
    we have that the number of $j$'s such that $|\{a_i\} - \{a_j\}|$ is between $10 \log n/n \cdot k$ and $10 \log n/n \cdot (k+1)$ is at most $200\log n$
    (for all $k$). Therefore,
	\[
	C_i < 200 \log n\cdot \sum_{k=0}^{\infty} e^{-Z \cdot k \cdot (10 \log n )/n} = 200 \log n \cdot \sum_{k=0}^{\infty}e^{-k}
    \leq 400 \log n.
	\]
Using Claim~\ref{cl:6}, we may bound $\log n\leq \frac{1}{c_2}\frac{\Psi(a)}{\log n}$, finishing the proof.
\end{proof}

We are now ready to bound the variance of $\Psi(a,{\bf u} )$.
\begin{claim}\label{cl:3}
There exists $c_1>0$ such that the following holds.
Let $Z= n/10\log n$, let $a\in \mathbb{R}^n$ be good and let ${\bf u} \sim \mathcal{N}(0,\sigma^2 I_n)$.
Then
\[
	{\sf var}_{\bf u} [\Psi(a,{\bf u} )] \leq \frac{c_1}{\log n} \cdot( e^{4(Z \cdot \si)^2}
	-e^{2(Z \cdot \si)^2})\cdot \Psi(a)^2.
\]
\end{claim}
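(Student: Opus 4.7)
The plan is to decompose $\Psi(a,\mathbf{u}) = \sum_{i<j} \Psi_{ij}(\mathbf{u})$ with
\[
\Psi_{ij}(u) := e^{-Z\, d(a_i,a_j)} \cdot e^{-Z\gamma(a_i,a_j)(u_i - u_j)},
\]
and to expand $\mathrm{Var}(\Psi(a,\mathbf{u})) = \sum_{(i,j),(k,l)} \mathrm{Cov}(\Psi_{ij},\Psi_{kl})$. Since the coordinates of $\mathbf{u}$ are independent, the covariance vanishes whenever $\{i,j\}\cap\{k,l\}=\emptyset$, so only two kinds of pairs contribute: the \emph{overlap-$2$} case $\{i,j\}=\{k,l\}$, and the \emph{overlap-$1$} case $|\{i,j\}\cap\{k,l\}|=1$. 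Writing $\lambda := e^{Z^2\sigma^2}$, the target reads $\mathrm{Var}(\Psi)\leq (c_1/\log n)(\lambda^4-\lambda^2)\Psi(a)^2$, and I would bound each contribution separately.

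For the overlap-$2$ terms, standard Gaussian MGF calculations give $\mathbb{E}[\Psi_{ij}] = e^{-Z d(a_i,a_j)}\lambda$ and $\mathbb{E}[\Psi_{ij}^2] = e^{-2Z d(a_i,a_j)}\lambda^4$ (the exponent in the square is a Gaussian of variance $8Z^2\sigma^2$), so $\mathrm{Var}(\Psi_{ij}) = e^{-2Z d(a_i,a_j)}(\lambda^4 - \lambda^2)$. Summing over pairs,
\[
\sum_{i<j} e^{-2Z d(a_i,a_j)} \leq \Big(\max_{i\ne j} e^{-Z d(a_i,a_j)}\Big)\sum_{i<j} e^{-Z d(a_i,a_j)} \leq \frac{c_3 \Psi(a)}{\log n}\cdot \Psi(a),
\]
where the first factor is bounded using $e^{-Z d(a_i,a_j)} \leq C_i \leq c_3\Psi(a)/\log n$ from Claim~\ref{cl:7}, and the second is by definition $\Psi(a)$. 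The overlap-$2$ contribution is therefore at most $(c_3/\log n)(\lambda^4-\lambda^2)\Psi(a)^2$.

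The overlap-$1$ terms are the step I expect to be the main obstacle, because the sign pattern of $\gamma$ determines whether each covariance is positive or negative. Fix a shared index $j$ between pairs $(i,j)$ and $(j,l)$; the relevant exponent is $-Z\gamma(a_i,a_j)(u_i-u_j)-Z\gamma(a_j,a_l)(u_j-u_l)$. In the \emph{same-sign} case $\gamma(a_i,a_j)=\gamma(a_j,a_l)$, the $u_j$ contributions cancel, the product MGF contributes only $\lambda$, and $\mathrm{Cov}(\Psi_{ij},\Psi_{jl}) = e^{-Z(d(a_i,a_j)+d(a_j,a_l))}(\lambda-\lambda^2)<0$; these terms only help and may simply be discarded. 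In the \emph{opposite-sign} case the $u_j$ coefficient doubles in magnitude, the MGF produces $\lambda^3$, and the positive covariance equals $e^{-Z(d(a_i,a_j)+d(a_j,a_l))}\lambda^2(\lambda-1)$. I would bound the sum of the corresponding $e^{-Z(\cdot)}e^{-Z(\cdot)}$ products by
\[
\sum_j \Big(\sum_{i\ne j} e^{-Z d(a_i,a_j)}\Big)^2 = \sum_j C_j^2 \leq \Big(\max_j C_j\Big)\Big(\sum_j C_j\Big) \leq \frac{c_3\Psi(a)}{\log n}\cdot 2\Psi(a),
\]
again invoking Claim~\ref{cl:7}. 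Since $\lambda^2(\lambda-1) \leq \lambda^2(\lambda-1)(\lambda+1) = \lambda^4-\lambda^2$ for $\lambda\geq 1$, the overlap-$1$ contribution is also $O((\lambda^4-\lambda^2)\Psi(a)^2/\log n)$, and combining the two bounds with $c_1 = \Theta(c_3)$ finishes the proof. The quantitative heart of the argument is exploiting Claim~\ref{cl:7} to reduce each sum of products of $e^{-Z d(\cdot,\cdot)}$ factors to the target $\Psi(a)^2/\log n$; without the ``good point'' hypothesis bounding every $C_j$ by $\Psi(a)/\log n$, we would lose the crucial $1/\log n$ savings.
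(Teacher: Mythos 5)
Your proposal is correct and follows essentially the same route as the paper: decompose the variance into overlap-$2$ and overlap-$1$ covariance terms (all-distinct pairs vanish by independence), compute the Gaussian MGFs to get the $\lambda^4-\lambda^2$ and $\lambda^3-\lambda^2$ factors, and use Claim~\ref{cl:7} to bound $\sum_i C_i^2\leq (\max_i C_i)\sum_i C_i\ll \Psi(a)^2/\log n$. The only cosmetic difference is that you split the overlap-$1$ terms by the sign pattern of $\gamma$ and discard the negative same-sign covariances, whereas the paper simply bounds the cross MGF by its worst case $e^{3(Z\sigma)^2}$; both yield the same estimate.
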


\begin{proof}
Using Claim \ref{cl:2} to compute the expectation of $\Psi(a,{\bf u})$, we have by definition that
\begin{align*}
&{\sf var}_{\bf u}  ( \Psi(a,{\bf u} ))
= \E_{\bf u}  \left[\left(\sum_{i<j} e^{-Z\cdot d(a_i,a_j)}\cdot (e^{Z\cdot\gamma(a_i,a_j)\cdot({\bf u} _i-{\bf u} _j)}-e^{(Z\cdot \si)^2}) \right)^2\right]\\
& =\sum_{i<j} e^{-2Z\cdot d(a_i,a_j)} \cdot\Expect{{\bf u}}{\left(e^{Z\cdot\gamma(a_i,a_j)\cdot({\bf u}_i-{\bf u}_j)}-e^{(Z\cdot \si)^2}\right)^2}\\
& + \sum_{\substack{(i,j,k) \\ \text{distinct}}}e^{-Z\cdot (d(a_i,a_j)+d(a_i,a_k))}\cdot
\Expect{{\bf u}}{(e^{Z\cdot\gamma(a_i,a_j)\cdot({\bf u}_i-{\bf u}_j)}-e^{(Z\cdot \si)^2})(e^{Z\cdot\gamma(a_i,a_k)\cdot({\bf u}_i-{\bf u}_k)}-e^{(Z\cdot \si)^2})}.
\end{align*}
Here, we used that fact that if $i,j,k,r$ are all distinct then $e^{Z\cdot\gamma(a_i,a_j)\cdot({\bf u}_i-{\bf u}_j)}$, $e^{Z\cdot\gamma(a_k,a_r)\cdot({\bf u}_k-{\bf u}_r)}$
are independent with expectation $e^{(Z\cdot \sigma)^2}$, hence the contribution of these terms is $0$. Computing, we see that
\[
\Expect{{\bf u}}{\left(e^{Z\cdot\gamma(a_i,a_j)\cdot({\bf u}_i-{\bf u}_j)}-e^{(Z\cdot \si)^2}\right)^2}
=\E\left[e^{Z\cdot N(0,8\sigma^2)}\right]-e^{2(Z\cdot \si)^2}
=e^{4(Z\cdot \si)^2}-e^{2(Z\cdot \si)^2},
\]
and
\begin{align*}
&\Expect{{\bf u}}{(e^{Z\cdot\gamma(a_i,a_j)\cdot({\bf u}_i-{\bf u}_j)}-e^{(Z\cdot \si)^2})
 (e^{Z\cdot\gamma(a_i,a_k)\cdot({\bf u}_i-{\bf u}_k)}-e^{(Z\cdot \si)^2})}\\
 &=
 \E\left[e^{(\gamma(a_i,a_j) + \gamma(a_i,a_k))Z\cdot N(0,\sigma^2)}\right]
 \E\left[e^{Z\cdot N(0,2\sigma^2)}\right]
 -e^{2(Z\cdot \si)^2}\\
 &\leq
 \E\left[e^{2Z\cdot N(0,\sigma^2)}\right]
 \E\left[e^{Z\cdot N(0,2\sigma^2)}\right]
 -e^{2(Z\cdot \si)^2}\\
 &= e^{3(Z\cdot \sigma)^2}-e^{2(Z\cdot \si)^2}.
\end{align*}
Thus, we get that
\begin{align*}
  {\sf var}_{\bf u} ( \Psi(a,{\bf u}))
  &\leq
 \sum_{i<j} e^{-2Z\cdot d(a_i,a_j)} (e^{4 (Z\cdot \si)^2}-e^{2 (Z\cdot \si)^2}) +
 \sum_{\text{$(i,j,k)$ distinct}} e^{-Z\cdot (d(a_i,a_j)+d(a_i,a_k))}(e^{3(Z\cdot \sigma)^2}-e^{2(Z\cdot \si)^2}) \\
 & \leq
 (e^{4 (Z\cdot \si)^2}-e^{2 (Z\cdot \si)^2})
 \sum_{i}
 \left(\sum_{j\neq i} e^{-2Z\cdot d(a_i,a_j)} + \sum_{j,k\neq i}
e^{-Z\cdot (d(a_i,a_j)+d(a_i,a_k))}\right)\\
&=(e^{4 (Z\cdot \si)^2}-e^{2 (Z\cdot \si)^2})\sum\limits_{i}\left(\sum_{j\neq i} e^{-2Z\cdot d(a_i,a_j)}\right)^2\\
&=  (e^{4 (Z\cdot \si)^2}-e^{2 (Z\cdot \si)^2}) \cdot \sum_i C_i^2.
\end{align*}
Therefore, using Claim~\ref{cl:7} we conclude that
\begin{multline*}
{\sf var}_{\bf u} ( \Psi(a,{\bf u}))
\leq
( e^{4 (Z\cdot \si)^2}-e^{2 (Z\cdot \si)^2}) \frac{c_3\Psi(a)}{\log n}\cdot \sum_i C_i =
\frac{2 c_3}{\log n}\cdot
( e^{4 (Z\cdot \si)^2}-e^{2 (Z\cdot \si)^2}) \cdot \Psi(a)^2.
\end{multline*}
Setting $c_1:=2c_3$ completes the proof.
\end{proof}

Putting the last two claims together, we have:

\begin{claim}
	\label{cl:4}
Let $\sigma = 10^4\sqrt{c_1} \frac{\sqrt{\log n}}{n}$ and let $a\in\mathbb{R}^n$ be good.
Then
\[
\Pr_{{\bf u}}[\Psi(a,{\bf u})>\Psi(a)+\frac{(Z\si)^4}{2}\Psi(a)]\geq 0.96.
\]
\end{claim}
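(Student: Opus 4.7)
The plan is a short and direct application of Chebyshev's inequality using Claims~\ref{cl:2} and~\ref{cl:3}. The first step is to unpack what the specific choices of $Z$ and $\sigma$ buy us: substituting $Z = n/(10\log n)$ and $\sigma = 10^4\sqrt{c_1}\sqrt{\log n}/n$ gives
\[
(Z\sigma)^2 = \frac{10^6\, c_1}{\log n},
\]
which is small (say, at most $1$) for all sufficiently large $n$. This small-parameter regime is what makes the whole argument work, since it allows clean Taylor expansions of the exponentials appearing in the mean and the variance.

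Next, I would lower bound the gap between the mean and the target threshold. By Claim~\ref{cl:2}, $\mathbb{E}_{\bf u}[\Psi(a,{\bf u})] = \Psi(a)e^{(Z\sigma)^2}$, so by $e^x \geq 1+x$ together with $(Z\sigma)^2 \leq 1$,
\[
\mathbb{E}_{\bf u}[\Psi(a,{\bf u})] - \Psi(a)\bigl(1 + \tfrac{(Z\sigma)^4}{2}\bigr)
\;\geq\; \Psi(a)\bigl((Z\sigma)^2 - \tfrac{(Z\sigma)^4}{2}\bigr)
\;\geq\; \tfrac{1}{2}\Psi(a)(Z\sigma)^2.
\]
So it suffices to show that $\Psi(a,{\bf u})$ deviates from its mean by more than $\tfrac{1}{2}\Psi(a)(Z\sigma)^2$ with probability at most $0.04$.

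For the variance, from Claim~\ref{cl:3} together with the elementary estimate $e^{4(Z\sigma)^2} - e^{2(Z\sigma)^2} = e^{2(Z\sigma)^2}\bigl(e^{2(Z\sigma)^2}-1\bigr) = O((Z\sigma)^2)$ valid in our regime, I obtain
\[
\mathrm{Var}_{\bf u}[\Psi(a,{\bf u})] \;\leq\; O\!\left(\frac{c_1\,(Z\sigma)^2}{\log n}\right)\Psi(a)^2.
\]
Chebyshev's inequality with deviation $t = \tfrac{1}{2}\Psi(a)(Z\sigma)^2$ then yields
\[
\Pr_{\bf u}\!\left[\Psi(a,{\bf u}) \leq \Psi(a)\bigl(1 + \tfrac{(Z\sigma)^4}{2}\bigr)\right] \;\leq\; \frac{\mathrm{Var}_{\bf u}[\Psi(a,{\bf u})]}{t^2} \;\leq\; \frac{O(c_1)}{(Z\sigma)^2 \log n} \;=\; \frac{O(1)}{10^6},
\]
which is comfortably below $0.04$.

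No step in this argument is particularly delicate; the whole point of the previous claims is to reduce matters to precisely this Chebyshev computation. The only thing to verify carefully is that the constant $10^4$ appearing in the definition of $\sigma$ is large enough to absorb the absolute constants hidden in Claim~\ref{cl:3} and in the Taylor expansion bound for $e^{4(Z\sigma)^2} - e^{2(Z\sigma)^2}$; if it weren't, one would simply enlarge it, and the rest of the argument is insensitive to this choice.
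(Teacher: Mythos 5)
Your proposal is correct and follows essentially the same route as the paper: lower-bound the gap between $\mathbb{E}_{\bf u}[\Psi(a,{\bf u})]=\Psi(a)e^{(Z\sigma)^2}$ and the threshold via a Taylor bound, control the variance through Claim~\ref{cl:3} and $e^{4(Z\sigma)^2}-e^{2(Z\sigma)^2}=O((Z\sigma)^2)$, and finish with Chebyshev. The only cosmetic difference is that the paper uses $e^t\geq 1+t+t^2/2$ to get a deviation of $\Psi(a)(Z\sigma)^2$ rather than your $\tfrac12\Psi(a)(Z\sigma)^2$, which costs you a harmless factor of $4$ in the final bound.
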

\begin{proof}
	We upper bound the probability of the complement event. Using Claim \ref{cl:2} (and $e^{t}\geq 1+t+t^2/2$ for $t\geq 0$), we get
	\[
	\E_{\bf u}[\Psi(a,{\bf u})]\geq \Psi(a)\cdot \left(1+(Z \si)^2+\frac{(Z\si)^4}{2}\right).
	\]
	Hence
	\[
	\Pr_{{\bf u}}\left[\Psi(a,{\bf u})\leq \Psi(a)+\frac{(Z\si)^4}{2}\Psi(a)\right] \leq \Pr_{{\bf u}}\left[\card{\Psi(a,{\bf u})- \E_{{\bf u'}}[\Psi(a,{\bf u'})]}\geq
\Psi(a)\cdot  (Z \si)^2\right].
	\]

    We want to upper bound the probability of the last event using Chebyshev's inequality.
    Since $a$ is good, the conclusion of Claim~\ref{cl:3} holds.
    Since $Z \si =o(1)$, for large enough $n$ we get
	\[
	{\sf var}_{\bf u} [\Psi(a,{\bf u})]
    \leq
    \frac{c_1}{\log n} \cdot( e^{4(Z \cdot \si)^2}
	-e^{2(Z \cdot \si)^2})\cdot \Psi(a)^2
    \leq \frac{c_1}{\log n} \cdot \Psi(a)^2\cdot 8(Z \si)^2.
	\]
    Therefore, applying Chebyshev's inequality we see the probability in question is at most
    \[
	\frac{{\sf var}_{\bf u} [\Psi(a,{\bf u})]}{\Psi(a)^2 \cdot  (Z \si)^4} \leq
	\frac{c_1 \cdot \Psi(a)^2 \cdot 8 (Z \si)^2}{(\log n)\cdot \Psi(a)^2 \cdot  (Z \si)^4 }
    = \frac{8c_1}{(\log n)\cdot (Z\sigma)^2 } = \frac{4 c_1}{10^2 c_1} = 0.04.\qedhere
    \]
\end{proof}

\subsection{Finishing the argument}
For each $u$, denote $\eps_u = \Prob{{\bf a}\in D}{\ell_{{\bf a},u}\not\subseteq D}$,
and denote $\eps = \Expect{{\bf u}\sim\mathcal{N}(0,\sigma^2 I_n)}{\eps_{\bf u}} = \Prob{{\bf a}, {\bf u}}{\ell_{{\bf a}, {\bf u}}\not\subseteq D}$.
\begin{claim}\label{cl:tv5}
For each $u$, $\mathcal{D}_{TV}[{\bf a}; {\bf a}-u]\leq \eps_u + \eps_{-u}$.
\end{claim}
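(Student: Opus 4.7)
The plan is to express $\mathcal{D}_{TV}[{\bf a};{\bf a}-u]$ as the normalized measure of the symmetric difference $D \triangle (D-u)$, and then bound each half of this symmetric difference by observing that any point in it gives rise to a needle segment with an endpoint outside $D$.

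First, I would note that since ${\bf a}$ is uniform on the volume-$1$ set $D$, the densities of ${\bf a}$ and ${\bf a}-u$ are $\mathbf{1}_D$ and $\mathbf{1}_{D-u}$ respectively, so by the standard formula for total variation distance,
\[
\mathcal{D}_{TV}[{\bf a};{\bf a}-u] = \tfrac{1}{2}\int |\mathbf{1}_D(x) - \mathbf{1}_{D-u}(x)|\,dx = \tfrac{1}{2}\bigl(|D\setminus(D-u)| + |(D-u)\setminus D|\bigr).
\]
Next, I would bound each of the two pieces by the escape probability in the appropriate direction. For the first piece, if $x \in D\setminus(D-u)$, then $x \in D$ and $x+u \notin D$; since the segment $\ell_{x,u}$ contains the point $x+u$, this forces $\ell_{x,u}\not\subseteq D$. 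Hence $D\setminus(D-u) \subseteq \{x \in D : \ell_{x,u}\not\subseteq D\}$, and taking Lebesgue measure (using $|D|=1$) yields $|D\setminus(D-u)| \le \eps_u$. The second piece is handled by the symmetric argument after the substitution $y = x+u$: if $x \in (D-u)\setminus D$, then $y = x+u \in D$ and $y-u = x \notin D$, so $\ell_{y,-u}$ is not contained in $D$. This gives $|(D-u)\setminus D| \le \eps_{-u}$.

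Combining the two inequalities yields $\mathcal{D}_{TV}[{\bf a};{\bf a}-u] \le \tfrac{1}{2}(\eps_u + \eps_{-u}) \le \eps_u + \eps_{-u}$, as claimed. The argument is essentially routine measure-theoretic bookkeeping and there is no significant obstacle; the only minor subtlety worth highlighting is the direction tracking, i.e.\ explaining why $\eps_u$ controls one half of the symmetric difference while $\eps_{-u}$ is needed for the other, which is a matter of identifying which endpoint of the relevant segment lies inside $D$.
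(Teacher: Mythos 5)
Your proof is correct and takes essentially the same route as the paper: both arguments reduce the total variation distance to the measure of the symmetric difference $D\triangle(D-u)$ and bound its two halves by $\eps_u$ and $\eps_{-u}$ respectively, using that a point of $D$ whose translate by $u$ (resp.\ $-u$) leaves $D$ yields a needle escaping $D$. Your use of the $\tfrac12\|\cdot\|_1$ formulation is a slightly cleaner bookkeeping than the paper's Borel-set decomposition and even gives the sharper bound $\tfrac12(\eps_u+\eps_{-u})$.
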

\begin{proof}
  Let $K$ be a Borel set.
  Note that it is enough to show that
  (1) if $K\subseteq D$ then $0\leq \Prob{{\bf a}\in D}{{\bf a}\in K} - \Prob{{\bf a}\in D}{{\bf a}-u\in K}\leq \eps_u$,
  and (2) if $K\subseteq\bar{D}$, then $-\eps_{-u}\leq \Prob{{\bf a}\in D}{{\bf a}\in K} - \Prob{{\bf a}\in D}{{\bf a}-u\in K}\leq 0$.
  Indeed, given both (1) and (2), the triangle inequality implies for any Borel set $K\subseteq\mathbb{R}^n$,
  \begin{align*}
  &\card{\Prob{{\bf a}\in D}{{\bf a}\in K} - \Prob{{\bf a}\in D}{{\bf a}-u\in K}}\\
  &\leq
  \card{
  \Prob{{\bf a}\in D}{{\bf a}\in K\cap D}-\Prob{{\bf a}\in D}{{\bf a}-u\in K\cap D}
  +\Prob{{\bf a}\in D}{{\bf a}\in K\setminus D}-\Prob{{\bf a}\in D}{{\bf a}-u\in K\setminus D}}
  \leq \eps_u+\eps_{-u}.
  \end{align*}

  To prove (1), note that
  $\Prob{{\bf a}\in D}{{\bf a}\in K}=\mu(K)$
  and
  \[
  \Prob{{\bf a}\in D}{{\bf a}-u\in K} = \Prob{{\bf a}\in D}{{\bf a}\in K+u} = \mu((K+u)\cap D).
  \]
  This is at most $\mu(K+u) = \mu(K)$ (hence the expression in (1) is non-negative)
  and at least $\geq \mu(K+u) - \mu((K+u)\setminus D)= \mu(K) - \mu(K\setminus (D-u))$.
  Therefore
  \[
  0\leq \Prob{{\bf a}\in D}{{\bf a}\in K} - \Prob{{\bf a}\in D}{{\bf a}-u\in K}
  \leq \mu(K\setminus (D-u))
  \leq \mu(D\setminus (D-u))
  = \Prob{{\bf a}\in D}{{\bf a}+u\not\in D}
  \leq \eps_u.
  \]

  To prove (2), note that $\Prob{{\bf a}\in D}{{\bf a}\in K} = 0$ (hence the expression in (2) is non-positive)
  and
  \[
  \Prob{{\bf a}\in D}{{\bf a}-u\in K} \leq \Prob{{\bf a}\in D}{{\bf a}-u\not\in D} \leq \eps_{-u}.\qedhere
  \]
\end{proof}

\begin{claim}
	\label{cl:5}
	$\eps\geq 0.1$.
\end{claim}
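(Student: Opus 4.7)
My plan is to assume $\eps<0.1$ and derive a contradiction by arguing that, under this hypothesis, $\Pr[\Psi({\bf a}+{\bf u})>\Psi({\bf a}-{\bf u})]>\tfrac{1}{2}$. Since $\mathcal{N}(0,\sigma^{2}I_{n})$ is invariant under ${\bf u}\mapsto-{\bf u}$ and ${\bf a}$ is independent of ${\bf u}$, the pair $(\Psi({\bf a}+{\bf u}),\Psi({\bf a}-{\bf u}))$ is exchangeable, so $\Pr[\Psi({\bf a}+{\bf u})>\Psi({\bf a}-{\bf u})]=\Pr[\Psi({\bf a}-{\bf u})>\Psi({\bf a}+{\bf u})]\leq\tfrac{1}{2}$, which is the desired contradiction. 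To produce the strict inequality I will sandwich $\Psi({\bf a})$ between the two quantities, i.e.\ I will show that with probability $>\tfrac{1}{2}$ both $\Psi({\bf a}+{\bf u})>\Psi({\bf a})$ and $\Psi({\bf a})>\Psi({\bf a}-{\bf u})$ hold simultaneously.

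For the forward inequality $\Psi({\bf a}+{\bf u})>\Psi({\bf a})$ I intersect four high-probability events: (i) $\ell_{{\bf a},{\bf u}}\subseteq D$, of probability $\geq 1-\eps$; (ii) $\max_{i}|{\bf u}_{i}|\leq\tfrac{1}{20}$, a Gaussian tail event of probability $1-o(1)$ since $\sigma=\Theta(\sqrt{\log n}/n)$; (iii) ${\bf a}$ is good, of probability $\geq 0.999$ by Lemma~\ref{lem:uniform} and a Chernoff bound; and (iv) the conclusion of Claim~\ref{cl:4} holds at ${\bf a}$, which has conditional probability $\geq 0.96$ given (iii). On this intersection, Claim~\ref{cl:1} gives $|\Psi({\bf a}+{\bf u})-\Psi({\bf a},{\bf u})|\leq n^{2}e^{-Z/4}=o(1)$, while Claim~\ref{cl:4} combined with the lower bound $\Psi({\bf a})\geq c_{2}\log^{2}n$ from Claim~\ref{cl:6} supplies a gap $\tfrac{(Z\sigma)^{4}}{2}\Psi({\bf a})\geq\tfrac{(Z\sigma)^{4}}{2}\,c_{2}\log^{2}n$. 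Since $Z\sigma\asymp 1/\sqrt{\log n}$, this gap is bounded below by an absolute positive constant, so it dominates the $o(1)$ error for all large $n$, yielding $\Psi({\bf a}+{\bf u})>\Psi({\bf a})$ on the intersection.

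For the backward inequality $\Psi({\bf a})>\Psi({\bf a}-{\bf u})$ I re-run the same chain of reasoning with basepoint ${\bf a}-{\bf u}$ and forward direction ${\bf u}$, noting that $({\bf a}-{\bf u})+{\bf u}={\bf a}$. The analogues of (i)-(iii) are easy: $\ell_{{\bf a}-{\bf u},{\bf u}}=\ell_{{\bf a},-{\bf u}}$ lies in $D$ with probability $\geq 1-\eps$ by the ${\bf u}\mapsto-{\bf u}$ symmetry of the Gaussian, the bound on $\max_{i}|{\bf u}_{i}|$ is unchanged, and ${\bf a}-{\bf u}$ is good with probability $\geq 0.999$ because $({\bf a}-{\bf u})\pmod{1}$ is uniform on $[0,1)^{n}$ with independent coordinates (the uniform measure on the torus is translation-invariant, so subtracting ${\bf u}$ from the uniform vector ${\bf a}\pmod{1}$ from Lemma~\ref{lem:uniform} still yields a uniform vector). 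The main obstacle is the analogue of (iv): Claim~\ref{cl:4} is stated for a good basepoint \emph{together with an independent Gaussian}, whereas $({\bf a}-{\bf u},{\bf u})$ is a correlated pair. I resolve this by appealing to Claim~\ref{cl:tv5}: for every fixed $u$ we have $\mathcal{D}_{TV}\bigl({\bf a},\,{\bf a}-u\bigr)\leq\eps_{u}+\eps_{-u}$, and averaging over ${\bf u}$ yields $\mathcal{D}_{TV}\bigl(({\bf a},{\bf u}),\,({\bf a}-{\bf u},{\bf u})\bigr)\leq\E_{{\bf u}}[\eps_{{\bf u}}+\eps_{-{\bf u}}]=2\eps$. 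Since the event ``basepoint is good and satisfies the Claim~\ref{cl:4} inequality with respect to the direction'' is a fixed measurable function of a (basepoint, direction) pair, its probability changes by at most $2\eps$ between the two joint distributions, so it still holds at $({\bf a}-{\bf u},{\bf u})$ with probability $\geq 0.96\cdot 0.999-2\eps$.

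A union bound over the five bad events listed above has total failure probability at most $o(1)+0.1+0.1+0.041+0.241<\tfrac{1}{2}$, so with probability strictly greater than $\tfrac{1}{2}$ all of them hold simultaneously, producing both $\Psi({\bf a}+{\bf u})>\Psi({\bf a})$ and $\Psi({\bf a})>\Psi({\bf a}-{\bf u})$, hence $\Psi({\bf a}+{\bf u})>\Psi({\bf a}-{\bf u})$. Combined with the exchangeability observation from the first paragraph, this contradicts the assumption $\eps<0.1$ and completes the proof of Claim~\ref{cl:5}. I expect the main technical subtlety to be the TV-transfer step, which is precisely what allows Claim~\ref{cl:4}, a statement about independent Gaussians, to be applied at the correlated basepoint ${\bf a}-{\bf u}$; it is also the step that ultimately determines the value of the constant $0.1$ in the claim.
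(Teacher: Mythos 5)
Your proposal is correct and follows essentially the same route as the paper's proof: the same energy sandwich $\Psi({\bf a}+{\bf u})>\Psi({\bf a})>\Psi({\bf a}-{\bf u})$ built from Claims~\ref{cl:1},~\ref{cl:4} and~\ref{cl:6}, the same total-variation transfer via Claim~\ref{cl:tv5} to apply the independent-Gaussian statement at the correlated basepoint ${\bf a}-{\bf u}$, and the same concluding symmetry bound $\Prob{}{\Psi({\bf a}+{\bf u})>\Psi({\bf a}-{\bf u})}\leq \half$. The only differences are organizational (contradiction from $\eps<0.1$ rather than the direct inequality $0.9-4\eps\leq 0.5$, and a slightly different bookkeeping of which events are transferred by TV versus bounded directly), which do not change the argument.
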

\begin{proof}
	Let $E_1$ be the event that ${\bf a} + {\bf u}[0,1]\subseteq D$, let
    $E_2$ be the event that $\Psi({\bf a})\leq 1$, let $E_3$ be the event
	that $|{\bf u}_i|>1/20$ for some $i$ and let $E_4$ be the event that
	$\Psi({\bf a},{\bf u})>\Psi({\bf a})+\frac{(Z\si)^4}{2}\Psi({\bf a})$. Finally, let $E_5$ be the event that
	$\Psi({\bf a}+{\bf u})>\Psi({\bf a})$ and denote $E({\bf a},{\bf u}) = E_1\cap (\neg E_2)\cap (\neg E_3)\cap E_4$.
    Note that if the event $E$ holds for $a,u$, then $E_5$ also holds, since by Claim~\ref{cl:1}:
	\[
	\Psi(a+u)\geq \Psi(a,u)-n^2\cdot e^{-Z/4}> \Psi(a)+\frac{(Z\si)^4}{2}\Psi(a)-n^2\cdot e^{-Z/4}\geq \Psi(a).
	\]

    By Claim \ref{cl:6} the probability of $E_2$ is at most the probability ${\bf a}$ is bad, hence it is at most $0.005$.
    By definition, the probability of $E_1$ is $1-\eps$.
    By the union bound and Chernoff inequality, the probability of $E_3$ is $o(1)$.
	Thus, by Claim~\ref{cl:4} we have
	\begin{equation}
	\Pr_{{\bf a},{\bf u}}[E({\bf a},{\bf u})]\geq 0.96 -\eps- 0.005 - o(1)\geq 0.95-\eps.
	\end{equation}
    Fix $u$. Using Claim~\ref{cl:tv5} we get that
    \[
    \Prob{{\bf a}}{E({\bf a}-u,u)} \geq
    \Prob{{\bf a}}{E({\bf a},u)} - \mathcal{D}_{TV}[{\bf a}; {\bf a}-u]
    \geq \Prob{{\bf a}}{E({\bf a},u)} - \eps_u-\eps_{-u}.
    \]
    By the union bound, we now conclude that
    \[
    \Prob{{\bf a}}{E({\bf a}-u,u)\cap E({\bf a},u)}\geq
    1 -  \Prob{{\bf a}}{\overline{E({\bf a}-u,u)}}
    -  \Prob{{\bf a}}{\overline{E({\bf a},u)}}
    \geq 2 \Prob{{\bf a}}{E({\bf a},u)} -1-\eps_u-\eps_{-u}.
    \]
    Taking expectation over ${\bf u}$, we get that
    \[
      \Prob{{\bf a},{\bf u}}{E({\bf a}-{\bf u},{\bf u})\cap E({\bf a},{\bf u})}\geq
     2\Prob{{\bf a},{\bf u}}{E({\bf a},{\bf u})} -1-2\Expect{{\bf u}}{\eps_{\bf u}}
     \geq 0.9 - 4\eps.
    \]
    Next, when both $E(a-u,u)$ and $E(a,u)$ hold, we have by the previous observation that $E_5$ holds for both pairs $(a-u,u)$ and $(a,u)$,
    and so $\Psi(a+u) > \Psi (a) = \Psi((a-u) + u) > \Psi(a-u)$. Thus, we get that
    $\Prob{{\bf a},{\bf u}}{\Psi({\bf a}+{\bf u}) > \Psi({\bf a}-{\bf u})}\geq 0.9 - 4\eps$. On the other hand, the probability on the left hand side is at most $0.5$;
    this follows as
    $\Prob{{\bf a},{\bf u}}{\Psi({\bf a}+{\bf u}) > \Psi({\bf a}-{\bf u})}
    = \Prob{{\bf a},{\bf u}}{\Psi({\bf a}-{\bf u}) > \Psi({\bf a}+{\bf u})}$ (since the distributions of ${\bf u}$ and $-{\bf u}$ are identical) and their sum is at most $1$.
    Combining the two inequalities we get that $\eps \geq 0.1$.
\end{proof}

\section{The upper bound: proof of Theorem~\ref{thm:ub}}\label{sec:ub}
In this section we prove a matching upper bound on the surface area of a symmetric foam by giving a
(probabilistic) construction of a symmetric tiling body $D$ of surface area $O(n/\sqrt{\log n})$.
The main technical result proved in this section, Lemma~\ref{lem:main_ub}, establishes a weaker statement,
and in Section~\ref{sec:ns_sa_reduce} we show how to deduce Theorem~\ref{thm:ub} from it.

\subsection{Reduction to constructing a rounding scheme}\label{sec:mapping_to_tiling}
Suppose $S$ is function mapping (multi-)sets of $n$ points from $\RR/\ZZ$, to $\RR/\ZZ$.
We further assume that for all (multi-)sets $A$, it holds that $S(A)\not\in\set{0}\cup A$.

Given such $S$, we may extend it to $\mathbb{R}^n$ by
$S(x_1,\ldots,x_n):=S(\{\{x_1\},\ldots,\{x_n\}\})$, where $\{x_i\}$ is the fractional part of $x$.
We can construct a rounding scheme
$R\colon \RR^n\to \ZZ^n$ using $S$ as follows.
\begin{itemize}
	\item
	On input $x=(x_1,\ldots,x_n)$, denote $z=S(x)$ and view
	$z$ as a number in $[0,1)$.
	\item For each $i\in [n]$:
\begin{itemize}
    \item
	if $\{x_i\}\in [0,z)$, set $R(x)_i = \lfloor x_i \rfloor$,
    \item otherwise, $\{x_i\}\in (z,1)$, and set $R(x)_i = \lceil x_i \rceil$.
\end{itemize}
\end{itemize}
First, $R$ is well-defined since $z\notin \{0,\{x_1\},\ldots,\{x_n\}\}$.
Next, note that for any $t\in\ZZ^n$ it holds that $R(x+t)=R(x)+t$, thus $R$ induces
that the body $D = \sett{x}{R(x) = 0}$ is tiling with respect to the lattice $\ZZ^n$.
Last, we note that since for any $\pi\in S_n$ we have that $S(\pi(x))=S(x)$, we also have
that $R(\pi(x))=\pi(R(x))$, and hence $D$ is symmetric.

In our proof we will define a distribution over mappings $S$, and we will want to study
the noise sensitivity of the resulting body $D$ using properties of the mappings $S$.
The following claim gives useful conditions to study noise sensitivity in terms of
mapping $S$.
\begin{claim}\label{cl:u1}
	Let $x$ and $x+\De$ two points in $\RR^n$. Suppose that
	\begin{enumerate}
		\item
		$S(x)=S(x+\De)=:z$; and
		\item
		for all $i$, $\{x_i+\la \De_i\} \neq z$, $\forall \la\in [0,1]$.
	\end{enumerate}
Then the points $x,x+\De$ fall in the same cell in the tiling induced by $D$.
\end{claim}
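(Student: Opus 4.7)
The plan is to prove directly that $R(x) = R(x+\Delta)$, which, since $R$ commutes with integer translations, is equivalent to $x$ and $x+\Delta$ lying in the same translated cell of $D$.

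I would proceed coordinate-by-coordinate. Fix $i \in [n]$ and consider the continuous function $f_i\colon [0,1]\to \mathbb{R}$ defined by $f_i(\lambda) = x_i + \lambda \Delta_i$. Hypothesis (2) says $\{f_i(\lambda)\} \neq z$ for every $\lambda \in [0,1]$, which is equivalent to $f_i(\lambda)\notin z+\mathbb{Z}$. Thus $f_i$ is a continuous path in $\mathbb{R}$ avoiding the discrete set $z+\mathbb{Z}$, so its image lies in a single connected component of $\mathbb{R}\setminus (z+\mathbb{Z})$; this component necessarily has the form $I_k = (k+z,\, k+z+1)$ for some integer $k$.

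The key observation is that the rounding rule (using threshold $z$) is constant on $I_k$. For $y\in (k+z,\, k+1)$ one has $\{y\}\in (z,1)$, so the rule returns $\lceil y\rceil = k+1$; for $y\in [k+1,\, k+z+1)$ one has $\{y\}\in [0,z)$, so the rule returns $\lfloor y\rfloor = k+1$. In either case the assigned integer is $k+1$. In other words, the integer crossing at $k+1$ inside $I_k$ is precisely cancelled by the switch between the ``floor'' and ``ceiling'' branches of the rule, so the rounded value does not change as one crosses an integer inside $I_k$.

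Combining these ingredients: hypothesis (1) guarantees that the threshold used at $x$ and at $x+\Delta$ is the same value $z$, and both $x_i = f_i(0)$ and $x_i + \Delta_i = f_i(1)$ lie in the common component $I_k$, so $R(x)_i = R(x+\Delta)_i = k+1$. As $i$ was arbitrary, $R(x) = R(x+\Delta)$, whence $x$ and $x+\Delta$ lie in the same cell of the tiling induced by $D$. The argument is essentially pointwise; the only subtlety is tracking how the two branches of the rounding rule interlock with the two subintervals of $I_k$ around the integer $k+1$, but this creates no real obstacle.
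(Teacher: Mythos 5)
Your proof is correct and rests on the same key fact as the paper's: condition (2) prevents the segment $\{x_i+\la\De_i\}_{\la\in[0,1]}$ from crossing the threshold set $z+\ZZ$, and the rounding rule $R(\cdot)_i$ is constant between consecutive points of $z+\ZZ$ (the switch between the floor and ceiling branches exactly absorbs the integer crossing). The paper phrases this as a proof by contradiction with a case analysis on whether $x_i\in[0,z)$ or $x_i\in(z,1)$, whereas you argue directly via connectedness of the image of the path; this is a cosmetic difference, not a different route.
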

\begin{proof}
Suppose towards contradiction that the conclusion of the statement does not hold, i.e.\
$x$ and $x+\De$ belong to different cells in the tiling induced by $D$.
Thus, the rounding function $R$ when applied on $x$ and on $x+\De$ should produce
different lattice points, so there is an $i$ such that $R(x)_i \neq R(x+\De)_i$. We fix that $i$ and
assume without loss of generality that $\De_i\ge 0$ and that $x_i\in [0,1)$. We now consider two cases,
depending on the range $x_i$ falls into:
\begin{enumerate}
  \item If $x_i \in [0,z)$, then by definition of $R$ we get that
  $R(x)_i=0$, and $R(x+\De)_i=0$ unless $x_i+\De_i >z$, which leads to a contradiction to the second condition ($z$ is on the interval between $x_i$ and $x_i+\De_i$).
  \item 	If $x_i \in (z,1)$, then $R(x)_i=1$, and  $R(x+\De)_i=1$ unless $x_i+\De_i >1+z$, which again leads  to a contradiction to the second condition ($1+z$ is on the interval between $x_i$ and $x_i+\De_i$).\qedhere
\end{enumerate}
\end{proof}

Our main technical statement is the following lemma.
\begin{lemma}\label{lem:main_ub}
  There exists a distribution over mappings $(S_{\vec{r}})_{\vec{r}}$ ($\vec{r}$ is a vector
  of randomness) such that for small enough $\eps>0$, setting $\sigma = \eps\frac{\sqrt{\log n}}{n}$ we have
\[
\Expect{\vec{r}}{\Prob{{\bf x}, \bm{\De}\sim\mathcal{N}(0,\sigma^2 I_n)}{\text{Conditions of Claim~\ref{cl:u1} hold for ${\bf x}$ and ${\bf x}+\bm{\De}$}}}
\geq 1 - O(\eps).
\]
\end{lemma}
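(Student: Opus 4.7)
\textbf{Construction.} Fix $\tau := C \eps (\log n)/n$ for a sufficiently large absolute constant $C>0$ and let the randomness $\vec r=(c_1,c_2,\dots)$ be an i.i.d.\ sequence of uniform random points in $\mathbb{R}/\mathbb{Z}$. Define
\[
S_{\vec r}(A)\;:=\;c_{k^\ast},\qquad k^\ast\;:=\;\min\{k\ge 1\colon \mathrm{dist}_{\mathbb T}(A,c_k)\ge \tau\},
\]
the first candidate at torus-distance at least $\tau$ from every element of $A$, where $\mathrm{dist}_{\mathbb T}(A,c):=\min_{a\in A}\mathrm{dist}_{\mathbb T}(a,c)$. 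Almost surely $S_{\vec r}(A)\notin\{0\}\cup A$, and the shared sequence $\vec r$ naturally couples $S_{\vec r}(x)$ with $S_{\vec r}(x+\bm\De)$. By Claim~\ref{cl:u1} it suffices to bound by $O(\eps)$ each of the probabilities of the events (a) $S_{\vec r}(x)\ne S_{\vec r}(x+\bm\De)$ and (b) the segment $\{\{x_i+\lambda\Delta_i\}:\lambda\in[0,1]\}$ passes through $z:=S_{\vec r}(x)$ for some $i$.

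\textbf{Event (b).} By construction $\mathrm{dist}_{\mathbb T}(\{x_i\},z)\ge\tau$ for every $i$. Hence the length-$|\Delta_i|$ segment $\{\{x_i+\lambda\Delta_i\}:\lambda\in[0,1]\}$ can contain $z$ only when $|\Delta_i|\ge\tau$. By a union bound and the Gaussian tail,
\[
\Pr\!\bigl[\exists i\colon|\Delta_i|\ge\tau\bigr]\;\le\; 2n\exp\!\bigl(-\tau^2/(2\sigma^2)\bigr)\;=\;2n^{1-C^2/2}\;\le\; O(\eps),
\]
provided $C$ is large enough.

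\textbf{Event (a).} Using the coupling supplied by $\vec r$, a standard Markov-chain analysis of the rejection scheme yields
\[
\Pr_{\vec r}\!\bigl[S_{\vec r}(x)\ne S_{\vec r}(x+\bm\De)\bigr]\;\le\; \frac{|E_{A}\triangle E_{A+\bm\Delta}|}{|E_{A}\cap E_{A+\bm\Delta}|},
\]
where $E_A:=\{z\in\mathbb{R}/\mathbb{Z}\colon \mathrm{dist}_{\mathbb T}(A,z)\ge\tau\}$. Since $x,\bm\De\sim\mathcal N(0,\sigma^2 I_n)$, Gaussian concentration shows that except on an event of probability $O(\eps)$, every $\{x_i\}$ and every $\{x_i+\Delta_i\}$ lies in a single cluster of torus-radius $O(\sigma\sqrt{\log n})$ about $0$. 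On that event $E_A$ and $E_{A+\bm\De}$ are each essentially the complement of a short interval, so $|E_A\cap E_{A+\bm\Delta}|\ge 1-o(1)$; and the two cluster endpoints shift by at most $\max_i|\Delta_i|$, giving $|E_A\triangle E_{A+\bm\Delta}|\le 2\max_i|\Delta_i|=O(\sigma\sqrt{\log n})=O(\eps\log n/n)=O(\eps)$.

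\textbf{Main obstacle.} The delicate aspect is the simultaneous calibration of the three scales $\sigma$, $\sigma\sqrt{\log n}$ (the typical extremum of the Gaussian coordinates of $x$ and $\bm\De$), and $\tau$. The threshold $\tau$ has to dominate $\sigma\sqrt{\log n}$ for event (b), while the cluster's width and its shift under $\bm\De$ -- both of order $\sigma\sqrt{\log n}$ -- must remain $O(\eps)$ on the torus for event (a). These are simultaneously achievable because $\sigma=\eps\sqrt{\log n}/n$ and $\log n/n\le 1$; the main technical content of the proof is carrying out the Gaussian concentration estimates for the extremes of $x$ and $\bm\De$ uniformly, and verifying that the atypical configurations contribute at most $O(\eps)$.
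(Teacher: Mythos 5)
There is a genuine gap here, on two levels. First, the distribution of ${\bf x}$: despite the lemma's compressed notation, ${\bf x}$ is \emph{not} a small Gaussian concentrated near the origin --- its fractional parts must be uniform on the torus, since that is what the reduction to tilings requires (Lemma~\ref{lem:uniform}) and what the paper's analysis uses. Your entire treatment of event (a) rests on the false picture that all $\{x_i\}$ sit in a single cluster of radius $O(\sigma\sqrt{\log n})$ around $0$, so that $E_A$ is the complement of one short interval and $|E_A\triangle E_{A+\bm\De}|\le 2\max_i|\De_i|$. For uniform ${\bf x}$ the $n$ exclusion intervals are spread over the whole torus, each boundary shifts by $\Theta(|\De_i|)$ independently, and the symmetric difference is $\sum_i 2|\De_i| = \Theta(n\sigma) = \Theta(\eps\sqrt{\log n})$, not $O(\eps)$. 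This is not a fixable constant: the hard-threshold rule ``take the first candidate at distance $\ge\tau$ from all $x_i$'' is precisely the naive scheme the paper discusses and discards, and it loses exactly the $\sqrt{\log n}$ factor that the theorem is about (it yields surface area $O(n)$, no better than the cube). The paper's construction avoids this by replacing the indicator weight with a smooth scoring function $f$ behaving cubically near the threshold, performing correlated sampling from the distribution proportional to $r_j(x)=\prod_i g_j(x_i)$ over a net of $m=n^{1/3}$ candidate centers, and --- crucially --- controlling $\|p(x)-p(x+\De)\|_1$ by a second-moment argument (Claim~\ref{cl:A-4}) in which the independent coordinate contributions add in $\ell_2$, giving $\sigma\sqrt{\sum_i \alpha_i^{-2}}$ rather than $\sigma\sum_i\alpha_i^{-1}$; that $\ell_2$ aggregation is where the $\sqrt{\log n}$ is saved, and it has no counterpart in your argument.

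Second, your calibration of $\tau$ is inconsistent with the quantifiers. The bound must be $O(\eps)$ for all sufficiently small $\eps$ (the paper even thinks of $\eps\le 2^{-n^2}$), so any additive error depending only on $n$ is fatal. With $\tau = C\eps\log n/n$ you get $\Pr[\exists i:|\De_i|\ge\tau]\le 2n^{1-C^2/2}$, which is a fixed positive number independent of $\eps$ and hence not $O(\eps)$ as $\eps\to 0$, no matter how large $C$ is. The dead-zone radius must be taken $\Theta(\log n/n)$ \emph{independent of} $\eps$ (as in the paper's $\frac{\log n}{50n}$), so that the Gaussian tail $e^{-\Omega(\log^2 n/(n^2\sigma^2))} = e^{-\Omega(\log n/\eps^2)}$ is super-polynomially small in $1/\eps$; but then $E_A$ becomes a rare set of measure $n^{-\Theta(1)}$, which is exactly why the stability of the sampling rule becomes the central difficulty.
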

Deducing from Theorem~\ref{thm:ub} from Lemma~\ref{lem:main_ub} mostly involves measure-theoretic arguments,
and we defer this deduction to Section~\ref{sec:ns_sa_reduce}. We will actually need the following slightly
more informative version of Lemma~\ref{lem:main_ub} above, using the reduction from
mappings to tilings presented in the beginning of this section, and an inspection of the
bodies $D_{\vec{r}}$ our proof gives.

\begin{lemma}\label{lem:main_ub_use}
  There exists a distribution over tiling bodies $(D_{\vec{r}})_{\vec{r}}$ such that

  \begin{enumerate}
    \item For small enough $\eps>0$, we have
    \[
    \Expect{\vec{r}}{\Prob{{\bf x}, \bm{\De}\sim\mathcal{N}(0,\eps^2 I_n)}{\text{At least one of the conditions of Claim~\ref{cl:u1} fail for ${\bf x}$ and ${\bf x}+\bm{\De}$}}}
    \ll \frac{n}{\sqrt{\log n}} \eps.
    \]

    \item For each $\vec{r}$, $D_{\vec{r}}$ is a countable union of semi-algebraic sets (i.e., sets defined by finitely many polynomial inequalities).
  \end{enumerate}
\end{lemma}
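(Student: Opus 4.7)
The plan is to construct the random tiling bodies via the reduction in Section~\ref{sec:mapping_to_tiling}: for each realization of randomness $\vec{r}$ I define a mapping $S_{\vec{r}}:\mathbb{T}^n\to \mathbb{T}$, where $\mathbb{T}=\mathbb{R}/\mathbb{Z}$, and let $D_{\vec{r}}$ be the resulting fundamental cell. The natural construction is to choose, given $x=(x_1,\ldots,x_n)$, a rounding anchor $z=S_{\vec{r}}(x)$ on the circle that is far from all fractional parts $\{x_1\},\ldots,\{x_n\}$; to get stability under perturbations of $x$, instead of taking the exact maximizer of $\min_i d(z,\{x_i\})$, I sample $z$ from a smooth product density
\[
  \phi(z;x)\;\propto\;\prod_{i=1}^n g\!\left(d(z,\{x_i\})\right),
\]
where $g:[0,1/2]\to[0,1]$ is a non-decreasing ramp vanishing at $0$ and reaching $1$ on $[\beta,1/2]$, with $\beta\asymp \eps\sqrt{\log(1/\eps)}$ (which is $\Theta(\log n/n)$ in the critical regime $\eps\asymp \sqrt{\log n}/n$). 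Couple the samples across all inputs by taking $\vec{r}=(U_k,Z_k)_{k\geq 1}$ i.i.d.\ uniform on $[0,1]\times \mathbb{T}$ and performing coupled rejection sampling: $S_{\vec{r}}(x):=Z_{k^\star(x)}$ with $k^\star(x):=\min\{k : U_k\leq \phi(Z_k;x)\}$. This shared-randomness coupling is what allows a candidate-by-candidate comparison of $S_{\vec{r}}(x)$ and $S_{\vec{r}}(x+\bm{\De})$.

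With this construction in hand, I verify the two conditions of Claim~\ref{cl:u1}. For condition (2) ($\{x_i+\lambda\De_i\}\neq z$ for all $i$ and $\lambda\in[0,1]$): by construction $d(z,\{x_i\})\geq \beta$ for every $i$, so coordinate~$i$ can violate the condition only if $|\De_i|\geq \beta$. By Gaussian tails $\bar\Phi(\beta/\eps)\lesssim \eps/\sqrt{\log n}$ for the chosen $\beta$, and a union bound over $i\in[n]$ yields the desired $O(n\eps/\sqrt{\log n})$ total contribution. For condition (1) ($S_{\vec{r}}(x)=S_{\vec{r}}(x+\bm{\De})$): by the coupling, the two anchors differ only when the first $k$ with $U_k\leq \max(\phi(Z_k;x),\phi(Z_k;x+\bm{\De}))$ has $U_k$ strictly between $\phi(Z_k;x)$ and $\phi(Z_k;x+\bm{\De})$. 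A short computation shows that the probability of disagreement is controlled by the normalized $L^1$-gap $\|\phi(\cdot;x)-\phi(\cdot;x+\bm{\De})\|_1/(\|\phi(\cdot;x)\|_1+\|\phi(\cdot;x+\bm{\De})\|_1)$; the goal is to design $g$ so that this quantity, in expectation over $\bm{\De}$ and $x$, is $O(n\eps/\sqrt{\log n})$.

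The main obstacle I anticipate is precisely this last step: in the critical regime $\eps\asymp \sqrt{\log n}/n$, the threshold $\beta\asymp \log n/n$ is of the same order as the typical largest gap between $n$ uniform points on $\mathbb{T}$, so the normalizer $\|\phi(\cdot;x)\|_1$ can be as small as $1/\operatorname{poly}(n)$ with significant probability, while the $L^1$-shift is at least of order $\sum_i|\De_i|$ (since moving coordinate~$i$ displaces the ramp annulus $d(z,\{x_i\})\in[0,\beta]$ by $|\De_i|$). Making these reconcile to $O(n\eps/\sqrt{\log n})$ will require a careful choice of the bump $g$---one that spreads its mass and its Lipschitz weight optimally across the gap-length distribution of uniform points---together with a concentration-of-measure analysis of those gap-lengths. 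The semi-algebraic condition of part~(2) is then automatic: taking $g$ piecewise polynomial makes $\phi(z;x)$ piecewise polynomial in $(z,x)$ on a semi-algebraic partition of $\mathbb{T}^{n+1}$, so for each fixed $k^\star\in\mathbb{N}$ the set $\{x:k^\star(x)=k^\star\}$ is semi-algebraic, and $D_{\vec{r}}$ is the countable union of these pieces intersected with the box constraints $x_i\in(Z_{k^\star}-1,Z_{k^\star})$ induced by the rounding.
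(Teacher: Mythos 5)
Your high-level plan is the same as the paper's: round via a random anchor $z$ sampled with probability proportional to a product of ramp scores $\prod_i g(d(z,\{x_i\}))$, couple the samples for $x$ and $x+\bm{\De}$ by shared-randomness rejection sampling (Holenstein's correlated sampling), and bound the disagreement probability by the total variation distance between the two normalized score distributions. However, the step you yourself flag as "the main obstacle" is exactly the heart of the proof, and your sketch of how to close it would not succeed. Comparing the raw $L^1$ shift of the unnormalized density (which is indeed of order $\sum_i|\bm{\De}_i|\approx n\eps$) against the normalizer (which is indeed typically only ${\rm poly}(1/n)$) can never yield the target $O(n\eps/\sqrt{\log n})$; a first-order Lipschitz estimate loses at least a $\sqrt{\log n}$ factor even after all normalization issues are handled. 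The saving in the paper comes from a second-moment cancellation: for each candidate anchor, the relative change $|r_j(x+\bm{\De})-r_j(x)|/r_j(x)$ is (after taking logarithms) a sum of roughly $k_j\sim\log n$ \emph{independent, nearly mean-zero} increments $\log\bigl(g_j(x_i+\bm{\De}_i)/g_j(x_i)\bigr)$, each of typical size $\sigma/\alpha_i$, so it concentrates at $\sqrt{k_j}\cdot\sigma n/\log n\sim\eps$ rather than $k_j\cdot\sigma n/\log n\sim\eps\sqrt{\log n}$ (Claim~\ref{cl:A-4}); this must then be combined with the refined TV bound of Claim~\ref{cl:A-6} (the weights $\min(r_i,T-r_i)/T$, which neutralize the small-normalizer problem) and with Proposition~\ref{prop:convex_error}, which handles coordinates whose distance $\alpha_i$ to the support boundary of $g$ is tiny, where $1/\alpha_i$ blows up but is compensated by the cubic vanishing of $g$. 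None of this machinery is present or even gestured at in your proposal, so the central inequality is unproven.

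Two further points. First, the paper restricts the anchor to a \emph{discrete} set of $m=n^{1/3}$ fixed interval midpoints rather than sampling a continuous $z\in\mathbb{T}$; this discretization is used repeatedly (the union bounds of the form $m\eps^{1.05}\ll\eps$, the independence of the events $r_j(x)=0$ across intervals, and Claim~\ref{cl:A-7} showing the normalizer is bounded below), and a continuum version would need all of these redone. Second, you do not handle the degenerate case where the product density is identically zero (every point of the circle is within $\beta$ of some $x_i$), in which your rejection sampler never terminates; the paper's Case~(B) supplies a deterministic fallback and separately bounds its contribution. Finally, be careful that the threshold $\beta$ must be fixed as $\Theta(\log n/n)$ independently of $\eps$: the lemma requires a single distribution over bodies working for all small $\eps$ simultaneously (this is essential for the surface-area extraction in Section~\ref{sec:ns_sa_reduce}), so a choice like $\beta\asymp\eps\sqrt{\log(1/\eps)}$ is not admissible.
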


\subsection{The construction of $S_{\vec{r}}$}
\subsubsection{Overview}
Before jumping into the technical details, we start with some intuition. Recall that on input $x$ (a set of $n$ points from $\RR/\ZZ$)
we must output a number $z\in\RR/\ZZ$, and our goal is to minimize the probability so that the conditions of Claim~\ref{cl:u1} fail
on a short needle $\ell_{x,\De}$. Note that it would not be beneficial for us to choose $z$ that are close to $x_i$. For example,
if we chose $z$ such that $\card{x_i-z_i}\leq \sigma$, then there is constant probability that the interval $\set{x_i + \lambda \De_i}_{\lambda\in[0,1]}$
would contain the point $z$, i.e.\ the second condition of Claim~\ref{cl:u1} would fail.

Thus, a natural candidate for the choice of $z$ would be the one that maximizes $\min_{i\in[n]}{\card{x_i - z_i}}$. It is not hard to see that this minimum
is typically of the order $\log n/n$, so intuitively the second condition of Claim~\ref{cl:u1} should hold with probability $\geq 1-\eps$. However, such choice
for $z$ would not be very stable: it is typically the case that there are numerous $z_1,\ldots,z_r$ that nearly achieve this maximum, thus the maximizer among
them could change when looking at $x+\De$ (i.e., this event would happen with probability significantly more than $\eps$), leading to a failure of the first
condition of Claim~\ref{cl:u1}.

We must therefore assign each one of the near-maximizer $z_1,\ldots,z_r$ some weight, so that the weight of each one of them does not significantly change when
moving to $x+\De$. A general form of construction of this type is to design a scoring function $f\colon[0,\infty]\to[0,1]$, and given an input $x$ to assign
the weight $w(z) = \prod\limits_{i}{f(\card{x_i - z})}$ to each $z$, and sample $z$ with probability proportional to $w(z)$.

We remark that this general recipe essentially captures our (natural) attempts so far. On the one hand, we want $f$ to penalize $z$
if it is very close to $x_i$, hence we want $f(t)$ at least mildly increasing. On the other hand, if $f$ is very sharply increasing (e.g exponential),
then one runs into the same problems as we had when we thought of picking $z$ that maximizes $\min_{i\in[n]}{\card{x_i - z_i}}$. We are thus
led to consider ``mildly increasing'' scoring functions $f$, and polynomials turn out to be good choice. Indeed, our scoring function $f$ will
be ``trivial'' if $\card{x_i - z}$ is too small or too large (i.e.\ it'll be $0$ if $\card{x_i-z}\leq \frac{\log n}{50 n}$ and $1$ if
$\card{x_i-z}\geq \frac{\log n}{25 n}$), and otherwise behaves cubically.

\subsubsection{A basic scoring function}
Our construction of $(S_{\vec{r}})_{\vec{r}}$ uses a non-negative scoring function $f$ with the following properties.
\begin{fact}\label{fact:std_fn}
There exists a function $f\colon [0,\infty)\to [0,1]$ that is twice differentiable with
continuous second derivative with the following properties:
\begin{enumerate}
  \item $f(t) = 0$ if $t\leq 1$.
  \item $f(t) = 1$ if $t\geq 2$.
  \item $f(t) \ee (t-1)^3$ if $1\leq t\leq 2$.
  \item $\card{f'(t)}\ll t^2$ and $\card{f''(t)}\ll t$ for all $t$.
\end{enumerate}
\end{fact}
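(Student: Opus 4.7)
Conditions 1 and 2 pin down $f\equiv 0$ on $[0,1]$ and $f\equiv 1$ on $[2,\infty)$, so the entire task reduces to specifying $f$ on $[1,2]$ in a way that (i) glues $C^2$-smoothly to these constants at both endpoints, (ii) satisfies $f(t)\asymp (t-1)^3$ there, and (iii) has $f'$ and $f''$ bounded by absolute constants. My plan is to exhibit $f$ on $[1,2]$ as a single explicit polynomial chosen so that the boundary data are met.

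The $C^2$-gluing requirements translate into the Hermite data $f(1)=f'(1)=f''(1)=0$ and $f(2)=1,\ f'(2)=f''(2)=0$. First, I would use the ansatz $f(t)=a(t-1)^3+b(t-1)^4+c(t-1)^5$, which automatically satisfies the three conditions at $t=1$. Then I would solve the resulting $3\times 3$ linear system coming from the three conditions at $t=2$; this is a standard quintic Hermite interpolation and yields $(a,b,c)=(10,-15,6)$. Defining $f$ outside $[1,2]$ by the appropriate constants produces a $C^2$ function on $[0,\infty)$.

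Second, I would verify the remaining clauses from the closed form. For clause 3, I would factor $f(t)=(t-1)^3\bigl[10-15(t-1)+6(t-1)^2\bigr]$ and observe that the bracket, viewed as a function of $s=t-1\in[0,1]$, has derivative $-15+12s<0$ on $[0,1]$, so it is monotone and takes the values $10$ and $1$ at the endpoints, hence it is bounded above and below by positive absolute constants. For clause 4, a direct computation gives $f'(t)=30(t-1)^2(t-2)^2$ and $f''(t)=60(t-1)(t-2)(2t-3)$ on $[1,2]$ (both vanish to order $2$ at each endpoint, confirming the $C^2$ matching), and both are identically zero outside $[1,2]$. On $[1,2]$ each is manifestly $O(1)$, and since $t\geq 1$ whenever $f'(t)$ or $f''(t)$ is nonzero, the bounds $|f'(t)|\ll t^2$ and $|f''(t)|\ll t$ follow immediately for all $t\geq 0$.

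There is no real obstacle here — the problem is a standard Hermite interpolation with explicit boundary data, and the comparability $f(t)\asymp(t-1)^3$ is automatic from the closed form once one checks that the quintic cofactor is bounded away from zero on $[1,2]$. The only mild point of care is the consistency between the normalization $f(2)=1$ and the asymptotic $(t-1)^3$ scaling near $t=1$, which is automatic since $(2-1)^3=1$.
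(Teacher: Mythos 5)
Your proof is correct. The paper itself omits the verification, only sketching a construction: take the naive piecewise function equal to $(t-1)^3$ on $[1,2]$ (and the constants $0,1$ outside) and repair the lack of smoothness at the gluing point by convolving with a compactly supported smooth bump. Your route is genuinely different and, if anything, cleaner: rather than mollifying, you write down the explicit quintic Hermite interpolant $f(t)=10(t-1)^3-15(t-1)^4+6(t-1)^5$ on $[1,2]$, which matches the $C^2$ boundary data $f(1)=f'(1)=f''(1)=0$, $f(2)=1$, $f'(2)=f''(2)=0$ exactly, and all four clauses then follow from the closed form ($f'=30(t-1)^2(2-t)^2$ and $f''=60(t-1)(2-t)(3-2t)$ on $[1,2]$, both vanishing identically outside). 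What the explicit polynomial buys you is that clause 3 is immediate — the cofactor $10-15(t-1)+6(t-1)^2$ is monotone on $[1,2]$ and stays in $[1,10]$ — whereas in the convolution approach one would still need to check that mollification does not destroy the two-sided comparison $f(t)\asymp (t-1)^3$ near $t=1$ (and one would have to mollify only near $t=2$, or localize the bump carefully, since the naive cubic is already $C^2$ at $t=1$; the paper's sketch is in fact slightly off on which endpoint is problematic). Both approaches are valid for this elementary fact.
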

Exhibiting function $f$ as in Fact~\ref{fact:std_fn} is not hard, and we omit the proof.
The function $f$ defined by $f(t) = (t-1)^3$ if $1\leq t\leq 2$ and $f(t) = 0$ for $t\leq 1$,
$f(t) = 1$ for $t\geq 2$ is almost enough, except that it is not differentiable at $t=1$. One
can fix by convolving a smooth bump function with compact support.

Next, we wish to define the mapping $S_{\vec{r}}$. We view the input $x$ as a multi-set, and
the randomness vector $\vec{r}$ as an infinite sequence of $(i,h)$ where $i$ is a uniformly
random element from $[m]$ and $h$ is a uniform real-number from $[0,1]$.

Set $m = n^{1/3}$, partition the circle
the circle $\RR/\ZZ$ into $m$ intervals of length $1/m$ each, $I_j:= \left[\frac{j-1}{m},\frac{j}{m}\right]$, and
let $z_j = \frac{j-1/2}{m}$ be the middle of $I_j$.
It will be convenient for us to define $g_j(t) = f(\frac{50n}{\log n}\card{t-z_j})$,
and subsequently $r_j(x):= \prod\limits_{y\in I_j\cap x} g_j(y)$.
There two cases:

\paragraph{Case (A): $r_i(x)\neq 0$ for some $i\in [m]$.}
In this case, we define a probability distribution $p_i(x)$ over the $i$'s proportionally to the $r_i(x)$'s, i.e.\
we define $p_i(x):=\frac{r_i(x)}{\sum_i r_i(x)}$. We now perform correlated sampling of $i\in[m]$ according to $p_i(x)$ using the randomness vector $\vec{r}$.
More precisely, we go over the randomness vector $\vec{r} = (i_1,h_1),(i_2,h_2),\ldots$ and find the smallest $j$ such that
$h_j\leq p_{i_j}(x)$, in which case we choose $i = i_j$. We then define $S_{\vec{r}}(x) = z_{i_j}$.

\paragraph{Case (B): $r_i(x) = 0$ for all $i\in [m]$.}
If $1/2\not\in x$, we define $S_{\vec{r}}(x) = 1/2$.
Otherwise, we define $S_r(x)=z$, where $z$ is the first element from $\{\frac{1}{4n}, \frac{3}{4n},\ldots,\frac{4n-1}{4n}\}$ that is at least $\frac{1}{4n}$-away from all the
entries of $x$.

\subsection{Estimating $g_j$ on close points}
\begin{fact}\label{fact:first_order_g}
Let $j\in[m]$ and
$x_i\in [z_j - \frac{\log n}{25 n}-\eps^{0.95},z_j+\frac{\log n}{25 n}+\eps^{0.95}]\setminus [z_j - \frac{\log n}{50 n},z_j+\frac{\log n}{50 n}]$,
$\De_i\in\mathbb{R}$, and denote
$\alpha_i = {\sf dist}\left(x_i, [z_j - \frac{\log n}{50 n}, z_j+\frac{\log n}{50n}]\right)$.

\begin{enumerate}
  \item If $\alpha_i\geq 2\card{\De_i}$, then $\card{g_j(x_i+\De_i) - g_j(x_i)}\ll \frac{\card{\De_i}}{\alpha_i} g_j(x_i)$.
  \item In general, $\card{g_j(x_i+\De_i) - g_j(x_i)}\ll n^3(\alpha_i^3+\card{\De_i}^3)$.
\end{enumerate}

\end{fact}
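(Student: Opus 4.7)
The plan is to view $g_j(t)$ as the composition $f(u(t))$ with $u(t):=\frac{50n}{\log n}|t-z_j|$, and reduce both bounds to estimates on $f$ evaluated at $u_0:=u(x_i)$ and $u_1:=u(x_i+\Delta_i)$. The hypothesis on $x_i$ yields $u_0-1 = \frac{50n}{\log n}\alpha_i \geq 0$, and $|u_1-u_0|\leq\frac{50n}{\log n}|\Delta_i|$.

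For part~2 I would use the elementary inequality $|g_j(x_i+\Delta_i)-g_j(x_i)|\leq \max(g_j(x_i),g_j(x_i+\Delta_i))$ (since $g_j\geq 0$), combined with the observation that $f(u)\ll (u-1)_+^3$ for every $u\in\mathbb{R}$: on $[1,2]$ this is Fact~\ref{fact:std_fn}(3), for $u\geq 2$ one has $f(u)=1\leq(u-1)^3$, and for $u\leq 1$ both sides vanish. Substituting the bounds on $u_0-1$ and $u_1-1$ gives
$|g_j(x_i+\Delta_i)-g_j(x_i)|\ll(\tfrac{50n}{\log n})^3(\alpha_i+|\Delta_i|)^3\ll n^3(\alpha_i^3+|\Delta_i|^3)$, which is part~2.

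For part~1 I would apply the mean value theorem, $|g_j(x_i+\Delta_i)-g_j(x_i)|\leq|\Delta_i|\sup_\xi|g_j'(\xi)|$, and exploit that $\alpha_i\geq 2|\Delta_i|$ forces $|\xi-z_j|\geq\tfrac{\log n}{50n}+\alpha_i/2$ for every $\xi$ on the segment, so $u(\xi)-1\in[\tfrac12(u_0-1),\tfrac32(u_0-1)]$. Using $|g_j'(\xi)|\leq\tfrac{50n}{\log n}|f'(u(\xi))|\ll\tfrac{50n}{\log n}(u(\xi)-1)^2\cdot\mathbf{1}_{u(\xi)\leq 2}$ (the derivative bound on $f$ follows from Fact~\ref{fact:std_fn} together with the cubic structure of $f$ near $1$ and the plateau for $u\geq 2$), I would split by cases. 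If $u_0\leq 2$ then $g_j(x_i)\asymp(u_0-1)^3$ and the bound becomes $|g_j'(\xi)|\ll \tfrac{50n}{\log n}(u_0-1)^2 = (u_0-1)^3/\alpha_i \asymp g_j(x_i)/\alpha_i$, which integrates to the desired estimate.

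The main technical obstacle is the plateau regime $u_0>2$, where $g_j(x_i)=1$ and the cubic estimate is unavailable. I would handle this by noting that for $|g_j'(\xi)|$ to be nonzero anywhere on the segment, the segment must re-enter $[z_j-\tfrac{\log n}{25n},\,z_j+\tfrac{\log n}{25n}]$; combined with $|\Delta_i|\leq\alpha_i/2$ this forces $\alpha_i\leq\tfrac{\log n}{25n}$. Hence $\tfrac{50n}{\log n}\leq 2/\alpha_i$, which yields $|g_j'(\xi)|\ll 1/\alpha_i = g_j(x_i)/\alpha_i$ uniformly on the segment, closing the proof of part~1.
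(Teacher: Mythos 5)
Your proposal is correct. For item~2 you take a genuinely more elementary route than the paper: the paper applies first-order Taylor, bounds $\card{g_j'}$ by $\frac{n}{\log n}\bigl(\frac{50n}{\log n}\card{y_i-z_j}-1\bigr)^2$, and then invokes Young's inequality $ab\ll a^3+b^{3/2}$ to separate $\alpha_i$ from $\card{\De_i}$; you avoid derivatives entirely by bounding $\card{g_j(x_i+\De_i)-g_j(x_i)}\leq\max\bigl(g_j(x_i),g_j(x_i+\De_i)\bigr)$ and using the global cubic upper bound $f(u)\ll(u-1)_+^3$, which is shorter and needs no smoothness of $f$. For item~1 your argument is essentially the paper's (mean value theorem plus the lower bound $u(\xi)-1\geq\frac{25n}{\log n}\alpha_i$ forced by $\card{\De_i}\leq\alpha_i/2$), except that the paper handles the ramp and plateau uniformly through the chain $\frac{n}{\log n}\card{\De_i}(u-1)^2\ll\frac{\card{\De_i}}{\alpha_i}(u-1)^3\ll\frac{\card{\De_i}}{\alpha_i}\bigl(\frac{n}{\log n}\alpha_i\bigr)^3\ll\frac{\card{\De_i}}{\alpha_i}g_j(x_i)$ — the last step being valid on the whole stated domain because there $\frac{50n}{\log n}\alpha_i\leq 1+o(1)$, so $\bigl(\frac{n}{\log n}\alpha_i\bigr)^3\ll g_j(x_i)$ even on the plateau — whereas you make the plateau case explicit and close it via $\alpha_i\leq\frac{\log n}{25n}$. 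One point worth flagging (and you do): both proofs use $\card{f'(u)}\ll(u-1)^2$ near $u=1$, which is strictly stronger than item~4 of Fact~\ref{fact:std_fn} as stated ($\card{f'(t)}\ll t^2$); it holds for the intended construction of $f$, and the paper uses it silently, so your explicit acknowledgment is if anything an improvement.
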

\begin{proof}
     Using
     Taylor's approximation with remainder, there is $y_i\in [x_i,x_i + \De_i]$ such that
     $g_j(x_i+\De_i) = g_j(x_i) + g_j'(y_i)\De_i$, hence
     \begin{align*}
     \card{g_j(x_i+\De_i) - g_j(x_i)}
     \ll \card{\De_i} \card{g_j'(y_i)}
     &\ll \card{\De_i}\frac{50n}{\log n}f'\left(\frac{50n}{\log n}\card{y_i-z_j}\right)\\
     &\ll \frac{n}{\log n} \card{\De_i} \left(\frac{50n}{\log n}\card{y_i-z_j}-1\right)^2.
     \end{align*}
     For the second item, since $y_i\in [x_i,x_i+\De_i]$,
     we get that $\card{\frac{50n}{\log n}\card{y_i-z_j}-1}\leq \frac{50n}{\log n}\left(\alpha_i + \card{\De_i}\right)$,
     and plugging that in yields
     \[
     \card{g_j(x_i+\De_i) - g_j(x_i)}
     \ll n^3\card{\De_i}(\alpha_i^2 + \De_i^2)
     \ll n^3(\alpha_i^3+\card{\De_i}^3),
     \]
     where the last inequality holds as $ab\ll a^3 + b^{3/2}$ for all $a,b>0$ (Young's inequality).
     For the first item, note that since $y_i\in [x_i,x_i+\De_i]$ we get that
     $\left(\frac{50n}{\log n}\card{y_i-z_j}-1\right)\geq \frac{50n}{\log n}\left(\alpha_i - \card{\De_i}\right)$,
     and by the lower bound on $\alpha_i$ this is $\geq \frac{25n}{\log n}\alpha_i$. Therefore we may continue
     as
     \[
     \card{g_j(x_i+\De_i) - g_j(x_i)}
     \ll \frac{n}{\log n} \card{\De_i} \left(\frac{50n}{\log n}\card{y_i-z_j}-1\right)^2
     \ll \frac{\card{\De_i}}{\alpha_i}\left(\frac{50n}{\log n}\card{y_i-z_j}-1\right)^3.
     \]
     Also, we have that
     $\left(\frac{50n}{\log n}\card{y_i-z_j}-1\right)\leq \frac{50n}{\log n}\left(\alpha_i + \card{\De_i}\right)
     \ll \frac{n}{\log n} \alpha_i$, so
     \[
     \card{g_j(x_i+\De_i) - g_j(x_i)}
     \ll \frac{\card{\De_i}}{\alpha_i}\left(\frac{n}{\log n} \alpha_i\right)^3
     \ll \frac{\card{\De_i}}{\alpha_i}g_j(x_i).\qedhere
     \]

\end{proof}
\subsection{Analysis of the construction}\label{sec:analysis_rare}
In this section we prove that Lemma~\ref{lem:main_ub} holds for the construction of $S_{\vec{r}}$ from the last section, and for
that we show that for small enough $\eps$, the expected probability of the complement event is $O(\eps)$, i.e.\ that
\begin{equation}\label{eq:main_ub}
\Expect{\vec{r}}{\Prob{{\bf x}, \bm{\De}\sim\mathcal{N}(0, \sigma^2 I_n)}
{\text{One of the conditions in Claim~\ref{cl:u1} fails for ${\bf x}$ and ${\bf x}+\bm{\De}$}}}
\ll \eps.
\end{equation}
We will think of $\eps$ as very small (say $\eps \leq 2^{-n^2}$), and analyze the contribution of $x$'s from case (A) and case (B) separately.
Case (A) is the main case that occurs often, and case (B) should be thought of rare.

\subsubsection{Analysis of case (B)}
First, we show that the probability ${\bf x}$ (or equivalently ${\bf x}+\bm{\De}$) falls into Case~(B) is at most $n^{-\omega(1)}$.
For this, it will be helpful for us to sample ${\bf x}$, a multi-set of $n$ uniformly chosen numbers in $[0,1]$ in the following equivalent way:
\begin{itemize}
	\item
	Sample ${\bf t}_1,\ldots,{\bf t}_m$ --- where ${\bf t}_i$ is the number of $i$'s such that ${\bf x}_i$'s that fall into interval $I_i$.
	\item Sample ${\bf t}_i$ points uniformly from $I_i$, for each $i=1,\ldots,m$.
\end{itemize}
Note that $\E[{\bf t}_i]=n/m$, hence by Chernoff bound $\Pr[{\bf t}_i\ge 2 \cdot n/m] =e^{-\Omega(n/m)}=n^{-\omega(1)}$. Thus, by
the union bound we have that
\[
\Prob{}{\forall i ~{\bf t}_i\leq 2\cdot n/m} = 1 - n\cdot n^{-\omega(1)} = 1-n^{-\omega(1)}.
\]
Next, we condition on ${\bf t}_i = t_i$, and assume that indeed $t_i<2\cdot n/m$ for all $i$. Let $E_i$ be the event that $r_i(x)=0$.
Note that conditioned on ${\bf t}_i = t_i$, the $E_i$'s are independent and that
\begin{align}
\Pr[\neg E_i|~t_1,\ldots,t_m]
=\Prob{{\bf a}\in I_i}{\frac{50 n}{\log n}\card{{\bf a}-z_i}\leq 1}^{t_i}
=\left( 1-\frac{\log n/25n}{1/m} \right)^{t_i}
&\geq
\left( 1-\frac{m \log n}{25 n} \right)^{2 \cdot n/m}\notag\\\label{eq:notEi}
&\geq e^{-4\log n/25}
= n^{-4/25},
\end{align}
where we used the fact that $e^{-\delta}\leq 1-\delta/2$ for small enough $\delta>0$.
Therefore,
$$
\Pr[E_1\wedge E_2 \wedge \ldots \wedge E_m|~t_1,\ldots,t_m] 
\leq (1-n^{-4/25})^m
=(1-n^{-4/25})^{n^{1/3}} = n^{-\omega(1)},
$$
as long as the $t_i$'s satisfy the condition $t_i<2\cdot n/m$. Therefore, the overall probability of case~(B) is $n^{-\omega(1)}$.

Next, we analyze the probability that the conditions of Claim~\ref{cl:u1} fail given we are in case (B).
Note that if the conditions of Claim~\ref{cl:u1} fail to hold, then either (I) exactly one of ${\bf x}$, ${\bf x}+\bm{\De}$ falls under Case~(B),
or (II) both ${\bf x}$ and ${\bf x}+\bm{\De}$ fall under Case~(B),  but $1/2\in {\bf x}+\la \bm{\De}$ for some $\la\in [0,1]$. We'll bound these cases separately.

\paragraph{Case (I).} Assuming ${\bf x}$ is under Case~(B), we know that each
of the $m$ intervals of the form $J_i := [z_i-\frac{\log n}{50n},z_i+\frac{\log n}{50n}]$ contains at least one point from ${\bf x}$.
Let ${\bf x}_i$ be that point (if there are multiple, pick one at random). Then ${\bf x}_i$ is uniformly distributed in $J_i$.
Therefore, the probability of ${\bf x}_i+\bm{\De}_i$ is outside $J_i$, where
$\bm{\De}_i \sim N(0,\sigma^2)$ and $\sigma^2\leq \eps^2$, is $O(\eps)$.
Given case (B) occurs with probability $\leq n^{-\omega(1)}$, we conclude that its contribution
to the conditions of Claim~\ref{cl:u1} failing is at most
\[
O(m \eps)\cdot n^{-\omega(1)} = O(\eps).
\]

\paragraph{Case (II).}
Fix $\bm{\De} = \De$, and consider ${\bf x}_j$ conditioned on being in case (B).
If ${\bf x}_j$ is in one of the intervals $J_i$, then its distribution is uniform over $J_i$, in which
case we get that the probability $1/2$ falls inside the interval $[{\bf x}_j,{\bf x}_j + \De_j]$ is at most
$m\card{\De_j}$. If ${\bf x}_j$ is not in one of the intervals $J_i$, then it is distributed uniformly
on $[0,1]\setminus\cup_{i=1}^{m} J_i$, and the probability $1/2$ is in $[{\bf x}_j,{\bf x}_j + \De_j]$ is
at most $2\card{\De_j}\leq m\card{\De_j}$.

Therefore by the union bound,
\[
\cProb{{\bf x}}{{\sf case (B)}, \De}{\exists j\in [n]~1/2\in [{\bf x}_j,{\bf x}_j+\De_j]}
\leq m\sum\limits_{j=1}^{n}{\card{\De_i}}.
\]
Taking expectation over $\De\sim\mathcal{N}(0,\sigma^2 I_n)$ and
using Cauchy-Schwarz we get that
\[
\cProb{{\bf x},\bm{\De}}{{\sf case (B)}}{\exists j\in [n]~~1/2\in [{\bf x}_j,{\bf x}_j+\bm{\De}_j]}
\leq m\Expect{\bm{\De}}{\sum\limits_{j=1}^{n}{\card{\bm{\De}_i}}}
\leq m\sqrt{n}\sqrt{\Expect{\bm{\De}\sim\mathcal{N}(0,\sigma^2 I_n)}{\norm{\bm{\De}}_2^2}}
= mn\sigma.
\]
Therefore, the contribution of this case is upper bounded as
\[
\Prob{{\bf x},\bm{\De}}{{\sf case (B)}\land \exists j\in [n]~1/2\in [{\bf x}_j,{\bf x}_j+\De_j]}
\leq \Prob{{\bf x},\bm{\De}}{{\sf case (B)}}mn\sigma
=n^{-\omega(1)}\cdot \sigma = O(\eps).
\]

\subsubsection{Analysis of case (A)}
We now analyze the contribution of $x$'s that fall into case (A) to the left hand side of~\eqref{eq:main_ub}.
\paragraph{Case~(A), Condition~2.}  If  ${\bf x}$  falls under Case~(A),
then the distance from all ${\bf x}_i$'s to $z=S({\bf x})$ is at least $\frac{\log n}{100 n}$.
Therefore, Condition~2 holds as long as $|\bm{\De}_i| < \frac{\log n}{100 n}$ for all $i$.
Since for each $i$ we have that
\[
\Prob{\bm{\De}\sim\mathcal{N}(0,\sigma^2 I_n)}{\card{\bm{\De}_i}\geq \frac{\log n}{100 n}}
=
\Prob{\bm{\De}\sim\mathcal{N}(0,\sigma^2 I_n)}{\bm{\De}_i^2\geq \frac{\log^2 n}{100^2 n^2}}
\ll \frac{\sigma^2}{\log ^2 n/n^2}
\ll \eps^2/\log n,
\]
we get by the union bound that
\[
\Prob{\bm{\De}\sim\mathcal{N}(0,\sigma^2 I_n)}{\exists i~\card{\bm{\De}_i}\geq \frac{\log n}{100 n}}
\ll n\eps^2/\log n
\ll \eps,
\]
for a sufficiently small $\eps$.

\subsection{Case~(A), Condition~1.}
This is the main part of the proof.
We show that in case (A), the probability that $S_{\vec{r}}({\bf x})\neq S_{\vec{r}}({\bf x}+\bm{\De})$
is at most $O(\eps)$. Note that the procedure describing $S_{\vec{r}}$ in this case
is the correlated sampling  procedure of Holenstein~\cite{Holenstein}, where
$S_{\vec{r}}(x)$ samples $i$ according to
the distribution $p(x) = (p_1(x),\ldots, p_m(x))$ and
$S_{\vec{r}}(x+\De)$ samples $i$ according to the distribution $p(x+\De)$. Therefore,
the probability they sample different $i$'s is at most the statistical distance
between the distributions, $\norm{p(x) - p(x+\De)}_1$. Therefore, we must show that
\begin{equation}\label{eq:mainA1}
\cbExpect{{\bf x},{\bf x}+\bm{\De}}{{\sf case (A)}}{\norm{p({\bf x}) - p({\bf x}+\bm{\De})}_1} = O(\ve).
\end{equation}

Before we turn to this task, we upper bound the contribution from several rare cases.

\subsubsection{Contribution from some rare cases}
First, we show that the case some $\bm{\De}_i$ is too large contributes at most $O(\eps)$
to the LHS of~\eqref{eq:mainA1}.
\begin{claim}\label{cl:A-1}
	$\Prob{\bm{\De}\sim\mathcal{N}(0,\sigma^2 I_n)}{\card{\bm{\De}_i}\geq \eps^{0.95}/n~\text{for some $i$}}\leq \eps$.
\end{claim}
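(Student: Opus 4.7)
The claim is a routine Gaussian tail estimate combined with a union bound, so the plan is direct. First I would reduce to a single-coordinate statement by a union bound over $i\in[n]$, which reduces the task to showing $\Pr[|\bm{\De}_1|\geq \eps^{0.95}/n] \ll \eps/n$.

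For the single-coordinate bound, I would normalize: writing $\bm{\De}_1 = \sigma Z$ with $Z\sim\mathcal{N}(0,1)$ and $\sigma = \eps\sqrt{\log n}/n$, the threshold becomes $\eps^{0.95}/(n\sigma) = \eps^{-0.05}/\sqrt{\log n}$. Applying the standard tail bound $\Pr[|Z|\geq t]\leq 2e^{-t^2/2}$ then yields
\[
\Pr\!\left[|\bm{\De}_1|\geq \eps^{0.95}/n\right] \leq 2\exp\!\left(-\tfrac{1}{2\eps^{0.1}\log n}\right),
\]
so the union bound over the $n$ coordinates gives the upper bound $2n\exp(-1/(2\eps^{0.1}\log n))$ overall.

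Finally, I would verify that this is at most $\eps$ under the standing assumption of the analysis (as stated at the start of Section~\ref{sec:analysis_rare}, the author works in the regime where $\eps$ is extremely small, say $\eps\leq 2^{-n^2}$). In this regime $\eps^{-0.1}/(2\log n)$ is doubly-huge and in particular dominates $\log(2n/\eps)$, so the bound $2n\exp(-1/(2\eps^{0.1}\log n))\leq \eps$ follows with lots of room to spare. There is no real obstacle here: the exponent $0.95$ on $\eps$ in the threshold is chosen precisely so that a small positive power $\eps^{-0.1}$ survives in the Gaussian exponent, causing the tail to vanish much faster than $\eps$ itself and easily absorbing the union bound factor $n$.
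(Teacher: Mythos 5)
Your proof is correct and follows essentially the same route as the paper's: a standard Gaussian tail bound on each coordinate (yielding an exponent of order $1/(\eps^{0.1}\log n)$) followed by a union bound over the $n$ coordinates, with the final comparison to $\eps$ justified by the standing assumption that $\eps$ is extremely small relative to $n$. The only difference is that you spell out the normalization and the final numerical check that the paper compresses into ``for small enough $\eps$.''
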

\begin{proof}
For each $i$, we have that
\[
\Prob{\bm{\De}\sim\mathcal{N}(0,\sigma^2 I_n)}{\card{\bm{\De}_i}\geq \eps^{0.95}/n}
\leq 2^{-\Omega((\eps^{0.95}/n)^2/\sigma^2)} = 2^{-\Omega\left(\frac{1}{\eps^{0.1}\log n}\right)}
\leq \frac{\eps}{n},
\]
for small enough $\eps$,
and the claim follows from the union bound.
\end{proof}
From now on, we assume that the $\bm{\De}_i$'s are distributed from $N(0,\sigma^2)|_{|\bm{\De}_i| < \eps^{0.95}/n}$.
In particular, we can assume that if ${\bf t}_j$ is the number of $\bm{x}$'s that fall into interval $I_j$, these numbers stay the same under ${\bf x}+\bm{\De}$.
\footnote{Strictly speaking, $x_i + \De_i$ may be in a different interval than $x_i$, but in this case it doesn't affects the distribution $p(x)$.
Indeed, suppose $x_i$ is in $I_j$ but $x_i + \De_i$ is in $I_{j+1}$. Then
$\card{x_i + \De_i - z_{j+1}}\geq \card{z_{j+1} - j/m} - \card{\De_i} - \card{x_i - j/m}
\geq 1/m - 2\eps^{0.95}/n\geq 1/m - \eps^{0.95}$. Therefore, $\frac{50n}{\log n}\card{x_i + \De_i - z_{j+1}} > 2$, and
so $f(\frac{50n}{\log n}\card{x_i + \De_i - z_{j+1}}) = 1$.
}
Next, we handle the case in which $p({\bf x})$ is supported only on a single $j$.
Note that in this case, if $p({\bf x}+\bm{\De})$ is also only supported on this single $j$,
then the contribution of these cases to the LHS of~\eqref{eq:mainA1} is $0$.
We show that the contribution from the other case is $O(\eps)$.
\begin{claim}\label{cl:A-3}
	\[\Prob{{\bf x},\bm{\De}}{\text{$\exists j^{\star}$ such that $p({\bf x})$ is only supported on $j^{\star}$
        and the support of $p({\bf x}+\bm{\De})$ is different}}\ll\eps.\]
\end{claim}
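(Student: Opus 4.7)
The plan is to observe that the event in Claim~\ref{cl:A-3} is contained in the much simpler event $\{|\mathrm{supp}(p({\bf x}))| = 1\}$, and to show directly that this latter event already has probability $n^{-\omega(1)}$, which is $\ll \eps$ in the same asymptotic sense in which bounds of the form $n^{-\omega(1)} \cdot (\mathrm{polynomial})$ are absorbed into $O(\eps)$ in the Case~(B) analysis above.

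Concretely, I would sample ${\bf x}$ in two stages as in the analysis of Case~(B): first the counts ${\bf t}_1, \ldots, {\bf t}_m$ of how many ${\bf x}_i$'s land in each interval $I_j$, then the positions conditionally uniformly and independently within each interval. Since $\E[{\bf t}_j] = n/m = n^{2/3}$, Chernoff and a union bound yield $\Pr[\exists j : {\bf t}_j \notin [1, 2n/m]] = n^{-\omega(1)}$. Condition on counts $(t_1, \ldots, t_m)$ with every $t_j \in [1, 2n/m]$. Then the events $E_j := \{r_j({\bf x}) = 0\}$ are mutually independent across $j$ (each depends only on the positions of the ${\bf x}_i$'s in $I_j$, and these are drawn independently across $j$), and the estimate from~\eqref{eq:notEi} gives $\Pr[E_j \mid t_1, \ldots, t_m] \leq 1 - n^{-4/25}$ for every $j$.

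Now for $p({\bf x})$ to be singly supported on some $j^\star$, we need $E_j$ to hold for every $j \neq j^\star$; by independence, this conditional probability is at most $(1 - n^{-4/25})^{m-1}$. A union bound over the $m$ choices of $j^\star$ and substitution of $m = n^{1/3}$ yield
\[
\Pr\bigl[|\mathrm{supp}(p({\bf x}))| = 1 \,\big|\, t_1, \ldots, t_m\bigr] \;\leq\; m(1 - n^{-4/25})^{m-1} \;\leq\; n^{1/3}\exp\bigl(-\Omega(n^{13/75})\bigr),
\]
which is $n^{-\omega(1)}$. Adding the $n^{-\omega(1)}$ contribution from atypical counts, the unconditional probability is still $n^{-\omega(1)}$. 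The only piece of bookkeeping is the corner case where some $t_j = 0$ (which would put $j$ in the support automatically via $r_j = 1$), but this can be absorbed into the atypical-counts event since $\Pr[\exists j : t_j = 0] \leq m(1 - 1/m)^n = n^{-\omega(1)}$. Since the event in Claim~\ref{cl:A-3} implies $|\mathrm{supp}(p({\bf x}))| = 1$, it inherits the bound $n^{-\omega(1)} \ll \eps$. The main potential obstacle would be if the claim secretly required a much tighter bound (say, one polynomial in $\sigma$ rather than merely super-polynomially small), but the use of $n^{-\omega(1)}$ here mirrors exactly how the paper disposes of Case~(B), so no finer analysis of how the support can change under $\bm{\De}$ should be necessary.
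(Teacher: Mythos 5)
There is a genuine gap, and it is exactly the one you flagged and then dismissed at the end. The bound $\ll\eps$ must hold for every sufficiently small $\eps$ with $n$ fixed (this is what is needed downstream: the surface area is recovered as $\lim_{\eps\to 0}\frac{1}{\eps}\Pr[\text{escape}]$, so any term in the failure probability that does not vanish linearly in $\eps$ is fatal). Your bound $\Pr[|\mathrm{supp}(p({\bf x}))|=1]=n^{-\omega(1)}$ is a quantity that depends only on $n$, not on $\eps$; for fixed $n$ it is a fixed positive constant, hence not $O(\eps)$ as $\eps\to 0$ (recall the paper explicitly allows $\eps\leq 2^{-n^2}$, far below $n^{-\omega(1)}$ or even $e^{-n^{13/75}}$). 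So the containment ``event of Claim~\ref{cl:A-3} $\subseteq$ singleton support'' throws away precisely the part of the event that carries the $\eps$-dependence. Your reading of the Case~(B) analysis is also not quite right: there the paper never concludes from $n^{-\omega(1)}\ll\eps$ alone; it always multiplies the $n^{-\omega(1)}$ probability of the rare event by a \emph{conditional} failure probability of the form $O(m\eps)$ or $O(mn\sigma)$, and it is that second factor that supplies the $\eps$.

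The missing step — the ``finer analysis of how the support can change under $\bm{\De}$'' — is the substance of the paper's proof. Conditioned on the singleton-support event $E$ (probability $n^{-\omega(1)}$, established as you do), for each $j\neq j^{\star}$ one fixes a witness coordinate ${\bf x}_{i_j}\in J_j=[z_j-\frac{\log n}{50n},z_j+\frac{\log n}{50n}]$ responsible for $r_j({\bf x})=0$; its conditional distribution is uniform on $J_j$, so the probability that ${\bf x}_{i_j}+\bm{\De}_{i_j}$ leaves $J_j$ is at most $\frac{|\De_{i_j}|}{\log n/(50n)}$. Summing over $j$ and applying Cauchy--Schwarz in $\bm{\De}$ gives a conditional bound of $O(n^2\sigma)$ on the probability the support changes, and the product $n^{-\omega(1)}\cdot n^2\sigma\ll\eps$ (since $\sigma=\eps\sqrt{\log n}/n$) is what proves the claim. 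Your first stage is correct and matches the paper; you need to append this second, $\eps$-carrying stage.
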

\begin{proof}
    In case (B), we have shown that the probability that $r_j({\bf x}) = 0$ for all $j$ is $n^{-\omega(1)}$, and the same argument
    shows that the probability $r_j({\bf x}) = 0$ for all but a single $j^{\star}$ is still $n^{-\omega(1)}$. Denote this event by $E$.

    Let us condition on the event $E$, on $j^{\star}$ and the number ${\bf t}_1,\ldots,{\bf t}_m$ of ${\bf x}_i$'s that fall into $I_1,\ldots,I_m$.
    Note that for each $j\neq j^{\star}$, since $r_j({\bf x}) = 0$ there is $i$ such that ${\bf x}_i\in J_j \defeq [z_{j} - \frac{\log n}{50 n}, z_{j} + \frac{\log n}{50 n}]$,
    and we condition on that $i_j$ for each $j$ (if there is more than one, we choose one arbitrarily). Note that the distribution of ${\bf x}_{i_j}$
    is thus uniform over $J_j$.

    Now note that if for each $j\neq j^{\star}$ it holds that ${\bf x}_{i_j} + \bm{\De}_{i_j}\in J_j$, then $r_j({\bf x}+{\bf \De}) = 0$, so the only contribution to
    the probability of the event in question comes when ${\bf x}_{i_j} + \bm{\De}_{i_j}\not\in J_j$ (or from case (B), which we have already accounted for earlier).
    Conditioned on $\bm{\De} = \De$, the probability for that is
    at most
    \[
    \Expect{({\bf x}_{i_j})_{j\neq j^{\star}}}{\sum\limits_{j\neq j^{\star}}1_{{\bf x}_{i_j} + \De_{i_j}\not\in J_j}}
    = \sum\limits_{j\neq j^{\star}}\Expect{{\bf x}_j}{1_{{\bf x}_{i_j} + \De_{i_j}\not\in J_j}}
    \leq \sum\limits_{j\neq j^{\star}}\frac{\card{\De_j}}{\log n/(50n)},
    \]
    therefore taking expectation over $\De$ and using Cauchy-Schwarz we get that
    \[
    \Expect{\bm{\De},({\bf x}_{i_j})_{j\neq j^{\star}}}{\sum\limits_{j\neq j^{\star}}1_{{\bf x}_{i_j} + \bm{\De}_{i_j}\not\in J_j}}
    \ll
    \frac{n}{\log n}\sqrt{m}\sqrt{\sum\limits_{j\neq j^{\star}}
    \Expect{\bm{\De}}{\card{\bm{\De}_j}^2}}
    \leq \frac{n}{\log n}\sqrt{m}\sqrt{m\sigma^2} \leq n^2\sigma.
    \]
    Therefore, we get that
    \[
    \Prob{{\bf x},\bm{\De}}{\text{$p({\bf x})$ is only supported on $j^{\star}$,
        but the support of $p({\bf x}+\bm{\De})$ is different}}
        \leq \Prob{}{E}n^2\sigma
        \leq n^{-\omega(1)} n^2 \sigma
        \ll \eps.\qedhere
    \]
\end{proof}
Let $E$ be the event that the support of $p({\bf x})$ consists of at least two distinct $j$'s. We condition on the event $E$ in the subsequent argument.
The following claim shows that conditioned on $E$, the sum of the $r_j({\bf x})$'s is at least somewhat bounded away from $0$. It will only come into play
later in the proof.
\begin{claim}\label{cl:A-2}
  $\cProb{{\bf x}}{E}{\sum\limits_{j} r_j({\bf x}) \leq \eps^{1.6}} \ll \eps$.
\end{claim}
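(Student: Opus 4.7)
The plan is to combine a union bound over pairs $(j_1,j_2) \in [m]^2$ with independence of the randomness within disjoint intervals, thereby reducing the claim to a one-index concentration estimate. First, if the event $E \cap \{\sum_j r_j({\bf x}) \leq \eps^{1.6}\}$ holds, then since $r_j \geq 0$ for every $j$ there exist distinct $j_1 \neq j_2$ with $r_{j_1}({\bf x}), r_{j_2}({\bf x}) \in (0,\eps^{1.6}]$. By a union bound over pairs it therefore suffices to show
\[
\Prob{{\bf x}}{0 < r_{j_1}({\bf x}) \leq \eps^{1.6},\ 0 < r_{j_2}({\bf x}) \leq \eps^{1.6}} \ll \eps/m^2
\]
for each fixed pair. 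Conditioning on the counts $({\bf t}_j)_j$ and restricting, as in the analysis of Case~(B), to the event ${\bf t}_j \leq 2n/m$ for all $j$ (whose complement has probability $n^{-\omega(1)}$), the configurations in $I_{j_1}$ and $I_{j_2}$ are independent, so the joint probability above factors. The task is thus reduced to bounding $\Prob{}{0 < r_j({\bf x}) \leq \eps^{1.6} \mid {\bf t}_j = t}$ for a single $j$ with $t \leq 2n/m$.

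For this single-index estimate I would condition further on $r_j > 0$, under which the $t$ points $({\bf y}_i)$ in $I_j$ are iid uniform on $I_j \setminus J_j$. Writing $X_i := \log(1/g_j({\bf y}_i))$, the event $0 < r_j \leq \eps^{1.6}$ is exactly $\sum_i X_i \geq 1.6 \log(1/\eps)$. Under the change of variables $u_i := \frac{50n}{\log n}|{\bf y}_i - z_j|$, the $u_i$'s are iid uniform on $[1,U]$ with $U := \Theta(n^{2/3}/\log n)$, and using the asymptotic $f(u) \ee (u-1)^3$ on $[1,2]$ from Fact~\ref{fact:std_fn} a direct computation of the moment generating function yields
\[
\E[e^{\lambda X_i} \mid r_j > 0] \leq 1 + \frac{O(\lambda)}{(U-1)(1 - 3\lambda)} \qquad \text{for } 0 < \lambda < 1/3.
\]
Since $t/(U-1) = O(\log n)$, a Chernoff bound with $\lambda$ chosen just below $1/3$ (for instance $\lambda = 1/3 - C\log n / \log(1/\eps)$, which is admissible because $\log(1/\eps)$ is by assumption enormous compared to $\log n$) gives
\[
\Prob{}{0 < r_j({\bf x}) \leq \eps^{1.6} \mid {\bf t}_j = t} \leq n^{O(1)} \cdot \eps^{0.53 - o(1)}.
\]

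Combining the steps, the pairwise probability is at most $n^{O(1)} \eps^{1.06 - o(1)}$, and the union bound over the $\binom{m}{2} = O(n^{2/3})$ pairs (plus the negligible contribution from ${\bf t}_j > 2n/m$) yields an overall bound of $n^{O(1)} \eps^{1.06 - o(1)}$, which is $\ll \eps$ whenever $\eps$ is sufficiently small. The main technical obstacle is the MGF computation together with the choice of $\lambda$: one needs the single-index exponent on $\eps$ to strictly exceed $1/2$ so that the product over two independent indices exceeds $1$. This is possible precisely because $1.6\lambda > 1/2$ can be arranged with $\lambda < 1/3$ (any $\lambda \in (5/16, 1/3)$ works), so the cubic vanishing rate of $f$ near $1$ supplies exactly enough room.
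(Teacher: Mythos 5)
Your argument is, at its core, the same as the paper's: the Chernoff bound you apply to $\sum_i X_i$ with $X_i=\log(1/g_j({\bf y}_i))$ and $\lambda<1/3$ is exactly a Markov bound on the negative moment $\prod_i g_j({\bf y}_i)^{-\lambda}$, which is what the paper does with the fixed exponent $\lambda=0.32$ (giving $\eps^{1.6\times 0.32}=\eps^{0.512}$ per interval); the key point in both cases is that the cubic vanishing of $f$ makes $g_j^{-\lambda}$ integrable for $\lambda<1/3$ while $1.6\lambda>1/2$ is still attainable, so two independent intervals together beat $\eps$. Your optimization of $\lambda$ toward $1/3$ and your explicit union bound over pairs $(j_1,j_2)$ (the paper instead conditions on the witnessing pair) are harmless variations.

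There is, however, one step that fails as written: you discard the event $\{\exists j:\ {\bf t}_j>2n/m\}$ as a ``negligible contribution'' of probability $n^{-\omega(1)}$. In the regime the paper works in, $\eps$ is astronomically small (e.g.\ $\eps\le 2^{-n^2}$), so $n^{-\omega(1)}$ is \emph{enormously larger} than $\eps$, and an additive error of $n^{-\omega(1)}$ destroys the claimed bound $\ll\eps$. (Note that everywhere else in the paper, $n^{-\omega(1)}$ only ever appears \emph{multiplied} by a quantity of order $\eps$ or $\sigma$, never as a standalone additive term.) The fix is easy and is what the paper does: prove the single-interval estimate for all $t\le n$ rather than only $t\le 2n/m$. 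Your MGF product then becomes $\exp\bigl(O(n^{1/3}\log n)/(1-3\lambda)\bigr)$ (or $B^n$ with a fixed $\lambda$), which is still swallowed by $\eps^{-0.01}$, so the conclusion survives; but the conditioning on bounded counts, as you wrote it, should be removed.
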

\begin{proof}
Since we conditioned on $E$, there are $j_1\neq j_2$'s such that
$r_{j_1}({\bf x}),r_{j_2}({\bf x}) > 0$. We condition on $j_1$ and $j_2$, and assume without loss of generality that
$j_1 = 1,j_2=2$. We show that
\begin{equation}\label{eq:cl:A-2:1}
\cProb{{\bf x}}{r_1({\bf x}),r_2({\bf x})>0}{r_{1}({\bf x})<\ve^{1.6} \wedge r_{2}({\bf x})<\ve^{1.6}} \ll \eps,
\end{equation}
and thus the result would follow.

Let ${\bf t}_1$ be the number of $i$'s such that ${\bf x}_i\in I_{1}$, and ${\bf t}_2$ be the number of $i$'s such that ${\bf x}_i\in I_{2}$.
Note that ${\bf t}_1, {\bf t}_2 \leq n$.
In addition, conditioned on ${\bf t}_1=t_1$ and ${\bf t}_2=t_2$, the events $r_1({\bf x})<\ve^{1.6}$ and $r_2(x)<\ve^{1.6}$ become independent.
Therefore, to prove \eqref{eq:cl:A-2:1}, it suffices to show for all $t_1\leq n, t_2\leq n$,
  	\begin{equation}
  \cProb{{\bf x}}{r_1({\bf x})>0,~{\bf t}_1=t_1,~{\bf t}_2=t_2}{r_1({\bf x})<\ve^{1.6}}\ll \eps^{0.5}.
  \label{eq:cl:A-2:2}
  \end{equation}
  Note that one way to sample $r_1({\bf x})| r_1({\bf x})>0,~{\bf t}_1=t_1$ is as follows.
  \begin{itemize}
  	\item Sample points ${\bf x}_1,\ldots,{\bf x}_{t_1}$ uniformly from $I_1$ conditioned on $|{\bf x}_i-z_1| > \frac{\log n}{50 n}$;
  	\item
  	$r_1({\bf x}) = \prod_{i=1}^{t_1} g_1({\bf x}_i)$.
  	\end{itemize}
  Let ${\bf Y}_i$ be the random variable ${\bf Y}_i := g_1({\bf x}_i)^{-0.32}$, where
  ${\bf x}_i$ is sampled as above (we need $0.32<1/3$).
  Let $E$ be the event that $\card{{\bf x}_i - z_1}\geq \frac{\log n}{25 n}$.
  If $E$ holds, then we get that $g_1({\bf x}_i) = 1$, and
  otherwise $g_1({\bf x}_i)\gg \card{\frac{50n}{\log n}\card{{\bf x}_i - z_1} - 1}^3$,
  so
  \[
  \Expect{}{{\bf Y}_i}
  \leq \Prob{}{E}\cdot 1 + \Prob{}{\bar{E}}\cExpect{}{\bar{E}}{g_1({\bf x}_i)^{-0.32}}
  \ll 1 + \cExpect{}{\bar{E}}{\card{\frac{50n}{\log n}\card{{\bf x}_i - z_1} - 1}^{-0.96}}.
  \]
  We write the last expectation as an integral, noting that $\card{{\bf x}_i-z_1}$ is distributed uniformly on
  $\left[\frac{\log n}{50 n},\frac{\log n}{25 n}\right]$, hence
  \[
  \cExpect{}{\bar{E}}{\card{\frac{50n}{\log n}\card{{\bf x}_i - z_1} - 1}^{-0.96}}
  \ll\frac{n}{\log n}\int_{\frac{\log n}{50n}}^{\frac{\log n}{25n}}{\card{\frac{50n}{\log n}t - 1}^{-0.96} dt}
  =\frac{1}{50}\int_{0}^{1}{y^{-0.96} dt}
  \ll 1,
  \]
  where we made the change of variables $y = \frac{50n}{\log n}t - 1$.
  Thus, $\E[{\bf Y}_i] \ll 1$, and so there is a constant $B$ such that $\E[{\bf Y}_i]\leq B$.
  Therefore by independence $\E\left[\prod_{i=1}^{t_1} {\bf Y}_i\right]\leq B^{t_1}\leq B^n$,
  and so writing $r_1({\bf x})$ in terms of the ${\bf Y}_i$'s and using Markov's inequality we get that
  \[
    \cProb{{\bf x}}{r_1({\bf x})>0,~{\bf t}_1=t_1,~{\bf t}_2=t_2}{r_1({\bf x})<\ve^{1.6}}
    = \Pr\left[\prod_{i=1}^{t_1} {\bf Y}_i > \ve^{-1.6\times 0.32} \right] \leq B^n\cdot \ve^{0.512} \ll \eps^{0.5}.
  \qedhere
  \]
\end{proof}

\subsubsection{Analyzing the typical case}
To expand out $\norm{p({\bf x}) - p({\bf x}+\bm{\De})}_1$, we will be using the following claim. The set-up one should
have in mind is that $r_j = r_j(x)$ and $d_j = r_j(x+\De)$ for some $x$ and $\De$ that are typical enough.
\begin{claim} \label{cl:A-6}
	Let $r_j\geq 0$, $d_j$ be real-numbers satisfying $\card{d_j}\leq r_j/2$ for all $j$.
    Denote $T=\sum r_j$, $T'=\sum (r_j+d_j)$, and let $p_j = r_j/T$ and $q_j = (r_i+d_i)/T'$ be two distributions.
    Then
	\begin{equation}\label{eq:cl:A-6:0}
	\norm{p-q}_1 \ll \sum_i \frac{|d_i|}{r_i} \cdot \frac{\min(r_i, T-r_i)}{T}.
	\end{equation}
\end{claim}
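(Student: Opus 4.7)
The plan is to write $p_i - q_i$ as a single fraction, bound it term-by-term using the hypothesis $|d_i| \leq r_i/2$, and then sum over $i$ with a swap of summation order. First I would compute
\[
p_i - q_i \;=\; \frac{r_i}{T} - \frac{r_i + d_i}{T'} \;=\; \frac{r_i(T' - T) - d_i T}{T T'} \;=\; \frac{r_i\sum_{j\neq i} d_j \;-\; d_i(T - r_i)}{T T'},
\]
where in the last step I split $T' - T = \sum_j d_j = d_i + \sum_{j\neq i} d_j$ and combined the $d_i$ terms. The key elementary observation is that the hypothesis $|d_j| \leq r_j/2$ gives $T' = \sum_j (r_j + d_j) \geq T/2$, so $1/(TT') \leq 2/T^2$.

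Next I would apply the triangle inequality to the numerator and sum over $i$, swapping the order of summation on the first piece:
\[
\sum_i |p_i - q_i| \;\leq\; \frac{2}{T^2}\sum_i \Bigl( r_i \sum_{j\neq i} |d_j| \;+\; |d_i|(T - r_i)\Bigr)
\;=\; \frac{2}{T^2}\Bigl(\sum_j |d_j|(T - r_j) \;+\; \sum_i |d_i|(T - r_i)\Bigr),
\]
so $\norm{p-q}_1 \leq \frac{4}{T^2}\sum_i |d_i|(T-r_i)$. (Terms with $r_i = 0$ force $d_i = 0$ by hypothesis and so can be omitted; for the remaining terms we may freely divide and multiply by $r_i$.)

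Finally I would rewrite each summand as $\frac{|d_i|}{r_i}\cdot\frac{r_i(T - r_i)}{T^2}$ and use the trivial bound $r_i(T - r_i) \leq T\cdot\min(r_i, T - r_i)$ (since $\max(r_i, T-r_i) \leq T$), yielding
\[
\norm{p - q}_1 \;\leq\; 4\sum_i \frac{|d_i|}{r_i}\cdot\frac{\min(r_i, T-r_i)}{T},
\]
which is exactly \eqref{eq:cl:A-6:0}. The argument is almost entirely bookkeeping; the only substantive input is the cancellation step that replaces the ``direct'' numerator $|r_i D - d_i T|$ by an expression in which $d_i$ is paired with $T - r_i$ rather than $T$, since that is what produces the $\min(r_i, T-r_i)$ factor needed for the intended application (where a single $j^{\star}$ typically dominates the mass of $p$, so most $\min(r_i, T - r_i) = r_i$ are small).
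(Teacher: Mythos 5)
Your proof is correct, and it is cleaner than the one in the paper. The paper proceeds by a case analysis: when all $r_i\leq T/2$ it uses the crude decomposition $\left|\frac{r_i+d_i}{T'}-\frac{r_i}{T}\right|\leq \left|\frac{d_i}{T}\right|+\left|\frac{r_i+d_i}{T'}-\frac{r_i+d_i}{T}\right|$, and only when a single heavy index $r_1>T/2$ exists does it invoke, for that index alone, the same cancellation you use (writing the numerator as $r_1(S'-S)\pm d_1 S$ with $S=T-r_1$), handling the remaining indices by the cruder bound again. You instead apply the identity $r_iT'-(r_i+d_i)T=r_i\sum_{j\neq i}d_j-d_i(T-r_i)$ uniformly to every index, swap the order of summation, and then absorb the two cases into the single elementary inequality $r_i(T-r_i)\leq T\cdot\min(r_i,T-r_i)$. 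Both arguments rest on the same two ingredients ($T'\geq T/2$ from $|d_j|\leq r_j/2$, and pairing $d_i$ with $T-r_i$ rather than $T$), but your version avoids the case split entirely and yields the explicit constant $4$. Your handling of the degenerate terms ($r_i=0$ forces $d_i=0$) is also correct. The only thing I would add for completeness is the implicit assumption $T>0$, which is forced anyway by $p$ being a distribution.
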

We defer the proof of Claim~\ref{cl:A-6} to Section~\ref{sec:def_proofs}. Morally speaking, it says that
\begin{equation}\label{eq:pretend1}
\Expect{{\bf x},\bm{\De}}{\norm{p({\bf x}) - p({\bf x}+\bm{\De})}_1}
\ll \sum\limits_{j=1}^{m}{\Expect{{\bf x}}{\Expect{\bm{\De}}{\frac{\card{r_j({\bf x}) - r_j({\bf x}+\bm{\De})}}{r_j({\bf x})}\cdot\frac{\min(r_j({\bf x}),T({\bf x})-r_j({\bf x}))}{T({\bf x})}}}},
\end{equation}
where $T(x) = \sum\limits_{j}{r_j(x)}$ (this is only morally because we are assuming that the supports of $p_j(x)$ and $p_j(x+\De)$
are the same, but formally speaking they may be different). In particular, to be able to handle with that we first must understand
the expectation of $\card{r_j(x) - r_j(x+\bm{\De})}$ over $\bm{\De}$.
\begin{claim}\label{cl:A-4}
    Let $j\in [m]$,  $x_1,\ldots,x_k \in [z_j-\frac{\log n}{25 n}-\eps^{0.95},z_j+\frac{\log n}{25 n}+\eps^{0.95}] \setminus [z_j-\frac{\log n}{50 n},z_j+\frac{\log n}{50 n}]$,
    and let $r(x)=\prod_{i=1}^c g_j(x_i)$.
	Denote $\al_i =\text{dist}\left(x_i, [z_j-\frac{\log n}{50n},z_j+\frac{\log n}{50n}]\right)$.
	and let $\bm{\De}_i \sim N(0,\sigma^2)|_{|\De_i|<\ve^{0.95}}$.
	Then
	\begin{equation}
	\E_{\bm{\De}}[|r(x+\bm{\De})-r(x)|] \ll \max\left(\eps^{2.65},r(x)\cdot \sigma\cdot\sqrt{\sum_{i=1}^c \frac{1}{\al_i^2}}\right).
	\end{equation}
\end{claim}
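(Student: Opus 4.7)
The plan is to split the analysis based on whether all $\alpha_i \geq 2\beta$ (the good case) or some $\alpha_{i_0} < 2\beta$ (the bad case), where $\beta := \eps^{0.95}$ is the truncation bound on $|\bm{\De}_i|$.

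In the bad case, the estimate is deterministic. If $\alpha_{i_0} < 2\beta$ then both $x_{i_0}$ and $x_{i_0}+\bm{\De}_{i_0}$ lie within distance $3\beta$ of the interval $J_j := [z_j-\tfrac{\log n}{50n}, z_j+\tfrac{\log n}{50n}]$, and the cubic behavior of $f$ near $1$ (Fact~\ref{fact:std_fn}(3)) gives $g_j(x_{i_0}), g_j(x_{i_0}+\bm{\De}_{i_0}) \ll (n\beta/\log n)^3 \ll n^3\eps^{2.85}$. Since every other factor in $r$ is at most $1$, both $r(x)$ and $r(x+\bm{\De})$ are $O(n^3\eps^{2.85})$, so $|r(x+\bm{\De})-r(x)| \ll n^3\eps^{2.85} \ll \eps^{2.65}$ for $\eps$ small enough (recall $\eps$ is thought of as super-small, e.g.\ $\eps \leq 2^{-n^2}$, making $n^3\eps^{0.2} \ll 1$). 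This bound is deterministic, hence holds in expectation.

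In the good case, the conditioning $|\bm{\De}_i| \leq \beta$ together with $\alpha_i \geq 2\beta$ guarantees $\alpha_i \geq 2|\bm{\De}_i|$, placing us in the regime of Fact~\ref{fact:first_order_g}(1). I would apply Cauchy--Schwarz to get $\E[|F|] \leq \sqrt{\E[F^2]}$ for $F := r(x+\bm{\De}) - r(x)$, and use the product structure of $r$ together with independence of the $\bm{\De}_i$'s to write
\begin{equation*}
\E[F^2] = \prod_i \E[g_j(x_i+\bm{\De}_i)^2] - 2r(x)\prod_i \E[g_j(x_i+\bm{\De}_i)] + r(x)^2.
\end{equation*}
Expanding each factor to second order via Taylor, and using $\E[\bm{\De}_i]=0$ to kill the first-order contribution, yields $\E[g_j(x_i+\bm{\De}_i)] = g_j(x_i) + \tfrac{\sigma^2}{2}g_j''(x_i)$ and $\E[g_j(x_i+\bm{\De}_i)^2] = g_j(x_i)^2 + \sigma^2(g_j'(x_i)^2 + g_j(x_i)g_j''(x_i))$ up to negligible higher-order remainders. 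When the two products are linearized in these $O(\sigma^2)$ corrections, the $g_j''$ contributions cancel exactly between $\prod_i \E[g_j(x_i+\bm{\De}_i)^2]$ and $2r(x)\prod_i \E[g_j(x_i+\bm{\De}_i)]$, leaving the main term $r(x)^2\sigma^2\sum_i (g_j'(x_i)/g_j(x_i))^2$. Since $|g_j'(x_i)/g_j(x_i)| \ll 1/\alpha_i$ (from Fact~\ref{fact:first_order_g}(1) by taking $|\bm{\De}_i| \to 0$), this is $\ll r(x)^2\sigma^2\sum_i 1/\alpha_i^2$, giving $\E[|F|] \ll r(x)\sigma\sqrt{\sum_i 1/\alpha_i^2}$.

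The main obstacle I anticipate is handling all the error terms in a uniform way: the Taylor remainder past second order, the quadratic-and-higher terms in the product expansion (of order $(\sigma^2\sum_i 1/\alpha_i^2)^2$ and smaller), and the slight discrepancy between the truncated Gaussian $\bm{\De}_i$ and a centered Gaussian. The crucial quantitative point is that $\sigma^2\sum_i 1/\alpha_i^2 \leq c\sigma^2/\beta^2 \leq n\sigma^2/\beta^2 = o(1)$ for small $\eps$, so all higher-order terms are strictly dominated by the main term; in borderline cases where $r(x)$ is tiny because some $\alpha_i$ is close to $2\beta$, any residual error is absorbed into the additive slack $\eps^{2.65}$.
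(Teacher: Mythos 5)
Your proof is correct, but it takes a technically different route from the paper's. The paper works with $\log\bigl(r(x+\bm{\De})/r(x)\bigr)=\sum_i {\bf Y}_i$, a sum of independent variables, and bounds $\E\bigl[\bigl|\sum_i {\bf Y}_i\bigr|\bigr]\leq\sum_i|\mu_i|+\sqrt{\sum_i\Var({\bf Y}_i)}$, showing $|\E[{\bf Y}_i]|\ll\sigma^2/\al_i^2$ and $\E[{\bf Y}_i^2]\ll\sigma^2/\al_i^2$ via second-order Taylor expansion and the bounds $|g_j'|\ll g_j/\al_i$, $|g_j''|\ll g_j/\al_i^2$; you instead compute $\E[F^2]$ directly through the product structure $\prod_i\E[\cdot]$ and apply Cauchy--Schwarz once at the end. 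Both arguments rest on the same two pillars -- $\E[\bm{\De}_i]=0$ kills the first-order term, and the relative derivative bounds give per-coordinate second moments of order $\sigma^2/\al_i^2$ -- so the log-transform versus direct-product choice is essentially a matter of bookkeeping. Your case split at $\al_{i_0}<2\eps^{0.95}$ (versus the paper's $\eps^{0.9}$) works just as well for absorbing the bad case into $\eps^{2.65}$. Two small remarks. First, the ``exact cancellation'' of the $g_j''$ contributions is not literally exact (the Taylor remainders are evaluated at intermediate points, and the product expansion has cross terms), but it is also not needed: writing $\E[g_j(x_i+\bm{\De}_i)]=g_j(x_i)(1+a_i)$ and $\E[g_j(x_i+\bm{\De}_i)^2]=g_j(x_i)^2(1+b_i)$ with $|a_i|,|b_i|\ll\sigma^2/\al_i^2$, the expansion $\prod(1+b_i)-2\prod(1+a_i)+1=\sum b_i-2\sum a_i+O\bigl((\sum(|a_i|+|b_i|))^2\bigr)$ already gives $\E[F^2]\ll r(x)^2\sigma^2\sum_i 1/\al_i^2$ since $\sigma^2\sum_i 1/\al_i^2=o(1)$. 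Second, and this is the one place where care is genuinely required: the crude first-order estimate $|\E[g_j(x_i+\bm{\De}_i)^2]-g_j(x_i)^2|\ll(\sigma/\al_i)g_j(x_i)^2$ would only yield $\E[F^2]\ll r(x)^2\sigma\sum_i 1/\al_i$, which does \emph{not} suffice after taking the square root; you correctly identify that one must exploit $\E[\bm{\De}_i]=0$ to get the quadratic gain in each factor, which is exactly the step the paper performs on ${\bf Z}_i$.
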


\begin{proof}
	We consider two cases.
	
	\paragraph{Case 1: $\al_i\le \ve^{0.9}$ for some $i$.}
    In this case, we have
    \[
    g_j(x_i)\ll
    \left(\frac{50n}{\log n}\alpha_i\right)^3
    \ll n^3\eps^{3\cdot 0.9}
    \ll \eps^{2.66}.
    \]
    Similarly, we have ${\sf dist}(x_i+\bm{\De}_i,[z_j-\frac{\log n}{50n},z_j+\frac{\log n}{50}])
    \leq \alpha_i + \card{\bm{\De}_i}\leq 2\alpha_i$, so $g_j(x_i+\bm{\De}_i)\ll \eps^{2.66}$.
    We conclude that $r(x_i),r(x_i+\bm{\De}_i)\ll \eps^{2.66}$, hence the contribution from these cases is at most $\eps^{2.65}$.
	
	 \paragraph{Case 2: $\al_i>\ve^{0.9}$ for all $i$.} In this case, we get that $x_i+\bm{\De}_i$ is also not in
     the interval $[z_j-\frac{\log n}{50n},z_j+\frac{\log n}{50n}]$, hence $g_j(x_i+\bm{\De}_i)\neq 0$,
     so $r(x+\bm{\De})>0$. Since $r(x)$ are defined using products, it would be more convenient for us to analyze
     $\log(r(x+\bm{\De})/r(x))$ as opposed to $r(x+\bm{\De})/r(x)-1$, and to justify we can do that we first argue that
     $r(x+\bm{\De})/r(x) = 1+o(1)$.

     To see that, note that as $\card{\bm{\De}_i}\leq \eps^{0.95}\leq \alpha_i/2$, we may use Fact ~\ref{fact:first_order_g} to conclude that
     \[
     \card{g(x_i+\bm{\De}_i) - g(x_i)}
     \ll \frac{\card{\bm{\De}_i}}{\alpha_i} \card{g(x_i)}
     \ll \eps^{0.05} \card{g(x_i)}.
     \]
     In particular, we get that
     $\frac{g_j(x_i+\bm{\De}_i)}{g_j(x_i)} = 1\pm O(\eps^{0.05})$, and hence
     $\frac{r(x+\bm{\De})}{r(x)} = 1\pm O(k\eps^{0.05})$. Writing $\frac{r(x+\bm{\De})}{r(x)} = 1+\bm{\eta}$,
     we get $\bm{\eta}$ is small in absolute value, and hence
     $\card{\log(r(x+\bm{\De})/r(x))}\gg \card{\bm{\eta}}\gg \card{\frac{r(x+\bm{\De})}{r(x)} - 1} = \card{\frac{r(x+\bm{\De})-r(x)}{r(x)}}$.
     I.e.,
	 \begin{equation}
	 \Expect{\bm{\De}}{\frac{|r(x+\bm{\De})-r(x)|}{r(x)}}
     \ll \Expect{\bm{\De}}{\card{\log \left(\frac{r(x+\bm{\De})}{r(x)}\right)}}
     = \Expect{\bm{\De}}{\left|\sum_{i=1}^k \log \frac{ g_j(x_i+\bm{\De}_i)}{g_j(x_i)} \right|}
     = \Expect{\bm{\De}}{\left|\sum_{i=1}^k {\bf Y}_i \right|},
	 	  \label{eq:cl:A-4:1}
\end{equation}
	 where we define the random variables ${\bf Y}_i = \log \frac{ g_j(x_i+\bm{\De}_i)}{g_j(x_i)}$.

     Observe that ${\bf Y}_i$'s are mutually independent, since each ${\bf Y}_i$ only depends on
	 the corresponding $\bm{\De}_i$. We wish to upper bound the average and variance of ${\bf Y}_i$,
     and to do that it would be more convenient to analyze ${\bf Z}_i = \frac{ g_j(x_i+\bm{\De}_i)-g_j(x_i)}{g_j(x_i)}$
     and then relate the two.

     Using second order Taylor's approximation, we have that there is ${\bf y}_i\in[x_i,x_i+{\bf \De}_i]$ such that
     \[
       g_j(x_i+{\bf \De}_i) = g_j(x_i) + g_j'(x_i){\bf \De}_i + \half g_j''({\bf y}_i){\bf \De}_i^2,
     \]
     hence
     \begin{equation}\label{eq:avg_Z}
       \card{\Expect{{\bf \De}}{{\bf Z}_i}}
       =\frac{1}{g_j(x_i)}\card{\Expect{\De}{g_j'(x_i){\bf \De}_i + \half g_j''({\bf y}_i){\bf \De}_i^2}}
       =\frac{1}{2g_j(x_i)}\card{\Expect{\De}{g_j''({\bf y}_i){\bf \De}_i^2}}.
     \end{equation}
     Using properties of $f$, we have
     \[
     \card{g_j''({\bf y}_i)} =
     \left(\frac{50n}{\log n}\right)^2 f''\left(\frac{50n}{\log n}\card{{\bf y}_i-z_j}\right)
     \ll \left(\frac{50n}{\log n}\right)^2 \card{\frac{50n}{\log n}\card{{\bf y}_i-z_j}-1}.
     \]
     Since ${\bf y}_i\in [x_i,x_i+{\bf \De}_i]$, we get that
     $
     \left(\frac{50n}{\log n}\card{{\bf y}_i-z_j}-1\right)
     \geq \frac{50n}{\log n}\alpha_i - \eps^{0.95}
     \geq \frac{25n}{\log n}\alpha_i
     $,
     and so we may continue the previous inequality as
     \[
     \card{g''({\bf y}_i)}
     \ll \left(\frac{50n}{\log n}\right)^2\frac{\card{\frac{50n}{\log n}\card{{\bf y}_i-z_j}-1}^3}{(\frac{25n}{\log n}\alpha_i)^2}
     \ll \frac{1}{\alpha_i^2} \card{g_j({\bf y}_i)}
     \ll \frac{1}{\alpha_i^2} \card{g_j(x_i)},
     \]
     where the last inequality is by Fact~\ref{fact:first_order_g}.
     Plugging this into~\eqref{eq:avg_Z} we get that
     \[
     \card{\Expect{\bm{\De}}{{\bf Z}_i}}
     \ll \frac{1}{\alpha_i^2} \Expect{\bm{\De}}{\bm{\De}_i^2}
     =\frac{1}{\alpha_i^2} \sigma^2.
     \]

	 In a similar fashion, we upper bound the second moment of ${\bf Z}_i$.
     Using Fact~\ref{fact:first_order_g}, we get that $\card{{\bf Z}_i}\leq \frac{\bm{\De}_i}{\alpha_i}$, and
     so $\Expect{\bm{\De}}{{\bf Z}_i^2}\ll \frac{1}{\alpha_i^2}\Expect{\bm{\De}}{\bm{\De}_i^2} =\frac{1}{\alpha_i^2} \sigma^2$.

     We can now upper bound the average of ${\bf Y}_i$ as follows. Recall that, $\card{{\bf Z}_i} = o(1)$ so by Taylor's approximation
     ${\bf Y}_i = \log(1+{\bf Z}_i) = {\bf Z}_i - \frac{1}{2(1+\bm{\xi}_i)^2} {\bf Z}_i^2$ for some $\bm{\xi}_i\in [1,1+{\bf Z}_i]$ and hence
	  \begin{equation}
	 \card{\Expect{}{{\bf Y}_i}} \ll \card{\E[{\bf Z}_i]} + \card{\E[{\bf Z}_i^2]|} \ll \frac{1}{\alpha_i^2}\sigma^2.
	 \label{eq:cl:A-4:4}
	 \end{equation}
	 This approximation (along with the fact that $\card{{\bf Z}_i} = o(1)$) also implies $\card{{\bf Y}_i}\ll \card{{\bf Z}_i}$, hence
	 \begin{equation}
	 \E[{\bf Y}_i^2] \ll \E[{\bf Z}_i^2]  \ll \frac{1}{\alpha_i^2}\sigma^2.
	 \label{eq:cl:A-4:5}
	 \end{equation}
	
	 We can now continue equation~\eqref{eq:cl:A-4:1} to upper bound the LHS there.
     Denoting $\mu_i:=\E[{\bf Y}_i]$, we have
	 \begin{align*}
	 \Expect{\De}{\card{\sum_{i=1}^k {\bf Y}_i}}
     \leq \sum_{i=1}^k |\mu_i|  +  \Expect{\De}{\card{\sum_{i=1}^k {\bf Y}_i-\mu_i}}
     \le \sum_{i=1}^k |\mu_i|
	 + \sqrt{\Expect{\De}{\sum_{i=1}^k ({\bf Y}_i-\mu_i)^2}},
	 \end{align*}
     where in the last inequality we used Cauchy-Schwarz and the fact that $Y_i$'s are independent.
     Using ~\eqref{eq:cl:A-4:4} we have that $\sum_{i=1}^k |\mu_i|\ll \sigma^2\sum_{i=1}^k \frac{1}{\alpha_i^2}$,
     and to upper bound the second term we use ~\eqref{eq:cl:A-4:5}:
     \[
     \Expect{\bm{\De}}{({\bf Y}_i-\mu_i)^2}
     \leq \Expect{\bm{\De}}{{\bf Y}_i^2}
     \ll \frac{1}{\alpha_i^2}\sigma^2.
     \]
     Together, we get that
     \[
     \Expect{\bm{\De}}{\card{\sum_{i=1}^k {\bf Y}_i}}
     \ll \sigma^2\sum_{i=1}^k \frac{1}{\alpha_i^2} + \sqrt{\sigma^2\sum_{i=1}^k \frac{1}{\alpha_i^2}}
     \ll \sigma  \sqrt{\sum_{i=1}^k \frac{1}{\alpha_i^2}},
     \]
     where the last inequality holds since
     $\sigma^2\sum_{i=1}^k \frac{1}{\alpha_i^2}\ll 1$
     (as $\sigma^2\ll \eps^2$ and $\alpha_i\geq \eps^{0.9}$).
\end{proof}

Next, using the previous claim we upper bound the expectation of each summand on the RHS of~\eqref{eq:pretend1}.
The following statement addresses a single term, and should be thought of as being applied after conditioning on
$x,\De$ being not-too untypical, and focusing only on $x_i$'s for which there is a chance that
$g_j(x_i + \De_i)\neq g_j(x_i)$.

\begin{claim}\label{cl:A-5}
Let $j\in [m]$, $k\leq n$, $S\geq 0$ and let $x_1,\ldots,x_k$
be chosen uniformly at random from
$[z_j-\frac{\log n}{25 n}-\eps^{0.95},z_j+\frac{\log n}{25 n}+\eps^{0.95}] \setminus [z_j-\frac{\log n}{50 n},z_j+\frac{\log n}{50 n}]$.
Let $\De_i \sim N(0,\sigma^2)|_{|\De_i|<\ve^{0.95}}$.
Then
\begin{align*}
\E_{{\bf x},\bm{\De}} \left[ \frac{|r_j({\bf x}+\bm{\De})-r_j({\bf x})|}{r_j({\bf x})}\cdot \frac{\min(r_j({\bf x}),S)}
 {r_j({\bf x})+S+\ve^{1.6}}\right]
\ll  \ve^{1.05}+ k\frac{\sigma n}{\log n}  \cdot \Pr_{{\bf x}}[r_j({\bf x})\ge S]\\
+\sigma \frac{n}{\log n}\sqrt{k} \Expect{{\bf x}}{\frac{r_j({\bf x})}{r_j({\bf x})+S}}.
\end{align*}
\end{claim}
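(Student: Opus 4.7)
The plan is to apply Claim~\ref{cl:A-4} pointwise on the inner expectation over $\bm{\De}$, then integrate carefully over ${\bf x}$, splitting on which term in the $\max$ of Claim~\ref{cl:A-4} dominates and, within the second case, further splitting by whether $r_j({\bf x})\geq S$. Throughout, write $R_i := \prod_{i'\neq i} g_j(x_{i'})$ so that $r_j({\bf x}) = g_j(x_i)\cdot R_i$ with $R_i$ independent of $x_i$; the key maneuver is the change of variables $t = g_j(x_i) \asymp ((50n/\log n)\alpha_i-1)^3$, under which $1/\alpha_i \asymp (n/\log n)\,t^{-1/3}$ and the uniform density of $x_i$ in the annulus contributes a Jacobian $\asymp t^{-2/3}\,dt$.

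When $\Expect{\bm{\De}}{|r_j({\bf x}+\bm{\De})-r_j({\bf x})|}\ll \eps^{2.65}$, we use $\frac{\min(r_j,S)}{r_j+S+\eps^{1.6}}\leq \frac{r_j}{r_j+\eps^{1.6}}$ to cancel $r_j({\bf x})$ in the denominator of the ratio, bounding the integrand by $\frac{|r_j({\bf x}+\bm{\De})-r_j({\bf x})|}{\eps^{1.6}}\ll \eps^{2.65}/\eps^{1.6}=\eps^{1.05}$, which accounts for the first summand.

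In the complementary regime, $\Expect{\bm{\De}}{|r_j({\bf x}+\bm{\De})-r_j({\bf x})|}\ll r_j({\bf x})\,\sigma\sqrt{\sum_i 1/\alpha_i^2}$, reducing the task to bounding $\sigma\,\Expect{{\bf x}}{\sqrt{\sum_i 1/\alpha_i^2}\cdot \frac{\min(r_j,S)}{r_j+S+\eps^{1.6}}}$. We decompose $\frac{\min(r_j,S)}{r_j+S+\eps^{1.6}}\leq \mathbf{1}_{r_j\geq S} + \frac{r_j}{r_j+S}\mathbf{1}_{r_j<S}$ and handle the pieces separately. For the first, bound $\sqrt{\sum 1/\alpha_i^2}\leq \sum_i 1/\alpha_i$ and compute each $\Expect{{\bf x}}{\mathbf{1}_{r_j\geq S}/\alpha_i}$ by conditioning on $R_i$ and applying the change of variables: the resulting one-dimensional integral $\int_{S/R_i}^{1}t^{-1}\,dt$ is of the same order as $\Pr[g_j(x_i)\geq S/R_i\mid R_i]$, yielding $\ll (n/\log n)\Pr[r_j\geq S\mid R_i]$; summing over $i$ gives $k(n/\log n)\Pr[r_j\geq S]$. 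For the second, apply Cauchy--Schwarz $\Expect{}{\sqrt{A}\cdot B}\leq \sqrt{\Expect{}{AB}\cdot \Expect{}{B}}$ with $A=\sum_i 1/\alpha_i^2$ and $B=\frac{r_j}{r_j+S}\mathbf{1}_{r_j<S}$; the same change of variables yields $\Expect{x_i}{B/\alpha_i^2\mid R_i}\ll (n/\log n)^2\Expect{x_i}{B\mid R_i}$ in the generic regime $R_i\leq S$, producing $\Expect{}{AB}\ll k(n/\log n)^2\Expect{}{B}$ and therefore $\sqrt{k}(n/\log n)\Expect{}{r_j/(r_j+S)}$.

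The main obstacle is the subregime $r_j<S<R_i$, where the change of variables produces a singular factor of $(R_i/S)^{1/3}$; however the event $\{r_j<S,\,R_i>S\}$ forces $g_j(x_i)<S/R_i$, an event of $x_i$-probability $\asymp (S/R_i)^{1/3}$, so the two factors cancel to a constant. Tracking these prefactors carefully, handling the tails of the annulus where $g_j(x_i)=1$, and verifying the Cauchy--Schwarz setup constitute the bulk of the technical effort.
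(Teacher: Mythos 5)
Your skeleton tracks the paper's up to the point where the main term must be bounded: the paper also applies Claim~\ref{cl:A-4}, disposes of the $\eps^{2.65}$ branch exactly as you do, and reduces to $\sigma\,\Expect{{\bf x}}{\sqrt{\sum_i 1/\bm{\al}_i^2}\cdot\frac{\min(r_j,S)}{r_j+S+\eps^{1.6}}}$. But there it invokes Proposition~\ref{prop:convex_error} iteratively, replacing one $1/\bm{\al}_i^2$ at a time by $O(n^2/\log^2 n)$ inside the square root while keeping the full damping weight $\frac{\min(\cdot,S)}{\cdot+S+\eps^{1.6}}$, and paying an additive $\frac{n}{\log n}\mathbbm{1}_{r_j\geq S}$ per coordinate. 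Your substitute — splitting the weight into $\mathbbm{1}_{r_j\geq S}+\frac{r_j}{r_j+S}\mathbbm{1}_{r_j<S}$ and then doing a direct integral estimate plus a global Cauchy--Schwarz — contains two incorrect estimates, and both failures come precisely from discarding the damping weight that the paper's proposition retains.

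First, $\int_{S/R_i}^{1}t^{-1}\,dt=\log(R_i/S)$, which is \emph{not} of the same order as $\Prob{}{g_j(x_i)\geq S/R_i}\leq 1$: for $S/R_i$ small the integral is unbounded while the probability is $\Theta(1)$. On the event $r_j\geq S$ the true weight is $\frac{S}{r_j+S+\eps^{1.6}}\ee S/(R_i t)$, and it is exactly this extra factor that converts the divergent $\int t^{-1}dt$ into the convergent $\frac{S}{R_i}\int_{S/R_i}^1 t^{-2}dt\ee 1$; replacing the weight by $1$ genuinely loses a $\log(R_i/S)$ that cannot be absorbed into the claim's right-hand side (in the application $S$ can be as small as $0$). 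Second, in the subregime $r_j<S<R_i$ the conditional computation gives $\Expect{x_i}{B/\bm{\al}_i^2\mid R_i}\ee (n/\log n)^2(R_i/S)^{1/3}$ while $\Expect{x_i}{B\mid R_i}\ee (S/R_i)^{1/3}$, so your comparison $\Expect{}{B/\bm{\al}_i^2}\ll(n/\log n)^2\Expect{}{B}$ is off by $(R_i/S)^{2/3}$. The "cancellation" you describe double-counts the event $\{g_j(x_i)<S/R_i\}$: its probability is already the source of the $(S/R_i)^{1/3}$ inside $\Expect{}{B\mid R_i}$, and cannot be used a second time to tame the $(R_i/S)^{1/3}$ in $\Expect{}{B/\bm{\al}_i^2\mid R_i}$. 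With the correct values, Cauchy--Schwarz yields only $\sqrt{k}\,\frac{n}{\log n}\sqrt{\Expect{}{B}}$ rather than $\sqrt{k}\,\frac{n}{\log n}\Expect{}{B}$, which is too weak downstream, where the proof sums over $j\in[m]$ and relies on $\sum_j \Expect{}{r_j/T'}=1$ to avoid a factor of $m$. Both regimes are exactly the two cases treated in Proposition~\ref{prop:convex_error}; you would need to either reprove something equivalent to it or repair these two estimates before the argument closes.
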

\begin{proof}
	Upper bounding $\max(a,b)\leq a+b$ for $a,b\geq 0$, by Claim~\ref{cl:A-4}, we have
	\begin{align*}
	\E_{{\bf x},\bm{\De}} \left[ \frac{|r_j({\bf x}+\bm{\De})-r_j({\bf x})|}{r_j({\bf x})}\cdot \frac{\min(r_j({\bf x}),S)}
	{r_j({\bf x})+S+\ve^{1.6}}\right]
    &\ll \E_{{\bf x}}
	\left[ \frac{\ve^{2.65}+r_j({\bf x})\cdot \sigma\cdot\sqrt{\sum_{i=1}^k \frac{1}{\al_i^2}}}{r_j({\bf x})}\cdot \frac{\min(r_j({\bf x}),S)}
	{r_j({\bf x})+S+\ve^{1.6}}\right]\\
    &\ll  \eps^{1.05}+\sigma\Expect{{\bf x}}{\sqrt{\sum_{i=1}^k \frac{1}{\bm{\al}_i^2}}\cdot \frac{\min(r_j({\bf x}),S)}{r_j({\bf x})+S+\ve^{1.6}}},
    \end{align*}
    and it is enough to bound the second term. Note that while we expect that each $\bm{\alpha}_i$ to be of the order $\log n/n$, convexity works against
    us and it could still be the case that $\sum\limits_{i=1}^{k}\frac{1}{\bm{\alpha}_i^2}$ could be large.
    The point is that in this case, some $\bm{\alpha}_i$ must be close to $0$, in which case $g_j({\bf x}_i)$ is very small
    -- cubically with $\bm{\alpha}_i$ -- thereby balancing the $1/\bm{\alpha}_i^2$ term. The following proposition formalizes this intuition,
    and the proof is deferred to Section~\ref{sec:def_proofs}
    \begin{proposition}\label{prop:convex_error}
      There is an absolute constant $A>0$ such that for any $z>0$ and $r\leq 1$ such that $r_j(x) = r\cdot g_j(x_i)$,
      it holds that
      \begin{equation*}
	\Expect{{\bf x}_i}{\sqrt{z+\frac{1}{\bm{\al}_i^2}} \cdot\frac{\min(r\cdot g_j({\bf x}_i),S)}
	{r\cdot g_j({\bf x}_i)+S+\ve^{1.6}}}
    \leq
	\Expect{{\bf x}_i}{\sqrt{z+A\frac{n^2}{\log^2 n}} \cdot \frac{\min(r\cdot g_j({\bf x}_i),S)}{r\cdot g_j({\bf x}_i)+S+\ve^{1.6}} +A\frac{n}{\log n} \cdot
\mathbbm{1}_{r\cdot g_j({\bf x}_i)\ge S}}.
	\end{equation*}
    \end{proposition}

    Applying Proposition~\ref{prop:convex_error} iteratively $k$ times (once for each $i$, taking $r = \prod\limits_{i'\neq i}{g_{j}(x_{i'})}$
    and the appropriate $z$), we get that
    \[
    \Expect{{\bf x}}{\sqrt{\sum_{i=1}^k \frac{1}{\bm{\al}_i^2}}\cdot \frac{\min(r_j({\bf x}),S)}{r_j({\bf x})+S+\ve^{1.6}}}
    \leq \Expect{{\bf x}}{\sqrt{k\cdot A\frac{n^2}{\log^2 n}}\cdot \frac{\min(r_j({\bf x}),S)}{r_j({\bf x})+S+\ve^{1.6}}+k\cdot A\frac{n}{\log n} \cdot
    \mathbbm{1}_{r_j({\bf x})\geq S}}.
    \]
    The proof is concluded by noting that $\frac{\min(r_j(x),S)}{r_j(x)+S+\ve^{1.6}}\leq \frac{r_j(x)}{r_j(x)+S}$.
    \end{proof}

We are now ready to finish the proof of inequality~\eqref{eq:mainA1}.
\subsubsection*{Proof of inequality~\eqref{eq:mainA1}}
Let $E$ be the event that:
(1) the support of $p({\bf x})$ has size at least $2$,
(2) $\sum\limits_{j} r_j({\bf x})\geq \eps^{1.6}$ and also for ${\bf x}+\bm{\De}$,
and (3) $\card{\bm{\De}_i}\leq \eps^{0.95}$ for all $i\in[n]$.
As we argued in Claims~\ref{cl:A-1},~\ref{cl:A-3},~\ref{cl:A-2} the contribution $({\bf x},\bm{\De})\not\in E$ to
the LHS of inequality~\eqref{eq:mainA1} is $\ll \eps$, hence it is enough to analyze the contribution of $({\bf x},\bm{\De})\in E$.

Denote $T(x) = \sum\limits_{j\in [m]} r_j(x)$.
\begin{align*}
&\Expect{{\bf x},\bm{\De}}{\norm{p({\bf x}) - p({\bf x}+\bm{\De})}_1 1_E}
= \Expect{{\bf x},\bm{\De}}{\sum\limits_{j\in [m]}\card{\frac{r_j({\bf x})}{T({\bf x})}-\frac{r_j({\bf x}+\bm{\De})}{T({\bf x}+\bm{\De})}} 1_E}\\
&= \underbrace{\Expect{{\bf x},\bm{\De}}{\sum\limits_{j\in [m]}\card{\frac{r_j({\bf x})}{T({\bf x})}-\frac{r_j({\bf x}+\bm{\De})}{T({\bf x}+\bm{\De})}}
1_E 1_{r_j({\bf x}) \leq \eps^{2.7}}}}_{(\rom{1})}
+\underbrace{\Expect{{\bf x},\bm{\De}}{\sum\limits_{j\in [m]}\card{\frac{r_j({\bf x})}{T({\bf x})}-\frac{r_j({\bf x}+\bm{\De})}{T({\bf x}+\bm{\De})}}
1_E 1_{r_j({\bf x}) >\eps^{2.7}}}}_{(\rom{2})}
\end{align*}
First, we show that $(\rom{1})\ll \eps$. As $T({\bf x})\geq \eps^{1.6}$ (since $E$ holds) and $r_j({\bf x})\leq \eps^{2.7}$, we get that
$r_j({\bf x})/T({\bf x})\leq \eps^{1.1}$, and next we argue that $r_j({\bf x}+\bm{\De})/T({\bf x}+\bm{\De})\ll \eps^{1.05}$.
Fix $j$ and suppose ${\bf x}_1,\ldots,{\bf x}_{k_j}$ are the ${\bf x}_i$'s that fall inside $I_j$. The following easy fact will be helpful.

\begin{fact}\label{fact:general}
  For all $x,\De$ we have
  $r_j(x+\De) = \sum\limits_{\substack{S\subseteq[k_j]}}{
  \prod\limits_{r\in S}{g_j(x_r)}
  \prod\limits_{r\not\in S}{(g_j(x_r)-g_j(x_r+\De_r))}}$.
\end{fact}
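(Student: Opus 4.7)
The plan is a direct algebraic expansion, not an analytic argument. To set up notation, let ${\bf x}_1,\ldots,{\bf x}_{k_j}$ be the entries of $x$ that fall into $I_j$, and write $a_r := g_j(x_r)$ and $b_r := g_j(x_r+\De_r)$. From the definition $r_j(y) = \prod_{z\in I_j\cap y} g_j(z)$, together with the observation recorded in the earlier footnote that whenever $x_i\in I_j$ but $x_i+\De_i\notin I_j$ (which can happen only when $|\De_i|<\eps^{0.95}/n$ pushes the point across an endpoint), one has $g_j(x_i+\De_i)=1$, I can equivalently write
\[
r_j(x+\De) \;=\; \prod_{r=1}^{k_j} b_r,
\]
treating any $r$ whose perturbed value exited $I_j$ as contributing a factor of $1$.

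The identity to be proved then reads $\prod_r b_r = \sum_{S\subseteq[k_j]} \prod_{r\in S} a_r \prod_{r\notin S}(a_r - b_r)$ (with the sign of the difference read so that the $k_j=1$ case is consistent, i.e.\ the two summands in each factor must add up to $b_r$). The approach is the standard distributive-law expansion: write
\[
b_r \;=\; a_r \;-\; (a_r - b_r) \qquad \text{for each } r\in[k_j],
\]
and expand the product $\prod_{r=1}^{k_j}\bigl[a_r - (a_r-b_r)\bigr]$ by choosing one of the two summands in each factor. Indexing the choice by the set $S\subseteq[k_j]$ of indices at which we select the first summand $a_r$, the resulting term is
\[
\prod_{r\in S} a_r \cdot \prod_{r\notin S}\bigl(a_r - b_r\bigr),
\]
carrying a sign $(-1)^{k_j-|S|}$ that, under the sign convention in the statement, is absorbed into the way the differences $(g_j(x_r)-g_j(x_r+\De_r))$ are written; summing over all $2^{k_j}$ choices of $S$ recovers the claim.

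No real obstacle arises: this is a purely formal identity in the $2k_j$ variables $a_r,b_r$, with no use of the specific form of $g_j$, of the randomness, or of the smoothness facts developed earlier. The sole point requiring care is the sign convention for the difference factors, which is fixed by checking any single term (say $|S|=k_j$ or $k_j=1$). Once the expansion is written down the equality is manifest, so the fact can be stated and proved in two or three lines.
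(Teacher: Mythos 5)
Your proof is correct and is essentially identical to the paper's: the paper likewise writes $g_j(x_r+\De_r)=g_j(x_r)+(g_j(x_r+\De_r)-g_j(x_r))$ and expands the product over subsets $S$. Your observation about the sign of the difference factors is well taken (the statement as printed has the difference reversed relative to what the expansion literally produces), but this is immaterial since the fact is only ever applied after taking absolute values of those factors.
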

\begin{proof}
  Write $r_j(x+\De) =
  \prod\limits_{r=1}^{k_j} g_j(x_r+\De_r)
  =\prod\limits_{r=1}^{k_j}\left(g_j(x_r) + (g_j(x_r+\De_r)-g_j(x_r))\right)$
  and expand out.
\end{proof}

Combining Fact~\ref{fact:general} and Fact~\ref{fact:first_order_g}, we get that
\begin{align*}
r_j(x+\De)
\leq \sum\limits_{S\subseteq[k_j]}{
  \prod\limits_{r\in S}{g_j(x_r)}
  \prod\limits_{r\not\in S}{\card{g_j(x_r)-g_j(x_r+\De_r)}}}
&\leq \sum\limits_{S\subseteq[k_j]}{
\prod\limits_{r\in S}{g_j(x_r)}
B^{\card{S}} n^{3\card{S}}\prod\limits_{r\not\in S}(\alpha_r^3 + \card{\De_r}^3)}\\
&\leq \sum\limits_{S\subseteq[k_j]}{
\prod\limits_{r\in S}{g_j(x_r)}
B^{\card{S}} n^{3\card{S}}\prod\limits_{r\not\in S}\alpha_r^3}\\
&\phantom{\leq_S}+4^{n}B^{\card{n}}n^{3n}\max_{r}\card{\De_r}^3.
\end{align*}
Consider the right hand side above. For the first term we use $\alpha_r^3\ll g_j(x_r)$ to get it is at most
\[
\sum\limits_{S\subseteq[k_j]}{B'^{\card{S}} n^{3\card{S}} r_j(x_r)}
\leq (B'')^n n^{3n} \eps^{2.7} \leq \eps^{2.65}/2.
\]
For the second term we use $\card{\De_r}\leq \eps^{0.95}$ to bound it by $\eps^{2.65}/2$ as well.
We thus get $r_j(x+\De)\leq \eps^{2.65}$, and so $r_j(x+\De)/T(x+\De)\leq \eps^{1.05}$. Combined, we get that
\[
{(\rom{1})}\leq m(\eps^{1.1}+ \eps^{1.05})\ll \eps.
\]

Next, we handle $(\rom{2})$. Denote $T'(x) = \sum\limits_{j}{r_j(x) 1_{r_j(x)\geq \eps^{2.7}}}$, and note that
$T'(x)\geq T(x) - m\eps^{2.7}\geq (1-m\eps^{1.1})T(x)$ and similarly for $T'(x+\De)$. Thus, we may replace
$T(x), T(x+\De)$ with $T'(x), T'(x+\De)$ and incur (by the triangle inequality) a loss of at most $m\eps^{1.1}\ll \eps$.
Thus, we want to upper bound
\[
\underbrace{\Expect{{\bf x},\bm{\De}}{\sum\limits_{j\in [m]}\card{\frac{r_j({\bf x})}{T'({\bf x})}-\frac{r_j({\bf x}+\bm{\De})}{T'({\bf x}+\bm{\De})}} 1_E 1_{r_j({\bf x}) >\eps^{2.7}}}}_{(\rom{3})}\ll \eps.
\]
We intend to apply Claim~\ref{cl:A-6} with $r_j = r_j(x)$ and $d_j = r_j(x+\De) - r_j(x)$ for each $x$ separately, but for that
we first have to argue that $\card{d_j}\leq r_j/2$.
For each $i\in [n]$ there is $j$ such that $x_i\in I_j$, and we denote
$\al_i =\text{dist}\left(x_i, [z_j-\frac{\log n}{50n},z_j+\frac{\log n}{50n}]\right)$. Note that
\[
\eps^{2.7}\leq r_j(x) \leq g_j(x_i)\ll \left(\frac{n}{\log n}\alpha_i\right)^3,
\]
hence $\alpha_i\gg \frac{\log n}{n}\eps^{0.9}$, and for small enough $\eps$ we get that $\alpha_i\geq \eps^{0.91}\geq 2\card{\De_i}$. Therefore,
Combining Fact~\ref{fact:general} and Fact~\ref{fact:first_order_g} we get
\[
\card{d_j(x)}
=\card{r_j(x) - r_j(x+\De)}
\leq \sum\limits_{\substack{S\subseteq[k_j]\\S\neq[k_j]}}{
  \prod\limits_{r\in S}{g_j(x_r)}
  \prod\limits_{r\not\in S}{\card{g_j(x_r)-g_j(x_r+\De_r)}}}
\leq
\sum\limits_{\substack{S\subseteq[k_j]\\S\neq[k_j]}}{ B^{\card{S}} r_j(x)
\prod\limits_{r\not\in S}{\frac{\card{\De_i}}{\alpha_i}}}.
\]
Bounding $\frac{\card{\De_i}}{\alpha_i}\leq \eps^{0.95}/\eps^{0.91} = \eps^{0.04}$ we get that
\[
\card{r_j(x) - r_j(x+\De)}\leq r_j(x) \eps^{0.04}
\sum\limits_{\substack{S\subseteq[k_j]\\S\neq[k_j]}}{B^{\card{S}}}
\leq B'^n \eps^{0.04} r_j(x)
\leq r_j(x)/2 = r_j/2.
\]

Therefore, we may apply Claim~\ref{cl:A-6} and get that
\begin{align}
(\rom{3})&\ll \Expect{{\bf x},\bm{\De}}{\sum\limits_{j=1}^{m} \frac{|r_j({\bf x}) - r_j({\bf x}+\bm{\De})|}{r_j({\bf x})} \cdot \frac{\min(r_j({\bf x}), T'({\bf x}))}{T'({\bf x})}
1_E 1_{r_j({\bf x}) >\eps^{2.7}}} \notag\\\label{eq:2}
&\ll \Expect{{\bf x},\bm{\De}}{\sum\limits_{j=1}^{m} \frac{|r_j({\bf x}) - r_j({\bf x}+\bm{\De})|}{r_j({\bf x})} \cdot \frac{\min(r_j({\bf x}), T'({\bf x}))}{T'({\bf x})+\eps^{1.6}}1_E 1_{r_j({\bf x}) >\eps^{2.7}}},
\end{align}
where the last inequality holds since $T'({\bf x})\gg \eps^{1.6}$. Next, we wish to discard ${\bf x}_i$ that are very far from their closest center $z_j$.
For each $j$, note that $\left[z_j-\frac{\log n}{50 n},z_j+\frac{\log n}{50 n}\right]$ is exactly the set of $y$'s on which
$g_j(y) = 0$, and let $R_j\subseteq I_j$ be $R_j = \left[z_j-\frac{\log n}{25 n}-\eps^{0.95},z_j+\frac{\log n}{25n}+\eps^{0.95}\right] \setminus
\left[z_j-\frac{\log n}{50 n},z_j+\frac{\log n}{50 n}\right]$.
Note that for each $y\in I_j\setminus R_j$, we have that either $g_j(y) = 0$ if $y\in \left[z_j-\frac{\log n}{50 n},z_j+\frac{\log n}{50 n}\right]$, and otherwise $g_j(y) = 1$.
Furthermore, in the latter case we also have that $g_j(y+\bm{\De}_i) = 1$ since $\card{\bm{\De}_i}\leq \eps^{0.95}$.

We sample ${\bf x}$ in the following way. First, sample ${\bf t}_1,\ldots,{\bf t}_m$ the number of ${\bf x}_i$'s in each interval $I_1,\ldots,I_m$,
then for each $j$ sample ${\bf k}_j$ to be the number of $x_i$'s inside the interval $I_j$ that fall inside $R_j$.
Finally, for each $j\in[m]$ sample ${\bf k}_j$ points uniformly from $R_j$, ${\bf t}_j - {\bf k}_j$ uniformly from $I_j\setminus R_j$, and let ${\bf x}$
be the (multi-)set of all the sampled points. We condition on the ${\bf t}_j$'s and ${\bf k}_j$'s henceforth in~\eqref{eq:2}. Furthermore, we condition
on the identity of the $i$'s for which ${\bf x}_i\in I_j$ for each $j$.

Since $i$'s for which ${\bf x}_i\in I_j\in [z_j-\frac{\log n}{25 n}-\eps^{0.95},z_j+\frac{\log n}{25n}+\eps^{0.95}]$
do not affect both $r_j({\bf x})$ and $r_j({\bf x}+\bm{\De})$, we may ignore them and hence take expectation only over $i$'s such
${\bf x}_i\in R_j$. Call these $y$'s. Then from~\eqref{eq:2} we get
\begin{align*}
(\rom{3})
&\ll
\Expect{\vec{{\bf t}},\vec{{\bf k}}}
{
\Expect{{\bf y},\bm{\De}}
{
\sum\limits_{j=1}^{m} \frac{|r_j({\bf y}) - r_j({\bf y}+\bm{\De})|}{r_j({\bf y})} \cdot \frac{\min(r_j({\bf y}), T'({\bf y}))}{T'({\bf y})+\eps^{1.6}}
}}\\
&\leq
\Expect{\vec{{\bf t}},\vec{{\bf k}}}
{
\sum\limits_{j=1}^{m}
\Expect{{\bf y},\bm{\De}}
{\frac{|r_j({\bf y}) - r_j({\bf y}+\bm{\De})|}{r_j({\bf y})} \cdot \frac{\min(r_j({\bf y}), T'_{-j}({\bf y}))}{r_j({\bf y})+T'_{-j}({\bf y})+\eps^{1.6}}
}},
\end{align*}
where $T'_{-j}(x) = \sum\limits_{j'\neq j}r_{j'}(x) 1_{r_{j'}(x)\geq \eps^{2.7}}$.
Note that conditioned on $\vec{{\bf t}}= \vec{t},\vec{{\bf k}} = \vec{k}$, the values of ${\bf y}_i$'s such that ${\bf y}_i\in I_j$ are independent of $T'_{-j}({\bf y})$,
and they are distributed uniformly over $R_j$. Therefore, using Claim~\ref{cl:A-5} we have
\begin{align*}
(\rom{3})
&\ll
\Expect{\vec{{\bf t}},\vec{{\bf k}}}{
\sum\limits_{j=1}^{m}
\eps^{1.05}+ {\bf k}_j\frac{\sigma n}{\log n}  \cdot \cProb{{\bf y}}{\vec{{\bf t}},\vec{{\bf k}}}{r_j({\bf y})\ge T'_{-j}({\bf y})}
+\Expect{{\bf y}}{\sigma \frac{n}{\log n}\sqrt{{\bf k}_j}\frac{r_j({\bf y}) }{T'({\bf y})}}
}.\\
&\leq
m\eps^{1.05}
+ n^2\sigma
\sum\limits_{j=1}^{m}\Prob{y}{r_j(y)\ge T'_{-j}(y)}
+ \sigma\frac{n}{\log n}
\Expect{\vec{t},\vec{k}}{\sqrt{\max_{j} k_j}}.
\end{align*}
Note that if $T'_{-j}(x)\leq r_j(x)$, then
\[
T(x) \leq
T'_{-j}(x) + r_j(x) + \sum\limits_{j'}{r_{j'}(x) 1_{r_{j'}(x)\leq \eps^{2.7}}}
\leq 2r_j(x) + m\cdot \eps^{2.7}
\leq 3,
\]
so we bound the sum on the right hand side by $m\Prob{x}{T(x)\leq 3}$. For the expectation,
we use Cauchy-Schwarz and overall we get
\[
(\rom{3})
\leq m\eps^{1.05} + n^3\sigma \Prob{{\bf x}}{T({\bf x})\leq 3} +
\sigma\frac{n}{\log n}\sqrt{\Expect{\vec{{\bf t}},\vec{{\bf k}}}{\max_{j} {{\bf k}_j}}}.
\]
The first term is clearly $\ll \eps$. For the second term, we use Claim~\ref{cl:A-7} below, that asserts that
$\Prob{{\bf x}}{T({\bf x})\leq 3}\leq n^{-\omega(1)}$, hence by the definition of $\sigma$ the second term is
also $\ll \eps$. For the third term, note that each ${\bf k}_j$ is a sum of $n$ independent Berounlli random variables
with parameter $p\leq\log n/n$, therefore by Chernoff bound
\[
\Prob{}{{\bf k}_j\geq 10\log n}
\leq e^{-\frac{1}{3}9^2 \log n}
\leq n^{-9}.
\]
The union bound now implies that $\Prob{}{\max_j {\bf k}_j\geq 10\log n}\leq n^{-8}$, and hence
\[
\Expect{\vec{{\bf t}},\vec{{\bf k}}}{\max_{j} {\bf k}_j}
\leq n^{-8}\cdot n + 10\log n\ll \log n.
\]
Using the definition of $\sigma$, we get that the third term is also $\ll \eps$. Combining all,
we get that $(\rom{3})\ll \eps$, and we are done.
\qed

\begin{claim}\label{cl:A-7}
\begin{equation}
\Pr_{{\bf x}} \left[\sum_j r_j({\bf x}) \le 3\right] < n^{-\omega(1)}.
\end{equation}
\end{claim}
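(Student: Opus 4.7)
The plan is to show that, with probability $1-n^{-\omega(1)}$, at least four of the quantities $r_j({\bf x})$ equal $1$ exactly, which immediately gives $T({\bf x})=\sum_j r_j({\bf x})\ge 4>3$.

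First, I would use the two-stage sampling of ${\bf x}$ introduced in Section~\ref{sec:analysis_rare}: draw the multiplicities $({\bf t}_1,\ldots,{\bf t}_m)$ of how many ${\bf x}_i$'s fall into each interval $I_j$, and then conditionally place ${\bf t}_j$ i.i.d.\ uniform points inside each $I_j$. Each ${\bf t}_j$ is $\mathrm{Bin}(n,1/m)$ with mean $n/m = n^{2/3}$, so a Chernoff bound combined with the union bound yields
\[
\Pr[\exists j:{\bf t}_j>2n/m]\le m\cdot e^{-\Omega(n/m)}=n^{-\omega(1)},
\]
and I may condition on $\vec{\bf t}$ satisfying ${\bf t}_j\le 2n/m$ for every $j$.

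Next, I would use that $r_j({\bf x})=1$ holds exactly when every ${\bf x}_i\in I_j$ satisfies $|{\bf x}_i-z_j|\ge \log n/(25n)$, since $f(t)=1$ for $t\ge 2$. Conditioned on $\vec{\bf t}$, the events $F_j:=\{r_j({\bf x})=1\}$ depend on disjoint batches of uniform points in different intervals and hence are mutually independent; and for each $j$, using $1-x\ge e^{-2x}$ for small $x\ge 0$,
\[
\Pr[F_j \mid \vec{\bf t}\,]=\Bigl(1-\tfrac{2\log n/(25n)}{1/m}\Bigr)^{{\bf t}_j}\ge e^{-2\cdot\frac{2m\log n}{25n}\cdot {\bf t}_j}\ge e^{-8\log n/25}=n^{-8/25}.
\]

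Finally, let $N=\#\{j:F_j\text{ holds}\}$, so that $T({\bf x})\ge N$. With $m=n^{1/3}$ and $p:=n^{-8/25}$, the product $mp=n^{1/3-8/25}=n^{1/75}$ grows polynomially in $n$, so the conditional independence of the $F_j$'s yields
\[
\Pr[N\le 3\mid \vec{\bf t}\,]\le \binom{m}{3}(1-p)^{m-3}\ll n^3\cdot e^{-mp/2}=n^{-\omega(1)}.
\]
Combining with the Chernoff bound on $\vec{\bf t}$ gives $T({\bf x})\ge 4>3$ except on an event of probability $n^{-\omega(1)}$. The main (and really the only) quantitative subtlety is the inequality $\tfrac{1}{3}>\tfrac{8}{25}$: this is what guarantees that the expected number of ``clean'' intervals $mp$ diverges polynomially, which is exactly what is needed to push the probability of having too few of them below any polynomial rate. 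Apart from this comparison of scales, the proof is just two Chernoff bounds glued together.
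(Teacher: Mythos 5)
Your proof is correct and follows essentially the same route as the paper's: condition on every ${\bf t}_j\le 2n/m$ via Chernoff, lower-bound $\Pr[r_j({\bf x})=1\mid\vec{\bf t}]$ by $n^{-8/25}$, and use the conditional independence across intervals together with $\tfrac13>\tfrac{8}{25}$ to conclude. If anything you are slightly more careful than the paper (demanding four clean intervals rather than three, and tracking the exponent $8/25$ precisely), but the argument is the same.
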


\begin{proof}
	The proof is very similar to the analysis of Case~(B) above. In particular, similarly to inequality~\eqref{eq:notEi},
	$$
	\Pr[r_j({\bf x})<1] = \left( 1-\frac{2m \log n}{25 n} \right)^{t_j}
    \geq
	\left( 1-\frac{2m \log n}{25 n} \right)^{2 \cdot n/m} > e^{-2 \log n/25} = n^{-2/25},
	$$
	as long as $t_j < 2\cdot n/m$ (which is the case except with probability $n^{-\omega(1)}$. Since $m > n^{2/25} \cdot n^{\Omega(1)}$, the probability of not having at least three $r_j(x)$'s equal to $1$ is $n^{-\omega(1)}$.
\end{proof}

\section{The value of the $t$-fold symmetric odd cycle game}\label{sec:val_sym_cycle}
\subsection{The upper bound: Theorem~\ref{thm:val_sym_cycle_ub}}
Suppose that $n = 2m - 1$ and $A$ is a strategy for $C_n^{\otimes_{\sf sym} t}$. We will view $A$
as a symmetric function over ordered $t$ tuples, i.e. as  $A\colon C_{n}^{t} \to \power{t}$ satisfying
$A(\pi(x)) = \pi(A(x))$ for all permutations $\pi$ over $[t]$.

We identify $C_{n} = \sett{\frac{i}{n}}{i = 0,1,\ldots,n-1}$, consider the
lattice $L = (C_{n} + \mathbb{Z})^{t}$ and define a rounding map $R\colon L\to\mathbb{Z}^t$ on it as follows. For
$x\in C_{n}^t$, we define $R(x) = A(x) + n x \pmod{2}$, and then we extend $R$ to $L$ by
$R(x+z) = R(x) + z$ for $x\in C_{n}^t$ and $z\in\mathbb{Z}^t$.

Let $D = R^{-1}(0^t)$. The symmetry of $A$ implies that $D$ is symmetric, and we also note that $D$ is a tiling
of the lattice $L$.
\begin{definition}
  A random $\eps$-Bernouli direction, denoted by ${\bf u}\sim {\sf B}(\eps)$,
  is a random variable distributed on $\set{\pm \frac{1}{n}, 0}$, such that for each $i\in[t]$ independently,
  $\Prob{}{{\bf u}_i = 0} = 1-2\eps$ and $\Prob{}{{\bf u}_i = 1/n} = \Prob{}{{\bf u}_i = -1/n} = \eps$.
\end{definition}
We will mostly be concerned with $\eps = 1/4$, in which case the distribution of
${\bf x}, {\bf x}+{\bf u}\hspace{-1ex}\pmod{1}$ where ${\bf x}\in_R C_{n}^t$ and
${\bf u}$ is an independent $\frac{1}{4}$-Bernouli step, is exactly the distribution of challenges
to the players. Inspecting, we see that players succeed on these challenges if and only if
$R({\bf x}) = R({\bf x}+{\bf u})$, as the following claim shows.
\begin{claim}\label{claim:success_then_same_cell}
  Let $x\in C_{n}^t$ and $u\in\set{\pm \frac{1}{n},0}^t$. Then the players succeed on challenges
  $(x, x+u\hspace{-1ex}\pmod{1})$ if and only if $R(x) = R(x+u)$.
\end{claim}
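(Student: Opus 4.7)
The plan is to verify the claim coordinate by coordinate, making essential use of the fact that $n = 2m-1$ is odd. First I would decompose $x + u$ into its ``wrapped'' part in $C_n^t$ plus an integer correction: set $x' = x + u \pmod{1} \in C_n^t$ and $z = (x+u) - x' \in \mathbb{Z}^t$. Since $u_i \in \{0, \pm 1/n\}$, the vector $z$ has entries in $\{-1, 0, +1\}$, with $z_i \neq 0$ only when $u_i = 1/n$ and $x_i = (n-1)/n$, or $u_i = -1/n$ and $x_i = 0$. Then by the definition of $R$ on $L$, we have $R(x+u) = R(x') + z = A(x') + n x' + z \pmod{2}$.

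The main computation is to subtract coordinate-wise:
\[
R(x+u)_i - R(x)_i \;=\; A(x')_i - A(x)_i + n(x'_i - x_i) + z_i \pmod{2}.
\]
Using $x'_i - x_i = u_i - z_i$, this becomes $A(x')_i - A(x)_i + n u_i + (1-n) z_i \pmod{2}$. Here is where the parity of $n$ enters: since $n$ is odd, $(1-n) z_i \equiv 0 \pmod{2}$, and $n u_i \equiv [u_i \neq 0] \pmod{2}$ (because $n \cdot (\pm 1/n) = \pm 1$, both odd). Thus
\[
R(x+u)_i = R(x)_i \iff A(x')_i - A(x)_i \equiv [u_i \neq 0] \pmod{2}.
\]

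The right-hand side is exactly the per-coordinate acceptance condition of the $t$-fold odd-cycle game: when $u_i = 0$ the verifier asks the same cycle-vertex to both players and requires equal answers, while when $u_i = \pm 1/n$ the vertices are neighbors in $C_n$ and the verifier requires unequal answers. Taking the conjunction over all $i \in [t]$ gives that the players succeed on the challenge pair $(x, x+u \pmod 1)$ if and only if $R(x) = R(x+u)$, completing the proof.

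I do not expect a genuine obstacle here; the only subtlety is being careful about the ``wrap-around'' coordinates where $z_i \neq 0$, and this is precisely where the oddness of $n$ is needed to make the $(1-n) z_i$ term vanish modulo $2$.
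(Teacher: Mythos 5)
Your proof is correct and follows essentially the same route as the paper's: both decompose $x+u$ into its representative $x'\in C_n^t$ plus an integer vector $z$, compute $R(x+u)-R(x)$, and use the oddness of $n$ to make the integer-correction term vanish modulo $2$, the only difference being that you do the bookkeeping coordinate-wise where the paper multiplies the identity $x+u=x'+z$ by $n$ and works vectorially. The one step worth adding is the paper's opening remark that $x$ and $x+u$ lie in the same or in adjacent cells of the tiling, which is what it invokes to upgrade the congruence $R(x+u)\equiv R(x)\pmod 2$ that your computation yields to the exact equality $R(x+u)=R(x)$ asserted in the claim.
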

\begin{proof}
  Note that $x$ and $x+u$ are either in the same cell of $D$ or in adjacent cells,
  so to prove the statement it is enough to show that the players succeed on the challenge
  if and only if $R(x) = R(x+u)\pmod{2}$.

  Write $x+u = d + z$ where $d \in C_{n}^t$ is $x+u\hspace{-0.6ex}\pmod{1}$, and $z\in\mathbb{Z}^t$.
  Note that
  \[
    R(x+u) = R(d) + z = A(d) + dn + z \pmod{2},
    \qquad\qquad
    R(x) = A(x) + nx\pmod{2}
  \]
  and subtracting the equations we get that
  \[
    R(x+u) - R(x) = A(d) - A(x) + dn + z - nx\pmod{2}.
  \]
  Multiplying the equality $x+u = d+z$ by $n$ and taking modulo $2$ we get that
  $nu + nx = nd + nz = nd + z\pmod{2}$ where the last transition used the fact that
  $n$ is odd. Thus, $R(x+u) - R(x) = A(d) - A(x) +nu\pmod{2}$. Note that the players succeed
  on the challenge if and only if $A(x) = A(d) + nu\pmod{2}$, and plugging that in we get that
  they succeed if and only if $R(x+u) - R(x) = 0\pmod{2}$, as desired.
\end{proof}
Claim~\ref{claim:success_then_same_cell} implies that the failure probability of the players is
\[
\Prob{{\bf x}\in C_n^t, {\bf u}\sim{\bf B}(1/4)}{{\bf x}, {\bf x}+{\bf u}\text{ are in different cells of $D$}}.
\]
Setting ${\bf y} = {\bf x}\pmod{D}$, it is easily seen that the distribution of ${\bf y}$ is uniform over $D$,
so the probability of the above event is equal to
\[
\eta \defeq \Prob{{\bf y}\in D, {\bf u}\sim{\sf B}(1/4)}{{\bf y}+{\bf u}\not\in D}.
\]
The rest of the proof is devoted to lower bounding $\eta$.
Setting $k = M \frac{n\sqrt{\log t}}{t}$ for large constant $M$ to be determined later, we show:
\begin{lemma}\label{lem:main_odd_up}
  $\eta\geq \Omega(1/k)$.
\end{lemma}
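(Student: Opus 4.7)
My proposal is to carry over the energy-increment strategy of Section~\ref{sec:lb} essentially verbatim, with the following adjustments. Working in the rescaled body $\tilde D = nD$, which is a symmetric volume-$1$ tiling of $\mathbb{Z}^t$, Lemma~\ref{lem:uniform} guarantees that the fractional parts $\{n{\bf y}_i\}$ are i.i.d.\ uniform on $[0,1)$, so ${\bf y}$ is ``good'' with probability $1 - t^{-\omega(1)}$ in the sense that every arc of length $\Theta(\log t / t)$ contains $\Theta(\log t)$ of these fractional parts.

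\textbf{Random needle and energy.} Assume for contradiction that $\eta < c_0/k$ for a sufficiently small absolute constant $c_0$. I would then consider the walk ${\bf U} = \sum_{j=1}^k {\bf u}^{(j)}$ of $k$ i.i.d.\ ${\sf B}(1/4)$ steps starting from a uniform ${\bf y}\in D$, and define
\[
\Psi({\bf a}) = \sum_{i<j} e^{-Z\, d({\bf a}_i, {\bf a}_j)}, \qquad d(x,y) = \min_{z \in (1/n)\mathbb{Z} \setminus \{0\}} |x + z - y|.
\]
The parameter $Z$ should be chosen (in the hypothesis regime $n \gtrsim t/\sqrt{\log t}$) so that $(Z\sigma_{\rm walk})^2 \sim 1/\log t$ with $\sigma_{\rm walk} = \sqrt{k/(2n^2)}$, matching Section~\ref{sec:lb}'s Chebyshev-concentration scaling, while simultaneously keeping $Z$ large enough on the $d$-scale for Claims~\ref{cl:6} and~\ref{cl:7} to go through. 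Since each Bernoulli step is bounded and symmetric, its moment generating function $\E[e^{Z({\bf U}_i - {\bf U}_j)}] = ((1+\cosh(Z/n))/2)^{2k} \approx e^{kZ^2/(2n^2)}$ matches the Gaussian MGF of Claim~\ref{cl:2} to leading order, so the computations of Claims~\ref{cl:2}--\ref{cl:4} go through almost verbatim and yield $\Pr[\Psi({\bf y}, {\bf U}) > (1 + \Omega(1/\log t))\Psi({\bf y})] \geq 0.96$ over good ${\bf y}$.

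\textbf{Finish via symmetry.} Conditional on the piecewise-linear walk path remaining inside a single cell of $D$, Lemma~\ref{lem:diff} prevents any pair of coordinates from crossing modulo $1/n$ along the path, so the Claim~\ref{cl:1} analog bounds $|\Psi({\bf y}+{\bf U}) - \Psi({\bf y},{\bf U})|$ by a negligible $t^2 e^{-Z/4}$. Using $k\eta \le c_0$ to control the failure of the ``path-stays-in-cell'' event, the laws of ${\bf y}$ and ${\bf y}+{\bf U}$ come within total variation $O(c_0)$ of each other, and the symmetry $-{\bf U} \stackrel{d}{=} {\bf U}$ combined with the Claim~\ref{cl:5}-style chain of inequalities gives $\Pr[\Psi({\bf y}+{\bf U}) > \Psi({\bf y}-{\bf U})] \geq 0.9 - O(c_0) > 1/2$, which is impossible. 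Hence $\eta \geq \Omega(1/k)$.

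\textbf{Main obstacle.} The delicate point --- and the one where the most care is needed --- is simultaneously satisfying all three constraints on $Z$ (large enough for Claim~\ref{cl:7} to give $C_i \leq \Psi/\log t$, tuned so that $(Z\sigma_{\rm walk})^2 \sim 1/\log t$ just barely gives Chebyshev concentration, and still small enough that the Taylor expansion underlying Claims~\ref{cl:2}--\ref{cl:4} is valid) within the precise walk-length regime $k = Mn\sqrt{\log t}/t$, which is what pins the choice of $M$ (and of the constant $c$ in the hypothesis $t \le cn\sqrt{\log n}$) to appropriate absolute constants. On top of this, unlike the continuous Gaussian needle of Section~\ref{sec:lb}, a single ${\sf B}(1/4)$-substep moves an active coordinate by a full $1/n$, comparable to the cell's diameter, so its linear interpolation can exit and re-enter a cell without being detected at the discrete vertices; I would need a Lemma~\ref{lem:area_bd}-style crossing bound tailored to Bernoulli directions, or equivalently a refinement of each substep into at most $t$ axis-aligned micro-moves (one per coordinate that changes), to show that the probability of such a hidden excursion per segment is $O(\eta)$ --- letting the union bound over the $k$ segments close the argument with total failure probability $O(k\eta) = O(c_0)$.
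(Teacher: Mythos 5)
Your overall architecture (energy functional, Chebyshev concentration, total-variation comparison of ${\bf a}$ and ${\bf a}-u$, and the final symmetry contradiction) is the right one and matches the paper, but there are two substantive gaps. The first is quantitative and fatal to the stated bound: you model the displacement as a sum ${\bf U}=\sum_{j=1}^k{\bf u}^{(j)}$ of $k$ i.i.d.\ ${\sf B}(1/4)$ steps, which has \emph{diffusive} per-coordinate scale $\sigma_{\rm walk}=\sqrt{k}/(n\sqrt{2})$. Since Claims~\ref{cl:6} and~\ref{cl:7} pin $Z=\Theta(t/\log t)$, your own requirement $(Z\sigma_{\rm walk})^2\sim 1/\log t$ forces $k\sim n^2\log t/t^2$, which is the \emph{square} of the $k=Mn\sqrt{\log t}/t$ in the lemma; the union bound over that many unit steps then yields only $\eta\gtrsim t^2/(n^2\log t)$, quadratically weaker than $\Omega(t/(n\sqrt{\log t}))$ and insufficient for Theorem~\ref{thm:val_sym_cycle_ub} except at the extreme end $t\asymp n\sqrt{\log n}$. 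The paper instead uses the \emph{coherent} long step $k{\bf u}$ for a single ${\bf u}\sim{\sf B}(1/4)$ (ballistic scale $k/n$ per active coordinate), defines $\delta=\Pr[{\bf x}+k{\bf u}\notin D]$, and recovers $\delta\leq k\eta$ by decomposing $k{\bf u}$ into $k$ copies of the same unit step (Claim~\ref{claim:trivial_delta_eta}); this is what makes one unit of ``noise'' cost only $k$ (not $k^2$) applications of $\eta$. Relatedly, your redefinition $d(x,y)=\min_{z\in(1/n)\mathbb{Z}\setminus\{0\}}|x+z-y|$ cannot be right: every pair of coordinates of a point of $C_n^t$ differs by an exact multiple of $1/n$, so this $d$ degenerates; the paper keeps $z\in\mathbb{Z}\setminus\{0\}$ and works on the unit torus without rescaling.

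The second gap is the one you flag but do not close, and it is the genuinely new ingredient of this section: the analogue of Claim~\ref{cl:1} must rule out a pair of coordinates crossing a nonzero integer difference at some interior $\lambda\in(0,1)$ of the segment $[x,x+ku]$, and with discrete steps Lemma~\ref{lem:diff} only controls the endpoints. The paper's resolution is not a micro-move refinement or a Lemma~\ref{lem:area_bd}-style crossing bound; it is arithmetic plus a coupling. Taking $k$ prime and noting $u_i-u_j\in\{0,\pm 1/n,\pm 2/n\}$, any interior crossing forces $\lambda\in\{1/k,1/2k,1/2\}$; the case $\lambda=1/k$ contradicts $x+u\in D$, while $\lambda=1/2k$ and $\lambda=1/2$ are excluded by splitting ${\bf u}={\bf u}^1+{\bf u}^2$ into disjoint Bernoulli directions and using that, for ``decent'' $(x,u)$, with positive probability $x+{\bf u}^1$ (resp.\ $x+k{\bf u}^1$) lies in $D$ while $({\bf u}^1_i,{\bf u}^1_j)=(1/n,0)$, again contradicting Lemma~\ref{lem:diff} (Claims~\ref{claim:two_small_steps},~\ref{claim:decent},~\ref{cl:1'}). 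This also forces the extra events $x\pm u,\,x\pm ku\in D$ and the decency of $(x,\pm u)$ into the definition of the good event $E$, costing $O(\delta+\eta)$ in probability, which is why the final contradiction reads $\eta+\delta\geq\Omega(1)$ and only then converts to $\eta\geq\Omega(1/k)$ via $\delta\leq k\eta$. Without some such mechanism your argument does not establish $\Psi(x+ku)\approx\Psi(x,ku)$, and the proof does not close.
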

Below, we will assume $k$ is an integer,
otherwise we may multiply it by a constant factor close to $1$ and make it an integer. We then further assume $k$ is prime,
otherwise we may find a prime in $[k,2k]$ and replace $k$ by it.
Define
$\delta = \Prob{{\bf x}\in D, {\bf u}\sim {\sf B}(1/4)}{{\bf x}+k{\bf u}\not\in D}$ and observe the following easy relation
between $\delta$ and $\eta$.
\begin{claim}\label{claim:trivial_delta_eta}
  $\delta\leq k\eta$.
\end{claim}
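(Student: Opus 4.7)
The plan is to decompose a step of size $k{\bf u}$ into $k$ unit steps and apply a union bound. The key combinatorial observation is that if ${\bf x}\in D$ and ${\bf x}+k{\bf u}\notin D$, then the $\{0,1\}$-valued sequence $\mathbbm{1}_{{\bf x}+j{\bf u}\in D}$ for $j=0,1,\ldots,k$ starts at $1$ and ends at $0$, so there must exist some index $i\in\{0,1,\ldots,k-1\}$ where it transitions from $1$ to $0$, i.e.\ ${\bf x}+i{\bf u}\in D$ but ${\bf x}+(i+1){\bf u}\notin D$. A union bound therefore gives
\[
\delta \leq \sum_{i=0}^{k-1}\Pr_{{\bf x}\in D,\,{\bf u}\sim{\sf B}(1/4)}\bigl[{\bf x}+i{\bf u}\in D,\ {\bf x}+(i+1){\bf u}\notin D\bigr],
\]
so it suffices to bound each summand by $\eta$.

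To bound a single summand, I would reparameterize in terms of ${\bf y}_i:=({\bf x}+i{\bf u})\!\pmod{D}$, the unique representative in $D$ of ${\bf x}+i{\bf u}$ modulo $\ZZ^t$. The first step is to check that $({\bf y}_i,{\bf u})$ has the same joint distribution as $({\bf x},{\bf u})$, namely ${\bf y}_i$ uniform on $D$ and independent of ${\bf u}$. This holds because for every fixed $u$, the map $x\mapsto (x+iu)\!\pmod{D}$ is a bijection from $D$ to itself (since $D$ is a fundamental domain for $\ZZ^t$ acting on $L$), so the conditional law of ${\bf y}_i$ given ${\bf u}=u$ is uniform on $D$ for every $u$. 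With this in hand, observe that on the event $\{{\bf x}+i{\bf u}\in D\}$ we actually have ${\bf x}+i{\bf u}={\bf y}_i$ (an exact equality, not merely one mod $\ZZ^t$), and hence ${\bf x}+(i+1){\bf u}={\bf y}_i+{\bf u}$. Therefore
\[
\{{\bf x}+i{\bf u}\in D,\ {\bf x}+(i+1){\bf u}\notin D\}\ \subseteq\ \{{\bf y}_i+{\bf u}\notin D\},
\]
and the distributional identity just established yields
\[
\Pr\bigl[{\bf x}+i{\bf u}\in D,\ {\bf x}+(i+1){\bf u}\notin D\bigr]\ \leq\ \Pr\bigl[{\bf y}_i+{\bf u}\notin D\bigr]\ =\ \eta.
\]
Summing over the $k$ choices of $i$ gives $\delta\leq k\eta$.

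I do not anticipate any real obstacle — the proof is essentially a union bound combined with the translation-invariance of the uniform distribution on the fundamental domain $D$. The only small subtlety worth getting right is to interpret ``${\bf x}+j{\bf u}\in D$'' as a literal set membership in $D\subseteq L$ rather than an equality modulo $\ZZ^t$, which is exactly what makes the identification ${\bf x}+i{\bf u}={\bf y}_i$ on the relevant event hold, and which in turn lets us reduce the $k$-step escape probability to $k$ copies of the single-step escape probability.
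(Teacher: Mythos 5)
Your proof is correct and follows essentially the same route as the paper's: a union bound over the $k$ intermediate steps, combined with the observation that $({\bf x}+j{\bf u})\pmod{D}$ is uniform on $D$ (independently of ${\bf u}$), so each single-step transition probability is exactly $\eta$. You simply spell out in more detail the reduction that the paper states in one line.
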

\begin{proof}
  By the union bound
\[
\delta\leq
\sum\limits_{j=0}^{k-1}
\Prob{{\bf x}\in D, {\bf u}}{{\bf x}+j{\bf u}\in D, {\bf x}+(j+1){\bf u}\not\in D}.
\]
Note that for each $j$, the distribution of $y = {\bf x}+j{\bf u}\pmod{D}$ is uniform over $D$,
the $j$th term in the above sum is at most $\Prob{{\bf y}\in D, {\bf u}}{{\bf y} + {\bf u}\not\in D} = \eta$.
\end{proof}

\subsubsection{Disjoint Bernouli steps}
We will also consider the situation after making two Bernouli steps whose support is disjoint,
and for that we make the following definition.
\begin{definition}
The distribution of two disjoint $\eps$-Bernouli direction,
denoted by $({\bf u}^1$, ${\bf u}^2)\sim {\sf DB}(\eps)$,
is defined as follows. For each $i$ independently, set each one of the following options with probability $\frac{\eps}{2}$:
$({\bf u}^1_i, {\bf u}^2_i) = (1/n, 0)$, $({\bf u}^1_i, {\bf u}^2_i) = (-1/n, 0)$,
$({\bf u}^1_i, {\bf u}^2_i) = (0, 1/n)$, $({\bf u}^1_i, {\bf u}^2_i) = (0, -1/n)$;
otherwise, set $({\bf u}^1_i, {\bf u}^2_i) = (0, 0)$.
\end{definition}
We note that if $({\bf u}^1$, ${\bf u}^2)\sim{\sf DB}(\eps)$, then
${\bf u}^1 + {\bf u}^2$ is distributed as ${\sf B}(\eps)$. Therefore:
\begin{claim}\label{claim:two_small_steps}
  It holds that:
  \begin{itemize}
    \item $\Prob{{\bf x}\in D, {\bf u}\sim {\sf B}(1/4)}{{\bf x}+k{\bf u}\not\in D}\leq 2\delta$;
    \item $\Prob{{\bf x}\in D, {\bf u}\sim {\sf B}(1/4)}{{\bf x}+{\bf u}\not\in D}\leq 2\eta$.
  \end{itemize}
\end{claim}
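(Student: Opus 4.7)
The plan is to observe that each bullet point of Claim~\ref{claim:two_small_steps} reduces directly to the defining expression of the quantity on its right-hand side, so that the two inequalities follow immediately from unpacking notation. For the first bullet, the left-hand side
\[
\Prob{{\bf x}\in D,\, {\bf u}\sim {\sf B}(1/4)}{{\bf x}+k{\bf u}\not\in D}
\]
is exactly the formula used to define $\delta$ in the paragraph just before Claim~\ref{claim:trivial_delta_eta}. Hence it is equal to $\delta$, and in particular bounded above by $2\delta$. The second bullet is handled in the same way: the left-hand side $\Prob{{\bf x}\in D,\, {\bf u}\sim {\sf B}(1/4)}{{\bf x}+{\bf u}\not\in D}$ coincides with the definition of $\eta$ given just after Claim~\ref{claim:success_then_same_cell}, so it equals $\eta \le 2\eta$.

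Because the statement is essentially a restatement of the definitions, there is no real technical obstacle; the only care required is in matching the distributions on both sides (namely, checking that the ${\bf x}$ and ${\bf u}$ appearing on the left-hand sides of the two bullets are sampled from exactly the same distributions as those used in defining $\delta$ and $\eta$, respectively). This is immediate upon inspection.

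The reason to include the claim at all, I believe, is to set up the two-disjoint-step decomposition machinery used afterwards. In the subsequent analysis, a ${\sf B}(1/4)$ direction will be written as ${\bf u}^1 + {\bf u}^2$ for $({\bf u}^1,{\bf u}^2) \sim {\sf DB}(1/4)$, and Claim~\ref{claim:two_small_steps} will be invoked through the observation, stated immediately before the claim, that ${\bf u}^1 + {\bf u}^2$ is distributed as a single ${\sf B}(1/4)$ step; the factor of $2$ in the bound provides slack for a subsequent union bound that separately accounts for escape events after the first half-step and after the full combined step.
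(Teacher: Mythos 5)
You are right that, read literally against the displayed definitions, the first bullet's left-hand side is exactly the defining expression for $\delta$ and the second's is exactly that for $\eta$, so the inequalities collapse to $\delta\leq 2\delta$ and $\eta\leq 2\eta$. But this reading makes the claim vacuous, and it cannot be what is intended: the paper supplies a genuine argument for it, and the place where the claim is actually invoked (the proof of Claim~\ref{claim:decent}) needs a comparison between two \emph{different} step distributions. There one must control $\Prob{{\bf x}\in D,\,({\bf u}^1,{\bf u}^2)\sim{\sf DB}(1/4)}{{\bf x}+k{\bf u}^1\notin D}$ and $\Prob{{\bf x}\in D,\,({\bf u}^1,{\bf u}^2)\sim{\sf DB}(1/4)}{{\bf x}+{\bf u}^1\notin D}$, where ${\bf u}^1$ is a single component of a disjoint pair and therefore has marginal ${\sf B}(1/8)$, not ${\sf B}(1/4)$. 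Those quantities are not $\delta$ and $\eta$ by definition, and relating the full-step and half-step escape probabilities up to a factor of $2$ is precisely the content the claim is meant to carry. Your proof supplies none of that, so the later invocations would be unsupported.

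The argument the paper gives is: write ${\bf u}={\bf u}^1+{\bf u}^2$ with $({\bf u}^1,{\bf u}^2)\sim{\sf DB}(1/4)$, apply the union bound
\[
\Prob{}{{\bf x}+k{\bf u}\notin D}\ \leq\ \Prob{}{{\bf x}+k{\bf u}^1\notin D}\ +\ \Prob{}{{\bf x}+k{\bf u}^1\in D,\ {\bf x}+k{\bf u}^1+k{\bf u}^2\notin D},
\]
and then bound the second term by the first via the stationarity observation that, for every fixing of ${\bf u}^1$, the point $({\bf x}+k{\bf u}^1)\pmod{D}$ is uniform over $D$ and independent of ${\bf u}^2$, while the event of crossing a cell boundary is invariant under reduction mod $D$. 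This union bound over the two half-steps plus the stationarity step is the entire mathematical content of the claim; it is exactly the step your write-up gestures at in its final paragraph ("the factor of $2$ provides slack for a subsequent union bound") but never carries out. In short: you have proved the literal statement, but the literal statement is a tautology created by an imprecision in how $\delta$ and $\eta$ versus the components of ${\sf DB}(1/4)$ are phrased, and the version of the claim that the rest of the section relies on is not established by your argument.
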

\begin{proof}
  We prove the first item, and the second item is proved analogously.
  To sample ${\bf u}\sim {\sf B}(1/4)$, we sample $({\bf u}^1, {\bf u}^2)\sim{\sf DB}(1/4)$ and
  take ${\bf u} = {\bf u}^1 + {\bf u}^2$, so by the union bound the probability in the first item
  is at most
  \[
  \Prob{{\bf x}\in D, ({\bf u}^1, {\bf u}^2)\sim {\sf DB}(1/4)}{{\bf x}+k{\bf u}^1\not\in D}
  +\Prob{{\bf x}\in D, ({\bf u}^1, {\bf u}^2)\sim {\sf DB}(1/4)}{{\bf x}+k{\bf u}^1\in D, {\bf x}+k{\bf u}^1 + k{\bf u}^2\not\in D}.
  \]
  The first probability is $\delta$, and we argue that the second probability is at most the first. Indeed, setting ${\bf y} = {\bf x}+k{\bf u}^1$, this probability
  is at most the probability that ${\bf y}, {\bf y}+k{\bf u}^2$ are in different cells of $D$. Note that this occurs if and only
  if ${\bf y}\pmod{D}$ and ${\bf y}\pmod{D} + k{\bf u}^2$ are in different cells of $D$; note also that for every fixing of ${\bf u}^1$,
  the distribution of ${\bf y}\pmod{D}$ is uniform over $D$. Thus
  \[
  \Prob{{\bf x}\in D, ({\bf u}^1, {\bf u}^2)\sim {\sf DB}(1/4)}{{\bf x}+k{\bf u}^1\in D, {\bf x}+k{\bf u}^1 + k{\bf u}^2\not\in D}\leq
  \Prob{{\bf y}\in D, ({\bf u}^1, {\bf u}^2)\sim {\sf DB}(1/4)}{{\bf y} + k{\bf u}^2\not\in D} = \delta.\qedhere
  \]
\end{proof}

\begin{definition}
  Let $x\in D$ and $u$ be a direction. We say $(x,u)$ is decent if
  \[
  \cProb{({\bf u}^1, {\bf u}^2)\sim {\sf DB}(1/4)}
  {{\bf u}^1 + {\bf u}^2=u}
  {
  x+{\bf u}_1\not \in D\lor
  x+{\bf u}_2\not \in D\lor
  x+k{\bf u}_1\not\in D\lor
  x+k{\bf u}_2\not\in D} < \frac{1}{32}.
  \]
\end{definition}

\begin{claim}\label{claim:decent}
  $\Prob{{\bf x}\in_R D, {\bf u}\sim {\sf B}(1/4)}{({\bf x}, {\bf u})\text{ is decent}}\geq 1-64(\eta+\delta)$
\end{claim}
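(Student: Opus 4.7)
The plan is to combine Markov's inequality, the law of total probability, a union bound over four failure events, and finally Claim~\ref{claim:two_small_steps} to reduce everything to $\eta$ and $\delta$.

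Let $F({\bf x},{\bf u})$ denote the conditional probability appearing on the right hand side of the decency definition, so that $({\bf x},{\bf u})$ is not decent precisely when $F({\bf x},{\bf u})\ge 1/32$. Markov's inequality then gives
\[
\Prob{{\bf x},{\bf u}}{({\bf x},{\bf u})\text{ is not decent}}\;\le\; 32\cdot\Expect{{\bf x},{\bf u}}{F({\bf x},{\bf u})}.
\]
Applying the tower rule, the expectation on the right equals the unconditional probability, over ${\bf x}\in_R D$ and $({\bf u}^1,{\bf u}^2)\sim {\sf DB}(1/4)$, of the event
\[
{\bf x}+{\bf u}^1\not\in D \;\lor\; {\bf x}+{\bf u}^2\not\in D\;\lor\;{\bf x}+k{\bf u}^1\not\in D\;\lor\;{\bf x}+k{\bf u}^2\not\in D.
\]
A union bound, together with the exchange symmetry ${\bf u}^1\leftrightarrow{\bf u}^2$ built into the ${\sf DB}(1/4)$ distribution, then bounds this by
\[
2\,\Prob{{\bf x},({\bf u}^1,{\bf u}^2)}{{\bf x}+{\bf u}^1\not\in D}\;+\;2\,\Prob{{\bf x},({\bf u}^1,{\bf u}^2)}{{\bf x}+k{\bf u}^1\not\in D}.
\]

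The remaining task is to bound each of these two probabilities by $\eta$ and by $\delta$ respectively; combined with the factor of $32$ from Markov, this produces exactly $32\cdot 2(\eta+\delta) = 64(\eta+\delta)$, as required. Both bounds are in the spirit of Claim~\ref{claim:two_small_steps}: write ${\bf u}^1+{\bf u}^2\sim {\sf B}(1/4)$, decompose the full step into the two sub-steps ${\bf u}^1$ and ${\bf u}^2$, and exploit the fact that for every fixing of ${\bf u}^1$, the random variable $({\bf x}+{\bf u}^1)\bmod D$ is still uniform on $D$. This translation-invariance, which is a consequence of $D$ being a fundamental domain for the lattice $L$, is what lets one charge the single-sub-step exit event back to the full-step exit event on which $\eta$ and $\delta$ are defined.

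The main obstacle is precisely this last reduction: the marginal of ${\bf u}^1$ under ${\sf DB}(1/4)$ is a Bernoulli step of strictly smaller rate than the full ${\bf u}\sim{\sf B}(1/4)$, so the bounds $\Prob{}{{\bf x}+{\bf u}^1\not\in D}\le \eta$ and $\Prob{}{{\bf x}+k{\bf u}^1\not\in D}\le\delta$ are not immediate from the definitions of $\eta$ and $\delta$, and one must run through the two-step decomposition argument of Claim~\ref{claim:two_small_steps} carefully to secure the clean constants. Once that piece is in hand, everything else -- the Markov bound, the tower rule, the union bound over four events, and the ${\bf u}^1\leftrightarrow{\bf u}^2$ symmetry -- is formal bookkeeping and assembles directly to the stated $1-64(\eta+\delta)$.
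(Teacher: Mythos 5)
Your proof is correct and follows essentially the same route as the paper: interpret $\E[p({\bf x},{\bf u})]$ as the unconditional probability of the four-fold disjunction under $({\bf u}^1,{\bf u}^2)\sim{\sf DB}(1/4)$, bound it by $2(\eta+\delta)$ via the union bound, and finish with Markov's inequality at threshold $\frac{1}{32}$ to get $64(\eta+\delta)$. The one step you flag as delicate --- charging the sub-step exit probabilities $\Pr[{\bf x}+{\bf u}^1\notin D]$ and $\Pr[{\bf x}+k{\bf u}^1\notin D]$ (where ${\bf u}^1$ has the smaller marginal rate) back to $\eta$ and $\delta$ --- is precisely the step the paper also leaves implicit in its appeal to ``the union bound'' and to Claim~\ref{claim:two_small_steps}, so your treatment is no less complete than the original.
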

\begin{proof}
  Denote
  \[
  p(x,u) =
  \cProb{({\bf u}^1, {\bf u}^2)\sim {\sf DB}(1/4)}{{\bf u}^1 + {\bf u}^2=u}
  {x+{\bf u}_1\not \in D\lor
  x+{\bf u}_2\not \in D \lor
  x+k{\bf u}_1\not\in D\lor
  x+k{\bf u}_2\not\in D}.
  \]
  Note that
  \[
  \Expect{\substack{{\bf x}\in_R D\\ {\bf u}\sim {\sf B}(1/4)}}{p({\bf x}, {\bf u})}
  =\Prob{\substack{{\bf x}\in_R D\\ ({\bf u}^1, {\bf u}^2)\sim {\sf DB}(1/4)}}
  {
  {\bf x} + {\bf u}^1\not\in D\lor {\bf x} + {\bf u}^2\not\in D\lor
  {\bf x}+k{\bf u}^1\not\in D\lor {\bf x}+k{\bf u}^1\not\in D},
  \]
  which is at most $2(\delta+\eta)$ by the union bound. Thus, by Markov's inequality
  \[
  \Prob{{\bf x}\in_R D, {\bf u}\sim {\sf B}(1/4)}{({\bf x}, {\bf u})\text{ is not decent}}
  =
  \Prob{{\bf x}\in_R D, {\bf u}\sim {\sf B}(1/4)}{p({\bf x}, {\bf u})\geq \frac{1}{32}}\leq 64(\delta+\eta).
  \qedhere
  \]
\end{proof}

\subsubsection{Analyzing the potential function}
Our argument closely follows the argument in Section~\ref{sec:lb}, and below we focus on the necessary adjustments.
Set $Z = \frac{t}{10\log t}$. The definition of the potential function stays as is. We will have several constants
floating around in the proof which are not important for the most part, however we make the distinction between the
constants $c_1,\ldots,c_6$ that will be absolute (i.e. not depending on $M$), and the constants $t_0(M),t_1(M),t_2(M)$
that will depend on $M$.

The following is a variant of Claim~\ref{cl:1}, which is the main difference with the argument from Section~\ref{sec:lb}.
\begin{claim}\label{cl:1'}
	If $x, x+u, x-u, x+ ku, x-ku \in D$ and both $(x,u), (x,-u)$ are decent, then
	\[
	|\Psi(x+ku)-\Psi(x,ku)|\leq t^2\cdot
    e^{-Z/4}.
	\]
\end{claim}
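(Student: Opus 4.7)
The plan is to mimic Claim~\ref{cl:1} pair-by-pair. For a fixed pair $(i,j)$, after possibly swapping indices, assume $\gamma(x_i,x_j)=+1$ so that $d_0 \defeq d(x_i,x_j) = x_i - x_j + z$ for some nonzero integer $z$, and set $\Delta \defeq k(u_i - u_j)$ and $d \defeq d_0 + \Delta$. Since $|u_i - u_j|\leq 2/n$ and $k\leq M n \sqrt{\log t}/t$, we have $|\Delta|\leq 2M\sqrt{\log t}/t = o(1)$. Exactly as in Claim~\ref{cl:1}, when $d\in[0,1/2]$ the pair contributes identically to $\Psi(x+ku)$ and to $\Psi(x,ku)$ (since $d(x_i+ku_i, x_j+ku_j) = d$), while if $d > 1/2$ each contribution is at most $e^{-Z/4}$; combining these two sub-cases and summing over the $\binom{t}{2}$ pairs yields the claimed bound $t^2 e^{-Z/4}$.

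The main obstacle, and the main point of departure from Claim~\ref{cl:1}, is the case $d < 0$. In the continuous setting this was ruled out using continuity of the line $a+[0,1]\cdot u$ inside $D$, which is unavailable here: we only know that $x, x\pm u, x \pm ku \in D$, supplemented by the decency conditions. My plan is to exploit the discrete lattice structure, namely $x\in \frac{1}{n}\ZZ^t$ and $u_i - u_j\in \frac{1}{n}\{0,\pm 1,\pm 2\}$. The linear map $\ell \mapsto (x+\ell u)_i - (x+\ell u)_j + z = d_0 + \ell(u_i - u_j)$ is positive at $\ell=0$ and negative at $\ell=k$, and in most parity classes it vanishes at some integer $\ell^{*}\in[1,k]$; in those classes the point $x + \ell^{*} u$, provided it lies in $D$, has $(x+\ell^{*} u)_i - (x+\ell^{*} u)_j$ equal to the nonzero integer $-z$, which contradicts Lemma~\ref{lem:diff}.

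For the boundary values $\ell^{*}\in\{1,k\}$ we obtain $x+\ell^{*} u\in D$ directly from the hypotheses. For intermediate $\ell^{*}\in[2,k-1]$, the plan is to invoke the decency of $(x,u)$: a random disjoint decomposition $({\bf u}_1,{\bf u}_2)\sim {\sf DB}(1/4)\mid {\bf u}_1 + {\bf u}_2 = u$ places all of $x+{\bf u}_1, x+{\bf u}_2, x+k{\bf u}_1, x+k{\bf u}_2$ in $D$ with probability $\geq 31/32$, so by a union bound one can select a decomposition in which $k{\bf u}_1$ (or an analogous combination $k{\bf u}_1+{\bf u}_2$, etc.) acts on the two coordinates $(i,j)$ like $\ell^{*} u$, again producing a point of $D$ that forces a contradiction with Lemma~\ref{lem:diff}. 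The remaining parity class, $u_i - u_j = \pm 2/n$ with $n d_0$ odd, is the hardest sub-case: no single integer $\ell\in[1,k]$ makes $(x+\ell u)_i - (x+\ell u)_j$ an integer. Here the plan is to combine the decency hypotheses of \emph{both} $(x,u)$ and $(x,-u)$ so that two half-steps of opposite sign in the $(i,j)$-coordinates can be assembled into a point of $D$ whose $(i,j)$-difference is nonetheless an integer, completing the argument. Handling this final parity sub-case will be the main technical hurdle.
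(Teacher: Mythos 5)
Your reduction to the pair-by-pair analysis, the treatment of the cases $d\in[0,1/2]$ and $d>1/2$, and the use of $x+u\in D$ for $\ell^{*}=1$ all match the paper. The genuine gap is in the case $d<0$, which is the entire content of the claim beyond Claim~\ref{cl:1}. Writing $\lambda\in(0,1)$ for the root of $d(x_i,x_j)+\lambda k(u_i-u_j)=0$ (so $\ell^{*}=\lambda k$), the paper does \emph{not} reach intermediate integer values of $\ell^{*}$ via decency: it argues that since $n\,d(x_i,x_j)\in\mathbb{Z}$, $kn(u_i-u_j)\in\{\pm k,\pm 2k\}$ and $k$ is prime, only $\lambda\in\{1/k,\,1/2,\,1/(2k)\}$ can occur, i.e.\ $\ell^{*}\in\{1,\,k/2,\,1/2\}$, and then handles the two non-integer values with decency. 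Your proposed mechanism for intermediate integer $\ell^{*}\in[2,k-1]$ cannot work: conditioned on ${\bf u}^1+{\bf u}^2=u$, the restriction of $k{\bf u}^1$ to the pair $(i,j)$ only realizes the shifts $0$, $\tfrac{k}{2}(u_i-u_j)$ and $k(u_i-u_j)$, never $\ell^{*}(u_i-u_j)$ for $2\le \ell^{*}\le k-1$; and points such as $x+k{\bf u}^1+{\bf u}^2$ are not among the four points that decency places in $D$, so the union bound you invoke has nothing to apply to.

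Moreover, the ``hardest parity sub-case'' you defer is exactly where the decency hypothesis does its work in the paper, and the correct use is a \emph{single} half-step, not an assembly of two: when $u_i-u_j=\pm 2/n$ both coordinates are active, so with probability $1/32$ the conditioned decomposition puts coordinate $i$ (and not $j$) into ${\bf u}^1$, giving ${\bf u}^1_i-{\bf u}^1_j=\pm 1/n$; since decency bounds the failure probability of $x+{\bf u}^1\in D$ (resp.\ $x+k{\bf u}^1\in D$) strictly below $1/32$, both events co-occur with positive probability, and the resulting point of $D$ has $(i,j)$-difference equal to the nonzero integer $-z$ when $\lambda=1/(2k)$ (resp.\ $\lambda=1/2$), contradicting Lemma~\ref{lem:diff}. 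Your alternative of combining decompositions of $u$ and of $-u$ would again require membership in $D$ of points like $x+{\bf u}^1+{\bf v}^1$ that decency does not control. So the central step is left unresolved, and the two concrete mechanisms you sketch for it do not go through. (Separately, your suspicion that there are more values of $\ell^{*}$ than the paper's three is worth noting --- the paper's primality argument is stated very tersely --- but flagging the worry is not the same as resolving it, and in any case your route does not recover even the three cases the paper does treat.)
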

\begin{proof}
   We consider the contribution of each pair $(i,j)$ to $\Psi(x+ku)$ and $\Psi(x,ku)$ separately.
   Without loss of generality we may only consider pairs $i,j$ that $\ga(x_i,x_j)=1$,
   and thus $d(x_i,x_j)= x_i - x_j + z$ for some $z\in \ZZ$, $z\neq 0$.
   Let $d= x_i - x_j + z + k(u_i-u_j)$.
   \begin{proposition}
     $d \geq 0$.
   \end{proposition}
   \begin{proof}
   Assume otherwise. Since $x_i-x_j + z \geq 0$ it follows by continuity that there is
   $\lambda\in[0,1)$ such that $x_i - x_j + z + \lambda k(u_i-u_j) = 0$. Note that $u_i - u_j$ can either be
   $0, \pm \frac{1}{n}, \pm \frac{2}{n}$. If $u_i - u_j = 0$, we get that $x_i - x_j + z = 0$, and as $x\in D$
   this contradicts Lemma~\ref{lem:diff}. Otherwise, multiplying by $n$, we get that $\lambda kn(u_i-u_j)$ is an
   integer. Note that $kn(u_i-u_j)$ is either $\pm k$ or $\pm 2k$, and as $k$ is prime we get that $\lambda = \frac{1}{2}$,
   $\lambda = \frac{1}{k}$ or $\lambda = \frac{1}{2k}$, and we analyze each case separately. If $\lambda = \frac{1}{k}$ then
   we get $x_i - x_j + u_i - u_j + z = 0$, so $x+u\in D$ has two coordinates differing by a non-zero integer,
   contradicting Lemma~\ref{lem:diff}. We next consider the other two cases separately, and assume that
   $u_i - u_j > 0$ --- otherwise we use $-u$ instead of $u$ in the argument below.

   If $\lambda = \frac{1}{2k}$, then necessarily $u_i - u_j = \frac{2}{n}$ and
   and we get that $x_i - x_j + z +\frac{1}{n} = 0$.
   Sample $({\bf u}^1, {\bf u}^2)\sim {\sf DB}(1/4)$ conditioned on ${\bf u}^1+{\bf u}^2 = u$. Note that the event
   that ${\bf u}^1_i = 1/n$ and ${\bf u}^1_j = 0$ occurs with probability $1/32$.
   Since $(x,u)$ is decent, we get that $x + {\bf u}^1\in D$ with probability strictly greater than $\frac{31}{32}$.
   Thus, the probability that $x + {\bf u}^1\in D$ and $({\bf u}^1_i, {\bf u}^1_j) = (1/n, 0)$ is positive, and in this
   case we get
   \[
     (x + {\bf u}^1)_i - (x + {\bf u}^1)_j
     =x_i - x_j + \frac{1}{n} = -z\neq 0,
   \]
   contradicting Lemma~\ref{lem:diff}.

   The case that $\lambda = \frac{1}{2}$ is similar. We must have that $u_i - u_j = \frac{2}{n}$, and thus we get $x_i - x_j + \frac{k}{n} + z = 0$.
   Sample $({\bf u}^1, {\bf u}^2)\sim {\sf DB}(1/4)$ conditioned on ${\bf u}^1+{\bf u}^2 = u$. Note that the event
   that ${\bf u}^1_i = 1/n$ and ${\bf u}^1_j = 0,$ occurs with probability $1/32$.
   Since $(x,u)$ is decent, we get that $x + k{\bf u}^1\in D$ with probability strictly greater than $\frac{31}{32}$.
   Thus, the probability that $x + k{\bf u}^1\in D$ and $({\bf u}^1_i, {\bf u}^1_j) = (1/n, 0)$ is positive, and in this
   case we get
   \[
     (x + k{\bf u}^1)_i - (x + k{\bf u}^1)_j
     =x_i - x_j + \frac{k}{n} = -z\neq 0,
   \]
   contradicting Lemma~\ref{lem:diff}.
\end{proof}
   We therefore get that $d\geq 0$, and the rest of the proof is identical to the proof of Claim~\ref{cl:1}.
\end{proof}

\begin{claim}\label{cl:2'}
There is an absolute constants $c_1>0$ and $t_0(M) > 0$, such that if $t\geq t_0$ then for every $x\in D$
\[
\Psi(x)\cdot
e^{c_1 k^2Z^2/n^2}
\leq
\Expect{{\bf u}\sim{\bf B}(1/4)}{\Psi(x,k {\bf u})}
\leq \Psi(x)\cdot e^{c_1^{-1} k^2Z^2/n^2}.
\]
\end{claim}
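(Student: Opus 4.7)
The plan is to mimic the proof of Claim~\ref{cl:2}: expand $\Psi(x, k{\bf u})$ via linearity of expectation, factor out the $e^{-Zd(x_i,x_j)}$ factors, and reduce to understanding a single one-dimensional moment generating function of the Bernouli step. Concretely, I would write
\[
\Expect{{\bf u}\sim {\bf B}(1/4)}{\Psi(x, k{\bf u})}
= \sum_{i<j} e^{-Z d(x_i,x_j)}\cdot \Expect{{\bf u}}{e^{-Z\gamma(x_i,x_j)\cdot k({\bf u}_i-{\bf u}_j)}},
\]
and then observe that since ${\bf u}_i,{\bf u}_j$ are i.i.d.\ and each symmetric around $0$, the distribution of ${\bf u}_i-{\bf u}_j$ is symmetric, so the sign $\gamma(x_i,x_j)$ is irrelevant and
\[
\Expect{{\bf u}}{e^{-Z\gamma(x_i,x_j)\cdot k({\bf u}_i-{\bf u}_j)}} = \Expect{{\bf u}}{e^{Zk{\bf u}_i}}^{2}.
\]

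The next step is to compute this single-coordinate moment generating function explicitly. Since ${\bf u}_i$ equals $\pm 1/n$ with probability $1/4$ each and $0$ with probability $1/2$, a direct calculation gives
\[
M \;:=\;\Expect{{\bf u}}{e^{Zk{\bf u}_i}}
= \tfrac{1}{2} + \tfrac{1}{2}\cosh(Zk/n).
\]
Therefore $\Expect{{\bf u}}{\Psi(x,k{\bf u})} = \Psi(x)\cdot M^{2}$, so the claim reduces to showing $e^{c_1(Zk/n)^2}\leq M^{2}\leq e^{c_1^{-1}(Zk/n)^2}$ for some absolute constant $c_1>0$ (and $t$ large enough).

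For this I would use that, with $Z=t/(10\log t)$ and $k=Mn\sqrt{\log t}/t$, we have $\xi:=Zk/n = \Theta(1/\sqrt{\log t})\to 0$ as $t\to\infty$, hence the parameter in the $\cosh$ is small. Using $\cosh(\xi)=1+\xi^{2}/2+O(\xi^{4})$ gives $M = 1+\xi^{2}/4+O(\xi^{4})$, so $\log M^{2} = \xi^{2}/2 + O(\xi^{4})$; for $t\geq t_0(M)$ the error term is dominated by $\xi^{2}/4$, so $\log M^{2}\in [\xi^{2}/4,\,\xi^{2}]$, which yields the claim with $c_1=1/4$. There is no real obstacle here: the only thing to watch is that the constants $c_1$ chosen be \emph{absolute} (the smallness condition $t\geq t_0(M)$ is what absorbs the dependence on the large constant $M$ hidden inside $k$).
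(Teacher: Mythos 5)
Your proposal is correct and follows essentially the same route as the paper: expand by linearity, use the symmetry of ${\bf u}_i-{\bf u}_j$ to discard the sign $\gamma(x_i,x_j)$, and then bound the resulting one-dimensional moment generating function (which is $O(1/\sqrt{\log t})$-small in the exponent) between $e^{c\xi^2}$ and $e^{c^{-1}\xi^2}$. The only cosmetic difference is that you factor the MGF of the difference as $\E[e^{Zk{\bf u}_i}]^2=\bigl(\tfrac12+\tfrac12\cosh(Zk/n)\bigr)^2$ and Taylor-expand, whereas the paper works directly with the three-point symmetric distribution of ${\bf w}={\bf u}_i-{\bf u}_j$ and sandwiches $\cosh(kZ{\bf w})$; both are equivalent.
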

\begin{proof}
By linearity of expectation we have
	\[
	\Expect{{\bf u}\sim{\bf B}(1/4)}{\Psi(x,k{\bf u})}
    = \sum_{i<j} e^{-Z \cdot d(x_i,x_j)}\cdot \Expect{{\bf u}\sim{\bf B}(1/4)}{e^{-Z\cdot \ga(x_i,x_j)\cdot k({\bf u}_i-{\bf u}_j)}}.
	\]
    Note that the above expectation does not depend on $i,j$: for every $i,j$ the distribution of
    ${\bf u}_i - {\bf u}_j$ is ${\bf w}$, where
    $\Prob{}{{\bf w} = 2/n} = \Prob{}{{\bf w} = -2/n} = \frac{1}{16}$,
    $\Prob{}{{\bf w} = 1/n} = \Prob{}{{\bf w} = -1/n} = \frac{1}{4}$,
    $\Prob{}{{\bf w} = 0} = \frac{3}{8}$. In particular, this distribution is symmetric
    around $0$ and thus the sign $\ga(x_i,x_j)$ does not affect the expectation. Hence we have
    \[
    \Expect{{\bf u}}{\Psi(x,{\bf u})}   = \Psi(x) \cdot \E_{{\bf w}} [e^{kZ\cdot {\bf w}}]
	=\Psi(x)\cdot \E_{{\bf w}} \left[\frac{e^{kZ\cdot {\bf w}}+e^{-kZ\cdot {\bf w}}}{2}\right].
    \]
    Note that $\card{k Z \cdot {\bf w}}\leq M\frac{n\sqrt{\log t}}{t}\frac{t}{10\log t}\frac{1}{n}\leq 1$ for large enough $t$, so we have that
    \[
    e^{c_1(kZ\cdot {\bf w})^2}
    \leq
    \frac{e^{kZ\cdot {\bf w}}+e^{-kZ\cdot {\bf w}}}{2}
    \leq e^{c_1^{-1}(kZ\cdot {\bf w})^2}.
    \]
    Finally, the expectation of
    $e^{c (kZ\cdot {\bf w})^2}$ is at least $e^{c' k^2 Z^2/n^2}$
    and at most $e^{c'' k^2 Z^2/n^2}$, and the claim follows.
\end{proof}

The proofs of the following several claims are essentially identical to their analogs in Section~\ref{sec:lb},
and are therefore omitted.
We say a point $x$ is {\em good} if any interval of length $\frac{10 \log t}{t}$
on the circle contains at least $\log t$ and at most $100 \log t$ coordinates from $x \pmod{1}$.
By Chernoff bound, a random $x\in D$ is good with probability $>0.999$ given $t$ is large enough.

\begin{claim}\label{cl:6'}
There exists an absolute constant $c_2>0$, such that if
$x$ is good then $\Psi(x)> c_2 \log^2 t$.
\end{claim}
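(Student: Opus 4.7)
The plan is to adapt the argument of Claim~\ref{cl:6} essentially verbatim, with $n$ replaced by $t$ throughout. I would partition the torus $[0,1)$ into $m := t/(10\log t)$ disjoint intervals $I_j$ of length $1/m$, with midpoints $z_j$, and introduce the notion of an \emph{unanimous} interval: $I_j$ is unanimous if there is an integer lift $b_j\in\R$ with $\{b_j\}=z_j$ such that the majority of coordinates $x_i$ with $\{x_i\}\in I_j$ satisfy $|x_i-b_j|<1/m$. Since $Z=t/(10\log t)$, the key quantitative observation is that any pair $(x_r,x_s)$ with $d(x_r,x_s)\le C\log t/t$ contributes at least $e^{-C/10}$ to $\Psi(x)$.

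In the first case, some interval $I_j$ is not unanimous. Goodness forces $I_j$ to contain at least $\log t$ coordinates of $x\pmod 1$. I would pick any such $x_{i^\star}$, write $x_{i^\star}=\lfloor x_{i^\star}\rfloor+\{x_{i^\star}\}$, and test the natural anchor $b=\lfloor x_{i^\star}\rfloor+z_j$. Non-unanimity implies that at least $\tfrac12\log t$ coordinates $x_k$ with $\{x_k\}\in I_j$ satisfy $|x_k-b|\ge 1/m$; writing $x_k=\lfloor x_k\rfloor+\{x_k\}$, this forces $\lfloor x_k\rfloor\ne \lfloor x_{i^\star}\rfloor$, so $x_k-x_{i^\star}$ is within $10\log t/t$ of a non-zero integer and hence $d(x_{i^\star},x_k)\le 10\log t/t$, contributing $e^{-1}$ per pair. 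Summing over the $\log t$ choices of $i^\star$ in $I_j$ (and dividing by $2$ for unordered pairs) yields $\Psi(x)\ge \tfrac{1}{4e}\log^2 t$.

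In the second case, every $I_j$ is unanimous with anchors $b_1,\dots,b_m$. Since $\{b_{j+1}\}-\{b_j\}=1/m$, consecutive anchors satisfy either $|b_j-b_{j+1}|\le 1/m$ or $|b_j-b_{j+1}|\ge 1-1/m$; I would argue the latter must occur for some $j\in\{1,\dots,m\}$ (taking indices mod $m$), since if the former held for all $j=1,\dots,m-1$ then all anchors would form an arithmetic progression of step $1/m$ in $\R$ and the wrap-around pair $(b_m,b_1)$ would automatically realize the far case. Fixing such a pair and writing $b_j-b_{j+1}=z+\alpha$ with $z\in\ZZ\setminus\{0\}$ and $|\alpha|\le 1/m$, the unanimity condition produces index sets $K_j,K_{j+1}$ of size $\ge\tfrac12\log t$ consisting of coordinates within $1/m$ of $b_j$ and $b_{j+1}$ respectively; every cross pair $(x_r,x_s)$ with $r\in K_j,s\in K_{j+1}$ satisfies $x_r-x_s=z+\beta$ with $|\beta|\le 3/m$, so $d(x_r,x_s)\le 3/m$ and contributes at least $e^{-3}$. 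This produces $\Psi(x)\ge \tfrac{1}{4e^3}\log^2 t$, and taking $c_2:=1/(4e^3)$ works in both cases.

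I do not anticipate any genuine obstacle here; the proof is a direct translation of Claim~\ref{cl:6}, and the only thing to verify is that the parameter $Z=t/(10\log t)$ is exactly calibrated so that $Z\cdot d=O(1)$ whenever two coordinates nearly coincide modulo a non-zero integer on the scale $\log t/t$, which is precisely what triggers the constant-factor contributions to $\Psi$.
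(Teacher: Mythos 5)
Your proposal is correct and follows exactly the same route as the paper, which in fact proves Claim~\ref{cl:6'} by declaring it identical to the proof of Claim~\ref{cl:6} with $n$ replaced by $t$; your case analysis (non-unanimous interval versus wrap-around of unanimous anchors), the constants $e^{-1}$ and $e^{-3}$ arising from $Z\cdot(10\log t/t)=1$ and $Z\cdot(3/m)=3$, and the final bound $c_2=\tfrac{1}{4e^3}$ all match. No gaps.
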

\begin{proof}
 The proof is identical to the proof of Claim~\ref{cl:6}.
\end{proof}

\begin{claim}\label{cl:7'}
There exists an absolute constant $c_3>0$, such that if $x$ is good,
then for all $i$ we have
$C_i < c_3 \frac{\Psi(x)}{\log t}$.
\end{claim}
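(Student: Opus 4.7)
The plan is to mimic the proof of Claim~\ref{cl:7} essentially verbatim, with $n$ replaced by $t$ throughout, since in this section we are working with $t$-dimensional points from $C_n^t$ and the parameter $Z$ has been set to $t/(10\log t)$ accordingly.

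First I would observe that $d(x_i,x_j) \geq |\{x_i\}-\{x_j\}|$, where $\{\cdot\}$ denotes the fractional part. Then I would invoke goodness of $x$: for every $k\geq 0$, the interval on the torus at distance between $k \cdot \tfrac{10\log t}{t}$ and $(k+1)\cdot \tfrac{10\log t}{t}$ from $\{x_i\}$ is a union of two arcs each of length $\tfrac{10\log t}{t}$, hence contains at most $200\log t$ coordinates $\{x_j\}$ by goodness. Grouping the sum defining $C_i$ by these distance bands and using $Z\cdot \tfrac{10\log t}{t} = 1$, I get
\[
C_i \;<\; 200\log t \cdot \sum_{k=0}^{\infty} e^{-Z \cdot k \cdot \tfrac{10\log t}{t}} \;=\; 200\log t \cdot \sum_{k=0}^{\infty} e^{-k} \;\leq\; c\,\log t
\]
for an absolute constant $c$.

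Finally I would combine with Claim~\ref{cl:6'}, which gives $\Psi(x) \geq c_2 \log^2 t$ and hence $\log t \leq \tfrac{1}{c_2}\cdot \tfrac{\Psi(x)}{\log t}$. Plugging this in yields $C_i \leq c_3 \Psi(x)/\log t$ with $c_3 := c/c_2$, finishing the proof.

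There is no real obstacle here: the argument is a direct transcription of Claim~\ref{cl:7} with $n\leftrightarrow t$, and the key identity $Z\cdot \tfrac{10\log t}{t}=1$ is built into the chosen value of $Z$ so that the geometric sum converges to an absolute constant. The only thing to double-check is that goodness (defined for the symmetric odd cycle setting in terms of scale $\tfrac{10\log t}{t}$ rather than $\tfrac{10\log n}{n}$) matches the scale appearing in $Z$, which it does by construction.
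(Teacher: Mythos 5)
Your proposal is correct and is essentially the paper's own proof: the paper simply states that the argument is identical to that of Claim~\ref{cl:7}, and your write-up is a faithful transcription of that argument with $n$ replaced by $t$, using the banding of coordinates by distance from $\{x_i\}$, the identity $Z\cdot\tfrac{10\log t}{t}=1$, and Claim~\ref{cl:6'} to convert $\log t$ into $\Psi(x)/\log t$. No gaps.
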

\begin{proof}
	The proof is identical to the proof of Claim~\ref{cl:7}.
\end{proof}

\begin{claim}\label{cl:3'}
There exists an absolute constant $c_5,c_6>0$ and $t_1(M)>0$, such that if $t\geq t_1$ then for all good $x\in D$ we have
\[
	{\sf var}_{{\bf u}\sim{\bf B}(1/4)} [\Psi(x,{\bf u} )] \leq \frac{c_5}{\log t}
    \cdot \left(e^{c_6^{-1}\frac{k^2 Z^2}{n^2}} - e^{c_6\frac{k^2 Z^2}{n^2}}\right)\cdot \Psi(x)^2.
\]
\end{claim}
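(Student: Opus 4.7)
The plan is to mirror the proof of Claim~\ref{cl:3} from the lower bound section, substituting the Gaussian moment computations with the two-sided Bernoulli bounds established in the proof of Claim~\ref{cl:2'}. Starting from the definition of variance and using Claim~\ref{cl:2'} to supply the centering constant, I would first write
\[
{\sf var}_{\bf u}(\Psi(x, k{\bf u})) = \sum_{\substack{i<j\\i'<j'}} e^{-Z(d(x_i,x_j)+d(x_{i'},x_{j'}))} \cdot {\sf Cov}\!\left(e^{-Zk\gamma(x_i,x_j)({\bf u}_i-{\bf u}_j)},\,e^{-Zk\gamma(x_{i'},x_{j'})({\bf u}_{i'}-{\bf u}_{j'})}\right).
\]
Because the coordinates of ${\bf u}\sim{\sf B}(1/4)$ are independent, the covariance vanishes whenever $\{i,j\}\cap\{i',j'\}=\emptyset$, leaving only the diagonal pairs $(i,j)=(i',j')$ and the triple terms sharing exactly one index.

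Next I would bound the surviving covariances using a single-coordinate Bernoulli moment estimate: since $u_i\in\{0,\pm 1/n\}$ with probabilities $(1/2,1/4,1/4)$, one has for $|c|=O(kZ)$ with $kZ/n=o(1)$ the two-sided bound $\mathbb{E}[e^{cu_i}]=1+\tfrac{1}{4}(\cosh(c/n)-1)\in[e^{\tilde c\,c^2/n^2},\,e^{\tilde c^{-1} c^2/n^2}]$ for some absolute $\tilde c\in(0,1)$, exactly as in the proof of Claim~\ref{cl:2'}. For a diagonal pair, factoring independence gives $\mathbb{E}[e^{-2Zk\gamma({\bf u}_i-{\bf u}_j)}]$ in $[e^{4\tilde c k^2Z^2/n^2},e^{4\tilde c^{-1}k^2Z^2/n^2}]$ and $(\mathbb{E}[e^{-Zk\gamma({\bf u}_i-{\bf u}_j)}])^2$ in $[e^{2\tilde c k^2Z^2/n^2},e^{2\tilde c^{-1}k^2Z^2/n^2}]$; their difference is controlled by $e^{c_6^{-1}k^2Z^2/n^2}-e^{c_6k^2Z^2/n^2}$ with $c_6$ chosen small enough that $c_6\leq 2\tilde c$ and $c_6^{-1}\geq 4\tilde c^{-1}$. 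For a triple sharing index $i$, expanding the exponent and factoring independence gives $\mathbb{E}[e^{-Zk(\gamma_1+\gamma_2){\bf u}_i}]\cdot\mathbb{E}[e^{Zk\gamma_1{\bf u}_j}]\cdot\mathbb{E}[e^{Zk\gamma_2{\bf u}_\ell}]$ minus the analogous factored product; since $|\gamma_1+\gamma_2|\in\{0,2\}$, each factor again lies in the same exponential window, and (possibly after adjusting $c_6$) the covariance is bounded by the same expression.

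Summing over pairs and triples exactly as in the Gaussian proof telescopes the bound into a sum of squares:
\[
{\sf var}_{\bf u}(\Psi(x, k{\bf u})) \ll \left(e^{c_6^{-1}k^2Z^2/n^2}-e^{c_6k^2Z^2/n^2}\right)\sum_{i}\!\left(\sum_{j\neq i} e^{-Zd(x_i,x_j)}\right)^{\!2} = \left(e^{c_6^{-1}k^2Z^2/n^2}-e^{c_6k^2Z^2/n^2}\right)\sum_i C_i^2.
\]
Since $x$ is good, Claim~\ref{cl:7'} yields $C_i\ll \Psi(x)/\log t$, and $\sum_i C_i = 2\Psi(x)$, so $\sum_i C_i^2 \leq (\max_i C_i)\sum_i C_i\ll \Psi(x)^2/\log t$, completing the bound with $c_5$ chosen appropriately.

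The only genuine (but mild) nuisance is that, unlike the clean identity $\mathbb{E}[e^{cZ\mathcal N(0,\sigma^2)}]=e^{c^2Z^2\sigma^2/2}$ used in Claim~\ref{cl:3}, the Bernoulli exponential moment only admits a two-sided bound, which is what forces the asymmetric form $e^{c_6^{-1}(\cdot)}-e^{c_6(\cdot)}$ rather than $e^{4(\cdot)}-e^{2(\cdot)}$. The hypothesis $t\geq t_1(M)$ (equivalently $kZ/n=o(1)$) is invoked precisely to keep all the relevant Taylor expansions of $\cosh$ valid; apart from this bookkeeping, the proof is a line-by-line adaptation of Claim~\ref{cl:3}.
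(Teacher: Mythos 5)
Your proof is correct and is exactly what the paper intends: the paper's ``proof'' of this claim is the one-line remark that it is a straightforward adaptation of Claim~\ref{cl:3}, and your write-up carries out that adaptation faithfully (the diagonal/triple covariance split with disjoint pairs vanishing by independence, the two-sided Bernoulli exponential-moment window from Claim~\ref{cl:2'} in place of the Gaussian MGF identity, and then Claims~\ref{cl:6'} and~\ref{cl:7'} to bound $\sum_i C_i^2\ll \Psi(x)^2/\log t$). The only blemish is the constant in $\mathbb{E}[e^{cu_i}]=1+\tfrac12(\cosh(c/n)-1)$ (you wrote $\tfrac14$), which is immaterial to the argument.
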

\begin{proof}
  The proof is a straightforward adaptation of the proof of Claim~\ref{cl:3}.
\end{proof}

Consequently, we have to adjust Claim~\ref{cl:4} as follows.
\begin{claim}
	\label{cl:4'}
There is an absolute constant $M>0$ and $t_2>0$ such that if
$k = M\frac{n\sqrt{\log t}}{t}$ and $t\geq t_1$, then for all good $x\in D$ we have
\[
\Pr_{{\bf u}\sim{\bf B}(1/4)}\left[\Psi(x,{\bf u})>\Psi(x)+\frac{c_1^2}{2}\frac{k^4 Z^4}{n^4}\Psi(x)\right]\geq 0.99.
\]
\end{claim}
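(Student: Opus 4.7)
The plan is to follow the proof of Claim~\ref{cl:4} almost verbatim, with Claims~\ref{cl:2'} and~\ref{cl:3'} taking the roles of Claims~\ref{cl:2} and~\ref{cl:3}. Concretely, using the inequality $e^{y}\geq 1+y+y^2/2$ for $y\geq 0$ together with Claim~\ref{cl:2'} (applied with $y=c_1 k^2Z^2/n^2$), I would first show
\[
\Expect{{\bf u}\sim{\bf B}(1/4)}{\Psi(x,k{\bf u})}\geq \Psi(x)\cdot\left(1+c_1\,\frac{k^2Z^2}{n^2}+\frac{c_1^2}{2}\,\frac{k^4Z^4}{n^4}\right).
\]
In particular, the complement of the event we are after is contained in the deviation event
\[
\left|\Psi(x,k{\bf u})-\Expect{{\bf u}'}{\Psi(x,k{\bf u}')}\right|\geq \Psi(x)\cdot c_1\,\frac{k^2Z^2}{n^2},
\]
so it suffices to upper bound the probability of this event by $0.01$.

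To do so I would apply Chebyshev using Claim~\ref{cl:3'}. The key point is that for large enough $t$ (which will ultimately dictate the choice of $t_2(M)$), $kZ/n$ is small, so Claim~\ref{cl:3'} combined with the linear approximation $e^a-e^b\ll a-b$ for $a,b=O(1)$ gives a variance bound of the shape
\[
{\sf var}_{{\bf u}}[\Psi(x,k{\bf u})]\ll \frac{c_5}{\log t}\cdot\frac{k^2Z^2}{n^2}\cdot\Psi(x)^2 .
\]
Plugging this into Chebyshev, the deviation probability is bounded by
\[
\frac{{\sf var}_{{\bf u}}[\Psi(x,k{\bf u})]}{\bigl(\Psi(x)\cdot c_1 k^2Z^2/n^2\bigr)^2}\ll \frac{c_5}{c_1^2}\cdot\frac{1}{\log t\cdot (kZ/n)^2}.
\]

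Finally, I substitute the parameters: $Z=t/(10\log t)$ and $k=M\,n\sqrt{\log t}/t$, which yields $kZ/n=M/(10\sqrt{\log t})$, hence $\log t\cdot (kZ/n)^2=M^2/100$. Therefore the deviation probability is bounded by an absolute multiple of $c_5/(c_1^2 M^2)$, which can be made smaller than $0.01$ by choosing $M$ large enough in terms of the absolute constants $c_1, c_5$. This choice of $M$ (together with a sufficiently large $t_2(M)$ ensuring $kZ/n$ is small and that Claims~\ref{cl:2'} and~\ref{cl:3'} apply) completes the proof. The only mildly subtle part is the linearization of the exponential difference in Claim~\ref{cl:3'}, but this is purely a calculus step that is handled by observing that the argument $kZ/n=M/(10\sqrt{\log t})$ tends to $0$ as $t\to\infty$.
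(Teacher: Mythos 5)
Your proposal is correct and follows essentially the same route as the paper's proof: lower-bounding $\E[\Psi(x,k{\bf u})]$ via Claim~\ref{cl:2'} and $e^{y}\geq 1+y+y^2/2$, reducing to a deviation event, bounding the variance via Claim~\ref{cl:3'} with the linearization $e^{a}-e^{b}\ll a-b$ for small arguments, and applying Chebyshev so that the substitution $kZ/n=M/(10\sqrt{\log t})$ makes the bound an absolute multiple of $c_5/(c_1^2 c_6 M^2)$. The only difference is cosmetic: the paper fixes $M=\sqrt{200 c_5/(c_1^2 c_6)}$ explicitly rather than saying ``$M$ large enough.''
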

\begin{proof}
    Let $c_1,\ldots,c_6$ be the constants from the previous claims, and choose
    $M = \sqrt{\frac{200 c_5}{c_1^2 c_6}}$. Then take $t_0(M)$, $t_1(M)$ from Claims~\ref{cl:2'}~\ref{cl:3'} and choose $t_2(M) = \max(t_0(M), t_1(M))$.
    We upper bound the probability of the complement event. Using Claim \ref{cl:2'}
    (and $e^t\geq 1+t+t^2/2$), we get
	\[
	\E_{{\bf u}\sim{\bf B}(1/4)}[\Psi(x,{\bf u})]\geq \Psi(x)\cdot \left(1+c_1\frac{k^2Z^2}{n^2}+\frac{c_1^2}{2}\frac{k^4Z^4}{n^4}\right).
	\]
	Hence
	\[
	\Pr_{{\bf u}\sim{\bf B}(1/4)}\left[\Psi(x,{\bf u})\leq \Psi(x)+\frac{c_1^2}{2}\frac{k^4Z^4}{n^4}\Psi(x)\right] \leq
    \Pr_{{\bf u}\sim{\bf B}(1/4)}\left[\card{\Psi(x,{\bf u})- \hspace{-2ex}\E_{{\bf u'}\sim{\bf B}(1/4)}[\Psi(x,{\bf u'})]}\geq \Psi(x)  c_1\frac{k^2Z^2}{n^2}\right].
	\]
    We want to upper bound the probability of the last event using Chebyshev's inequality.
    Since $x$ is good, the conclusion of Claim~\ref{cl:3'} holds, and so
	\[
	{\sf var}_{{\bf u}\sim{\bf B}(1/4)} [\Psi(x,{\bf u})]
    \leq
    \frac{c_5}{\log t}  \left(e^{c_6^{-1}\frac{k^2Z^2}{n^2}} - e^{c_6\frac{k^2Z^2}{n^2}}\right) \cdot \Psi(x)^2
    \leq \frac{c_5}{\log t} \cdot \frac{2 c_6^{-1} k^2Z^2}{n^2}\cdot \Psi(x)^2,
	\]
    for sufficiently large $t$. Therefore, applying Chebyshev's inequality we see the probability in question is at most
    \[
    \frac{{\sf var}_{{\bf u}\sim{\bf B}(1/4)} [\Psi(x,{\bf u})]}{\Psi(x)^2\cdot  c_1^2\frac{k^4 Z^4}{n^4}} \leq
	\frac{\frac{c_5}{\log t} \cdot \frac{2 c_6^{-1} k^2 Z^2}{n^2}\cdot \Psi(x)^2}
    {\Psi(x)^2\cdot  c_1^2\frac{k^4 Z^4}{n^4}}
    =\frac{2 c_5}{c_1^2 c_6}\frac{n^2}{k^2 Z^2\log t}
    =\frac{2 c_5}{c_1^2 c_6}\frac{1}{M^2}
    \leq 0.01.\qedhere
    \]
    \end{proof}

    \subsubsection{Finishing the argument}
    For each $u$, denote $\delta_u = \Prob{{\bf x}\in D}{{\bf x} + k u\not\in D}$,
    and note that $\delta = \Expect{{\bf u}}{\delta_{\bf u}}$.
\begin{claim}\label{cl:tv5'}
For each $u$, $\mathcal{D}_{TV}[{\bf x}; {\bf x}-ku]\leq \delta_u + \delta_{-u}$.
\end{claim}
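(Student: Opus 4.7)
The plan is to mirror the proof of Claim~\ref{cl:tv5}, replacing Lebesgue measure on $D$ with uniform counting measure. Throughout, let $\mu$ denote the uniform probability measure on $D$, so that $\Prob{{\bf x}\in D}{{\bf x}\in K}=\mu(K\cap D)$ for every $K\subseteq L$, and note that $\mu$ is translation invariant in the sense that $\mu(A+v)=\mu(A)$ for every $A\subseteq L$ and lattice direction $v$ (translation by $v$ is a bijection on $L$, and $|D|=n^t=|D+v|$). The key identity we will use is
\[
\mu((D+ku)\setminus D) \;=\; \Prob{{\bf x}\in D}{{\bf x}+ku\notin D} \;=\; \delta_u,
\]
together with its mirror $\mu(D\setminus(D-ku))=\delta_{-u}$.

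Fix a set $K\subseteq L$. I split it as $K=(K\cap D)\cupdot(K\setminus D)$ and prove two one-sided bounds, exactly as in Claim~\ref{cl:tv5}. For the piece $K'\defeq K\cap D\subseteq D$ I have $\Prob{{\bf x}\in D}{{\bf x}\in K'}=\mu(K')$, while $\Prob{{\bf x}\in D}{{\bf x}-ku\in K'}=\mu((K'+ku)\cap D)$. Since $(K'+ku)\cap D\subseteq K'+ku$, the difference satisfies
\[
0\;\leq\;\mu(K')-\mu((K'+ku)\cap D)\;=\;\mu((K'+ku)\setminus D)\;\leq\;\mu((D+ku)\setminus D)\;=\;\delta_u.
\]
For the piece $K''\defeq K\setminus D\subseteq L\setminus D$ I have $\Prob{{\bf x}\in D}{{\bf x}\in K''}=0$, and
\[
\Prob{{\bf x}\in D}{{\bf x}-ku\in K''}\;\leq\;\Prob{{\bf x}\in D}{{\bf x}-ku\notin D}\;=\;\delta_{-u},
\]
so the difference on this piece lies in $[-\delta_{-u},0]$.

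Applying the triangle inequality gives
\[
\bigl|\Prob{{\bf x}\in D}{{\bf x}\in K}-\Prob{{\bf x}\in D}{{\bf x}-ku\in K}\bigr|\;\leq\;\delta_u+\delta_{-u},
\]
and taking the supremum over $K$ yields $\mathcal{D}_{TV}[{\bf x};{\bf x}-ku]\leq \delta_u+\delta_{-u}$. There is no real obstacle here: the argument is a verbatim discrete analog of Claim~\ref{cl:tv5}, and the only point deserving care is the translation-invariance identity $\mu((D+ku)\setminus D)=\delta_u$, which holds because translation by a lattice vector is a volume-preserving bijection on $L$.
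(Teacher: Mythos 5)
Your proof is correct and is exactly the argument the paper intends: the paper's own proof of this claim is just ``a direct conversion of the proof of Claim~\ref{cl:tv5} to the discrete setting, replacing Borel sets with finite sets,'' which is precisely what you carry out, with the needed normalization by $|D|=n^t$ and the translation invariance of counting measure on $L$ handled correctly. One cosmetic slip: the ``mirror'' identity should read $\mu(D\setminus(D+ku))=\delta_{-u}$ (as written, $\mu(D\setminus(D-ku))$ equals $\delta_u$), but this remark is never used --- the bound in your second case relies only on the definition of $\delta_{-u}$ and is fine.
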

\begin{proof}
  The proof is a direct conversion of the proof of Claim~\ref{cl:tv5} to the discrete setting, replacing the notion
  of ``Borel sets'' with finite sets.
\end{proof}

We can now prove Lemma~\ref{lem:main_odd_up}.
\begin{proof}[Proof of Lemma~\ref{lem:main_odd_up}]
    Take $M$ and $t_2$ from Claim~\ref{cl:4'}. We may assume that $t\geq t_2$, otherwise
    the lemma just follows from the fact that $\eta\geq \Omega(1/n)$, which holds as the value of the $t$-fold symmetric repeated game
    is at most the value of the original game, which is $1-\Theta(1/n)$.

    Take ${\bf x}\in_R D$, ${\bf u}\sim{\sf B}(1/4)$. Let $E_1$ be the event that $({\bf x}, {\bf u}), ({\bf x}, -{\bf u})$ are decent,
    $E_2$ be the event that $\Psi({\bf x})\leq c_2\log^2 t$,
    $E_3$ the event that ${\bf x}+k{\bf u}, {\bf x}-k{\bf u}, {\bf x}+{\bf u}, {\bf x}-{\bf u}\in D$, and let
    $E_4$ be the event that $\Psi({\bf x},{\bf u})\geq \Psi({\bf x})+\frac{c_1^2}{2}\frac{k^4 Z^4}{n^4}\Psi({\bf x})$.
    Finally, let $E_5$ be the event that
	$\Psi({\bf x}+{\bf u})>\Psi({\bf x})$ and denote $E({\bf x},{\bf u}) = E_1\cap \overline{E_2}\cap E_3\cap E_4$.
    Note that if the event $E$ holds for $x,u$, then $E_5$ also holds, since by Claim~\ref{cl:1'}:
	\[
	\Psi(x+u)\geq \Psi(x,u)-t^2\cdot e^{-Z/4}\geq \Psi(x)+\frac{c_1^2}{2}\frac{k^4 Z^4}{n^4}\Psi(x)-t^2\cdot e^{-Z/4}> \Psi(x).
	\]
    In the last inequality, we used the fact that if $E$ holds, then $\frac{c_1^2}{4}\frac{k^4 Z^4}{n^4}\Psi(x)\geq \Omega(1)$,
    and $t^2\cdot e^{-Z/4} = n^2 e^{-t/40\log t} = o(1)$ for large enough $t$.

    By Claim~\ref{claim:decent}, $\Prob{}{E_1}\geq 1-128(\delta+\eta)$.
    By Claim \ref{cl:6'} the probability of $E_2$ is at most the probability ${\bf x}$ is bad, hence it is at most $0.005$,
    by Claim~\ref{claim:two_small_steps} $\Prob{}{E_3}\geq 1-4(\delta+\eta)$, and by Claim~\ref{cl:4'},
    $\Prob{}{E_4}\geq 0.99$. We thus get
	\begin{equation}\label{eq:3}
	\Pr_{{\bf x},{\bf u}}[E({\bf x},{\bf u})]\geq 0.99 - 4(\delta+\eta) - 0.005 - 128(\delta+\eta) \geq 0.95 - 132(\delta+\eta).
	\end{equation}
    Fix $u$. Using Claim~\ref{cl:tv5'} we get that
    \[
    \Prob{{\bf x}}{E({\bf x}-u,u)} \geq
    \Prob{{\bf x}}{E({\bf x},u)} - \mathcal{D}_{TV}[{\bf x}; {\bf x}-u]
    \geq \Prob{{\bf x}}{E({\bf x},u)} - \delta_u-\delta_{-u}.
    \]
    By the union bound, we now conclude that
    \[
    \Prob{{\bf x}}{E({\bf x}-u,u)\cap E({\bf x},u)}\geq
    1 -  \Prob{{\bf x}}{\overline{E({\bf x}-u,u)}}
    -  \Prob{{\bf x}}{\overline{E({\bf x},u)}}
    \geq 2 \Prob{{\bf x}}{E({\bf x},u)} -1-\delta_u-\delta_{-u}.
    \]
    Taking expectation over a random step ${\bf u}$, we get that
    \[
      \Prob{{\bf x},{\bf u}}{E({\bf x}-{\bf u},{\bf u})\cap E({\bf x},{\bf u})}\geq
     2\Prob{{\bf x},{\bf u}}{E({\bf x},{\bf u})} -1- 2\Expect{{\bf u}}{\delta_{\bf u}}
     \geq 0.9 - 270(\delta+\eta),
    \]
    where we used~\eqref{eq:3}. Next, when both $E(x-u,u)$ and $E(x,u)$ hold, we have by the previous observation that $E_5$ holds for both pairs $(x-u,u)$ and $(x,u)$,
    and so $\Psi(x+u) > \Psi (x) = \Psi((x-u) + u) > \Psi(x-u)$. Thus, we get that
    $\Prob{{\bf x},{\bf u}}{\Psi({\bf x}+{\bf u}) > \Psi({\bf x}-{\bf u})}\geq 0.9 - 270(\delta+\eta)$.
    On the other hand, the probability on the left hand side is at most $0.5$; this follows as
    $\Prob{{\bf x},{\bf u}}{\Psi({\bf x}+{\bf u}) > \Psi({\bf x}-{\bf u})}
    = \Prob{{\bf x},{\bf u}}{\Psi({\bf x}-{\bf u}) > \Psi({\bf x}+{\bf u})}$, and their sum is at most $1$.
    Combining the two inequalities we get that $\eta+\delta \geq \Omega(1)$, which using Claim~\ref{claim:trivial_delta_eta}
    implies that $\eta = \Omega(1/k)$ as desired.
\end{proof}

\subsection{The lower bound: proof of Theorem~\ref{thm:val_sym_cycle_lb}}
In this section we use the symmetric body constructed in Theorem~\ref{thm:ub} in order
to prove Theorem~\ref{thm:val_sym_cycle_lb}.

\subsubsection{Tools}
We need the following isoperimetric inequality.
\begin{fact}\label{fact:iso}
  For all $\eps>0$ there is $\delta>0$ such that the following holds.
  Let $A\subseteq[0,1]^n$ be a measurable set such that
  $\eps\leq {\sf vol}(A)\leq 1-\eps$. Then ${\sf area}( A\cap {\sf interior}([0,1]^n))\geq \delta$.
\end{fact}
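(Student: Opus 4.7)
The plan is to reduce Fact~\ref{fact:iso} to the classical dimension-free Cheeger-type isoperimetric inequality for the unit cube with the uniform measure: for every measurable $A\subseteq[0,1]^n$,
\[
{\sf area}\bigl(\partial A\cap {\sf interior}([0,1]^n)\bigr)\;\geq\; c\cdot \min\!\bigl({\sf vol}(A),\,1-{\sf vol}(A)\bigr),
\]
with an absolute constant $c>0$ independent of $n$. Given this, the Fact follows immediately by taking $\delta=c\eps$. (I read the statement as a lower bound on the inner perimeter of $A$; the literal expression ``${\sf area}(A\cap {\sf interior}([0,1]^n))$'' appears to be a typo for ``${\sf area}(\partial A\cap {\sf interior}([0,1]^n))$''.)

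To establish the Cheeger-type inequality, I would induct on $n$. The base case $n=1$ is immediate: any measurable $A\subseteq[0,1]$ with ${\sf vol}(A)\in(0,1)$ has at least one point in $\partial A\cap(0,1)$, so its inner perimeter is at least $1$. For the inductive step, denote the inner perimeter by $P(A)$ and the fiber-measure function by $w(t):={\sf vol}_{n-1}(A\cap\{x_n=t\})$. I would split $P(A)$ into a horizontal part $H(A):=\int_0^1 P_{n-1}(A\cap\{x_n=t\})\,dt$ and a vertical part $V(A)$ coming from the piece of $\partial A$ whose unit normal has a significant $e_n$-component. The induction hypothesis applied to each fiber gives $H(A)\geq c\int_0^1 \min(w,1-w)\,dt$, while a coarea computation yields $V(A)\geq TV(w)$. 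A direct $1$-dimensional argument (using $|w(s)-w(t)|\leq TV(w)$ to obtain $\mathrm{Var}(w)\leq TV(w)^2/4$, and hence $TV(w)\geq 4\,\mathrm{Var}(w)$ when $TV(w)\leq 1$) shows that the combination of these two bounds dominates $\min({\sf vol}(A),1-{\sf vol}(A))$ up to an absolute factor.

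The hard part will be preventing the constant $c$ from degrading at each step. A naive combination via $V(A)+H(A)\leq \sqrt 2\,P(A)$ (which follows from $|\nu_n|+\sqrt{1-\nu_n^2}\leq \sqrt 2$ pointwise) loses a factor of $\sqrt 2$ per induction step and does not yield a uniform constant on its own. Sharper arguments---such as a careful case analysis separating ``$w$ nearly constant'' (in which case $H(A)$ alone suffices) from ``$w$ highly varying'' (in which case $V(A)$ alone suffices), or the semigroup interpolation method of Bobkov and Ledoux---close the induction with a dimension-free constant. For the application in Section~\ref{sec:val_sym_cycle}, however, a dimension-dependent bound $\delta=\Omega(\eps/\mathrm{poly}(n))$ would already suffice, and such a bound follows from the above argument run with the $\ell^1$-perimeter (where the relation $V+H=P^{\ell^1}$ holds with no $\sqrt 2$ loss) combined with the standard conversion $P^{\ell^1}\leq \sqrt n\cdot P$.
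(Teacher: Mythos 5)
Your reading of the statement is right: the intended quantity is ${\sf area}(\partial A\cap {\sf interior}([0,1]^n))$, i.e.\ the perimeter of $A$ relative to the open cube, and the content of the Fact is a dimension-free Cheeger-type inequality for the uniform measure on the cube. Your route is genuinely different from the paper's. The paper does not tensorize at all: it cites two results from Ros's survey, first a comparison theorem transferring any $A\subseteq[0,1]^n$ of measure $\alpha$ and relative perimeter $S$ to a set of Gaussian measure $\alpha$ and Gaussian perimeter at most $S$, and then the Gaussian isoperimetric inequality, so that $\delta$ can be taken to be the Gaussian perimeter of a halfspace of measure $\eps$ --- manifestly dimension-free. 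Your inductive slicing argument is the other standard road to the same place, but as you yourself concede, the step you write down does not close: with $H(A)\geq c\int\min(w,1-w)$ and $V(A)\geq TV(w)$, the natural case analysis (e.g.\ splitting on whether $\int (w-\tfrac12)_+$ is small or large relative to ${\sf vol}(A)$) only yields $cH+V\gtrsim \tfrac{c}{2}\min({\sf vol}(A),1-{\sf vol}(A))$, so the constant halves at every level. The actual proof therefore lives entirely in the deferred citation to Bobkov--Ledoux-type functional inequalities (or Bobkov--Houdr\'e for products of log-concave measures). That is an acceptable way to prove the Fact, but it should be presented as the proof, not as a patch on an induction that does not work as stated.

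The concrete error is your fallback claim that a dimension-dependent bound $\delta=\Omega(\eps/\mathrm{poly}(n))$ ``would already suffice'' for the application. It would not. Fact~\ref{fact:iso} is invoked inside Lemma~\ref{lemma:most_are_decisive} in dimension $t$: each indecisive box, after rescaling to $[0,1]^t$, contributes $\delta$ to the surface-area functional $\Phi$, hence $\Omega(\delta\, n^{1-t})$ before rescaling, and the conclusion $\#\{\text{indecisive boxes}\}\leq O(A n^{t-1})$ uses $\delta=\Omega(1)$. If $\delta$ degrades by any factor $\mathrm{poly}(t)$ (even the $\sqrt{t}$ you would lose converting from $\ell^1$-perimeter), the count of indecisive boxes, and with it the failure probability in Lemma~\ref{lem:neighbour_of_decisive}, inflates by that same factor, and Theorem~\ref{thm:val_sym_cycle_lb} weakens from $1-O\left(\frac{t}{n\sqrt{\log t}}\right)$ to $1-O\left(\frac{t\cdot\mathrm{poly}(t)}{n\sqrt{\log t}}\right)$. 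Since $t$ is not bounded in terms of $n$ in that theorem, this is a genuine loss. So the dimension-free constant is not a luxury here; you must either carry out the sharper tensorization or take the Gaussian-comparison route the paper takes.
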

\begin{proof}
This is the combination of~\cite[Theorem 6, Theorem 7]{ros2001isoperimetric} as we explain below. Theorem 7 therein asserts
that if $A\subseteq[0,1]^n$ has Lebesgue measure $\alpha$ and surface area $S$, then there is a measurable set
in Gaussian space $B\subseteq \mathbb{R}^n$ with Gaussian measure $\alpha$ and (Gaussian) surface area at most $S$. Now~\cite[Theorem 7]{ros2001isoperimetric}
asserts among sets with Gaussian measure $\alpha$, the minimizers of surface area are halfspaces of the form
$B_{\beta} = \sett{z\in\mathbb{R}^n}{z_1\leq \beta}$ where $\beta$ is chosen so that the Gaussian measure of $B_{\beta}$
is $\alpha$, so $S\geq {\sf surface-area}(B_{\beta})$, which is bounded away from $0$ if $\alpha$ is bounded away from $0$ and $1$.
\end{proof}

Secondly, we need a slight strengthening of Theorem~\ref{thm:ub}. Recall that in Sections~\ref{sec:ub} and~\ref{sec:ns_sa_reduce} we have
constructed a semi-algebraic, bounded tiling body $D\subseteq\mathbb{R}^t$ whose surface area is $A = O(t/\sqrt{\log t})$, and for small enough
$\eps$ we have
\[
\Prob{{\bf x}\in D, \bm{\De}\sim N(0,\eps^2 I_t)}{{\bf x}+\bm{\De}\not\in D} \ll A\eps.
\]
We note that the argument in Section~\ref{sec:ub} holds in fact for more general class of $\bm{\De}$ (we only used the fact it
is independent of ${\bf x}$, has mean $0$ and is sub-Gaussian). Thus, we consider the distribution $\bm{\De}_{\eps}\in \set{0,\pm\eps/n}^t$ of Bernouli steps, namely
for each $i$ independently choosing $(\bm{\De}_{\eps})_i$ as $\Prob{}{(\bm{\De}_{\eps})_i = 0} = \half$,
$\Prob{}{(\bm{\De}_{\eps})_i = -\frac{\eps}{n}} = \frac{1}{4}$, $\Prob{}{(\bm{\De}_{\eps})_i = \frac{\eps}{n}} = \frac{1}{4}$.
Thus, running the argument therein we get:
\begin{lemma}\label{lem:main_ub_use'}
  The distribution over tiling bodies $(D_{\vec{r}})_{\vec{r}}$ from Lemma~\ref{lem:main_ub_use} satisfies, for small enough $\eps>0$
    \[
    \Expect{\vec{r}}{\Prob{{\bf x}, \bm{\De}_{\eps}}{\text{At least one of the conditions of Claim~\ref{cl:u1} fail for ${\bf x}$ and ${\bf x}+\bm{\De}_{\eps}$}}}
    \ll A \frac{\eps}{n}.
    \]
\end{lemma}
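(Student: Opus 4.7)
My plan is to keep the same randomized construction $(D_{\vec{r}})_{\vec{r}}$ from Section~\ref{sec:ub} and to re-run the proof of Lemma~\ref{lem:main_ub_use} with the discrete step $\bm{\De}_{\eps}$ in place of the Gaussian step, treating $\sigma := \eps/n$ as the effective per-coordinate scale that was previously played by the Gaussian standard deviation. Because the construction is unchanged, the second item of the lemma (countable union of semi-algebraic sets) is inherited verbatim from Lemma~\ref{lem:main_ub_use}, so the only real work is in the quantitative estimate.

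The key point, which is already flagged in the paragraph preceding the lemma, is that the proof of Lemma~\ref{lem:main_ub_use} invokes the step distribution only through four features: (i) independence of the coordinates; (ii) each coordinate is mean-zero and symmetric about $0$; (iii) each coordinate is tail-bounded at scale $\eps/n$ (sub-Gaussian behaviour, with the threshold $\eps^{0.95}/n$ satisfied with room to spare); and (iv) $\E[\bm{\De}_i^2]=\Theta(\eps^2/n^2)$. The Bernoulli step $\bm{\De}_{\eps}$ satisfies all four requirements: its coordinates are independent by definition, symmetric about $0$, bounded by $\eps/n$ deterministically, and have variance $\eps^2/(2n^2)$.

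Walking through Section~\ref{sec:analysis_rare}, I would verify each claim in turn. Claim~\ref{cl:A-1} becomes trivial, since $|(\bm{\De}_{\eps})_i|\leq \eps/n\ll \eps^{0.95}/n$ almost surely for $\eps$ small. The subcases~(I) and~(II) of Case~(B) use only $\E[|\bm{\De}_i|]$ together with pointwise bounds and Cauchy--Schwarz, all of which are satisfied by the Bernoulli step (and are in fact stronger, since $|(\bm{\De}_\eps)_i|$ is deterministically small). Claim~\ref{cl:A-4}, the main technical step, applies a second-order Taylor expansion to $g_j$ and invokes only $\E[\bm{\De}_i]=0$, $\E[\bm{\De}_i^2]=\sigma^2$, and the deterministic bound $|\bm{\De}_i|\leq \eps^{0.95}$; all three hold for $\bm{\De}_{\eps}$ up to constant factors. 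Claims~\ref{cl:A-5}, \ref{cl:A-6}, \ref{cl:A-7}, and the final expansion of $\|p({\bf x})-p({\bf x}+\bm{\De})\|_1$ then go through unchanged, producing the desired bound of $\ll A\cdot(\eps/n)$.

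The main (minor) obstacle is verifying that no specifically Gaussian identity sneaks into the argument. Inspection of Section~\ref{sec:analysis_rare} confirms this: no moment generating function identity of the form $\E[e^{Z\mathcal{N}(0,c^2)}]=e^{Z^2 c^2/2}$ is used in the upper bound proof (such identities appear only in Section~\ref{sec:lb}). Once this check is completed, the quantitative estimate matches the scaling $A\cdot(\eps/n)$, where $\eps/n$ plays exactly the role that $\eps$ played in Lemma~\ref{lem:main_ub_use}, establishing the lemma.
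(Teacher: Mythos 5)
Your proposal is correct and matches the paper's own (very terse) justification, which likewise observes that the analysis in Section~\ref{sec:analysis_rare} uses only independence of $\bm{\De}$ from ${\bf x}$, mean zero, and sub-Gaussian tails at the per-coordinate scale, all of which the bounded symmetric Bernoulli step satisfies. Your more detailed walk-through of Claims~\ref{cl:A-1} and~\ref{cl:A-4} and the Case~(B) estimates, and your check that no Gaussian moment-generating-function identity enters the upper-bound argument, is exactly the verification the paper leaves implicit.
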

Slightly adapting the argument from Section~\ref{sec:ns_sa_reduce}, we may ensure that the chosen body $D$ also has small noise sensitivity
for Bernouli random steps $\bm{\De}_{\eps}$ for small enough $\eps$,\footnote{The proof is essentially the same, adapting
the definition of $G_k$ therein to be
    \[
    G_k =
    \left\{\vec{r} ~~~\left|
    \begin{array}{ll}
    &\Prob{\substack{{\bf x}\in D_{\vec{r}}\\ \bm{\De}\sim N(0,4^{-k}\cdot I_n)}}
    {{\bf x},{\bf x}+\bm{\De}\text{ lie in different cells of the tiling of }S_{\vec{r}}}\leq 4\cdot A 2^{-k},\\
    &\Prob{\substack{{\bf x}\in D_{\vec{r}}\\ \bm{\De}_{2^{-k}}}}
    {{\bf x},{\bf x}+\bm{\De}_{2^{-k}}\text{ lie in different cells of the tiling of }S_{\vec{r}}}\leq 4\cdot A 2^{-k}
    \end{array}
    \right.
    \right\}.
    \]
    }
but we will only need this to happen for a specific suitably chosen $\eps$ which can be ensured as follows. Take $\eps$ small enough for which
Lemma~\ref{lem:main_ub_use'} holds, and note that by Markov's inequality we get from Lemma~\ref{lem:main_ub_use'} that
\[
\Prob{\vec{r}}
{
\Prob{{\bf x}, \bm{\De}_{\eps}}{\text{At least one of the conditions of Claim~\ref{cl:u1} fail for ${\bf x}$ and ${\bf x}+\bm{\De}_{\eps}$}}
\geq C\cdot A\cdot \frac{\eps}{n}}\leq \frac{1}{4}
\]
for an absolute constant $C$. Thus, from Claim~\ref{claim:plenty_of_tilings} and the union bound we get that there is
$\vec{r}^{\star}\in \cap_{k\geq k_0} G_{k}$ such that the above event holds, and the rest of the proof in Section~\ref{sec:ns_sa_reduce} shows that
$D = D_{\vec{r}^{\star}}$ has surface area $O(A)$. We summarize this discussion with the following lemma.
\begin{lemma}\label{lem:use_for_strat}
  For all $t$, for small enough $\eps$, there is a symmetric, bounded tiling body $D$ with surface area $A = O(t/\sqrt{\log t})$ such that
  \[
  \Prob{{\bf x}, \bm{\De}_{\eps}}{\text{At least one of the conditions of Claim~\ref{cl:u1} fail for ${\bf x}$ and ${\bf x}+\bm{\De}_{\eps}$}}
  \ll A\cdot \frac{\eps}{n}.
  \]
\end{lemma}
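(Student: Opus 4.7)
The plan is to couple Lemma~\ref{lem:main_ub_use'} with the Gaussian-scale argument from Section~\ref{sec:ns_sa_reduce} that derives Theorem~\ref{thm:ub} from Lemma~\ref{lem:main_ub_use}. That argument defines, for each dyadic scale $4^{-k}$, a ``good'' event $G_k$ in the randomness space on which the Gaussian noise sensitivity of $D_{\vec{r}}$ at scale $4^{-k}$ is controlled by $\ll A\cdot 2^{-k}$; Claim~\ref{claim:plenty_of_tilings} shows that $\Prob{\vec{r}}{\cap_{k\geq k_0} G_k}$ can be made arbitrarily close to $1$ by taking $k_0$ large, and standard needle-based estimates (Lemma~\ref{lem:area_bd}) convert membership in all $G_k$ into the surface area bound ${\sf area}(\partial D_{\vec{r}^{\star}})\ll A$.

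To also control the Bernouli failure probability, I add one further good event. By Lemma~\ref{lem:main_ub_use'}, for small enough $\eps$ the expectation over $\vec{r}$ of the Claim~\ref{cl:u1}-failure probability for $({\bf x},{\bf x}+\bm{\De}_{\eps})$ is at most $C\cdot A\eps/n$ for an absolute constant $C$, so by Markov's inequality the set $B$ of $\vec{r}$'s for which this probability exceeds $4C\cdot A\eps/n$ has measure at most $1/4$. Choosing $k_0$ so that $\Prob{\vec{r}}{\cap_{k\geq k_0} G_k}\geq 3/4$, the union bound yields some $\vec{r}^{\star}\in (\cap_{k\geq k_0}G_k)\setminus B$, and I set $D:=D_{\vec{r}^{\star}}$. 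By construction of $S_{\vec{r}}$ via the permutation-invariant scoring quantities $r_j$, the body $D$ is symmetric, bounded, and a countable union of semi-algebraic sets (Lemma~\ref{lem:main_ub_use}); by the surface-area derivation applied to this $\vec{r}^{\star}$, ${\sf area}(\partial D)=O(A)=O(t/\sqrt{\log t})$; and by $\vec{r}^{\star}\notin B$, the Bernouli bound in the statement of the lemma holds.

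The main thing to verify is that Lemma~\ref{lem:main_ub_use'} itself goes through, i.e.\ that the analysis of Section~\ref{sec:analysis_rare} transfers from Gaussian $\bm{\De}$ to Bernouli $\bm{\De}_{\eps}$. Inspecting that proof, the only probabilistic properties used about $\bm{\De}$ are independence across coordinates, mean zero, a sub-Gaussian tail (e.g.\ in Claim~\ref{cl:A-1} to exclude $\card{\bm{\De}_i}\geq \eps^{0.95}/n$), and the variance estimate $\Expect{}{\bm{\De}_i^2}\asymp \eps^2/n^2$ used in the second-moment and Cauchy--Schwarz steps of Claim~\ref{cl:A-4} and Claim~\ref{cl:A-5}. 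All of these hold for $\bm{\De}_{\eps}$; in fact, its deterministic boundedness $\card{(\bm{\De}_{\eps})_i}\leq \eps/n$ makes the analog of Claim~\ref{cl:A-1} vacuous, and the exact value of the variance proxy enters only through the combination $k\sigma n/\log n$ in Claim~\ref{cl:A-5}, which is unchanged up to constants. I expect this line-by-line verification to be the main tedious step, but no essentially new ideas should be required.
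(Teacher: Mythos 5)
Your proposal follows essentially the same route as the paper: take the Bernoulli-step version of the noise-sensitivity bound (Lemma~\ref{lem:main_ub_use'}), apply Markov's inequality to show the set of bad $\vec{r}$ has measure at most $1/4$, intersect with $\cap_{k\geq k_0}G_k$ via the union bound to extract $\vec{r}^{\star}$, and invoke the Section~\ref{sec:ns_sa_reduce} machinery for the surface-area bound. One small correction: $\Prob{\vec{r}}{\cap_{k\geq k_0}G_k}$ cannot be made arbitrarily close to $1$ by increasing $k_0$, since the $G_k$ are decreasing and the intersection is unchanged; however, the bound of $1/2$ from Claim~\ref{claim:plenty_of_tilings} together with your bad set of measure $1/4$ already makes the union bound go through (alternatively one enlarges the Markov threshold in the definition of $G_k$, which is what the paper's footnote does), so this does not affect the validity of the argument.
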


\subsubsection{Decisive boxes}
In this section, we use Lemma~\ref{lem:use_for_strat} to devise a symmetric strategy for the players in the $t$-fold repeated game.
Take small enough $\eps$ so such Lemma~\ref{lem:use_for_strat} holds and assume that $k\defeq 1/\eps$ is an integer.
Let $D$ be the symmetric tiling body from Lemma~\ref{lem:use_for_strat}.
It will be convenient for us to think of challenges to the players as
$C_{n}^t = \sett{\frac{i}{n}}{i=0,1,\ldots,n-1}$.
Partition $[0,1)^t$ into the boxes $B_{\vec{a}} = \prod\limits_{i=1}^{t}\left[\frac{a_i}{n}, \frac{a_i}{n}+\frac{1}{n}\right)$
for $\vec{a}\in \set{0,1,\ldots,n-1}^t$; it will be convenient for us identify a challenge of a player ${\bf x}'$
with the box it belongs to, i.e. with $B_{\vec{a}}$ for $\vec{a} = n{\bf x}'$.
Consider the way $D$ further partitions the boxes $B_{\vec{a}}$.
\begin{definition}
  We say a box $B_{\vec{a}}$ is decisive if there exists $z\in\mathbb{Z}^n$ such that
  $\mu(B_{\vec{a}}\cap (D+z))\geq \frac{2}{3}\mu(B_{\vec{a}})$. Otherwise, we say $B_{\vec{a}}$
  is indecisive.
\end{definition}

We show that almost all boxes are decisive:
\begin{lemma}\label{lemma:most_are_decisive}
  The number of indecisive boxes is $O(A n^{t-1})$.
\end{lemma}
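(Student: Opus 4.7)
\medskip

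\noindent\textbf{Proof plan for Lemma~\ref{lemma:most_are_decisive}.}
The plan is to charge each indecisive box a fixed amount of surface area of $D$ that lies inside it, and then sum up. Fix an indecisive box $B_{\vec{a}}$. The translates $\{B_{\vec{a}}\cap(D+z)\}_{z\in\mathbb{Z}^t}$ form a partition of $B_{\vec{a}}$ into countably many pieces of disjoint interiors. By the indecisiveness assumption, every piece has measure strictly less than $\tfrac{2}{3}\mu(B_{\vec{a}})$. A standard greedy argument (order the pieces by decreasing measure and accumulate them one at a time; each jump is at most the size of the first piece, which is $<\tfrac{2}{3}\mu(B_{\vec{a}})$) produces a subset $Z\subseteq\mathbb{Z}^t$ such that $A_{\vec{a}}:=B_{\vec{a}}\cap\bigcup_{z\in Z}(D+z)$ has measure in $[\tfrac{1}{3}\mu(B_{\vec{a}}),\tfrac{2}{3}\mu(B_{\vec{a}})]$.

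Next, I would rescale $B_{\vec{a}}$ to the unit cube $[0,1]^t$ via the map $y\mapsto n(y-\vec{a}/n)$. Under this rescaling $A_{\vec{a}}$ is sent to a measurable subset of $[0,1]^t$ of volume in $[\tfrac{1}{3},\tfrac{2}{3}]$, so Fact~\ref{fact:iso} (applied with $\eps=\tfrac{1}{3}$) yields some absolute $\delta>0$ such that the surface area of the rescaled set, intersected with the \emph{interior} of $[0,1]^t$, is at least $\delta$. Undoing the rescaling (each $(t-1)$-dimensional surface area scales by $n^{-(t-1)}$), this gives
\[
{\sf area}\bigl(\partial A_{\vec{a}}\cap{\sf interior}(B_{\vec{a}})\bigr)\;\geq\;\delta\cdot n^{-(t-1)}.
\]
The interior boundary of $A_{\vec{a}}$ in $B_{\vec{a}}$ is made up only of pieces of $\partial(D+z)$ for various $z$, so it is contained in $\partial D+\mathbb{Z}^t$.

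Finally I would sum this bound over all indecisive boxes. Since the boxes $B_{\vec{a}}$ have disjoint interiors and together tile $[0,1)^t$,
\[
N\cdot\delta\cdot n^{-(t-1)}\;\leq\;\sum_{\vec{a}}{\sf area}\bigl(B_{\vec{a}}\cap(\partial D+\mathbb{Z}^t)\bigr)\;=\;{\sf area}\bigl([0,1)^t\cap(\partial D+\mathbb{Z}^t)\bigr),
\]
where $N$ is the number of indecisive boxes. Because the translates $[0,1)^t-z$ for $z\in\mathbb{Z}^t$ partition $\mathbb{R}^t$, the right-hand side equals
\[
\sum_{z\in\mathbb{Z}^t}{\sf area}\bigl(\partial D\cap([0,1)^t-z)\bigr)\;=\;{\sf area}(\partial D)\;=\;A,
\]
and therefore $N\leq A\cdot n^{t-1}/\delta=O(An^{t-1})$, as claimed.

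I expect the main subtlety of the argument to be the bookkeeping in the last step: making sure that (i) each indecisive box really contributes a fresh chunk of $\partial D+\mathbb{Z}^t$ (no double-counting, which follows from the boxes having disjoint interiors), and (ii) the total folded surface area on the torus $\mathbb{R}^t/\mathbb{Z}^t$ is exactly the surface area $A$ of $D$ (which uses that $D$ is a tiling body, so the translates $\partial D+z$ overlap only on a measure-zero set). Everything else is a direct application of the isoperimetric inequality and a scaling computation.
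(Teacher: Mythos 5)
Your proof is correct and follows essentially the same route as the paper's: in each indecisive box you assemble a sub-union of cells with volume bounded away from $0$ and $1$, rescale to the unit cube, invoke Fact~\ref{fact:iso} to extract $\Omega(n^{1-t})$ of surface area interior to the box, and sum against the total surface area $A$. The only cosmetic difference is in bounding the total: you fold $\partial D$ onto the torus directly via the partition of $\mathbb{R}^t$ by the translates of $[0,1)^t$, whereas the paper reaches the same bound $\Phi\leq A$ by an amortization over a large box $\{0,\ldots,mn-1\}^t$ and a limit in $m$.
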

\begin{proof}
  Define
  $\Phi = \sum\limits_{z\in\mathbb{Z}^t}
  \sum\limits_{\vec{a}\in\set{0,1,\ldots,n-1}^t}{\sf area}(\partial(D+z)\cap {\sf interior}(B_{\vec{a}}))$.
  By considering the surface area of $D$, we will show that $\Phi\leq A$, and we will lower bound $\Phi$
  as a function of the number of the indecisive boxes, from which we will get the result.
  Let $B$ be such that $D\subseteq [-B,B]^t$, and take $m$ large enough.

  \paragraph{The upper bound.}
  For $\vec{a}\in \set{0,1,\ldots,mn-1}^t$, we define the box $B_{\vec{a}}$ as above,
  and define $\Phi_m = \sum\limits_{z\in\mathbb{Z}^t}
  \sum\limits_{\vec{a}\in\set{0,1,\ldots,mn-1}^t}{\sf area}(\partial(D+z)\cap {\sf interior}(B_{\vec{a}}))$.
  On the one hand, we clearly have that $\Phi_m = m^t \Phi$, and we next upper bound $\Phi_m$.
  Since $D\subseteq [-B,B]^t$, we have that
  \begin{align*}
  \Phi_m
  &=\sum\limits_{z\in\set{-B,-B+1,\ldots, B+m}^t}
  \sum\limits_{\vec{a}\in\set{0,1,\ldots,mn-1}^t}{\sf area}(\partial(D+z)\cap{\sf interior}(B_{\vec{a}}))
  \\
  &\leq \sum\limits_{z\in\set{-B,-B+1,\ldots, B+m}^t}{\sf area}(\partial(D+z))\\
  &=(m+2B+1)^t{\sf area}(\partial D)\\
  &\leq (m+2B+1)^t A.
  \end{align*}
  Combining the upper and lower bound we get $\Phi\leq \left(1+\frac{2B+1}{m}\right)^t A$, and
  sending $m$ to infinity gets that $\Phi\leq A$.

  \paragraph{The lower bound.} Interchanging the order of summation, we write
  \[
  \Phi =
  \sum\limits_{\vec{a}\in\set{0,1,\ldots,n-1}^t}
  \sum\limits_{z\in\mathbb{Z}^t}{\sf area}(\partial(D+z)\cap {\sf interior}(B_{\vec{a}})),
  \]
  and we show that if the box $B_{\vec{a}}$ is indecisive, then the innermost sum is at least $\Omega(1/n^{t-1})$.
  Indeed, if $B_{\vec{a}}$ is indecisive, then $\mu(B_{\vec{a}}\cap (D+z))\leq \frac{2}{3}\mu(B_{\vec{a}})$ for all
  $\vec{a}$. Thus, we may find $P\subseteq\mathbb{Z}^n$ such that for
  $H = B_{\vec{a}}\cap \bigcup_{z\in P} (D+z)$ we have that $\frac{1}{6}\mu(B_{\vec{a}})\leq\mu(H)\leq \frac{5}{6}\mu(B_{\vec{a}})$.
  We now scale and translate $H$, i.e. take $H' = nH - \vec{a}$, so that the above translates to $H'\subseteq[0,1]^n$ such that
  $\frac{1}{6}\leq \mu(H')\leq \frac{5}{6}$, and hence by Fact~\ref{fact:iso}
  ${\sf area}(\partial H' \cap {\sf interior}([0,1]^n))\geq \Omega(1)$. Removing the scaling, we get that
  ${\sf area}(\partial H\cap {\sf interior}(B_{\vec{a}}))\geq \Omega(n^{1-t})$. Therefore, we get that
  \[
  \Phi \geq
  \sum\limits_{\substack{\vec{a}\in\set{0,1,\ldots,n-1}^t\\ B_{\vec{a}}\text{ indecisive}}} \Omega(n^{1-t})
  =\Omega(n^{1-t}\cdot \#\set{\text{indecisive boxes}})
  \]
  Combining the upper and lower bound on $\Phi$, we get that
  the number of indecisive boxes is at most $O(A n^{t-1})$.
\end{proof}
Next, we show that if $B_{\vec{a}}$ is a typical decisive box, and
$\bm{\De}_1\in_R \set{0,\pm 1/n}$ is chosen randomly as above, then
$B_{\vec{a}+\bm{\De}_1}$ is very likely to be somewhat decisive, and
furthermore with the same cell of $D$.
\begin{lemma}\label{lem:neighbour_of_decisive}
It holds that
\begin{align}
\Prob{\substack{\bm{\De}_1\\ \vec{{\bf a}}\in\set{0,1,\ldots,n-1}^t}}{\exists z\in\mathbb{Z}^n, \mu(B_{\vec{{\bf a}}}\cap (D+z))\geq
\frac{2}{3}\mu(B_{\vec{a}}), \mu(B_{\vec{{\bf a}}+n\bm{\De}_1}\cap (D+z)) > \frac{1}{2}\mu(B_{\vec{{\bf a}}+n\bm{\De}_1})}\notag\\
\geq 1-O\left(\frac{A}{n}\right).\label{eq:4}
\end{align}
\end{lemma}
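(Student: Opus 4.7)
The plan is to combine Lemma~\ref{lemma:most_are_decisive} with the noise sensitivity of $D$ at step-size $1/n$ guaranteed by Lemma~\ref{lem:use_for_strat}. For each decisive $\vec a$, let $z^\star(\vec a)\in\mathbb{Z}^t$ denote the unique lattice vector with $p(\vec a,z^\star(\vec a))\geq \frac{2}{3}$, where $p(\vec a,z):=\mu(B_{\vec a}\cap (D+z))/\mu(B_{\vec a})$; uniqueness is automatic since any two such $z$ would have $p$-masses summing to more than $1$. Call a pair $(\vec a,\bm{\De}_1)$ \emph{bad} if either (i) $B_{\vec a}$ is indecisive, or (ii) $B_{\vec a}$ is decisive but $p(\vec a+n\bm{\De}_1,z^\star(\vec a))\leq \frac{1}{2}$. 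The event appearing in~\eqref{eq:4} is precisely the complement of ``bad'', so it suffices to show $\Pr[\text{bad}]\ll A/n$.

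Subcase (i) has probability $O(A/n)$ immediately from Lemma~\ref{lemma:most_are_decisive}, since only $O(An^{t-1})$ of the $n^t$ boxes are indecisive. To handle subcase (ii), I would couple the sampling of $\vec{{\bf a}}$ with a uniform sample ${\bf x}\in B_{\vec{{\bf a}}}$, so that jointly ${\bf x}$ is uniform on $[0,1)^t$. Conditionally on $(\vec{{\bf a}},\bm{\De}_1)$ being in subcase (ii), the events $E_1=\{{\bf x}\in D+z^\star(\vec{{\bf a}})\}$ and $E_2=\{{\bf x}+\bm{\De}_1\in D+z^\star(\vec{{\bf a}})\}$ satisfy $\Pr[E_1]\geq \frac{2}{3}$ and $\Pr[E_2]\leq \frac{1}{2}$ by the defining inequalities of $z^\star$ and of subcase (ii). Hence $\Pr[E_1\setminus E_2]\geq \frac{1}{6}$, and whenever $E_1\setminus E_2$ occurs the points ${\bf x}$ and ${\bf x}+\bm{\De}_1$ lie in different cells of the tiling induced by $D$. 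Integrating over $(\vec{{\bf a}},\bm{\De}_1)$ gives
\[
\tfrac{1}{6}\cdot \Pr[\text{subcase (ii)}] \ \leq\ \Pr_{{\bf x},\bm{\De}_1}[{\bf x}\text{ and }{\bf x}+\bm{\De}_1\text{ lie in different cells of }D],
\]
and the right-hand side is $O(A/n)$ by Lemma~\ref{lem:use_for_strat} applied to the Bernoulli step $\bm{\De}_1$. Combining the two subcases yields the claim.

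The main obstacle is justifying the invocation of Lemma~\ref{lem:use_for_strat} at the scale of $\bm{\De}_1$ (step size $1/n$), whereas the lemma is phrased for $\bm{\De}_\eps$ with $\eps$ sufficiently small. The remark preceding Lemma~\ref{lem:use_for_strat} that the Section~\ref{sec:ub} proof only uses that $\bm{\De}$ is independent of ${\bf x}$, has mean $0$, and is sub-Gaussian indicates that the same proof yields the noise-sensitivity bound $O(A/n)$ at $\eps=1$ directly; alternatively one may couple $\bm{\De}_1$ to a constant-length chain of smaller Bernoulli steps and union bound, at the cost of an absolute constant. Either way this is a matter of bookkeeping about the Section~\ref{sec:ub} construction rather than a new idea.
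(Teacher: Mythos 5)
Your proposal is correct and takes essentially the same route as the paper: condition on the box being decisive via Lemma~\ref{lemma:most_are_decisive}, couple $\vec{{\bf a}}$ with a uniform ${\bf x}\in B_{\vec{{\bf a}}}$, observe that failure of the second condition forces ${\bf x}$ and ${\bf x}+\bm{\De}_1$ into different cells with conditional probability at least $2/3-1/2$, and bound the unconditional escape probability at step size $1/n$ by chaining the $k=1/\eps$ Bernoulli steps of Lemma~\ref{lem:use_for_strat} (which is exactly how the paper resolves the scale issue; note the chain has length $1/\eps$, not constant length, but each link costs $O(A\eps/n)$ so the union bound still gives $O(A/n)$). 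The only cosmetic difference is that the paper runs Markov's inequality on the conditional escape probability with threshold $1/10$ and then checks $1-\tfrac{1}{10}-\tfrac13>\tfrac12$, whereas you average directly with the constant $1/6$.
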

\begin{proof}
  Choose a random $\vec{{\bf a}}$, take a random ${\bf x}\in B_{\vec{{\bf a}}}$, and let ${\bf y} = {\bf x}\pmod{D}$. Note that as the distribution
  of ${\bf x}$ is uniform over $[0,1]^n$ and the distribution of ${\bf y}$ is uniform over $D$. Let
  $E_1(\vec{{\bf a}},{\bf x},\bm{\De}_1)$ be
  the event that ${\bf y}$ and ${\bf y} + \bm{\De}_1$ are in different cells of $D$. Then by the union bound and the choice of $D$
  \begin{align*}
  \Prob{\vec{{\bf a}}, {\bf x}, \bm{\De}_1}{E_1}
  &=\Prob{\vec{{\bf a}}, {\bf x}, \bm{\De}_{\eps}}{{\bf y}, {\bf y} + k\bm{\De}_{\eps}\text{ in different cells of }D}\\
  &\leq \sum\limits_{j=0}^{k-1} \Prob{{\bf y}, \bm{\De}_{\eps}}{{\bf y}+j\bm{\De}_{\eps}, {\bf y} + (j+1)\bm{\De}_{\eps}\text{ in different cells of }D}\\
  &= \sum\limits_{j=0}^{k-1} \Prob{{\bf w}\in D, \bm{\De}_{\eps}}{{\bf w}, {\bf w} + \bm{\De}_{\eps}\text{ in different cells of }D}\\
  &\leq  \sum\limits_{j=0}^{k-1} C\cdot A\cdot \frac{\eps}{n}
  = C\frac{A}{n}.
  \end{align*}
  Let $E_2(\vec{{\bf a}})$ be the event that $B_{\vec{{\bf a}}}$ is decisive, and if $E_2(\vec{{\bf a}})$ holds let ${\bf z}\in\mathbb{Z}^n$ be such that
  $\mu(B_{\vec{{\bf a}}}\cap (D+{\bf z}))\geq \frac{2}{3}\mu(B_{\vec{{\bf a}}})$. Then by Lemma~\ref{lemma:most_are_decisive}
  $\Prob{}{E_2(\vec{{\bf a}})}\geq 1-O(A/n)$. Denote
  \[
  p_{\vec{a}, \De_1} = \cProb{{\bf x}, {\bf a}, \bm{\De}_1}{\vec{{\bf a}}=\vec{a}, \bm{\De}_1 = \De_1}{E_1(\vec{{\bf a}},{\bf x},{\bf \De}_1)}.
  \]
  The expectation of $p_{\vec{{\bf a}},\bm{\De}_1}$ is the probability of $E_1(\vec{{\bf a}}, {\bf x},{\bf \De}_1)$, so
  \[
  \Prob{\vec{{\bf a}},\bm{\De}_1}{E_2(\vec{{\bf a}})\land p_{\vec{{\bf a}}, \bm{\De}_1}\leq \frac{1}{10}}
  \geq 1
  -\Prob{\vec{{\bf a}}}{\overline{E_2(\vec{{\bf a}})}}
  -\Prob{\vec{{\bf a}}, \bm{\De}_1}{p_{\vec{{\bf a}},\bm{\De}_1} > \frac{1}{10}}
  \geq 1-O\left(\frac{A}{n}\right) - \frac{\Prob{\vec{{\bf a}},{\bf x}, \bm{\De}_1}{E_1(\vec{{\bf a}},{\bf x}, \bm{\De}_1)}}{1/10},
  \]
  which is at least $1 - O\left(\frac{A}{n}\right)$.
  To finish the proof, we show that for every $\vec{a}$, $\De_1$ such that $E_2(\vec{a})$ holds and $p_{\vec{a},\De_1}\leq \frac{1}{10}$,
  we have the the event on the left hand side of~\eqref{eq:4} holds.

  Indeed, fix such $\vec{a}$, $\De_1$. Then there is a unique $z\in\mathbb{Z}^n$ such that
  $\mu(B_{\vec{a}}\cap (D+z)) = \Prob{{\bf x}\in B_{\vec{a}}}{{\bf x}\in (D+z)}$ is at least $\frac{2}{3}\mu(B_{\vec{a}})$.
  Note that if ${\bf y}$, ${\bf y}+\De_1$ are in the same cell of $D$, then ${\bf x}$, ${\bf x}+\De_1$ are in the same cell of $D$, so
  \begin{align*}
  \frac{\mu(B_{\vec{a}+n\De_1}\cap(D+z))}{\mu(B_{\vec{a}+n\De_1})}
  &=
  \frac{\mu(B_{\vec{a}+n\De_1}\cap(D+z))}{\mu(B_{\vec{a}})}\\
  &=
  \Prob{{\bf x}\in B_{\vec{a}}}{{\bf x}+\De_1\in (D+z)}\\
  &\geq
  \Prob{{\bf x}\in B_{\vec{a}}}{{\bf x}\in (D+z), {\bf x}+\De_1\in (D+z)}\\
  &\geq
  \Prob{{\bf x}\in B_{\vec{a}}}{{\bf x}\in (D+z)\text{ and }{\bf y}, {\bf y}+\De_1\text{ in the same cell of $D$}}\\
  &\geq
  \Prob{{\bf x}\in B_{\vec{a}}}{{\bf y}, {\bf y}+\De_1\text{ in the same cell of $D$}}
  -\Prob{{\bf x}\in B_{\vec{a}}}{{\bf x}\not\in (D+z)}\\
  &=
  1-p_{\vec{a},\De_1}
  -\Prob{{\bf x}\in B_{\vec{a}}}{{\bf x}\not\in (D+z)}\\
  &\geq 1 - \frac{1}{10}-\frac{1}{3} > \frac{1}{2}.\qedhere
  \end{align*}
\end{proof}

\subsubsection{Proof of Theorem~\ref{thm:val_sym_cycle_lb}}
In this section, we prove Theorem~\ref{thm:val_sym_cycle_lb}. For that, we show that the success probability of the
following players' strategy is at least $1-O(A/n)$.
\begin{enumerate}
  \item On challenge $x'\in C_n^t$, consider the box that $x'$ belongs to, i.e. $B_{\vec{a}}$ for $\vec{a} = n x'$.
  \item Check if there is $z\in\mathbb{Z}^t$ such that $\mu(B_{\vec{a}}\cap (D+z))> \frac{1}{2}\mu(B_{\vec{a}})$,
  and note that it is unique if such point exists. If there is no such $z$, abort. We refer to $z$ as the chosen lattice point
  of the player.
  \item Output $z + n x' \pmod{2}$.
\end{enumerate}
First, we argue that this strategy is symmetric. Indeed, the effect of permuting the entries of $x'$ by $\pi\in S_t$ is
that $a, z$ above also get permuted by $\pi$, and therefore the output also gets permuted by $\pi$. Next, we analyze the
success probability of this strategy.

Note the following equivalent way of picking challenges $({\bf x'}, {\bf y'})$:
sample $\vec{{\bf a}}\in \set{0,1,\ldots,n-1}^t$, set ${\bf x}'=\vec{{\bf a}}/n$, sample $\bm{\De}_1$ Bernouli as above and set ${\bf y}' = {\bf x}' + \bm{\De}_1\pmod{1}$.
Denote the box of ${\bf x}'$ by $B_{\vec{a}({\bf x}')}$, and consider the event $E$ defined in Lemma~\ref{lem:neighbour_of_decisive}.
We show that whenever the event $E$ holds, the players are successful with the above strategy, and as the probability of $E$
is at least $1-O(A/n)$, the proof would be concluded.

Fix $\vec{a},\De_1$ such that $E$ holds, and let $z\in\mathbb{Z}^t$ be the (unique) point such that
$\mu(B_{\vec{a}}\cap (D+z))\geq \frac{2}{3}\mu(B_{\vec{a}})$,
$\mu(B_{\vec{a}+n\De_1}\cap (D+z)) > \frac{1}{2}\mu(B_{\vec{a}+n\De_1})$.
The first condition implies that the $x'$-player does not abort and their chosen lattice point is $z$,
and we next show that the $y'$-player does not abort as well.
Note that the box of $y'$ is $B_{\vec{a}(y')}$ for $\vec{a}(y') = \vec{a}+n\De_1\pmod{1}$, and
write $\vec{a}+n\De_1 = \vec{a}(y') + w$ for $w\in \mathbb{Z}^t$.
Thus,
\[
\mu(B_{\vec{a}(y')}\cap (D+z-w))
=\mu(B_{\vec{a}(y')+w}\cap (D+z))
=\mu(B_{\vec{a}+n\De_1}\cap (D+z))
>\frac{1}{2}\mu(B_{\vec{a}+n\De_1}),
\]
which is equal to $\frac{1}{2}\mu(B_{\vec{a}(y')})$, so the $y'$-player also does not abort and their chosen lattice
point is $z-w$. We now analyze the answers of the players on each coordinate.
\begin{itemize}
  \item
  If $i$ is a coordinate such that $y'_i \neq x'_i$, then we may write
  $y'_i = x'_i +\De_1+b$ for $b\in\set{-1,0,1}$ and $\De_1\neq 0$. Then
  we get that $\vec{a}(y')_i = \vec{a}_i + n(\De_1)_i+nb$, so $w_i = -nb$.
  Thus, the answer of the $x'$-player is $z_i + n x'_i\pmod{2}$, whereas the answer of the $y'$-player is
  \[
  (z-w)_i + n y'_i
  =z_i + nb + n x'_i +n\De_1+nb
  =z_i + n x'_i +n\De_1+2nb
  = z_i + n {\bf x}'_i + 1\pmod{2},
  \]
  where we used $2nb = 0\hspace{-0.8ex}\pmod{2}$, and $n\De_1 = 1\hspace{-0.8ex}\pmod{2}$ (as $\De_1=\pm \frac{1}{n}$).
  Thus, the players are consistent on the $i$th coordinate.

  \item If $i$ is a coordinate such that $y'_i = x'_i$, then in the above notations we have $w_i = 0$, $\De_i = 0$ and
  we get that the answers of the players are the same on the $i$th coordinate, so they are consistent on $i$.\qed
\end{itemize}

\section{Open Problems}\label{sec:open}
In this section, we propose several challenges for further investigation of symmetric parallel repetition.

Recall from the introduction that on general games a strong parallel repetition theorem still fails, 
even for symmetric repetition. A simple example is the union of many disjoint, odd cycle games. It
would be interesting to understand for what instances of Max-Cut one has that a strong parallel holds with
symmetric repetition, motivating the following problem.

\begin{problem}
For the Max-Cut problem, extend the family of graphs for which symmetric parallel repetition
outperforms standard parallel repetition.
\end{problem}

Optimistically, one may hope that if symmetric parallel repetition would work for general enough
class of graphs, then one would be able to reduce any graph to a graph in that class
by mild preprocessing that doesn't affect the value of the game by much, and only then perform
symmetric repetition. If possible, that would establish the equivalence of the Max-Cut Conjecture
and UGC.

Secondly, there are well-known connections between parallel repetition and notions of mixing times and eigenvalues of
the underlying graph; for example, a strong parallel repetition theorem is known to hold for expander graphs~\cite{SS,AKKST},
and more generally for graphs with low threshold rank~\cite{tulsiani2014optimal}, i.e. graphs with only constantly many eigenvalues close to $1$.
We expect there could be stronger relations between symmetric parallel repetition and higher order eigenvalues of
$G^{\otimes_{{\sf sym}} k}$, the $k$-fold symmetric tensor product of $G$.
\begin{problem}
What is the relation between the performance of the $k$-fold symmetric parallel repetition
of a given instance of Max-Cut $G$, and the first $k+1$ eigenvalues of $G$?
\end{problem}

Finally, we believe that solving the foam problem for special classes of bodies may be an interesting geometric question
(albeit unrelated to the study of parallel repetition); a very natural class to study is the class of convex bodies.

\bibliographystyle{plain}
\bibliography{refs}

\appendix
\section{Deferred proofs}\label{sec:def_proofs}
\subsection{Proof of Claim~\ref{cl:A-6}}
We split the proof into two cases.
	\paragraph{Case 1: $r_i \le T/2$ for all $i$.} In this case,
	$\min(r_i, T-r_i)=r_i$ for all $i$, and the sum on the RHS of \eqref{eq:cl:A-6:0} is just $(\sum_i |d_i|)/T$. We have
	\begin{align*}
	\norm{p-q}_1 = \sum_i \left| \frac{r_i+d_i}{T'} - \frac{r_i}{T} \right|
    &\le
	\sum_i \left| \frac{r_i+d_i}{T} - \frac{r_i}{T} \right| +
	\sum_i \left| \frac{r_i+d_i}{T'} - \frac{r_i+d_i}{T} \right| \\
	&=\sum_i \frac{|d_i|}{T} + \left|\frac{1}{T'}-\frac{1}{T}\right|\cdot
	\sum_i (r_i+d_i)\\
    &= \sum_i \frac{|d_i|}{T}  + \left|1-\frac{T'}{T}\right|\\
    &=\sum_i \frac{|d_i|}{T} +\frac{1}{T} \cdot \left| \sum_i d_i \right| \\
	&\leq 2\cdot \sum_i \frac{|d_i|}{T}.
	\end{align*}
	
	\paragraph{Case 2: one of the $r_i$'s is greater than $T/2$.}
	Without loss of generality, $r_1>T/2$. Denote by $S:=\sum_{i>1} r_i = T-r_1$; $S':=\sum_{i>1} (r_i+d_i) = T'-r_1-d_1$.
	 In this case, the RHS of \eqref{eq:cl:A-6:0} is given by
	 \begin{equation}
	 \label{eq:cl:A-6:1}
	 \frac{|d_1| \cdot S}{r_1\cdot T} + \sum_{i>1} \frac{|d_i|}{T}.
	 \end{equation}
	 We will estimate $|p_1-q_1|$ and $\sum_{i>1} |p_i-q_i|$ separately. First,
     note that $T'\geq T - \sum_{j}\card{d_j}\geq T/2$.

For $|p_1-q_1|$, we have	
	 \begin{align*}
	 |p_1-q_1|= \left| \frac{r_1}{T} - \frac{r_1+d_1}{T'} \right| =
	  \left| \frac{r_1\cdot(S'-S)+ d_1\cdot S}{T\cdot T' } \right|
\le
	   2\left| \frac{S'-S}{T} \right|+
	   2\left| \frac{ d_1\cdot S}{T\cdot r_1 } \right|
\le \sum_{i>1} \frac{|d_i|}{T} +
	    \frac{|d_1| \cdot S}{r_1\cdot T}.
	 \end{align*}
In the third transition, we used the fact that $T' \geq T/2\geq r_1/2$.
	
	 For $\sum_{i>1} |p_i-q_i|$, by a similar calculation to the first case we have
	 \begin{align*}
	 \sum_{i>1}|p_i-q_i| =  \sum_{i>1} \left| \frac{r_i+d_i}{T'} - \frac{r_i}{T} \right|
     &\le
	 \sum_{i>1} \left| \frac{r_i+d_i}{T} - \frac{r_i}{T} \right| +
	 \sum_{i>1} \left| \frac{r_i+d_i}{T'} - \frac{r_i+d_i}{T} \right|\\
     &\leq
	 \sum_{i>1} \frac{|d_i|}{T}+\left|\frac{1}{T'}-\frac{1}{T}\right|\cdot
	 \sum_{i>1} r_i+d_i
     = \sum_{i>1} \frac{|d_i|}{T} + \card{\frac{1}{T'}-\frac{1}{T}} S',
     \end{align*}
     and it is enough to bound $\card{\frac{1}{T'}-\frac{1}{T}} S'$ by constant times the expression in~\eqref{eq:cl:A-6:1}.
     We have

  \begin{align*}
     \card{\frac{1}{T'}-\frac{1}{T}} S'=
	 \left|\frac{S'\cdot (S'-S) + S'\cdot d_1}{T'T}\right|
     &\leq
	 \left|\frac{(S'+d_1)\cdot (S'-S)}{T'T}\right|+
	 \left|\frac{ S\cdot d_1}{T'T}\right|\\
     &\leq
     \left|\frac{S'\cdot (S'-S)}{T'T}\right|
     +\left|\frac{d_1\cdot (S'-S)}{T'T}\right|
	 + 2\cdot
	 \left|\frac{ S\cdot d_1}{T^2}\right|,
	 \end{align*}
where in the last transition we used $T' \geq T/2 > 0$.
We bound each term separately.
For the first term, as $T'\geq T/2$, $\card{S'}\leq 2T$ (since $\card{d_i}\leq r_i$) we get
\[
\card{\frac{S'\cdot (S'-S)}{T'T}}
\leq 4\card{\frac{S'-S}{T}}
\leq 4\sum\limits_{i\geq 2}{\frac{\card{d_i}}{T}}.
\]

For the second term, we have $\card{d_1}\leq r_1\leq T$, $T'\geq T/2$ and so
\[
\card{\frac{d_1\cdot (S'-S)}{T'T}}
\leq 2\frac{\card{S'-S}}{T}
\leq 2\sum\limits_{i\geq 2}{\frac{\card{d_i}}{T}}.
\]

For the third term, we have, as $T\geq r_1$,
$\frac{ S\cdot d_1}{T^2}
\leq \frac{\card{d_1}}{r_1}\frac{S}{T}$.

\subsection{Proof of Proposition~\ref{prop:convex_error}}
	We will use the fact for points $x_i$ in our domain, $g_j(x_i) \ee \left(\frac{n}{\log n}\alpha_i\right)^3$.
	We consider two cases, based on the values of $S$ and $r$.
	\paragraph{Case 1: $\Pr_{{\bf x}_i} [r\cdot g_j({\bf x}_i)>S] < 1/2$.}
    We claim that for a sufficiently large constant $A>0$,
		\begin{equation*}
	\underbrace{\Expect{{\bf x}_i}{\sqrt{z+\frac{1}{\bm{\al}_i^2}} \cdot\frac{\min(r\cdot g_j({\bf x}_i),S)}
	{r\cdot g_j({\bf x}_i)+S+\ve^{1.6}}}}_{(\rom{1})}
    \leq
	\underbrace{\E_{{\bf x}_i}\left[ \sqrt{z+An^2/\log^2 n} \cdot \frac{\min(r\cdot g_j({\bf x}_i),S)}
	{r\cdot g_j({\bf x}_i)+S+\ve^{1.6}}  \right]}_{(\rom{2})}
    .
	\end{equation*}
	To do that, we compare both sides to $\Expect{{\bf x}_i}{\sqrt{z} \cdot\frac{\min(r\cdot g_j({\bf x}_i),S)}{r\cdot g_j({\bf x}_i)+S+\ve^{1.6}}} $.	
	For $(\rom{1})$, we have
    \[
	\Expect{{\bf x}_i}{\left(\sqrt{z+1/\bm{\al}_i^2} -\sqrt{z}\right)\cdot \frac{\min(r\cdot g_j({\bf x}_i),S)}
	{r\cdot g_j({\bf x}_i)+S+\ve^{1.6}}}
    \ll \Expect{{\bf x}_i}{\frac{1/\bm{\alpha}_i^2}{\sqrt{z+1/\bm{\al}_i^2}}\cdot \frac{\min(r\cdot g_j({\bf x}_i),S)}
	{r\cdot g_j({\bf x}_i)+S+\ve^{1.6}}}.
	\]
    Since $\bm{\alpha}_i\ll \log n/n$ always, we may further upper bound this by
    \[
    \ll \Expect{{\bf x}_i}{\frac{1/\bm{\alpha}_i^2}{\sqrt{z+A/(\log n/n)^2}}\cdot \frac{\min(r\cdot g_j({\bf x}_i),S)}
	{r\cdot g_j({\bf x}_i)+S+\ve^{1.6}}}
    \ll \Expect{{\bf x}_i}{\frac{1/\bm{\alpha}_i^2}{\sqrt{z+An^2/\log^2 n}}\cdot \frac{r\left(\frac{n}{\log n}\bm{\al}_i\right)^3}
	{S+\ve^{1.6}}},
	\]
    where we used $\min(r\cdot g_j({\bf x}_i),S)\leq rg_j({\bf x}_i)$ and the asymptotic we have for $g_j$. Simplifying and using
    $\Expect{{\bf x}_i}{\bm{\alpha}_i}\ll \log n/n$, we get that the last expression is equal to
    \[
    \frac{n^2}{\log^2 n}\frac{1}{\sqrt{z+An^2/\log^2 n}}\cdot \frac{r}{S+\ve^{1.6}}.
    \]

    For $(\rom{2})$, we have
    \[
    \Expect{{\bf x}_i}{\left(\sqrt{z+An^2/\log^2 n}-\sqrt{z}\right)\cdot \frac{\min(r\cdot g_j({\bf x}_i),S)}{r\cdot g_j({\bf x}_i)+S+\ve^{1.6}}}
    \gg
     \Expect{{\bf x}_i}{\frac{An^2/\log^2 n}{\sqrt{z+An^2/\log^2 n}}\cdot \frac{\min(r\cdot g_j({\bf x}_i),S)}{r\cdot g_j({\bf x}_i)+S+\ve^{1.6}}}.
    \]
    Restricting to the event $E$ that $r g_j({\bf x}_i)\leq S$ (that has probability at least $1/2$ by assumption), we have that the last expression
    is at least
    \[
    \gg\cExpect{{\bf x}_i}{E}{\frac{An^2/\log^2 n}{\sqrt{z+An^2/\log^2 n}}\cdot \frac{r\cdot g_j({\bf x}_i)}{S+\ve^{1.6}}}
    \gg \frac{An^2/\log^2 n}{\sqrt{z+An^2/\log^2 n}}\cdot \frac{r}{S+\ve^{1.6}},
    \]
    where the last inequality holds since $\cExpect{\alpha_i}{E}{g_j(x_i)}\gg 1$ (this is true for any event $E$ with constant probability
    in our range of interest of ${\bf x}_i$'s). Combining the bounds for $(\rom{1}),(\rom{2})$, we see that we may pick large enough $A$ so
    that
    \[
    \Expect{{\bf x}_i}{\left(\sqrt{z+1/\al_i^2} -\sqrt{z}\right)\cdot \frac{\min(r\cdot g_j({\bf x}_i),S)}
	{r\cdot g_j({\bf x}_i)+S+\ve^{1.6}}}
\leq \Expect{{\bf x}_i}{\left(\sqrt{z+An^2/\log^2 n}-\sqrt{z}\right)\cdot \frac{\min(r\cdot g_j({\bf x}_i),S)}{r\cdot g_j({\bf x}_i)+S+\ve^{1.6}}},
    \]
    and hence $(\rom{1})\leq (\rom{2})$. Let $A_1$ be a large enough value of $A$ so that this holds.

    \paragraph{Case 2: $\Prob{{\bf x}_i}{r\cdot g_j({\bf x}_i)>S} \geq 1/2$.}
    Using $\sqrt{a+b}\leq \sqrt{a}+\sqrt{b}$, we have
    \[
    \Expect{{\bf x}_i}{\sqrt{z+\frac{1}{\bm{\al}_i^2}} \cdot\frac{\min(r\cdot g_j({\bf x}_i),S)}
	{r\cdot g_j({\bf x}_i)+S+\ve^{1.6}}}
    \leq
    \underbrace{\Expect{{\bf x}_i}{\sqrt{z} \cdot\frac{\min(r\cdot g_j({\bf x}_i),S)}
	{r\cdot g_j({\bf x}_i)+S+\ve^{1.6}}}}_{(\rom{3})}
    +\underbrace{\Expect{{\bf x}_i}{\frac{1}{\bm{\alpha}_i} \cdot\frac{\min(r\cdot g_j({\bf x}_i),S)}
	{r\cdot g_j({\bf x}_i)+S+\ve^{1.6}}}}_{(\rom{4})}.
    \]
    Clearly,
    $(\rom{3})\leq \Expect{{\bf x}_i}{\sqrt{z+A\frac{n^2}{\log^2 n}} \cdot \frac{\min(r\cdot g_j({\bf x}_i),S)}{r\cdot g_j({\bf x}_i)+S+\ve^{1.6}}}$,
    and we upper bound $(\rom{4})$. Recall that $g_j({\bf x}_i) \ee \left(\frac{n}{\log n} \bm{\alpha}_i\right)^3$, so
    \[
    (\rom{4})
    \ll \Expect{{\bf x}_i}{\frac{1}{\bm{\alpha}_i} \cdot
    \frac{\min(r(n\bm{\alpha}_i/\log n)^3,S)}{B\cdot r (n\bm{\al}_i/\log n)^3+S+\ve^{1.6}}},
    \]
    for some absolute constant $B>0$. Writing the last expression as an integral, we note
    that $\bm{\alpha}_i$ is distributed uniformly on the interval $[0,\frac{\log n}{50n}+\eps^{0.95}]$, so
    we get
    \[
    (\rom{4})\ll\left(\frac{n}{\log n}\right)\int_{0}^{\frac{\log n}{25 n}}\frac{1}{t}\frac{\min(r(n t/\log n)^3,S)}{B\cdot r (nt/\log n)^3+S+\ve^{1.6}} dt.
    \]
    We break the range of integration into $R_1 = \left[0, (S/r)^{1/3} \frac{\log n}{n}\right]$,
    and $R_2 = \left[(S/r)^{1/3} \frac{\log n}{n},\frac{\log n}{25 n}\right]$. On $R_1$ our expression is equal to
    \[
    \left(\frac{n}{\log n}\right)^2\int_{0}^{\left(\frac{S}{r}\right)^{1/3} \frac{\log n}{n}}\frac{r(n t/\log n)^2}{B\cdot r (nt/\log n)^3+S+\ve^{1.6}} dt
    \ll
    \left(\frac{n}{\log n}\right)^4\int_{0}^{\left(\frac{S}{r}\right)^{1/3} \frac{\log n}{n}}\frac{r t^2}{S} dt
    \ll \frac{n}{\log n}.
    \]

    On $R_2$ our expression is at most
    \[
    \left(\frac{n}{\log n}\right)\int_{\left(\frac{S}{r}\right)^{1/3} \frac{\log n}{n}}^{\frac{\log n}{25n}}
    \frac{1}{t}\frac{S}{B\cdot r (nt/\log n)^3} dt
    \ll\frac{S}{r}\left(\frac{\log n}{n}\right)^2\int_{\left(\frac{S}{r}\right)^{1/3} \frac{\log n}{n}}^{\frac{\log n}{25n}}
    \frac{1}{t^4} dt.
    \]
    Computing the integral, we see it is at most $(\left(\frac{S}{r}\right)^{1/3} \frac{\log n}{n})^{-3}$, hence
    the overall expression is $\ll n/\log n$, and since $\Expect{}{\mathbbm{1}_{r\cdot g_j(x_i)>S}}\geq 1/2$ we
    conclude that there is $A_2>0$ such that
    \[
    (\rom{4})\leq A_2 \frac{n}{\log n} \Expect{{\bf x}_i}{\mathbbm{1}_{r\cdot g_j({\bf x}_i)>S}}.
    \]

    The proposition is thus proven for $A = \max(A_1,A_2)$\qed

    \section{From Noise Sensitivity to Surface Area}\label{sec:ns_sa_reduce}
    Let $D_{\vec{r}}$ be a family of tilings of $\mathbb{R}^n$ that are constructed from Lemma~\ref{lem:main_ub_use}.
    I.e., the family $D_{\vec{r}}$ satisfies that the there is $A = O(n/\sqrt{\log n})$ such that for sufficiently small $\eps$, we have that
    \[
      \Expect{\vec{r}}{\Prob{\substack{{\bf x}\in D_{\vec{r}}\\ \bm{\De}\sim N(0,\eps^2\cdot I_n)}}
      {{\bf x},{\bf x}+\bm{\De}\text{ fall in different cells of the tiling induced by }D_{\vec{r}}}}\leq A\eps.
    \]
    Let $k_0$ be the first integer such that this condition holds for any $0 < \eps\leq 2^{-k_0}$.
    Thus, defining for each $k\geq k_0$ the set
    \[
    G_k = \Big\{\vec{r} \qquad\Big|\qquad \Prob{\substack{{\bf x}\in D_{\vec{r}}\\ \bm{\De}\sim N(0,4^{-k}\cdot I_n)}}
    {{\bf x},{\bf x}+\bm{\De}\text{ lie in different cells of the tiling of }S_{\vec{r}}}\leq 2\cdot A 2^{-k}\Big\},
    \]
    we have by Markov's inequality that $\Prob{\vec{r}}{\vec{r}\in G_k}\geq \half$.
    \begin{claim}
      The sets $G_k$ are monotone decreasing, i.e.\ for each $k$, $G_{k+1}\subseteq G_k$.
    \end{claim}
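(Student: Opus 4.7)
The plan is to exploit subadditivity of the Gaussian cell-crossing probability under convolution, together with the translation invariance of the tiling induced by $D_{\vec{r}}$. The key arithmetic identity is that $\mathcal{N}(0, 4^{-k} I_n)$ is distributed as the sum of four independent copies of $\mathcal{N}(0, 4^{-(k+1)} I_n)$, which suggests decomposing a single step of variance $4^{-k}$ into four smaller independent steps of variance $4^{-(k+1)}$ and then applying a union bound along the resulting path.

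The main tool I would record is a general subadditivity lemma. For any tiling body $D$ of volume~$1$ and any $\sigma_1^2, \sigma_2^2 > 0$, write $p_D(\sigma^2)$ for $\Prob{{\bf x}\in D,\ \bm{\De}\sim\mathcal{N}(0,\sigma^2 I_n)}{{\bf x}\text{ and }{\bf x}+\bm{\De}\text{ lie in different cells}}$. Then
\[
p_D(\sigma_1^2 + \sigma_2^2)\ \leq\ p_D(\sigma_1^2) + p_D(\sigma_2^2).
\]
The proof is a one-line union bound: sample ${\bf x}\in D$ uniformly and independent $\bm{\De}_i \sim \mathcal{N}(0, \sigma_i^2 I_n)$, so that $\bm{\De}_1 + \bm{\De}_2 \sim \mathcal{N}(0, (\sigma_1^2 + \sigma_2^2) I_n)$, and observe that the event ``${\bf x}$ and ${\bf x}+\bm{\De}_1+\bm{\De}_2$ are in different cells'' is contained in the union of the two consecutive single-step events. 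The crucial point is that ${\bf x} + \bm{\De}_1 \pmod{D}$ is itself uniform over $D$, by the same argument as in Lemma~\ref{lem:uniform}, so the second single-step event has probability exactly $p_D(\sigma_2^2)$.

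Iterating this inequality three times decomposes a step of variance $4^{-k}$ into four i.i.d.\ pieces of variance $4^{-(k+1)}$ and yields $p_{D_{\vec{r}}}(4^{-k}) \leq 4\, p_{D_{\vec{r}}}(4^{-(k+1)})$. If $\vec{r} \in G_{k+1}$, the right-hand side is at most $4 \cdot 2A \cdot 2^{-(k+1)} = 4A\cdot 2^{-k}$, which gives the desired inclusion $\vec{r} \in G_k$ up to an absolute constant factor in the threshold. The main (minor) obstacle is book-keeping the precise constant: the subadditivity argument naturally yields a factor of $4$ rather than the factor of $2$ appearing in the threshold of $G_k$, which can be handled either by absorbing the extra factor into the constant $A$ (this does not affect the asymptotic bound $A = O(n/\sqrt{\log n})$) or by loosening the threshold in the definition of $G_k$ to $CA \cdot 2^{-k}$ for an absolute constant $C$; since what is ultimately needed is only the nested structure $G_{k+1} \subseteq G_k$ combined with $\Prob{\vec{r}}{\vec{r}\in G_k}\geq 1/2$ to pick a single good $\vec{r}^\star \in \bigcap_{k\geq k_0} G_k$, this constant mismatch is harmless.
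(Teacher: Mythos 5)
Your decomposition loses a factor that the paper's does not, and this loss is fatal to the claim rather than a harmless constant. You split the variance $4^{-k} = 4\cdot 4^{-(k+1)}$ into four \emph{independent} Gaussian increments, so the union bound produces $p_{D_{\vec r}}(4^{-k}) \leq 4\,p_{D_{\vec r}}(4^{-(k+1)})$. With the membership condition for $G_{k+1}$ this gives $4\cdot 2A\cdot 2^{-(k+1)} = 4A\cdot 2^{-k}$, which exceeds the threshold $2A\cdot 2^{-k}$ defining $G_k$. Your proposed repairs do not salvage this: if you redefine $G_k$ with threshold $C A\, 2^{-k}$, then membership in $G_{k+1}$ gives $p(4^{-k}) \leq 4\cdot CA\,2^{-(k+1)} = 2CA\,2^{-k}$, and $2C \leq C$ is impossible for any constant $C>0$. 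The problem is structural: your argument loses a factor of $4$ per level while the threshold only decays by a factor of $2$ per level, so the mismatch compounds and no fixed constant in the definition of $G_k$ restores the nesting $G_{k+1}\subseteq G_k$. (And the nesting is genuinely needed downstream, in Claim~\ref{claim:plenty_of_tilings}, to conclude $\Prob{\vec r}{\vec r\in\bigcap_k G_k}\geq \half$ by monotone convergence.)

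The paper avoids this by decomposing in \emph{standard deviation} rather than variance: it takes a single $\bm{\De}\sim\mathcal{N}(0,4^{-k-1}I_n)$ and sets $\bm{\De}'=2\bm{\De}\sim\mathcal{N}(0,4^{-k}I_n)$, so the long step is covered by just \emph{two} collinear half-steps, $[{\bf x},{\bf x}+\bm{\De}]$ and $[{\bf x}+\bm{\De},{\bf x}+2\bm{\De}]$. These two half-steps are not independent (they use the same $\bm{\De}$), but the union bound does not require independence, and for each \emph{fixed} $\De$ the point ${\bf x}+\De\pmod{D_{\vec r}}$ is uniform over $D_{\vec r}$, so each half-step event has probability exactly $p_{D_{\vec r}}(4^{-(k+1)})$. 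This yields the factor $2$ that exactly matches the doubling of the threshold, giving $2\cdot 2A\,2^{-(k+1)} = 2A\,2^{-k}$ and hence $\vec r\in G_k$. Your subadditivity lemma $p_D(\sigma_1^2+\sigma_2^2)\leq p_D(\sigma_1^2)+p_D(\sigma_2^2)$ is correct as stated, but it is the wrong tool here; what you need is the scaling identity $p_D(4\sigma^2)\leq 2\,p_D(\sigma^2)$ obtained from the two dependent collinear half-steps.
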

    \begin{proof}
      Fix $\vec{r}\in G_{k+1}$. Let $\De\sim N(0,4^{-k-1}\cdot I_n)$, and note that $\De' = 2\cdot\De\sim N(0,4^{-k}\cdot I_n)$.
      Thus,
      \begin{align}
      \Prob{\substack{{\bf x}\in D_{\vec{r}}\\ \bm{\De'}\sim N(0,4^{-k}\cdot I_n)}}{{\bf x},{\bf x}+\bm{\De'}\text{ in different cells}}
      &\leq \Prob{\substack{{\bf x}\in D_{\vec{r}}\\ \bm{\De}\sim N(0,A 4^{-k-1}\cdot I_n)}}{{\bf x},{\bf x}+\bm{\De}\text{ in different cells}}\notag\\\label{eq:apx_1}
      &+\Prob{\substack{{\bf x}\in D_{\vec{r}}\\ \bm{\De}\sim N(0,4^{-k-1}\cdot I_n)}}{{\bf x}+\bm{\De},{\bf x}+2\bm{\De}\text{ in different cells}}.
      \end{align}
      First, we argue that the second probability on the right hand side is equal to the first one. To see that,
      denote $y = x+\De$ and observe that
      the points $y,y+\De$ lie in different cells of the tiling induced by $D_{\vec{r}}$ if and only if the points
      $y\pmod{D_{\vec{r}}}$, $y\pmod{D_{\vec{r}}} + \De$ lie in different cells. Additionally, note for any fixed $\De$,
      the distribution of ${\bf y}\pmod{D_{\vec{r}}}$ when we take ${\bf x}\in_R D_{\vec{r}}$, is uniform over $D_{\vec{r}}$.

      Therefore, the bound we get from~\eqref{eq:apx_1} is (using the fact that $\vec{r}\in G_{k+1}$)
      \[
      2\cdot \Prob{\substack{{\bf x}\in D_{\vec{r}}\\ \bm{\De}\sim N(0,4^{-k-1}\cdot I_n)}}{{\bf x},{\bf x}+\bm{\De}\text{ in different cells}}\leq
      2\cdot 2\cdot A 2^{-(k+1)}
      = 2\cdot A 2^{-k},
      \]
      and so $\vec{r}\in G_{k}$.
    \end{proof}

    \begin{claim}\label{claim:plenty_of_tilings}
      It holds that $\Prob{\vec{r}}{\vec{r}\in \bigcap_{k\geq k_0} G_k}\geq \half$, and in particular $\bigcap_{k\geq k_0} G_k$
      is not empty.
    \end{claim}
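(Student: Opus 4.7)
The plan is short because the two preceding claims already do essentially all the work. By the previous claim, the sequence $(G_k)_{k \geq k_0}$ is monotone decreasing, so
\[
G_{k_0} \supseteq G_{k_0+1} \supseteq G_{k_0+2} \supseteq \cdots,
\]
and therefore $\bigcap_{k \geq k_0} G_k = \lim_{k \to \infty} G_k$ in the decreasing-set sense. I would invoke continuity of probability measures from above: for any decreasing sequence of measurable events $E_k$ in a probability space, $\Pr[\bigcap_k E_k] = \lim_k \Pr[E_k]$. Applying this to $G_k$ and using the Markov-style bound $\Pr[G_k] \geq 1/2$ for every $k \geq k_0$ (established just before the claim via the hypothesis $\E_{\vec{r}}[\,\cdot\,] \leq A\eps$ with $\eps = 2^{-k}$ together with Markov's inequality), we conclude
\[
\Pr_{\vec{r}}\Big[\vec{r} \in \bigcap_{k \geq k_0} G_k\Big] = \lim_{k \to \infty}\Pr_{\vec{r}}[\vec{r} \in G_k] \geq \tfrac{1}{2}.
\]

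The ``in particular'' is immediate: any event of positive probability is non-empty in the sense that the underlying set has at least one realization, so there exists $\vec{r}^\star \in \bigcap_{k \geq k_0} G_k$, which is precisely what we need to combine the noise-sensitivity bounds at all scales $2^{-k}$ simultaneously and then sum them geometrically into a surface-area bound of the form $O(A)$ for $D_{\vec{r}^\star}$ in the proof of Theorem~\ref{thm:ub}.

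The only subtlety (and thus the ``main obstacle,'' insofar as there is one) is checking measurability: the underlying probability space for $\vec{r}$ is an infinite product $([m] \times [0,1])^{\mathbb{N}}$ equipped with its natural product $\sigma$-algebra, and each $G_k$ is defined by an inequality on a function of $\vec{r}$ (namely the probability, over ${\bf x}$ and ${\bf \De}$, that two points lie in different cells of $D_{\vec r}$). Since $D_{\vec r}$ depends on $\vec{r}$ through a Borel construction (item~2 of Lemma~\ref{lem:main_ub_use} ensures $D_{\vec r}$ is a countable union of semi-algebraic sets), this function is measurable, so $G_k$ is measurable and continuity of measure from above applies. No further work is needed.
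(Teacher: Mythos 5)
Your proof is correct and follows essentially the same route as the paper: the paper applies monotone (pointwise) convergence to the indicator functions $\mathrm{1}_{G_m}$, which is just the continuity-from-above argument you invoke, combined with the same monotonicity claim and the same Markov bound $\Pr[G_k]\geq \half$. Your added remark on measurability is a harmless extra check not spelled out in the paper.
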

    \begin{proof}
      Define the sequence of functions $g_m(\vec{r}) = \mathrm{1}_{\vec{r}\in \bigcap_{k_0\leq k\leq m} G_k}$,
      and also $g = \mathrm{1}_{\vec{r}\in \bigcap_{k\geq k_0} G_k}$. Clearly, on each $\vec{r}$, the sequence $g_m(\vec{r})$
      is monotonically decreasing to $g(\vec{r})$, and in other words we have monotone pointwise convergence of
      the non-negative functions $g_m$ to $g$.
      Thus, by the monotone convergence theorem
      \[
      \Prob{\vec{r}}{\vec{r}\in \bigcap_{k\geq 0} G_k}
      = \Expect{\vec{r}}{g(\vec{r})}
      = \Expect{\vec{r}}{\lim_{k\rightarrow\infty} g_k(\vec{r})}
      = \lim_{k\rightarrow\infty} \Expect{\vec{r}}{g_k(\vec{r})}.
      \]
      By the previous claim, $g_m = \mathrm{1}_{G_m}$, hence $\Expect{\vec{r}}{g_m(\vec{r})}\geq \half$
      and in particular the limit above is at least $\half$.
    \end{proof}

    Pick $\vec{r}^{\star}\in \bigcap_{k\geq k_0} G_k$, $\eps = 2^{-k_0}$ and denote $D = D_{\vec{r}^{\star}}$ for the rest of the proof. Clearly $D$ induces a tiling
    of the space $\mathbb{R}^n$, and next we will show that the surface area of $D$ is $O(A) = O(n/\sqrt{\log n})$, as desired.

    Towards this end, we will use Lemma~\ref{lem:santalo} that tells us that the surface
    area of $D$ is a constant multiple of
    \[
    \frac{1}{\eps}\Expect{\substack{{\bf x}\in_R D\\ \bm{\De}\sim N(0,\eps^2 I_n)}}{\card{({\bf x},{\bf x}+\bm{\De})\cap \partial D}},
    \]
    and we first observe that $({\bf x},{\bf x}+\bm{\De})\cap \partial D$ is almost surely countable.
    \footnote{The diligent reader may note that here, we are only considering intersections of the surface with the
    open interval $(x,x+\De)$ as opposed to the closed interval. This does not make any difference,
    since the contribution of the endpoints is proportional to the measure of $\partial D$. Hence, if the measure
    of $\partial D$ is $0$ they endpoints contribute $0$ to that expectation, and if the measure of $\partial D$
    is positive, then the expectation is infinite either way.}

    \begin{claim}\label{claim:countable}
      Let $\eps>0$ and sample ${\bf x}\in_R D$, $\bm{\De}\sim N(0,\eps^2 I_n)$. Then with probability $1$,
      $({\bf x},{\bf x}+\bm{\De})\cap \partial D$ is finite or countable.
    \end{claim}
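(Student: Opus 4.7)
The plan is to exploit the semi-algebraic structure of $D$ guaranteed by part (2) of Lemma~\ref{lem:main_ub_use}. First, I would write $D$ as a countable union $D = \bigcup_k D_k$ of semi-algebraic sets coming directly from the rounding construction: within each unit cube of $\mathbb{R}^n$, the piece of $D$ is further partitioned according to (a) the index $j$ of the first pair $(i_j,h_j)$ from $\vec{r}$ for which $h_j \leq p_{i_j}(x)$ (this cuts the cube into semi-algebraic regions defined by polynomial inequalities in the $r_i(x)$), and (b) the resulting choice of which coordinates round up vs.\ down. Since the tiling partitions $\mathbb{R}^n$ into such pieces, any point of $\partial D$ is a limit point of two pieces of this refinement and hence lies in $\partial D_k$ for some $k$. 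Thus $\partial D \subseteq \bigcup_k S_k$ where each $S_k := \partial D_k$ is semi-algebraic of dimension at most $n-1$.

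Next, I would use Fubini's theorem to show that for each fixed $k$, almost surely $\ell_{{\bf x},\bm{\De}} \cap S_k$ has Lebesgue measure zero in $\mathbb{R}$. Indeed,
\[
\Expect{{\bf x},\bm{\De}}{\int_0^1 \mathbf{1}_{{\bf x}+t\bm{\De} \in S_k}\,dt} = \int_0^1 \Prob{{\bf x},\bm{\De}}{{\bf x}+t\bm{\De} \in S_k}\,dt = 0,
\]
because for every $t$ the random vector ${\bf x}+t\bm{\De}$ has a density with respect to Lebesgue measure on $\mathbb{R}^n$ (as ${\bf x}$ is uniform on a set of positive measure, and a density stays a density under convolution with a Gaussian), while $S_k$, being semi-algebraic of dimension at most $n-1$, has Lebesgue measure zero.

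The crux of the argument is the observation that $\{t \in (0,1) : {\bf x}+t\bm{\De} \in S_k\}$ is a \emph{semi-algebraic} subset of $\mathbb{R}$, since it is the preimage of the semi-algebraic set $S_k$ under the polynomial (affine) map $t \mapsto {\bf x}+t\bm{\De}$. But any semi-algebraic subset of $\mathbb{R}$ is a finite union of points and open intervals, so one of Lebesgue measure zero must be a finite set. Thus, almost surely, $\ell_{{\bf x},\bm{\De}} \cap S_k$ is finite; taking a union bound over the countably many $k$ gives that $\ell_{{\bf x},\bm{\De}} \cap \partial D$ is almost surely a countable union of finite sets, hence at most countable.

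The one delicate step is the claim that $\partial D$ is contained in a countable union of semi-algebraic sets of dimension at most $n-1$. This is where one must dig into the specific construction of $D_{\vec{r}}$: although the scoring function $f$ was defined to be twice differentiable, the interior of each ``cell'' of the rounding is cut out by polynomial inequalities in the weights $r_i(x)$ (which are themselves piecewise polynomial in $x$), and the boundary faces separating adjacent cells are therefore semi-algebraic of codimension at least one. Once this semi-algebraic decomposition is in hand, the rest of the argument is a clean application of Fubini plus the elementary structure of semi-algebraic subsets of the real line.
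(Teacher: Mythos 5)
Your proposal is correct and follows essentially the same route as the paper: decompose $D$ into countably many semi-algebraic pieces via Lemma~\ref{lem:main_ub_use}, show that almost surely the needle meets the boundary of each piece only finitely often, and take a union bound. The paper simply asserts the per-piece finiteness, whereas you supply the justification (Fubini to get that the slice $\sett{t}{{\bf x}+t\bm{\De}\in \partial D_k}$ almost surely has measure zero, plus the fact that a semi-algebraic subset of $\RR$ is a finite union of points and intervals, so measure zero forces finiteness); this is a worthwhile elaboration of a step the paper leaves implicit.
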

    \begin{proof}
      Recall that by Lemma~\ref{lem:main_ub_use}, $D$ is a countable union of semi-algebraic sets, say $B_1,B_2,\ldots$.
      Note that for each semi-algebraic set $B_i$, the probability that $({\bf x},{\bf x}+\bm{\De})\cap \partial B_i$ is infinite is $0$, hence
      by the union bound, with probability $1$ all of these sets are finite, in which case $({\bf x},{\bf x}+\bm{\De})\cap \partial D$ is finite
      or countable.
    \end{proof}

    For a parameter $h$, a point $x\in\mathbb{R}^n$ and a direction $\De$, we say a
    point $y\in (x,x+\De)$ is $h$-isolated if
    \begin{enumerate}
      \item It holds that $y\in \partial D$.
      \item The neighbourhood of radius $h$ around $y$ does not contain $x,x+\De$ or any point from $\partial D$ (besides $y$).
    \end{enumerate}
    Define the quantity $g_{m}(x,\De)$ to be the number of $2^{-m}\norm{\Delta}_2$-isolated points in the interval $[x,x+\De]$.

    \begin{claim}\label{claim:converge}
      $g_{m}(x,\De)$ is an increasing sequence in $m$, and for any $x,\De$ for which Claim~\ref{claim:countable} holds, we have
      \[
      \lim_{m\rightarrow \infty} g_{m}(x,\De) = \card{(x,x+\De)\cap \partial D}.
      \]
    \end{claim}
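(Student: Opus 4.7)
The monotonicity is direct: as $m$ increases, $h_m := 2^{-m}\|\Delta\|_2$ decreases, so the $h_m$-isolation condition---that no point of $\{x, x+\Delta\} \cup (\partial D \setminus \{y\})$ lies within distance $h_m$ of $y$ on the line---becomes weaker, and any $h_m$-isolated point remains $h_{m+1}$-isolated. Thus $g_m(x,\Delta) \leq g_{m+1}(x,\Delta)$.

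For the limit, write $S := (x, x+\Delta) \cap \partial D$. The bound $g_m(x,\Delta) \leq |S|$ is immediate since every $h_m$-isolated point lies in $S$; hence the sequence converges (allowing $+\infty$) and its limit is at most $|S|$. The matching lower bound is the content of the claim. Granting that $S$ is countable and, as discussed below, discrete as a subset of the line, enumerate $S = \{y_1, y_2, \ldots\}$ (finite or infinite). For any finite $K \leq |S|$, the points $y_1, \ldots, y_K$ are finitely many distinct interior points of the segment, and by discreteness of $S$ one can pick $h_0 > 0$ so small that for each $i \leq K$ the open ball of radius $h_0$ around $y_i$ on the line is disjoint from $\{x, x+\Delta\} \cup (S \setminus \{y_i\})$. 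For $m$ large enough that $h_m < h_0$, each of $y_1, \ldots, y_K$ is $h_m$-isolated, yielding $g_m(x, \Delta) \geq K$; sending $K \to |S|$ gives $\lim_m g_m(x,\Delta) \geq |S|$, matching the upper bound.

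The main obstacle is that the hypothesis (the conclusion of Claim~\ref{claim:countable}) only guarantees countability of $S$, whereas the lower bound argument requires discreteness: a countable dense subset of the segment would give $g_m \equiv 0$ while $|S| = \infty$, defeating the claim. The plan for bridging this gap is to exploit the explicit structure of $D_{\vec{r}}$ established in Lemma~\ref{lem:main_ub_use}---namely that $D_{\vec{r}} = \bigcup_i B_i$ is a countable union of semi-algebraic sets. Each $\partial B_i$ is a finite union of lower-dimensional semi-algebraic pieces, and a standard transversality / Sard argument shows that for almost every $(x, \Delta)$ the line $\ell_{x,\Delta}$ meets each $\partial B_i$ in finitely many isolated points. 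The remaining---and most delicate---step is to rule out accumulation of these finite sets across different $i$'s, which should follow by tracking the scales at which new pieces $B_i$ enter the rounding construction of Section~\ref{sec:ub} and applying a Fubini-type argument to show that the set of $(x, \Delta)$ along which accumulation occurs has measure zero.
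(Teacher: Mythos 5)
Your monotonicity argument and the upper bound $g_m(x,\De)\leq\card{(x,x+\De)\cap\partial D}$ match the paper. The lower bound, however, has a genuine gap that you yourself flag but then propose to fill in the wrong way. You observe correctly that countability of $S=(x,x+\De)\cap\partial D$ alone does not rule out a pathological configuration (e.g.\ a countable dense set, for which $g_m\equiv 0$), and you propose to repair this via the semi-algebraic structure of $D$, a Sard/transversality argument, and a Fubini argument discarding a measure-zero set of $(x,\De)$. None of this is carried out, and it would in any case only yield an almost-everywhere statement, whereas the claim is asserted deterministically for every $(x,\De)$ satisfying the countability hypothesis. The missing idea is much more elementary: $\partial D$ is \emph{closed}, so $[x,x+\De]\cap\partial D$ is a closed subset of the segment. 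A nonempty closed set with no isolated points is perfect and hence uncountable; therefore a countable closed set must have an isolated point, and removing it leaves a set that is again closed and countable, so one can extract arbitrarily many points that are isolated in $S$ and distinct from $x,x+\De$. Each such point is $2^{-m}\norm{\De}_2$-isolated for $m$ large, giving $g_m(x,\De)\to\infty=\card{S}$ when $S$ is infinite; the finite case is immediate. Note also that closedness kills your proposed counterexample outright: a countable dense $S$ would force $[x,x+\De]\cap\partial D$ to be the entire (uncountable) segment, contradicting the hypothesis.

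A second, smaller repair is needed even granting ``discreteness'': in your enumeration $y_1,\ldots,y_K$ you implicitly assume every point of $S$ is isolated in $S$, which countability does not give (consider $\{0\}\cup\{1/k\}_{k\geq 1}$, where $0$ is never $h$-isolated for any $h$). One must enumerate only points that are isolated in $S$; the Cantor--Bendixson-type argument above supplies infinitely many of these when $S$ is infinite, which is all that is required since the claim only asks for $\lim_m g_m=\infty$ in that case.
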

    \begin{proof}
      The monotonicity of $g_m(x,\De)$ in $m$ is clear, and also that
      $g_{m}(x,\De)\leq \card{(x,x+\De)\cap \partial D}$. We set
      $\ell = g_{m}(x,\De)$ and split the rest of the proof according to
      whether $\ell$ is finite or not.

      \paragraph{Case 1: $\ell$ is finite.}
      In this case we argue that $g_{m}(x,\De) = \card{(x,x+\De)\cap \partial D}$
      for large enough $m$.  To see that, let $y_1,\ldots,y_\ell\in (x,x+\De)$ be all of the intersection
      points of $(x,x+\De)$ and $\partial D$, and take large enough $m$
      so that $2^{-m}\norm{\De}_2$ is smaller than all of the distances $\norm{y_i-y_j}_2$,
      $\norm{y_i-x}_2$, $\norm{y_i - (x + \De)}_2$ for all $i$ and $j$.

      \paragraph{Case 1: $\ell$ is infinite.}
      Consider the set $S = [x,x+\De]\cap \partial D$, and note that it is a closed.
      By Claim~\ref{claim:countable}, $S$ is countable, and we argue that $S$ must have an
      isolated point. Otherwise, $S$ is a closed set and has no isolated point, i.e.\
      it s a perfect set, but then it must be uncountable (e.g.\ see~\cite{perfect}).
      We thus conclude that $S$ has an isolated point $w_1$; we may remove it from
      $S$, have that the resulting set is again closed and countable, so we may again find
      an isolated point. Repeating this argument, for any $v\in\mathbb{N}$ we may find
      a collection of isolated points $w_1,\ldots,w_v\in S$ that are all different from
      $x$ and $x+\De$. As in case $1$, we conclude that $g_m(x,\De) \geq v$
      for large enough $m$, and since it holds for any $v$ we conclude that $\lim_{m\rightarrow \infty} g_m(x,\De) = \infty$.
    \end{proof}

    By Lemma~\ref{lem:santalo}, we have that the surface area of $D$ is at most a constant multiple of
    \[
    \frac{1}{\eps}\hspace{-0.5ex}\Expect{\substack{{\bf x}\in_R D\\ \bm{\De}\sim N(0,\eps^2 I_n)}}{\card{({\bf x},{\bf x}+\bm{\De})\cap \partial D}}
    =\frac{1}{\eps}\hspace{-0.5ex}\Expect{\substack{{\bf x}\in_R D\\ \bm{\De}\sim N(0,\eps^2 I_n)}}{\lim_{m\rightarrow \infty} g_{m}({\bf x},\bm{\De})}
    =\lim_{m\rightarrow \infty} \frac{1}{\eps}\hspace{-0.5ex}\Expect{\substack{{\bf x}\in_R D\\ \bm{\De}\sim N(0,\eps^2 I_n)}}{g_{m}({\bf x},\bm{\De})}.
    \]
    In the first transition we used Claims~\ref{claim:converge} and~\ref{claim:countable},
    and in the second one we used monotone convergence. Thus, if we assume that the surface area of
    $D$ is larger than $c\cdot A$ for a sufficiently large absolute constant $c$, then we get that
    $\lim_{m\rightarrow \infty} \frac{1}{\eps}\Expect{\substack{x\in_R D\\ \De\sim N(0,\eps^2 I_n)}}{g_{m}(x,\De)}\geq 10 A$.
    In the rest of the proof we will reach a contradiction and thereby show that for sufficiently large
    absolute constant $c$, the surface area of $D$ is at most $cA$, as required.

    By properties of limits, we conclude there exists $m$ such that
    \begin{equation}\label{eq:apx_2}
    \Expect{\substack{{\bf x}\in_R D\\ \bm{\De}\sim N(0,\eps^2 I_n)}}{g_{m}({\bf x},\bm{\De})}\geq 5 A\eps,
    \end{equation}
    and we fix this $m$ henceforth.

    Take $0<\delta\leq 2^{-m}$, and consider the following experiment. Take ${\bf x}\in_R D$ uniformly at random,
    $\bm{\De}\sim N(0,\eps^2 I_n)$ and take a uniformly random point ${\bf y}\in_R [{\bf x},{\bf x}+\bm{\De}]$. We
    consider the event $E$ in which the points ${\bf y}$ and ${\bf y}+\delta\bm{\De}$ lie in different cells in the tiling
    induced by $D$.

    \begin{claim}\label{claim:apx_one}
     For any $x,\De$ we have that $\cProb{{\bf y}}{x,\De}{E}\geq \delta g_m(x,\De)$.
    \end{claim}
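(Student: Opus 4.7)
The plan is a direct geometric argument: to each $h$-isolated crossing on the segment I will associate an interval of parameter values $t$ of length $\delta$, show that these intervals are pairwise disjoint, and show that on each of them the event $E$ is forced to occur. Adding the lengths then gives the bound $\delta\cdot g_m(x,\De)$.

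Concretely, I will parameterize the segment by $t\in[0,1]$ via $x+t\De$, so that ${\bf y} = x+{\bf t}\De$ with ${\bf t}$ uniform on $[0,1]$. Set $h := 2^{-m}\|\De\|_2$, and let $p_1,\ldots,p_\ell$ (with $\ell = g_m(x,\De)$) be the $h$-isolated boundary points on the segment, with $p_i = x+s_i\De$. For each $i$ I will look at the interval $I_i := [s_i - \delta, s_i]$. Since $p_i$ is $h$-isolated, $\|p_i-x\|_2>h$ and $\|p_i-(x+\De)\|_2>h$, giving $s_i > 2^{-m}\geq \delta$ and $s_i < 1 - 2^{-m}$; hence $I_i\subseteq[0,1]$ has length exactly $\delta$.

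I then verify two properties. \emph{Disjointness:} if $p_i\neq p_j$, isolation of $p_i$ forces $\|p_i-p_j\|_2 > h$, hence $|s_i-s_j|>2^{-m}\geq \delta$, so $I_i$ and $I_j$ intersect at most in endpoints. \emph{Cell switching on $I_i$:} for $t\in (s_i-\delta,s_i)$, writing $y = x+t\De$, the point $p_i$ lies strictly between $y$ and $y+\delta\De$ on the segment, and both endpoints satisfy $\|y-p_i\|_2 = (s_i-t)\|\De\|_2 < \delta\|\De\|_2\leq h$ and similarly $\|y+\delta\De-p_i\|_2 < h$. By convexity the whole segment $[y,y+\delta\De]$ lies in $B(p_i,h)$, which by isolation contains no point of $\partial D$ other than $p_i$. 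Hence this segment crosses $\partial D$ exactly once, at $p_i$, and therefore $y$ and $y+\delta\De$ lie in different cells of the tiling.

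Combining disjointness and the cell-switching property, the set of $t\in[0,1]$ for which the event $E$ holds has Lebesgue measure at least $\sum_i|I_i| = \delta\ell = \delta\,g_m(x,\De)$, which is exactly the claimed inequality. The only point requiring care is the step "one boundary crossing in $B(p_i,h)$ implies different cells," which uses that locally around an isolated point of $\partial D$ the surface separates $D$ from an adjacent translate $D+z$; this is automatic away from a null set thanks to the semi-algebraic structure of $D$ guaranteed by item (2) of Lemma~\ref{lem:main_ub_use}, and does not affect the probability estimate.
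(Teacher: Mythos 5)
Your proof is correct and follows essentially the same route as the paper's: associate to each $2^{-m}\norm{\De}_2$-isolated point the interval of length $\delta$ just before it, observe these intervals are disjoint and force a cell change, and sum their measures. The extra details you supply (the explicit containment of $[y,y+\delta\De]$ in the isolation ball, and the remark on the "one crossing implies different cells" step) only make explicit what the paper asserts implicitly.
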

    \begin{proof}
    Let $\ell = g_m(x,\De)$, and let $z_1,\ldots,z_{\ell}$ be the $2^{-m}\norm{\De}_2$-isolated points on the interval
    $(x,x+\De)$. For each $j$, let
    $I_j = (z_j - \delta\De,z_j)$, and note that as $\delta\leq 2^{-m}$ and the isolation of the points, we conclude that
    the intervals $I_j$ are disjoint and contained in $(x,x+\De)$.
    Also, note that if we pick $y\in I_j$, then $y$ and $y+\delta \De$ lie in different cells of the tiling induced by $D$;
    this holds since the interval between them contains exactly one point from $\partial D$ (namely, the point $z_j$). Therefore,
    \[
    \cProb{{\bf y}}{x,\De}{E}
    \geq\sum\limits_{j=1}^{\ell} \cProb{{\bf y}}{x,\De}{{\bf y}\in I_j}
    \geq\sum\limits_{j=1}^{\ell} \frac{\delta\norm{\De}_2}{\norm{\De}_2}
    =\delta\ell.\qedhere
    \]
    \end{proof}

   \begin{claim}\label{claim:apx_two}
     $\Prob{{\bf x},{\bf \De},{\bf y}}{E}\leq 2A\delta\eps$.
   \end{claim}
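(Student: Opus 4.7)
The plan is to reduce $\Pr[E]$ to a Gaussian-noise sensitivity statement about $D$, and then invoke the property $\vec{r}^{\star}\in G_k$ for the appropriate value of $k$.

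The crucial step is a decoupling: although ${\bf y}$ and $\bm{\De}$ are correlated, the point ${\bf w}:={\bf y}\hspace{-0.6ex}\pmod{D}$ is in fact independent of $\bm{\De}$ and uniform on $D$. Indeed, fix $\bm{\De}=\De$ and $t\in[0,1]$, so ${\bf y}={\bf x}+t\De$. The map $x\mapsto(x+t\De)\hspace{-0.6ex}\pmod{D}$ is a measure-preserving bijection of $D$ onto itself (it corresponds to a translation on the torus $\mathbb{R}^n/\mathbb{Z}^n$ under the identification of $D$ with the torus), so since ${\bf x}$ is uniform on $D$, the conditional distribution of ${\bf w}$ given $(\bm{\De},t)=(\De,t)$ is uniform on $D$. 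Moreover, the event that ${\bf y}$ and ${\bf y}+\delta\bm{\De}$ lie in different cells of the tiling is unchanged under simultaneously translating both points by a lattice vector, so it depends only on $({\bf w},\delta\bm{\De})$. Consequently,
\[
\Pr[E] \;=\; \Prob{{\bf w}\in_R D,\ \bm{\De}\sim N(0,\eps^2 I_n)}{{\bf w}\text{ and }{\bf w}+\delta\bm{\De}\text{ lie in different cells of the tiling of }D}.
\]

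Since $\delta\bm{\De}\sim N(0,(\delta\eps)^2 I_n)$, the right-hand side is exactly the noise-sensitivity quantity controlled by the sets $G_k$. For the subsequent contradiction it is enough to choose $\delta$ to be a power of $1/2$, say $\delta=2^{-j}$ with $j\geq m$; then $(\delta\eps)^2=4^{-(j+k_0)}$, so setting $k:=j+k_0\geq k_0$ and using $\vec{r}^{\star}\in G_k$ yields
\[
\Pr[E] \;\leq\; 2A\cdot 2^{-k} \;=\; 2A\,\delta\eps,
\]
as desired. A general $\delta\leq 2^{-m}$ can be accommodated up to an absolute-constant loss by choosing the smallest $k$ with $4^{-k}\geq(\delta\eps)^2$, writing $N(0,4^{-k}I_n)$ as the independent sum $N(0,(\delta\eps)^2 I_n)+N(0,(4^{-k}-(\delta\eps)^2)I_n)$, and applying a union bound together with the same decoupling argument for the second summand.

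The main conceptual obstacle is the decoupling---verifying that the fractional part ${\bf w}={\bf y}\hspace{-0.6ex}\pmod{D}$ is genuinely independent of $\bm{\De}$ even though ${\bf y}$ itself was defined using $\bm{\De}$. Once this is in place, the conclusion reduces to a one-line invocation of the noise-sensitivity guarantee built into the choice of $\vec{r}^{\star}$.
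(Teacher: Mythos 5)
Your proof is correct for the case that is actually needed, and it follows essentially the same route as the paper: reduce modulo $D$, observe that the event depends only on $({\bf y}\hspace{-0.6ex}\pmod{D},\,\delta\bm{\De})$ and that ${\bf y}\hspace{-0.6ex}\pmod{D}$ is uniform on $D$ conditioned on $\bm{\De}$, and then invoke the noise-sensitivity guarantee built into the choice of $\vec{r}^{\star}$; your justification of the conditional uniformity via the measure-preserving translation is in fact more explicit than the paper's one-line assertion. One caveat: your patch for non-dyadic $\delta$ does not work as stated. Writing $N(0,4^{-k}I_n)$ as an independent sum containing $N(0,(\delta\eps)^2 I_n)$ and union-bounding only shows that the noise sensitivity at the \emph{coarser} level $2^{-k}$ is at most the sum of the sensitivities at the two finer levels, which is the wrong direction for bounding the sensitivity at level $\delta\eps$ (the walk can cross $\partial D$ on the first sub-step and return on the second, so the sub-step events are not contained in the full-step event). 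This is harmless here, since the surrounding contradiction argument only requires a single $\delta\leq 2^{-m}$, which may be taken dyadic, and the paper itself is equally terse on this point; a genuine fix for general $\delta$ would expand $(\delta\eps)^2$ in base $4$ and peel off dyadic pieces, at the cost of an absolute constant.
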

   \begin{proof}
      Consider ${\bf x},\bm{\De},{\bf y}$ the random variables in the definition of the event $E$. Let ${\bf z} = {\bf y}\pmod{D}$,
      and note that the points ${\bf y}$ and ${\bf y}+\delta\bm{\De}$ fall in different cells if and only if the points ${\bf z}$ and
      ${\bf z}+\delta\bm{\De}$ fall in different cells. Therefore, the probability of $E$ is exactly the probability that ${\bf z}$, ${\bf z}+\delta\bm{\De}$
      fall in different cells. Further, note that conditioned on $\bm{\De}$, the distribution of ${\bf z}$ is uniform over $D$, so
      \[
      \Prob{\bm{\De}\sim N(0,\eps^2 I_n)}{{\bf z}, {\bf z}+\delta\bm{\De}\text{ lie in different cells of } D}
      =\Prob{{\bf \De'}\sim N(0,\delta^2\eps^2 I_n)}{{\bf z}, {\bf z}+\bm{\De'}\text{ lie in different cells of }D},
      \]
      which is at most $2A\delta\eps$ by the choice of $D$ and the fact that $\delta\eps\leq \eps\leq 2^{-k_0}$.
    \end{proof}

    Combining the above claims we reach a contradiction:
    \[
    2A\delta\eps
    \geq
    \Prob{{\bf x},\bm{\De},{\bf y}}{E}
    =\Expect{{\bf x},\bm{\De}}{\cProb{{\bf y}}{{\bf x},\bm{\De}}{E}}
    \geq
    \Expect{{\bf x},\bm{\De}}{\delta g_m({\bf x},\bm{\De})}
    \geq \delta\cdot 5A\eps,
    \]
    and contradiction. The first transition is by Claim~\ref{claim:apx_one}, the second
    transition is by conditional probability formula, the third transition is by Claim~\ref{claim:apx_two}
    and the final one is by equation~\eqref{eq:apx_2}.\qed
\end{document}